 \newtheorem{theorem}{Theorem}[section]
 \newtheorem{corollary}[theorem]{Corollary}
 \newtheorem{lemma}[theorem]{Lemma}
 \newtheorem{proposition}[theorem]{Proposition}
 \theoremstyle{definition}
 \theoremstyle{remark}
 \newtheorem{remark}[theorem]{Remark}
 \numberwithin{equation}{section}
\newcommand\numberthis{\addtocounter{equation}{1}\tag{\theequation}}
\newcommand{\C}{\mathbb{C}}
\newcommand{\R}{\mathbb{R}}
\newcommand{\Z}{\mathbb{Z}}
\newcommand{\SpC}{\mathrm{Sp}_{\mathrm{N}}^*(\C)}
\newcommand{\one}{\mathrm{I}_{\mathrm{L}}}
\newcommand{\cred}[1]{{\color{red}#1}}
\definecolor{Plum}{rgb}{.5,0,1}
\numberwithin{equation}{section}
\newcommand{\hakim}[1]{\cred{{\it Remark by Hakim:} #1 {\it End of Remark.}}}
\begin{document}

\title{Dynamical localization for a random scattering zipper}

\author[H. Boumaza]{Hakim Boumaza}
\address[H. Boumaza]{LAGA, Universit\'e Sorbonne Paris Nord, 99 avenue J.B. Cl\'ement, F-93430 Villetaneuse}
\email[H. Boumaza]{boumaza@math.univ-paris13.fr}
\author[A. Khouildi]{Amine Khouildi}
\address[A. Khouildi]{LAGA, Universit\'e Sorbonne Paris Nord, 99 avenue J.B. Cl\'ement, F-93430 Villetaneuse}
\email[A. Khouildi]{khouildi@math.univ-paris13.fr}

\thanks{The research of H.B was partially funded by the ANR Project RAW, ANR-20-CE40-0012 \\    Orcid number of H.B. : 0000-0001-5255-3887}
\date{}

\begin{abstract}
This article establishes a proof of dynamical localization for a random scattering zipper model. The scattering zipper operator is the product of two unitary by blocks operators, multiplicatively perturbed on the left and right by random unitary phases. One of the operator is shifted so that this configuration produces a random 5-diagonal unitary operator per blocks. To prove the dynamical localization for this operator, we use the method of fractional moments. We first prove the continuity and strict positivity of the Lyapunov exponents in an annulus around the unit circle, which leads to the exponential decay of a power of the norm of the products of transfer matrices. We then establish an explicit formula of the coefficients of the finite resolvent in terms of the coefficients of the transfer matrices using Schur's complement. From this we deduce, through two reduction results, the exponential decay of the resolvent, from which we get the dynamical localization.
\end{abstract}

\maketitle

%%%%%%%%%%%%%%%%%%%%%%%%%%%%%%%%%%%%%%%%%%%%%%%%%%%%%%%%%%

%\section{Introduction} \label{sec_intro}

%%%%%%%%%%%%%%%%%

% graphic default path

% default graphic extension
\DeclareGraphicsExtensions{{.pdf}}

%\frontmatter
%\include{initpageF}

%\include{PageGarde}
%\dominitoc
%{\setlength{\baselineskip}{1.2\baselineskip}}
%\include{ResumeA}

%\include{acknowledgement}
%\tableofcontents
%\mainmatter

\section{Introduction : model and main results}\label{sec_intro}

In this paper we aim at proving a result of dynamical localization for a random scattering zipper of arbitray size. Let $L\geq 1$ an integer. A scattering zipper is a system obtained by concatenation of elementary unitary diffusion with a fixed number $2L$ of incoming and outgoing channels each. The deterministic scattering zipper operator was introduced in \cite{marin2013scattering} and a random version of this model was introduced in \cite{boumaza2015absence}. We now precise this latter random version of the random scattering zipper adding some hypotheses on the randomness which allow to prove a dynamical localization result for this new model.

Let $U(L)$ be the unitary subgroup of $\mathrm{GL}_{\mathrm{L}}(\C)$ of order $L$. Let 
\begin{equation}\label{def_Omegatilde}    
\tilde{\Omega}=\left( U(L)\times\left\{-1,1\right\}^L\times[0,2\pi]^L\right)^2
\end{equation}
%$$
%\Omega_0=\left( U(L)\times\left\{-1,1\right\}^L\times[0,2\pi]^L\right)\times\left([0,2\pi]^L  %\times U(L)\times\left\{-1,1\right\}^L\right)
%$$
endowed with the probability measure
\begin{equation}\label{def_Ptilde}
\tilde{\mathbb{P}}=\left( \nu_L\otimes(\mathcal{B}(p))^{\otimes L} \otimes(\mathcal{L}eb_{[0,2\pi]})^{\otimes L}\right)^{\otimes 2},    
\end{equation}
% $$\mathbb{P}_0=\left( \nu\otimes(\mathcal{B}(p))^{\otimes L} \otimes(\mathcal{L}eb_{[0,2\pi]})^{\otimes L}\right)\otimes\left((\mathcal{L}eb_{[0,2\pi]})^{\otimes L} \otimes \nu\otimes(\mathcal{B}(p))^{\otimes L} \right),$$
\noindent defined on $\tilde{\mathcal{B}}$, the Borel $\sigma$-algebra on $\tilde{\Omega}$ for the usual topology on the Lie group. Here $\nu_L$ denotes the Haar measure on the compact Lie group $U(L)$, $\mathcal{B}(p)$ denotes the Bernoulli distribution of parameter $p\in (0,1)$ and $\mathcal{L}eb_{[0,2\pi]}$ denotes the Lebesgue measure on the interval $[0,2\pi]$.

We then define the product probability space: 
\begin{equation}\label{def_OmegaP}
    (\Omega,\mathcal{B},\mathbb{P})=(\tilde{\Omega}^{\mathbb{Z}},\otimes_{n\in\mathbb{Z}}\tilde{\mathcal{B}},\otimes_{n\in\mathbb{Z}}\tilde{\mathbb{P}}).
\end{equation}

Let $\omega\in \Omega$ and $n\in \Z$. Then $\omega_n\in \tilde{\Omega}$ and we set
\begin{equation}\label{def_omegan}
    \omega_n = (\widetilde{V}^{(n)}_\omega,d^{(n)}_\omega, \theta_\omega^{(n)}, \widetilde{U}^{(n)}_\omega,D^{(n)}_{\omega}, \Theta_\omega^{(n)} )
\end{equation}
with :
\begin{itemize}
    \item[(i)] $\widetilde{U}^{(n)}_\omega\in U(L)$ and $\widetilde{V}^{(n)}_\omega \in U(L)$ ;
    \item[(ii)] $d^{(n)}_\omega=(d^{(n)}_{\omega,1},\hdots,d^{(n)}_{\omega,L})\in \{-1,1\}^L$ which will represent indifferently the $L$-uple or the diagonal matrix whose diagonal elements are the $d^{(n)}_{\omega,1},\hdots,d^{(n)}_{\omega,L}$ and the same for $D^{(n)}_\omega=(D^{(n)}_{\omega,1},\hdots,D^{(n)}_{\omega,L})\in \{-1,1\}^L$ ;
\item[(iii)] $\theta^{(n)}_\omega=(\theta^{(n)}_{\omega,1},\hdots,\theta^{(n)}_{\omega,L})\in [0,2\pi]^L$ which will represent indifferently the $L$-uple or the diagonal matrix whose diagonal elements are the $\theta^{(n)}_{\omega,1},\hdots,\theta^{(n)}_{\omega,L}$ and the same for $\Theta^{(n)}_\omega=(\Theta^{(n)}_{\omega,1} ,\hdots,\Theta^{(n)}_{\omega,L})\in [0,2\pi]^L$.
\end{itemize}

With these notations we define, for every $\omega \in \Omega$ and every $n\in \Z$, the following phases
\begin{equation}\label{def_phasesUV}  
U^{(n)}_\omega=e^{i\Theta^{(n)}_\omega}\widetilde{U}^{(n)}_\omega D^{(n)}_\omega(\widetilde{U}^{(n)}_\omega)^*:=e^{i\Theta^{(n)}_\omega} \widehat{U}^{(n)}_\omega \quad \mbox{ and }\quad  V^{(n)}_\omega=\widetilde{V}^{(n)}_\omega d^{(n)}_\omega(\widetilde{V}^{(n)}_\omega)^* e^{i\theta^{(n)}_\omega}:=\widehat{V}^{(n)}_\omega e^{i\theta^{(n)}_\omega} 
\end{equation}
where the reduced notations $\widehat{U}^{(n)}_\omega$ and $\widehat{V}^{(n)}_\omega$ will be used in the proofs of Theorem \ref{thm_unif_bounded}, Lemma \ref{lem1alphainversible} and Theorem \ref{thm_second_order}. 

Note that for every $\omega\in \Omega$ and every $n\in \Z$, $\widehat{U}^{(n)}_\omega$ and $\widehat{V}^{(n)}_\omega$ are elements of $U(L)\cap H_L(\C)$ where $H_L(\C)$ is the vector space of the Hermitian matrices of size $L\times L$.

\begin{remark}
In view of the definition of the probability space $ (\Omega,\mathcal{B},\mathbb{P})$ one can see $(\widetilde{U}_n^\omega)_{n\in \mathbb{Z}}$ and $(\widetilde{V}_n^\omega)_{n\in \mathbb{Z}}$ as sequences of independent and identically distributed (\emph{i.i.d.} for short) random variables in $U(L)$ of uniform law according to the Haar measure on $U(L)$. The sequences $(d^{(n)}_{\omega,i})_{n\in \mathbb{Z}}$ and $(D_{\omega,j}^{(n)})_{n\in \mathbb{Z}}$ for $i,j\in \llbracket 1, L\rrbracket$ can be considered as sequences of \emph{i.i.d.} random variables in $\{-1,1\}$ of common law the Bernoulli law $\mathcal{B}(p)$ of parameter $p\in (0,1)$ and  the sequences $(\theta_{\omega,i}^{(n)})_{n\in \mathbb{Z}}$ and $(\Theta_{\omega,j}^{(n)})_{n\in \mathbb{Z}}$ for $i,j\in \llbracket1, L\rrbracket$ can be considered as sequences of \emph{i.i.d.} random variables in $[0,2\pi]$ of common law the uniform law on $[0,2\pi]$.

Moreover, all these sequences of random variables are independent from each others. 
\end{remark}

\bigskip

With these notations, we can now introduce our model of random scattering zipper. Consider the random family of unitary operators $\{ \mathbb{U}_\omega \}_{\omega \in \Omega}$ where, for every $\omega \in \Omega$, the operator $\mathbb{U}_\omega$  acts on $\ell^2(\mathbb{Z}, \mathbb{C}^L)$ and is defined by: 
\begin{equation}\label{def_scatzipp1}
 \mathbb{U}_\omega = \mathbb{V}_\omega\mathbb{W}_\omega   
\end{equation}
where
\begin{equation}\label{def_scattzip2}
    \mathbb{V}_{\omega} = \begin{pmatrix}
    \ddots & & & \\
           & S^{(0)}_\omega & & \\
           & & S^{(2)}_\omega & \\
           & & & \ddots \\
\end{pmatrix} \circ s_g^{L}  \quad \mbox{ , }\quad 
\mathbb{W}^{\omega} = \begin{pmatrix}
    \ddots & & & \\
           & S^{(-1)}_\omega & & \\
           & & S^{(1)}_\omega & \\
           & & & \ddots \\
\end{pmatrix}
\end{equation}
and $s_g$ is the shift operator to the left $(v_n)_{n\in \Z} \mapsto  (v_{n+1})_{n\in \Z}$. The $2L\times 2L$ blocks $S^{(n)}_\omega$ are unitary matrices in the unitary group $U(2L)$ of the particular form
\begin{equation}\label{def_scatmat}
S^{(n)}_\omega=S(\alpha, U^{(n)}_\omega, V^{(n)}_\omega) = \begin{pmatrix}
    \alpha & \rho(\alpha)U^{(n)}_\omega \\
    V^{(n)}_\omega \tilde{\rho}(\alpha) & -V^{(n)}_\omega \alpha^* U^{(n)}_\omega
\end{pmatrix},    
\end{equation}
for a fixed $\alpha\in \mathcal{M}_L(\C)$ such that $\|\alpha\| < 1$, and with $\rho(\alpha) = (1 - \alpha\alpha^*)^{\frac{1}{2}}$ and $\tilde{\rho}(\alpha) = (1 - \alpha^*\alpha)^{\frac{1}{2}}$. 

Throughout this article, $\|\cdot \|$ denotes any subordinate norm on $\mathcal{M}_L(\C)$. We introduce the set 
\begin{equation}
\label{eq-U(2L)subset}
\mbox{U}(2L)_{\mathrm{inv}}
\; =\;
\left\{\left. \left( \begin{smallmatrix} \alpha & \beta \\ \gamma & \delta \end{smallmatrix} \right) \in \mbox{U}(2L) \;\right|\; 
\alpha,\gamma,\delta \in \mathcal{M}_{L}(\C) \mbox{ and } \beta \in \mathrm{GL}_{\mathrm{L}}(\C)\; 
\right\}.
\end{equation}
which also has the representation (see \cite{marin2013scattering}):
\begin{equation}
\label{eq-CMVrep}
\mbox{U}(2L)_{\mathrm{inv}} = \left\{\left. S(\alpha,U,V) \in \mbox{\rm U}(2L) \;\right|\; \|\alpha^*\alpha\|<1 \;\mbox{ and }U,V\in\mbox{U}(L) \right\},
\end{equation}
where for any $U,V \in \mathrm{U}(L)$ and any $\alpha \in  \mathcal{M}_{L}(\C)$ with  $\|\alpha^*\alpha\|<1$,
\begin{equation}\label{eq-Smatdef}
S(\alpha,U,V)\; =\; 
\begin{pmatrix} \alpha & \rho(\alpha) U 
\\ 
V\widetilde{\rho}(\alpha) & -V\alpha^*U 
\end{pmatrix}.
\end{equation}

Note that the family $\{ \mathbb{U}_{\omega} \}_{\omega \in \Omega}$ has a very important property of $2\Z$-ergodicity which implies the existence of an almost sure spectrum for this family. See \cite{carmona2012spectral} or \cite{kirsch2007invitation} for a definition of this ergodicity property and for the existence of the almost-sure spectrum.  Note that if $\Sigma$ denotes the almost-sure spectrum of  $\{ \mathbb{U}_{\omega} \}_{\omega \in \Omega}$ , $\Sigma \subset \mathbb{S}^1$ since the operators $\mathbb{U}_{\omega}$ are unitary. The $2\mathbb{Z}$-ergodicity also implies the existence of almost-sure pure point, absolutely continuous and singular continuous spectra.

The block shifting between $\mathbb{V}_{\omega}$ and $\mathbb{W}_{\omega}$ in the  product defining $\mathbb{U}_{\omega}$ allows us to consider the scattering zipper  model as a matrix-valued version of the Blatter-Browne model introduced in \cite{blatter1988zener} to understand Zener tunneling effect in diodes. It can also be considered as a matrix-valued version of the Chalker-Coddington model on a band studied in \cite{asch2010localization}. Indeed, $\mathbb{U}_{\omega}$ has the same band structure as that found in the Blatter-Browne and quasi-one-dimensional Chalker-Coddington models, but with an arbitrarily large diagonal bandwidth. Note that since $\mathbb{U}_{\omega}$ is a unitary band operator, the general results of \cite{bourget2003spectral} apply to it. To avoid any confusion, note that the zipper scattering model is very different from the Chalker-Coddington model over the entire $\Z^2$ network, since it is one-dimensional rather than two-dimensional. The scattering zipper model can also be seen as a version of CMV matrices with matrix coefficients. CMV matrices are the unitary analog of Jacobi matrices and were originally introduced in the study of orthogonal polynomials on the unit circle (see \cite{simon2005orthogonal} for a comprehensive review of this vast subject). On the other hand, Schulz-Baldes studied in \cite{SB07} a generalization of Jacobi matrices with matrix coefficients. The scattering zipper was introduced by Marin and Schulz-Baldes in \cite{marin2013scattering} with the aim of defining a unitary analog to Jacobi matrices with matrix coefficients that shares with CMV matrices their spectral properties and a simple way of representing their coefficients. In particular, the block factorization of CMV matrices into two diagonal operators (see \cite{CMV03}) is used to define the scattering zipper, without having to resort to an interpretation in terms of orthogonal polynomials on the unit disk.

We can summarize this discussion in a diagram in which the horizontal arrows represent the transition from a scalar-valued operator to a matrix-valued operator, and the vertical arrows represent the transition from a self-adjoint model to its unitary analogue.
$$\begin{array}{ccl}
\mbox{Jacobi matrices} & \longrightarrow & \mbox{Jacobi matrices with matrix coefficients} \\
\downarrow		&		& \quad \quad \quad \quad \downarrow \\
\mbox{CMV matrices}	& \longrightarrow & \mbox{Scattering zipper} 
\end{array}$$

\bigskip

Let us now present the main results obtained in this paper. Let $\{e_k\}_{k\in \mathbb{Z}}$ be the canonical basis of $\ell^2(\mathbb{Z})\otimes \C^L$. For $i,j\in \Z$, we set $e_{\{i, j\}} := e_{iL+j}$, which corresponds to the j-th component in the i-th L-block.  We now state the main result of this article.

\begin{theorem}\label{thm_DL}
There exists \(r_0 > 0\) such that for every \(\alpha \in \mathrm{GL}_{\mathrm{L}}(\mathbb{C})\) with \(\|\alpha\| \leq r_0\), there exists \(C_{r_0} > 0\) and \(b > 0\) such that for all \(\{k,p\}\) and \(\{l,q\}\) in \(\mathbb{Z} \times \llbracket1, L\rrbracket\),
\begin{equation}\label{eq_thm_DL}
\mathbb{E}\left[\sup_{n \in \mathbb{Z}}\left|\left\langle e_{\{k,p\}}, (\mathbb{U}_{\omega})^{n} e_{\{l,q\}} \right\rangle\right|\right] \leq C_{r_0} e^{-b |k-l|}.
\end{equation}
\end{theorem}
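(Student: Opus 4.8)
The plan is to run the fractional moments (Aizenman--Molchanov) method in its unitary incarnation, fixing once and for all an exponent $s\in(0,1)$. First I would introduce, for a finite block of sites $[a,b]\subset\mathbb{Z}$, the natural unitary truncation $\mathbb{U}^{[a,b]}_\omega$ of $\mathbb{U}_\omega$ acting on $\ell^2([a,b],\mathbb{C}^L)$, and use the block-factorized scattering structure \eqref{def_scatzipp1}--\eqref{def_scatmat} to rewrite the generalized eigenvalue equation $\mathbb{U}_\omega\psi=z\psi$ as a transfer-matrix recursion; applying Schur's complement to the resulting $5$-diagonal block structure yields the announced explicit formula expressing each matrix element $\langle e_{\{k,p\}},(\mathbb{U}^{[a,b]}_\omega-z)^{-1}e_{\{l,q\}}\rangle$ (for $|z|\neq 1$) in terms of the entries of the products of transfer matrices between sites $k$ and $l$. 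This is the deterministic transfer-matrix-to-resolvent input.

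The probabilistic heart is then to deduce, for a suitable annulus $\mathcal{A}_\delta=\{1-\delta<|z|<1+\delta\}$, a bound of the form
\[
\mathbb{E}\!\left[\,\big|\langle e_{\{k,p\}},(\mathbb{U}^{[a,b]}_\omega-z)^{-1}e_{\{l,q\}}\rangle\big|^{s}\right]\;\le\;C\,e^{-b|k-l|}
\]
uniformly in $[a,b]\subset\mathbb{Z}$ and in $z\in\mathcal{A}_\delta\setminus\mathbb{S}^1$. Two ingredients feed into this. On the one hand, the hypothesis $\|\alpha\|\le r_0$ is exactly what guarantees (via the earlier continuity and strict positivity of the Lyapunov exponents on $\mathcal{A}_\delta$) the exponential decay of a suitable fractional power of the norms of the transfer-matrix products, uniformly for $z$ in the annulus; carried through the Schur formula via the two announced reduction results — roughly, reducing the estimate of a single finite-resolvent matrix element to a transfer-matrix norm estimate, and controlling the passage between the block and the scalar descriptions and between finite and infinite volume — this produces the off-diagonal decay $e^{-b|k-l|}$. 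On the other hand, one first needs a uniform a priori bound $\sup_{z}\mathbb{E}[\,|\langle e_i,(\mathbb{U}^{[a,b]}_\omega-z)^{-1}e_i\rangle|^{s}]<\infty$: this is where the random unitary phases $e^{i\Theta^{(n)}_\omega}$ and $e^{i\theta^{(n)}_\omega}$ in \eqref{def_phasesUV} are used, as the Lebesgue-distributed angles provide the averaging that regularizes the diagonal resolvent entries (the analog of the regularity of the single-site distribution in the Anderson model), and a decoupling/conditioning argument over consecutive blocks then propagates the decay off the diagonal.

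Since $\mathbb{U}^{[a,b]}_\omega\to\mathbb{U}_\omega$ strongly as $a\to-\infty$, $b\to+\infty$, the finite-volume resolvents converge strongly off $\mathbb{S}^1$, and Fatou's lemma transfers the bound to the infinite-volume operator:
\[
\mathbb{E}\!\left[\,\big|\langle e_{\{k,p\}},(\mathbb{U}_\omega-z)^{-1}e_{\{l,q\}}\rangle\big|^{s}\right]\;\le\;C\,e^{-b|k-l|},\qquad z\in\mathcal{A}_\delta\setminus\mathbb{S}^1.
\]
The last step is to upgrade this to a bound on the eigenfunction correlator $Q_\omega(\{k,p\},\{l,q\})=\sup\big\{\,\big|\langle e_{\{k,p\}},f(\mathbb{U}_\omega)e_{\{l,q\}}\rangle\big|:f\text{ Borel},\ \|f\|_{\infty,\mathbb{S}^1}\le1\big\}$. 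This is the unitary version of the Aizenman et al. argument: one interpolates the $s$-th moment bound against the deterministic bound $|\langle\cdot,(\mathbb{U}_\omega-z)^{-1}\cdot\rangle|\le\dist(z,\mathbb{S}^1)^{-1}$, integrates over circles $|z|=r$ and lets $r\to1$ so that the boundary values recover the spectral measures $\mu^\omega_{\{k,p\},\{l,q\}}$, and sums over a dyadic partition of the circle to obtain $\mathbb{E}[Q_\omega(\{k,p\},\{l,q\})]\le C\,e^{-b|k-l|}$. Since for every $n\in\mathbb{Z}$ the function $z\mapsto z^n$ has modulus $1$ on $\mathbb{S}^1\supset\Sigma$, we get $\sup_{n\in\mathbb{Z}}|\langle e_{\{k,p\}},(\mathbb{U}_\omega)^n e_{\{l,q\}}\rangle|\le Q_\omega(\{k,p\},\{l,q\})$, and taking expectations yields \eqref{eq_thm_DL}.

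I expect the main obstacle to be the probabilistic step: translating the estimate on the \emph{norm} of a product of transfer matrices into an estimate on an \emph{individual} matrix element of the finite resolvent in the matrix-valued ($L\times L$-block) setting, while keeping the constants uniform in $z$ over the annulus and in the volume, and in particular setting up the decoupling that handles the dependence of the randomness between consecutive blocks together with the a priori fractional-moment bound coming from the random phases. The final passage from fractional moments of the resolvent to the eigenfunction correlator, although classical in the self-adjoint case, also needs some care here because of the annular geometry and the block structure.
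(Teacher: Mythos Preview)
Your overall plan matches the paper's fractional-moment strategy, but two differences deserve comment. First, the smallness condition $\|\alpha\|\le r_0$ is \emph{not} what ensures Lyapunov positivity on the annulus --- that holds for every $\|\alpha\|<1$ (Corollary~\ref{corollaire_pos_Lyap_couronne}); rather, $r_0$ enters only in Proposition~\ref{prop_Lambda_Mu}, where it is needed to bound uniformly in $m,n$ the ``Wronskian'' term $\|H^{2m+1}_{2n}(F^{2m+1}_{2n})^{-1}-G^{2m+1}_{2n}(E^{2m+1}_{2n})^{-1}\|$ produced by the Schur-complement inversion, and it is this bound, not Lyapunov positivity, that converts transfer-matrix decay into resolvent decay. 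Second, and more substantively, the paper's final passage from resolvent bounds to \eqref{eq_thm_DL} is not an eigenfunction-correlator argument with a ``dyadic partition of the circle'' --- language imported from the self-adjoint setting that does not translate verbatim here. Instead the paper first upgrades the fractional-moment bound to the \emph{second-order} estimate $\mathbb{E}\big[(1-|z|^2)\,|\langle e_{\{k,p\}},(\mathbb{U}_\omega-z)^{-1}e_{\{l,q\}}\rangle|^2\big]\le C e^{-\gamma|k-l|}$ (Theorem~\ref{thm_second_order}), via a rank-one averaging over the Lebesgue-distributed phase $\theta^{(2n+1)}_{\omega,k}$, and then feeds this into the Poisson-kernel identity \eqref{formula_unitary_spectral_thm} for $\mathbb{U}_\omega^{\,n}$ together with Cauchy--Schwarz over an intermediate site. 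Your correlator route is plausible in spirit, but as written it is too vague to be a proof; the paper's second-moment route is the concrete unitary implementation of the fractional-moment criterion and is what you should expect to actually carry out.
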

The estimate \eqref{eq_thm_DL} means that the family $\{\mathbb{U}_\omega\}_{\omega\in\Omega}$ satisfies the condition of dynamical localization as defined for example in \cite{hamza2009dynamical}. This property is dynamic in nature and follows the evolution of wave packets over discrete time $n\in \Z$. It tells us that the solutions of the Schr\"odinger equation are localized in space in the vicinity of their initial position and this, uniformly over time. This reflects the absence of quantum transport.

The proof of Theorem \ref{thm_DL} is based upon a number of intermediate results. First, following the result of positivity of the Lyapunov exponents on the unit circle \(\mathbb{S}^1\) obtained by Boumaza and Marin in \cite{boumaza2015absence}, we prove their strict positivity on an annulus 
\begin{equation}\label{def_Sepsilon}
  \mathbb{S}_{\epsilon}:= \{z \in \mathbb{C}; 1-\epsilon < |z| < 1+\epsilon\}  
\end{equation}
for some \(\epsilon\in (0,1]\). The proof is done by combining the positivity of the Lyapunov exponents on the unit circle and their continuity on \(\C\setminus \{ 0 \}\).

For $z\in \C\setminus \mathbb{S}^1$ let $G_{\omega}(z)$ be the resolvent at $z$ of the operator $\mathbb{U}_{\omega}$ and let $G_{\omega}(z,\cdot,\cdot)$ be its Green kernel. For $a, b \in \mathbb{Z}\cup \{\pm \infty\}$, $a < b$, we also denote by $G_{\omega}^{[a,b]}(z)$ the resolvent of the restriction of the operator $\mathbb{U}_{\omega}$ to the interval $[a,b]$ and by $G_{\omega}^{[a,b]}(z,\cdot,\cdot)$ its Green kernel : 
\begin{equation}\label{def_resolvent_ab}
    G_\omega^{[a,b]}(z) = \left(\mathbb{U}_\omega^{[a,b]} - z\right)^{-1},
\end{equation}
and for $k, l \in \mathbb{Z}$,
\begin{equation}\label{def_resolvent_ab_coef}
       G_\omega^{[a,b]}(z, k, l) = \langle e_k, \left(\mathbb{U}_\omega^{[a,b]} - z\right)^{-1} e_l \rangle.
\end{equation}
The precise definition of the restricted operator $\mathbb{U}_\omega^{[a,b]}$ is given at Section \ref{sec_finitezippers}.

We study the fractional moments of $G_{\omega}^{[a,b]}(z,\cdot,\cdot)$ for $z\in \C \setminus \mathbb{S}^1 $ and we first prove that they are uniformly bounded.

\begin{theorem}\label{thm_unif_bounded}
For every \(s \in (0, \frac14)\) there exists \( C(s)>O\) such that, for every \(z\in \C \setminus \mathbb{S}^1\), every $a, b \in \mathbb{Z}\cup \{\pm \infty\}$, $a < b$,  and every $k,l\in \mathbb{Z}$ such that \(|k-l| > 4\), 
  \begin{equation}\label{eq_thm_unif_bounded}
  \mathbb{E}\left(\|G_{\omega}^{[a,b]}(z,k,l)\|^s\right) \leq C(s).
  \end{equation}  
\end{theorem}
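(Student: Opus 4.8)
The plan is to derive the bound from an explicit formula for the diagonal and off-diagonal blocks of the finite resolvent in terms of transfer matrices, together with an a priori estimate showing that the relevant ``denominators'' have enough randomness to control negative fractional moments. First I would set up the finite restriction $\mathbb{U}_\omega^{[a,b]}$ and recall the five-diagonal block structure of $\mathbb{U}_\omega$; since $\mathbb{U}_\omega^{[a,b]} - z$ is a banded block matrix, Schur's complement (inverting block-by-block, eliminating one $L$-block at a time) expresses $G_\omega^{[a,b]}(z,k,l)$ for $|k-l|$ large as a product telescoping across the sites strictly between $k$ and $l$, times boundary factors at $k$ and $l$. The key structural point is that each elementary factor in this product is, up to invertible left/right dressing by the unitary phases $e^{i\theta^{(n)}_\omega}, e^{i\Theta^{(n)}_\omega}$, a block built from $\alpha, \rho(\alpha), \tilde\rho(\alpha)$ and the Haar-distributed unitaries $\widehat U^{(n)}_\omega, \widehat V^{(n)}_\omega$; in particular the off-diagonal entries carry factors of $\rho(\alpha)U^{(n)}_\omega$ which are $O(1)$ in norm uniformly in $z\in\C\setminus\mathbb{S}^1$ (this is crucial: no factor of $z$ blows up since $|z|$ is bounded on any neighbourhood of $\mathbb{S}^1$, and for $|z|$ large the resolvent is small anyway, so the whole statement reduces to $z$ in a fixed annulus).

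Next I would isolate the only place where the estimate can fail: the inverse of the Schur complement at a single site, which has the schematic form $(A_\omega - z)^{-1}$ where $A_\omega$ is a fixed unitary block perturbed by the i.i.d.\ phases, or more precisely the inverse of a $2L\times 2L$ (or $L\times L$) matrix of the form $M_\omega - z$ with $M_\omega$ unitary. Because $z$ lies off the unit circle, $\|(M_\omega - z)^{-1}\| \le \operatorname{dist}(z,\mathbb{S}^1)^{-1}$ deterministically — but that bound is not uniform in $z$, so the heart of the argument is to show the \emph{fractional} moment $\mathbb{E}\big(\|(M_\omega - z)^{-1}\|^s\big)$ is bounded uniformly in $z$ for $s<\tfrac14$. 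For this I would follow the standard Aizenman–Molchanov-type mechanism adapted to the unitary setting: write the smallest singular value of $M_\omega - z$ in terms of a determinant or of a scalar quantity $\langle \varphi, (M_\omega - z)\psi\rangle$ and use that one of the random unitary phases, say $e^{i\theta^{(n)}_{\omega,j}}$, is uniformly distributed on the circle and enters this scalar affinely, so that the conditional law of that scalar has bounded density near $0$; then $\mathbb{E}(|{\rm scalar}|^{-s})$ is finite for $s<1$, and combining $L$ such independent phases (or using the Bernoulli $d$'s to rule out the degenerate configurations) yields the exponent $\tfrac14$ with a constant independent of $z$. The continuity and strict positivity of the Lyapunov exponents on $\mathbb{S}_\epsilon$ are \emph{not} needed here — they enter only in the decay rate $e^{-b|k-l|}$, i.e.\ in the later theorems, whereas the present statement is a uniform (non-decaying) bound, so I would not invoke them.

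Finally I would assemble the pieces: the Schur-complement formula writes $G_\omega^{[a,b]}(z,k,l)$ as a product of $|k-l|$-many ``one-step'' factors, each of which is a product of an $O(1)$ numerator block and one ``dangerous'' inverse; using Hölder's inequality with exponent matched so that $s/|k-l|$-type powers are taken at each site, together with the independence of the phases at distinct sites, reduces $\mathbb{E}(\|G_\omega^{[a,b]}(z,k,l)\|^s)$ to a product of per-site fractional-moment bounds, each uniformly bounded by the previous paragraph — but one must be careful, since for a \emph{uniform} (not exponentially small) bound it suffices to take the full power $s$ on just one well-chosen factor and crude deterministic bounds on the rest, provided those crude bounds are $z$-uniform, which they are on the relevant annulus after the reduction. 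The condition $|k-l|>4$ guarantees there is at least one ``interior'' site strictly between $k$ and $l$ whose phases are independent of everything appearing in the boundary factors, so that the conditioning argument applies cleanly. I expect the main obstacle to be the bookkeeping of the Schur-complement formula — getting a clean closed form for the resolvent blocks of a five-diagonal \emph{block} operator and verifying that the boundary terms at $k$ and $l$ remain $O(1)$ in norm uniformly in $a,b$ and in $z$ near $\mathbb{S}^1$ — rather than the probabilistic input, which is a routine averaging over a uniform phase once the algebra is in place.
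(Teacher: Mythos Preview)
Your approach has a genuine gap. You propose to telescope $G_\omega^{[a,b]}(z,k,l)$ via Schur complements into a product of $O(|k-l|)$ per-site factors, then average over one interior phase and bound ``the rest'' deterministically. But the deterministic bounds on the remaining factors are not uniform in $|k-l|$: each Schur complement (equivalently, each transfer-matrix step) is only bounded by some constant $C>1$, and the product of $|k-l|$ such factors grows like $C^{|k-l|}$. Your claim that the crude bounds are ``$z$-uniform on the relevant annulus'' is true for a \emph{single} factor, but the issue is uniformity in $k,l$, which you cannot get from a product structure without invoking exponential decay of the transfer-matrix cocycle --- i.e.\ without Lyapunov positivity, which you rightly say should not be needed here. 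In fact the paper does use Schur complements and transfer-matrix products, but only later, for the exponential-decay Theorem~\ref{thm_exp_decay_even_odd}, where Lyapunov positivity supplies the missing decay; for the present uniform bound that route does not close.

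The paper's proof is structurally different and follows the Hamza--Joye--Stolz rank-two perturbation scheme. One first passes from the resolvent to the Cayley-type object $(\mathbb U_\omega+z)(\mathbb U_\omega-z)^{-1}$, then factorizes $\mathbb U_\omega=\mathbb Y_\omega D_a\mathbb X_\omega$ so that the symmetric combination $a=\tfrac12(\theta^{(k)}_\omega+\theta^{(l)}_\omega)$ of the phases at sites $k$ and $l$ appears only through a rank-two multiplicative perturbation $D_a$. Projecting onto the two-dimensional range $P_A$ yields $2\times2$ operators $F_z$ and $\widehat F_z$ related by a M\"obius transformation in $e^{-ia}$, with $\pm i\widehat F_z$ dissipative; averaging over the \emph{uniform} variable $a$ then gives $\mathbb E\big[\|F_z\|^s\big]\le C(s)$ uniformly in $z$ and in $k,l$, by the same mechanism as in \cite{hamza2009dynamical}. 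The key point you are missing is that this argument never sees the product over intermediate sites at all: the $(k,l)$-block of the modified resolvent is controlled by a $2\times2$ object whose dependence on the two distinguished phases is explicit and whose remaining dependence on $\omega$ and $z$ is absorbed into the dissipativity of $\widehat F_z$, which holds deterministically for every realization and every $z\notin\mathbb S^1$. Your single-site averaging idea is the right probabilistic input, but it has to be combined with the rank-two Krein-type formula, not with a telescoping product.
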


Once we have obtained this uniform bound, combining it with the positivity of the smallest Lyapunov exponent and proving estimates on the blocks of the products of transfer matrices, we prove the exponential decay of some blocks of the restricted Green kernel to suitable intervals.

\begin{theorem}\label{thm_exp_decay_even_odd}
There exists $r_0>0$, $\epsilon_0>0$, \(s_0 \in (0,1)\), $p_0>1$, \(C_{s_0,r_0}> 0\) and \(\gamma > 0\) such that, for every $\alpha\in GL_L(\mathbb{C})$ with $\|\alpha\|\leq r_0,$ every $s\in (0,s_0]$ and every $\epsilon\in (0,\epsilon_0]$,
\begin{equation}\label{eq_exp_decay_reduced}
    \mathbb{E}\left(\|G_{\omega}^{[2n,2m+1]}(z, 2n, 2m+1)\|^s\right) \leq C e^{-\gamma |m-n|}
\end{equation} 
for every \(z \in \mathbb{S}_\epsilon \setminus\mathbb{S}^1 \) and for every $m$ and $n$ in $\mathbb{Z}$ such that \(|m-n| > p_0\).
\end{theorem}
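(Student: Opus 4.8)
The plan is to combine three ingredients prepared in the paper: the explicit Schur-complement expression of the coefficients of the finite resolvent in terms of transfer matrices (Section \ref{sec_finitezippers}), the strict positivity of the Lyapunov exponents on an annulus $\mathbb{S}_{\epsilon_0}$ together with the associated exponential estimates on the blocks of products of transfer matrices, and the a priori uniform fractional-moment bound of Theorem \ref{thm_unif_bounded}.

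\emph{Step 1 --- reduction to an ``inverse Wronskian'' block.} Let $T^{(j)}_\omega(z)$ be the transfer matrix at site $j$ and $\mathcal{T}^{[2n,2m+1]}_\omega(z)=T^{(2m+1)}_\omega(z)\cdots T^{(2n)}_\omega(z)$ the ordered product over the interval. I would use the Schur-complement formula to write the corner block as
$$G^{[2n,2m+1]}_\omega(z,2n,2m+1)\;=\;A_\omega(z)\,\big(\mathcal{W}^{[2n,2m+1]}_\omega(z)\big)^{-1}\,B_\omega(z),$$
where $\mathcal{W}^{[2n,2m+1]}_\omega(z)$ is the distinguished $L\times L$ block of $\mathcal{T}^{[2n,2m+1]}_\omega(z)$ playing the role of a matrix Wronskian of the two half-line solutions, and $A_\omega(z),B_\omega(z)$ are boundary factors depending only on the scattering data in a bounded neighbourhood of the two endpoints. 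Invertibility of $\mathcal{W}^{[2n,2m+1]}_\omega(z)$ for $z\notin\mathbb{S}^1$, and of the auxiliary blocks used on the way, is provided by the invertibility lemmas proved earlier (e.g.\ Lemma \ref{lem1alphainversible}) once $\|\alpha\|\le r_0$ with $r_0$ small. The problem is thus reduced to (a) exponential decay of the fractional moments of $\|(\mathcal{W}^{[2n,2m+1]}_\omega(z))^{-1}\|$ and (b) a uniform bound on the fractional moments of the boundary factors.

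\emph{Step 2 --- exponential decay from positive Lyapunov exponents.} From the positivity of the Lyapunov exponents on $\mathbb{S}^1$ of Boumaza--Marin and their continuity on $\mathbb{C}\setminus\{0\}$, compactness yields $\epsilon_0\in(0,1]$ with $\gamma_\ast:=\inf\{\gamma_L(z):z\in\overline{\mathbb{S}_{\epsilon_0}}\}>0$, where $\gamma_L(z)$ denotes the smallest Lyapunov exponent of the transfer cocycle at $z$ (this is exactly where \emph{strict} positivity of all the exponents is used). Because $\mathcal{W}^{[2n,2m+1]}_\omega(z)$ carries the slowest-growing singular value of the product, one has $\log\sigma_{\min}\big(\mathcal{W}^{[2n,2m+1]}_\omega(z)\big)\approx\gamma_L(z)\,|m-n|$, with fluctuations governed by the large-deviation estimates for products of i.i.d.\ random matrices (in the spirit of Le Page and Guivarc'h--Raugi), applicable thanks to the strong-irreducibility and contraction properties of the transfer cocycle already verified, and uniform in $z\in\mathbb{S}_{\epsilon_0}$ since $\gamma_\ast>0$. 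Concretely I would extract from the second-order block estimate (Theorem \ref{thm_second_order}) a bound $\mathbb{E}\big(\|(\mathcal{W}^{[2n,2m+1]}_\omega(z))^{-1}\|^{2}\big)\le C e^{-2\gamma_0|m-n|}$, valid for $z\in\mathbb{S}_{\epsilon_0}\setminus\mathbb{S}^1$ and $|m-n|>p_0$, and then deduce, by Jensen's inequality, $\mathbb{E}\big(\|(\mathcal{W}^{[2n,2m+1]}_\omega(z))^{-1}\|^{s}\big)\le C^{s/2}e^{-s\gamma_0|m-n|}$ for all $s\in(0,2]$.

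\emph{Step 3 --- assembling with the a priori bound.} It remains to incorporate the boundary factors. Since $A_\omega(z)$ and $B_\omega(z)$ involve only finitely many scattering matrices localized at the two endpoints, they are (up to the two endpoint sites) independent of $\mathcal{W}^{[2n,2m+1]}_\omega(z)$, and after rewriting them in terms of one-block resolvents Theorem \ref{thm_unif_bounded} bounds their fractional moments uniformly in $z$. By Hölder's inequality, for $s$ small and a suitable $\theta\in(0,1)$,
$$\mathbb{E}\big(\|G^{[2n,2m+1]}_\omega(z,2n,2m+1)\|^s\big)\;\le\;\mathbb{E}\big((\|A_\omega(z)\|\,\|B_\omega(z)\|)^{\frac{s}{1-\theta}}\big)^{1-\theta}\;\mathbb{E}\big(\|(\mathcal{W}^{[2n,2m+1]}_\omega(z))^{-1}\|^{\frac{s}{\theta}}\big)^{\theta},$$
the first factor being uniformly bounded and the second decaying exponentially in $|m-n|$ by Step 2. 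This yields \eqref{eq_exp_decay_reduced} with suitable $s_0\in(0,1)$, $\gamma>0$, $p_0>1$ and $C>0$, for every $\alpha$ with $\|\alpha\|\le r_0$, every $s\in(0,s_0]$ and every $\epsilon\le\epsilon_0$.

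\emph{Main obstacle.} The heart of the argument is Step 2: proving the exponential (second- or fractional-moment) decay of $\|(\mathcal{W}^{[2n,2m+1]}_\omega(z))^{-1}\|$ \emph{with constants that stay under control as $|z|\to1$}. This forces the large-deviation machinery for random matrix products to be applied uniformly on the annulus $\mathbb{S}_{\epsilon_0}$, which is precisely why the uniform positivity of the smallest Lyapunov exponent on $\overline{\mathbb{S}_{\epsilon_0}}$ --- obtained from its positivity on $\mathbb{S}^1$ and continuity on $\mathbb{C}\setminus\{0\}$ --- and the verification of strong irreducibility and contraction on the whole annulus (not merely on $\mathbb{S}^1$) had to be established beforehand. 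A secondary difficulty is the bookkeeping in Step 1: identifying the block of $\mathcal{T}^{[2n,2m+1]}_\omega(z)$ that is the matrix Wronskian (so that it is invertible off $\mathbb{S}^1$ and carries the $\gamma_L(z)$-rate) and checking that the remaining boundary factors can be absorbed into Theorem \ref{thm_unif_bounded} without destroying the independence needed in Step 3.
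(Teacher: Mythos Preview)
Your proposal has two genuine gaps.

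First, Step~2 is circular: you propose to ``extract from the second-order block estimate (Theorem \ref{thm_second_order})'' the exponential decay of $\mathbb{E}(\|(\mathcal{W}^{[2n,2m+1]}_\omega(z))^{-1}\|^2)$. But Theorem~\ref{thm_second_order} is proved in Section~\ref{sec_DL} \emph{using} Theorem~\ref{thm_greengeneral}, which in turn rests on Theorem~\ref{thm_exp_decay_even_odd}. You cannot invoke it here.

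Second, and more fundamentally, the assertion that the block $\mathcal{W}^{[2n,2m+1]}_\omega(z)$ ``carries the slowest-growing singular value of the product'' is exactly the hard step, and you have not justified it. In the paper, the Green block is written (Lemma~\ref{lemma_formula_resolv}) as $E^{-1}$ times a boundary factor, where $E=(C-VA)+(D-VB)U$ is a \emph{linear combination of all four $L\times L$ blocks} of the transfer-matrix product, not a single block. Relating $\|E^{-1}\|$ to the inverse of the full product $\|(T^{2m+1}_{2n}(z))^{-1}\|$ is the core of the argument: the paper does it by writing $(T^{2m+1}_{2n}(z))^{-1}$ via a Schur complement (Lemma~\ref{lemma_lower_bounded_inv_trans_mat}), which yields $\|(T^{2m+1}_{2n}(z))^{-1}\|\ge \|E^{-1}\|/\|HF^{-1}-GE^{-1}\|$, and then proving that the denominator $\|HF^{-1}-GE^{-1}\|$ is \emph{uniformly bounded in $m,n$} (Proposition~\ref{prop_Lambda_Mu}). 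The latter is where the smallness condition $\|\alpha\|\le r_0$ actually enters: a sub-arithmetico-geometric recursion on Cayley-type quotients of blocks is shown to stay strictly below~$1$ only when $\|\alpha\|$ is small. Your sketch never produces this comparison and never explains why $r_0$ is needed at this stage (Lemma~\ref{lem1alphainversible}, which you cite, is used elsewhere in the paper, for the reduction step of Proposition~\ref{prop_reduc_paire}, not for the present estimate). Once that comparison is in hand, the exponential decay comes directly from Lemma~\ref{lemma_exp_decay_trans_mat_inv} applied to the full product inverse --- no large-deviation machinery for individual blocks, no H\"older splitting with Theorem~\ref{thm_unif_bounded}, and no independence argument are needed, since the boundary factor $L^{2m+1}_{2m}-VK^{2m+1}_{2m}$ is deterministically bounded (Lemma~\ref{lem_kappa}).
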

Theorem \ref{thm_exp_decay_even_odd} is central in the proof of Theorem \ref{thm_DL}. Unlike traditional approaches, like the one used by Hamza, Joye, and Stolz (see \cite{hamza2009dynamical}) which relies on expressing the Green's function in terms of eigenfunctions with particular boundary conditions to obtain exponential decay through transfer matrices, we must adopt a different method. The block configuration of our model does not allow us to directly apply this strategy since the expression for the Green kernel obtained in this case is too complicated to estimate (see \cite{damanik2008analytictheorymatrixorthogonal} for such an expression of the Green kernel). Therefore, we use the Schur complement as an alternative. This method allows us to identify a term whose norm expectation must be bounded to demonstrate the decay of the reduced case.

In order to be able to reduce our analysis to the blocks for which we have the exponential decaying \eqref{eq_exp_decay_reduced}, we prove two reduction results. We start by showing that it suffices to deal with even-sized scattering zipper operators.

 \begin{proposition}\label{prop_reduc_paire}
    Assume $\alpha\in \mathrm{GL}_{\mathrm{L}}(\C) $ is such that $\|\alpha\|<1$. Let $s\in (0,\frac{1}{4})$, $\epsilon\in (0,1)$ and let $k,l\in \mathbb{Z}$ such that $|k-l|>4$. There exists $C(s,\epsilon)>0 $ such that:
\begin{equation}\label{ineq_resolv_temrespairs}
    \mathbb{E}\left( \|G_{\omega}^{[a,b]}(z,k,l) \|^s\right)^2\leq C(s,\epsilon) \sum_{i,j=0}^1 \mathbb{E} (\|G_{\omega}^{[a,b]}(z,2n+2i,2m+1+2j)  \|^{4s})^{\frac{1}{2}},
\end{equation}
for all $z \in \mathbb{S}_\epsilon$ and $n,m$ such that $k\in \{2n,2n+1\}$ and $l\in \{2m,2m+1\}$. 
  \end{proposition}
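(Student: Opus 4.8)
The plan is to isolate a deterministic statement and then make the probability an afterthought. Concretely, I would first prove the almost sure pointwise bound
\begin{equation*}
\|G_\omega^{[a,b]}(z,k,l)\|\;\le\; C\sum_{i,j=0}^{1}\big\|G_\omega^{[a,b]}(z,2n+2i,2m+1+2j)\big\|,
\end{equation*}
valid for every $z\in\mathbb{S}_\epsilon$ off the finite spectrum of $\mathbb{U}_\omega^{[a,b]}$, with $C$ depending only on $\epsilon$ (and on $\|\alpha\|$). From it the stated inequality follows by soft arguments only: raise to the power $s\in(0,1)$ and use $(\sum_{i,j}t_{ij})^{s}\le\sum_{i,j}t_{ij}^{s}$; take $\mathbb{E}$; square and use $(\sum_{i,j}a_{ij})^{2}\le 4\sum_{i,j}a_{ij}^{2}$; and finally apply Jensen's inequality twice, $(\mathbb{E}X)^{2}\le\mathbb{E}(X^{2})$ and $\mathbb{E}(\|G\|^{2s})=\mathbb{E}\big((\|G\|^{4s})^{1/2}\big)\le\mathbb{E}(\|G\|^{4s})^{1/2}$, which turns $(\mathbb{E}\|G\|^{s})^{2}$ into $\mathbb{E}(\|G\|^{4s})^{1/2}$ and yields $C(s,\epsilon)=4C^{2s}$ (the restriction $s<\tfrac14$ is only there so that $4s<1$ keeps the right-hand side inside the fractional-moment regime). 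So all the work is the pointwise index reduction.

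For the latter I would reduce the column index to an odd value and the row index to an even value independently, each time using a first order (finite difference) equation for the resolvent coming from the factorization $\mathbb{U}_\omega=\mathbb{V}_\omega\mathbb{W}_\omega$. Since the restriction is built so that $\mathbb{U}_\omega^{[a,b]}=\mathbb{V}_\omega^{[a,b]}\mathbb{W}_\omega^{[a,b]}$ with $\mathbb{V}_\omega^{[a,b]},\mathbb{W}_\omega^{[a,b]}$ unitary, multiplying $(\mathbb{U}_\omega^{[a,b]}-z)G_\omega^{[a,b]}(z)=I=G_\omega^{[a,b]}(z)(\mathbb{U}_\omega^{[a,b]}-z)$ by $(\mathbb{V}_\omega^{[a,b]})^{*}$ on the left, resp.\ by $(\mathbb{W}_\omega^{[a,b]})^{*}$ on the right, gives
\begin{equation*}
\big(\mathbb{W}_\omega^{[a,b]}-z(\mathbb{V}_\omega^{[a,b]})^{*}\big)G_\omega^{[a,b]}(z)=(\mathbb{V}_\omega^{[a,b]})^{*},\qquad
G_\omega^{[a,b]}(z)\big(\mathbb{V}_\omega^{[a,b]}-z(\mathbb{W}_\omega^{[a,b]})^{*}\big)=(\mathbb{W}_\omega^{[a,b]})^{*}.
\end{equation*}
Now $\mathbb{V}_\omega^{[a,b]},\mathbb{W}_\omega^{[a,b]}$ and their adjoints couple only block-sites lying in one of the pairs underlying $\mathbb{V}_\omega$ (resp.\ $\mathbb{W}_\omega$), and the two pair-patterns are offset by one block; hence each block-row of $\mathbb{W}_\omega^{[a,b]}-z(\mathbb{V}_\omega^{[a,b]})^{*}$ has only a bounded number of nonzero $L\times L$ blocks, whose explicit expressions involve $\alpha,\ \rho(\alpha)\widehat U^{(\cdot)}_\omega,\ \widehat V^{(\cdot)}_\omega\widetilde\rho(\alpha)$, and the right-hand side block $\big((\mathbb{V}_\omega^{[a,b]})^{*}\big)_{k,l}$ vanishes because $|k-l|>4$ keeps $k$ and $l$ in different pairs (similarly for the second identity when $l=2m$). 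Reading off two consecutive block-rows — say the $2n$-th and $(2n{+}1)$-th, which both involve only the block-columns $2n,2n+1,2n+2,2n+3$ of $G_\omega^{[a,b]}(z,\cdot,l)$ — one obtains a $2L\times2L$ linear system for the pair $\big(G_\omega^{[a,b]}(z,2n+1,l),\,G_\omega^{[a,b]}(z,2n+3,l)\big)$ whose right-hand side is a combination, with coefficients of norm $\le 2+\epsilon$, of $G_\omega^{[a,b]}(z,2n,l)$ and $G_\omega^{[a,b]}(z,2n+2,l)$, and whose matrix $\mathcal{T}=\mathcal{T}_\omega(z)$ is, up to normalization, one transfer-matrix step of the model, built from $\rho(\alpha),\widetilde\rho(\alpha)$ and the random unitaries. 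Inverting $\mathcal{T}$ gives $\|G_\omega^{[a,b]}(z,2n+1,l)\|\le C\big(\|G_\omega^{[a,b]}(z,2n,l)\|+\|G_\omega^{[a,b]}(z,2n+2,l)\|\big)$; the analogous manipulation with the second identity controls $\|G_\omega^{[a,b]}(z,k',2m)\|$ by $\|G_\omega^{[a,b]}(z,k',2m+1)\|$ and $\|G_\omega^{[a,b]}(z,k',2m+3)\|$; composing the two reductions yields the pointwise bound with its four terms.

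The crux, and the step I expect to be the main obstacle, is to invert $\mathcal{T}_\omega(z)$ with a bound on $\|\mathcal{T}_\omega(z)^{-1}\|$ uniform over $z\in\mathbb{S}_\epsilon\setminus\mathbb{S}^1$ and over $\omega$. A careless choice of block-row produces a pivot of the form $\alpha-z\,\widetilde\rho(\alpha)\widehat V^{(\cdot)*}_\omega$, which depends on $z$ and may be singular for some $z$ in the annulus; the reduction must be organised so that the determinant of the relevant $2L\times2L$ matrix equals, up to a nonzero monomial in $z$, a factor whose modulus is bounded below purely in terms of $\|\alpha\|$ (this is the invertibility property of the CMV/scattering-zipper transfer matrices, a consequence of unitarity and of the particular $S(\alpha,U,V)$ form of the blocks), and combined with $|z|>1-\epsilon$ this gives $\|\mathcal{T}_\omega(z)^{-1}\|\le C(\|\alpha\|,\epsilon)$; keeping this bound uniform down to $\|\alpha\|$ merely $<1$ (rather than $\le r_0$) is where the choice of which rows/columns to combine is delicate. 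A secondary, purely bookkeeping point is that near the two endpoints of $[a,b]$ the pair-patterns of $\mathbb{V}_\omega^{[a,b]},\mathbb{W}_\omega^{[a,b]}$ are deformed on $O(1)$ blocks, so the identities above acquire one or two extra neighbouring blocks there; this is absorbed into $C$ and is exactly why the right-hand side ranges over $2n+2i$ and $2m+1+2j$ rather than over a single block.
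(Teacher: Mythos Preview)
Your plan hinges on a pointwise bound $\|G_\omega^{[a,b]}(z,k,l)\|\le C\sum_{i,j}\|G_\omega^{[a,b]}(z,2n+2i,2m+1+2j)\|$ with $C$ uniform in $\omega$ and in $z\in\mathbb{S}_\epsilon\setminus\mathbb{S}^1$. This bound is not available, and that is exactly why the paper does \emph{not} proceed pointwise. Whatever pair of rows of $\mathbb{W}-z\mathbb{V}^{*}$ (or of $\mathbb{V}-z\mathbb{W}^{*}$) you combine, the resulting $2L\times 2L$ coefficient matrix on the odd block-indices is block--triangular with a diagonal block of the form $\alpha+z(U_\omega^{(\cdot)})^{*}\alpha (V_\omega^{(\cdot)})^{*}$ (equivalently $\delta_k+z\alpha_k^{*}$ up to unitaries). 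This matrix is only \emph{almost surely} invertible (Lemma~\ref{lem_invertibility}), and its inverse norm is \emph{not} bounded uniformly in $(\omega,z)$: already for $L=1$ one has $|\alpha+zVU\alpha|=|\alpha|\,|1+zVU|\ge|\alpha|\,\big||z|-1\big|$, so the constant blows up like $\operatorname{dist}(z,\mathbb{S}^{1})^{-1}$; for $L\ge2$ it is easy to produce $\alpha$ with distinct singular values and phases $(U_\omega,V_\omega)$ in the support of the law so that $\alpha+zV_\omega\alpha U_\omega$ is genuinely singular at some $z\in\mathbb{S}_\epsilon\setminus\mathbb{S}^{1}$. Your hope that ``$\mathcal T$ is, up to normalization, one transfer-matrix step'' does not rescue this: the genuine transfer matrices $T_\omega^{(n)}(z)$ are indeed uniformly invertible, but they act on pairs $(\phi_k,\psi_k)$ with $\psi=\mathbb{W}\phi$, and eliminating $\psi$ to get a relation purely in $G$ is precisely what produces the bad pivot $\delta_k+z\alpha_k^{*}$.

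What the paper does instead is accept the random factor. From the three--term relation (Lemma~\ref{lmm3.2.1}) one gets the pointwise identity
\[
G_\omega^{[a,b]}(z,k,l)=[\delta_k+z\alpha_k^{*}]^{-1}\Big(\gamma_{k-1}G_\omega^{[a,b]}(z,k-1,l)+z(\gamma_k^{-1})^{*}G_\omega^{[a,b]}(z,k+1,l)\Big),
\]
then takes $\|\cdot\|^{s}$, applies H\"older to split off $\|[\delta_k+z\alpha_k^{*}]^{-1}\|^{2s}$, and bounds $\mathbb{E}\big(\|[\delta_k+z\alpha_k^{*}]^{-1}\|^{2s}\big)$ by a constant $C(s)$ independent of $z$ via Lemma~\ref{lem1alphainversible} (this is where the absolute continuity of the law of $e^{i\theta_\omega}$ is used, through $\int_0^{2\pi}|1-ze^{i\theta}|^{-s}d\theta<\infty$). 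This is why the statement carries a square on the left and a $4s$ on the right: the H\"older step, not Jensen, is doing the work. Your soft step from a deterministic pointwise bound to the inequality would be fine if the pointwise bound held, but it does not; the randomness of the pivot is essential here.
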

\begin{comment}
\begin{lemma}\label{lemmapair}
       Let $\alpha \in \mathrm{GL}_{\mathrm{L}}(\mathbb{C})$ and let $\epsilon>0$. Let $a,b\in \mathbb{Z}$, $\mid a-b\mid>4$. Let $s \in (0,1/4)$. There exists $C(s)>0$ such that   
   \begin{equation}\label{eq_lemmapair}
 \mathbb{E}\left( \|G_{\omega}^{[a,b]}(z,k,l) \|^s\right)^2 \leq C(s) \sum_{i,j=0}^1 \mathbb{E} (\|G_{\omega}^{[a,b]}(z,2n+2i,2m+2j+1)  \|^{4s})^{\frac{1}{2}}  
   \end{equation} 
for every $z \in \mathbb{S}_\epsilon\setminus\mathbb{S}^1$, every $k,l\in[a+2,b-2]$ such that $|k-l| > 4$ and every  $n,m\in\mathbb{Z}$ such that $k \in \{2n,2n+1\}$ and $l \in \{2m,2m+1\}$.
\end{lemma}
\end{comment}
\noindent The proof involves bounding the norm of the "even" blocks $\|G_{\omega}^{[a,b]}(z,2n,2m) \|$ and~$\|G_{\omega}^{[a,b]}(z,2n+1,2m) \|$ by the norms of the "odd" blocks $\|G_{\omega}^{[a,b]}(z,2n,2m+1) \|$ and~$\|G_{\omega}^{[a,b]}(z,2n+1,2m+1) \|$. A necessary condition for establishing this is the invertibility of $\alpha$. This invertibility is necessary to prove the following lemma.

\begin{lemma}\label{lem1alphainversible}
 If $\alpha\in \mathrm{GL}_{\mathrm{L}}(\mathbb{C})$, then for every $\epsilon >0$, for every $n\in \mathbb{Z}$ and for every $z\in\mathbb{S}_\epsilon$, the matrix 
     $\alpha+zV^{(n)}_\omega\alpha U_\omega^{(n)}$ is invertible almost surely, and for every $ s\in (0,1)$, there exists $C(s)>0$ such that
\begin{equation}
\forall n\in \mathbb{Z}, \ \ \mathbb{E}\left( \| (\alpha+ z V_\omega^{(n)}\alpha U_\omega^{(n)})^{-1}\|^s\right)\leq C(s).
\end{equation}     
\end{lemma}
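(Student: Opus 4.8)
The strategy is to use only the randomness of the phases $\Theta^{(n)}_\omega$ and to reduce everything to a determinant estimate; nothing depends on $n$, so all constants will automatically be uniform in $n$, and I write $V_\omega,\widehat V_\omega,\theta_{\omega,j},\Theta_{\omega,j},\dots$ for $V^{(n)}_\omega,\widehat V^{(n)}_\omega,\theta^{(n)}_{\omega,j},\Theta^{(n)}_{\omega,j},\dots$. By \eqref{def_phasesUV}, $V_\omega=\widehat V_\omega e^{i\theta_\omega}$ and $U_\omega=e^{i\Theta_\omega}\widehat U_\omega$; set $w=(w_1,\dots,w_L)$ with $w_j=e^{i\Theta_{\omega,j}}$, so $e^{i\Theta_\omega}=\diag(w_1,\dots,w_L)$. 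Then
\[
M:=\alpha+zV_\omega\alpha U_\omega=\alpha+zB\,\diag(w_1,\dots,w_L)\,C,\qquad B:=\widehat V_\omega e^{i\theta_\omega}\alpha,\quad C:=\widehat U_\omega,
\]
with $B,C\in\mathrm{GL}_L(\C)$ (since $\alpha\in\mathrm{GL}_L(\C)$ and $\widehat U_\omega,\widehat V_\omega,e^{i\theta_\omega}$ are unitary). Hence, with $K:=B^{-1}\alpha C^{-1}$, $M=B\bigl(K+z\diag(w)\bigr)C$ and
\[
|\det M|=|\det B|\,|\det C|\,\bigl|\det(z\diag(w_1,\dots,w_L)+K)\bigr|=|\det\alpha|\,\bigl|\det(z\diag(w_1,\dots,w_L)+K)\bigr|,
\]
because $|\det B|=|\det\alpha|$, $|\det C|=1$. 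Moreover $|\det K|=1$ and $\|K\|_2,\|K^{-1}\|_2\le\kappa:=\|\alpha\|_2\|\alpha^{-1}\|_2$, and, conditionally on $(\widehat U_\omega,\widehat V_\omega,\theta_\omega)$ — equivalently on $K$ — the variables $\Theta_{\omega,1},\dots,\Theta_{\omega,L}$ are still i.i.d.\ uniform on $[0,2\pi]$. (Here $\|\cdot\|_2$ is the operator norm attached to the Euclidean norm on $\C^L$; since all subordinate norms on $\mathcal M_L(\C)$ are equivalent up to an $L$‑dependent constant, it suffices to work with $\|\cdot\|_2$.) Almost sure invertibility is then immediate: for fixed $z\ne0$ and fixed $(\widehat U_\omega,\widehat V_\omega,\theta_\omega)$, the map $w\mapsto\det(\alpha+zB\diag(w)C)$ is a polynomial on $\C^L$ taking the value $\det\alpha\ne0$ at $w=0$, hence is not identically zero, hence is nonzero off a Lebesgue‑null subset of the torus $(\mathbb S^1)^L$; averaging over $(\widehat U_\omega,\widehat V_\omega,\theta_\omega)$ gives the claim, and nothing in this argument needs $|z|\ne1$.

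For the moment bound the key point is the following estimate, which I would prove by induction on the size $L$: for every $s\in(0,1)$ there is $C_s<\infty$ such that, for every $K\in\mathcal M_L(\C)$ — \emph{not} assumed invertible — and every $z\in\C$,
\[
I_L(z,K):=\frac1{(2\pi)^L}\int_{[0,2\pi]^L}\bigl|\det\bigl(z\diag(e^{i\phi_1},\dots,e^{i\phi_L})+K\bigr)\bigr|^{-s}\,d\phi_1\cdots d\phi_L\ \le\ C_s^{\,L}\,|z|^{-Ls}.
\]
The base case $L=1$ is the elementary uniform bound $\sup_{a\in\C}\frac1{2\pi}\int_0^{2\pi}|e^{i\phi}-a|^{-s}\,d\phi=:C_s<\infty$ for $s\in(0,1)$ (the worst case $|a|=1$ gives a singularity of order $s<1$). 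For the inductive step, freeze $\phi_2,\dots,\phi_L$: only the $(1,1)$ entry of $z\diag(e^{i\phi_1},\dots,e^{i\phi_L})+K$ depends on $w_1:=e^{i\phi_1}$, linearly with slope $z$, so Laplace expansion along the first row gives
\[
\det\bigl(z\diag(e^{i\phi_1},\dots,e^{i\phi_L})+K\bigr)=c_0+w_1c_1,\qquad c_1=z\,\det\bigl(z\diag(e^{i\phi_2},\dots,e^{i\phi_L})+K[\hat1\mid\hat1]\bigr),
\]
where $K[\hat1\mid\hat1]$ is $K$ with its first row and column deleted. As a polynomial in $(e^{i\phi_2},\dots,e^{i\phi_L})$ this $c_1$ is multiaffine with leading coefficient $z^{L}$, hence not identically zero, so $c_1\ne0$ for a.e.\ $(\phi_2,\dots,\phi_L)$; for such values the case $L=1$ gives $\frac1{2\pi}\int_0^{2\pi}|c_0+w_1c_1|^{-s}\,d\phi_1\le C_s|c_1|^{-s}=C_s|z|^{-s}\bigl|\det(z\diag(e^{i\phi_2},\dots,e^{i\phi_L})+K[\hat1\mid\hat1])\bigr|^{-s}$. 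Integrating in $(\phi_2,\dots,\phi_L)$ and applying the inductive hypothesis to the $(L-1)\times(L-1)$ matrix $K[\hat1\mid\hat1]$ yields $I_L(z,K)\le C_s|z|^{-s}I_{L-1}(z,K[\hat1\mid\hat1])\le C_s^{\,L}|z|^{-Ls}$.

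It then remains to assemble the pieces. The regime $|z|$ small ($|z|<1/(2\kappa)$, so $\|zK^{-1}\diag(w)\|_2<\tfrac12$, whence $|\det(z\diag(w)+K)|\ge 2^{-L}|\det K|=2^{-L}$ and $\|M^{-1}\|_2$ is bounded deterministically) and the regime $|z|$ large (where $M=zB\diag(w)C\,(\mathrm I+O(|z|^{-1}))$ forces $\|M^{-1}\|_2\lesssim|z|^{-1}$ deterministically) need no probability. On the remaining bounded range of $|z|$ one has $\|M\|_2\le(1+|z|)\|\alpha\|_2=:R$, hence by the adjugate formula $\|M^{-1}\|_2\le\|M\|_2^{L-1}/|\det M|\le R^{L-1}|\det\alpha|^{-1}\bigl|\det(z\diag(w)+K)\bigr|^{-1}$; combining this with norm equivalence, the tower property, and the displayed estimate applied to the matrix $K$ (fixed under the inner conditioning) yields
\[
\mathbb E\bigl(\|M^{-1}\|^{s}\bigr)\ \le\ C_L\,R^{s(L-1)}\,|\det\alpha|^{-s}\;\mathbb E\!\left[\,\mathbb E\bigl(|\det(z\diag(w)+K)|^{-s}\mid K\bigr)\right]\ \le\ C_L\,R^{s(L-1)}\,|\det\alpha|^{-s}\,C_s^{\,L}\,|z|^{-Ls},
\]
which is bounded by a constant $C(s)$ depending only on $s$, $L$ and $\alpha$, in particular uniformly in $n$ (and in fact uniformly in $z$ and in $\epsilon$).

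I expect the main obstacle to be the observation that unlocks the whole argument: the \emph{multiaffine} dependence of $\det M$ on the phase variables $e^{i\Theta_{\omega,j}}$, made visible by the conjugation $M=B(K+z\diag(w))C$. A generic polynomial of degree $L$ restricted to $(\mathbb S^1)^L$ can vanish to high order and would only give $\mathbb E(|\det M|^{-s})<\infty$ for $s<1/L$; multiaffineness is exactly what permits integrating a single phase at a time against a first‑order (hence, for all $s<1$, integrable) singularity. A secondary subtlety, handled above, is that the submatrices $K[\hat1\mid\hat1]$ produced by the induction may be singular, so the inductive statement must allow arbitrary $K$, with the non‑vanishing leading coefficient $z^L$ — rather than invertibility of $K$ — propagating the recursion.
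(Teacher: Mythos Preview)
Your proof is correct and takes a genuinely different route from the paper's.

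The paper exploits the special algebraic structure of the model: since $\widehat U^{(n)}_\omega,\widehat V^{(n)}_\omega\in U(L)\cap H_L(\C)$ they admit square roots in $U(L)\cap H_L(\C)$, and the paper argues that in expectation the diagonal phase matrices $e^{i\theta^{(n)}_\omega},e^{i\Theta^{(n)}_\omega}$ may be replaced by the scalar phases $e^{i\theta^{(n)}_{\omega,1}}\one,e^{i\Theta^{(n)}_{\omega,1}}\one$. These two observations collapse the matrix problem to the scalar identity $\sigma_L\bigl((1-ze^{i\tilde\theta})\alpha\bigr)=|1-ze^{i\tilde\theta}|\,\sigma_L(\alpha)$, so the whole bound reduces to the single integral $\int_0^{2\pi}|1-ze^{i\tilde\theta}|^{-s}\,d\tilde\theta$.

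You use none of this. Your argument needs only that the $\Theta^{(n)}_{\omega,j}$ are i.i.d.\ uniform and independent of everything else. Two different tools do the work: the singular-value inequality $\|M^{-1}\|_2\le\|M\|_2^{L-1}/|\det M|$ trades the norm for a determinant, and the multiaffine dependence of $\det(z\,\diag(w)+K)$ on $w_j=e^{i\Theta_{\omega,j}}$ lets you integrate one phase at a time via Laplace expansion, each step reproducing the one-dimensional lemma $\sup_a\int_0^{2\pi}|e^{i\phi}-a|^{-s}\,d\phi<\infty$. What you gain is robustness: the Haar randomness of $\widetilde U,\widetilde V$, the Bernoulli variables $d,D$, and the Hermitian structure of $\widehat U,\widehat V$ are never invoked, so your argument would go through with arbitrary (even deterministic) unitaries in their place. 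What you lose is sharpness in the constant: the paper gets essentially $C_s\,\sigma_L(\alpha)^{-s}$, while yours carries the additional factors $C_s^{\,L}$, $\|\alpha\|_2^{s(L-1)}|\det\alpha|^{-s}$, and norm-equivalence constants depending on $L$. For the downstream application this is immaterial.

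One small remark: your separate treatment of $|z|$ small and $|z|$ large produces a bound uniform over all of $\C^*$, which is stronger than required; on the annulus $\mathbb S_\epsilon$ the factor $|z|^{-Ls}$ is already bounded by $(1-\epsilon)^{-Ls}$, so that paragraph can be dropped if only the stated lemma is wanted.
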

Once we get Lemma \ref{lem1alphainversible}, using H\"older inequality, one gets Proposition \ref{prop_reduc_paire}.

The second reduction result show that it suffices to control the Green kernel of the restricted operator to some finite interval in order to control the Green kernel of $\mathbb{U}_\omega$.

\begin{proposition}\label{casfini}
    Let  $s \in (0,\frac12)$ and $\epsilon >0$. One has 
\begin{equation}\label{eq_lemmacasfini}
\mathbb{E}\left( \|G_{\omega}(z,k,l) \|^s\right)^2 \leq C(s)\mathbb{E} \left( \|G_{\omega}^{[k,l]}(z,k,l)^{2s}\|\right) 
\end{equation}
for every $z \in \mathbb{S}_\epsilon\setminus\mathbb{S}^1$ and every $k,l\in\mathbb{Z}$ such that $|k-l|>4$.
\end{proposition}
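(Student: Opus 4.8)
\emph{Strategy.} The plan is to write $G_\omega(z,k,l)$ as $G_\omega^{[k,l]}(z,k,l)$ multiplied by factors having uniformly bounded fractional moments, obtained by shrinking $\mathbb{Z}$ down to the interval $[k,l]$ through two successive geometric resolvent identities; the squaring on the left of \eqref{eq_lemmacasfini} and the exponent $2s$ on the right then appear as the footprint of one application of the Cauchy--Schwarz inequality (this is also where the hypothesis $s<\tfrac12$ will enter).

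\emph{Reductions.} We may assume $k<l$: the case $k>l$ reduces to it by passing to the adjoint $\mathbb{U}_\omega^\ast$, which is again a random scattering zipper of the same kind (with $\alpha$ replaced by $\alpha^\ast$ and the random unitaries interchanged, up to a spatial reflection), for which $\|G_\omega(z,k,l)\|=\|G_{\mathbb{U}_\omega^\ast}(\overline z,l,k)\|$ and likewise for the restrictions. Next, since $z\notin\mathbb{S}^1$ the Green kernel entries of the truncations $\mathbb{U}_\omega^{[-N,N]}$ converge to those of $\mathbb{U}_\omega$ as $N\to\infty$ — the operators differ by a finite-rank term supported near $\pm N$ and $z$ keeps a positive distance from all the (unitary) spectra involved, so the Combes--Thomas bound applies — whence, by Fatou's lemma, it is enough to prove
\[
\mathbb{E}\bigl(\|G_\omega^{[a,b]}(z,k,l)\|^s\bigr)^2\;\le\;C(s)\,\mathbb{E}\bigl(\|G_\omega^{[k,l]}(z,k,l)\|^{2s}\bigr)
\]
with $C(s)$ independent of the finite endpoints $a<k<l<b$ (taken beyond the bandwidth from $k$ and $l$) and of $z$.

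\emph{The two decouplings.} Because $\mathbb{U}_\omega^{[a,b]}$ is $5$-diagonal by $L$-blocks, cutting $[a,b]$ at the bond $k-1\mid k$ removes only a bounded number of block entries: with $\widehat{\mathbb{U}}=\mathbb{U}_\omega^{[a,k-1]}\oplus\mathbb{U}_\omega^{[k,b]}$ and $T^{(k)}=\mathbb{U}_\omega^{[a,b]}-\widehat{\mathbb{U}}$, the resolvent identity $(\mathbb{U}_\omega^{[a,b]}-z)^{-1}=(\widehat{\mathbb{U}}-z)^{-1}-(\mathbb{U}_\omega^{[a,b]}-z)^{-1}T^{(k)}(\widehat{\mathbb{U}}-z)^{-1}$ read in the $(k,l)$-block gives
\[
G_\omega^{[a,b]}(z,k,l)=G_\omega^{[k,b]}(z,k,l)-\sum_{u,v}G_\omega^{[a,b]}(z,k,u)\,(T^{(k)})_{uv}\,G_\omega^{[k,b]}(z,v,l),
\]
the sum over block indices $u,v$ near $k$. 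Here $G_\omega^{[k,b]}(z,v,l)$ with $v$ near the left endpoint $k$ is a near-corner coefficient, so by the explicit formula for the coefficients of the finite resolvent in terms of the transfer matrices established earlier in the paper, together with the uniform boundedness of any fixed number of transfer matrices and of their inverses — which uses only $\|\alpha\|<1$, so that every $S_\omega^{(n)}\in\mathrm{U}(2L)_{\mathrm{inv}}$ with uniformly invertible relevant blocks — one has $G_\omega^{[k,b]}(z,v,l)=M_\omega^{(v)}\,G_\omega^{[k,b]}(z,k,l)$ with $\|M_\omega^{(v)}\|$ bounded by a deterministic constant. Hence $G_\omega^{[a,b]}(z,k,l)=\Gamma_\omega^{(k)}\,G_\omega^{[k,b]}(z,k,l)$, where $\Gamma_\omega^{(k)}$ is built from deterministically bounded matrices and the \emph{near-diagonal} coefficients $G_\omega^{[a,b]}(z,k,u)$, $u$ near $k$. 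Running the same argument for $\mathbb{U}_\omega^{[k,b]}$ — now cutting at $l\mid l+1$, taking the resolvent identity with the opposite orientation, and using that $G_\omega^{[k,l]}(z,k,u)$ for $u$ near the right endpoint $l$ is again a near-corner coefficient reducible to $G_\omega^{[k,l]}(z,k,l)$ by a bounded number of transfer steps — expresses $G_\omega^{[k,b]}(z,k,l)$ through $G_\omega^{[k,l]}(z,k,l)$ and near-diagonal coefficients of $\mathbb{U}_\omega^{[k,b]}$ near $l$. Altogether one reaches a bound $\|G_\omega^{[a,b]}(z,k,l)\|\le R_\omega\,\|G_\omega^{[k,l]}(z,k,l)\|$, where $R_\omega$ is a finite combination of deterministic constants, deterministically bounded transfer factors, and a bounded number of near-diagonal resolvent coefficients of $\mathbb{U}_\omega^{[a,b]}$ and of $\mathbb{U}_\omega^{[k,b]}$.

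\emph{Conclusion and main obstacle.} Raising to the power $s\in(0,\tfrac12)$ — so that $\|\cdot\|^s$ is subadditive and submultiplicative — taking expectations, and applying the Cauchy--Schwarz inequality yields $\mathbb{E}(\|G_\omega^{[a,b]}(z,k,l)\|^s)\le\mathbb{E}(R_\omega^{2s})^{1/2}\,\mathbb{E}(\|G_\omega^{[k,l]}(z,k,l)\|^{2s})^{1/2}$; squaring gives the claim with $C(s)=\sup_{a,b,z}\mathbb{E}(R_\omega^{2s})$, finite because $R_\omega$ is a finite combination of near-diagonal resolvent coefficients of finite truncations, whose fractional moments of every order $<1$ are uniformly bounded by the a priori estimate underlying Theorem~\ref{thm_unif_bounded} (obtained by averaging over the circular phases), so in particular at the exponent $2s$. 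I expect the principal difficulty to be the bookkeeping in the previous paragraph: one must orient each of the two resolvent identities so that \emph{every} correction term retains a genuine $G_\omega^{[k,l]}(z,k,l)$ factor — a correction bounded merely by a constant would spoil \eqref{eq_lemmacasfini} by an additive summand — and so that the auxiliary factor $R_\omega$ is assembled from near-diagonal coefficients at a \emph{controlled depth}, so that a single Cauchy--Schwarz at the exponent $2s<1$ suffices rather than forcing higher moments. Checking that the near-corner-to-corner reduction through transfer matrices is genuinely uniform in $z\in\mathbb{S}_\epsilon\setminus\mathbb{S}^1$ and in the box, which relies on the explicit resolvent/transfer-matrix formula and on the standing hypothesis $\|\alpha\|<1$, is the other point requiring care.
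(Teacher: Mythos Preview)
Your overall route is the paper's: Proposition~\ref{casfini} is the case $a=-\infty$, $b=+\infty$ of Proposition~\ref{prop_reduc_suitable_volume} (no Fatou step is needed, since the finite-rank couplings $\Gamma_n^e$, $\Gamma_m^o$ make perfect sense on the whole line), and that proposition is proved by two geometric-resolvent decouplings followed by H\"older and the a~priori bound of Theorem~\ref{thm_unif_bounded}.

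Where you diverge is in the mechanism that produces the factorisation $G_\omega^{[a,b]}(z,k,l)=\Gamma_\omega^{(k)}\,G_\omega^{[k,b]}(z,k,l)$. You invoke a transfer-matrix reduction, asserting $G_\omega^{[k,b]}(z,v,l)=M_\omega^{(v)}G_\omega^{[k,b]}(z,k,l)$ with $\|M_\omega^{(v)}\|$ \emph{deterministically} bounded. That assertion is not supported by the tools at hand: the three-term relations of Lemma~\ref{lmm3.2.1} linking adjacent first indices carry factors such as $(\delta_k+z\alpha_k^*)^{-1}$, which by Lemma~\ref{lem1alphainversible} are controlled only in fractional moment, not uniformly; and Lemma~\ref{lemma_formula_resolv} gives the corner entry of the \emph{finite} box, not a comparison between nearby half-line entries. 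The observation you are missing is that no such reduction is needed. After the cut at $k-1\mid k$ the auxiliary resolvent is block-diagonal, $G_\omega^n=G_\omega^{[a,k-1]}\oplus G_\omega^{[k,b]}$, so $G_\omega^n(z,v,l)=0$ for every $v<k$; and because the cut is aligned with the $2L$-block structure of $\mathbb{V}_\omega$, the coupling $\Gamma_n^e$ has column support exactly $\{k-1,k\}$. Hence only $v=k$ survives in your sum, and the factorisation drops out immediately with $\Gamma_\omega^{(k)}=1-\sum_u G_\omega^{[a,b]}(z,k,u)\,(\Gamma_n^e)_{u,k}$ built from genuinely near-diagonal entries --- this is precisely \eqref{eq1elyes}. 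The second cut works the same way once the resolvent identity is written in the opposite orientation (with $G_\omega^m$ on the \emph{left}), giving \eqref{eq2elyes}; with the structural observation above, the ``main obstacle'' you flag disappears, and a single Cauchy--Schwarz at exponent~$2s$ finishes the job.
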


The proof of Lemma \ref{casfini} relies on the geometric resolvent identity.

Combining Proposition  \ref{prop_reduc_paire} and Proposition \ref{casfini}, as well as Theorem \ref{thm_exp_decay_even_odd} on the exponential decay of the reduced case, we obtain the exponential decay of the fractional moments.

\begin{theorem} \label{thm_greengeneral}
 There exists $r_0>0$,\ $s\in(0,1)$, $\epsilon_0>0$, $C_{s,r_0}>0$ and $\gamma>0$, such that for every $\alpha \in \mathrm{GL}_{\mathrm{L}}(\mathbb{C}) $  with~$\|\alpha\|< r_0$, 
\begin{equation}\label{eq_thm_greengeneral}
     \mathbb{E}\left(\| G_\omega(z,k,l)\|^s\right)\leq C_{s,r_0}e^{-\gamma |k-l|}
\end{equation}  
 for every $\epsilon\in (0,\epsilon_0]$, every $k,l\in \mathbb{Z}$ and every $z\in \mathbb{S}_\epsilon \setminus \mathbb{S}^1$.
\end{theorem}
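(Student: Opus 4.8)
\textbf{Proof plan for Theorem \ref{thm_greengeneral}.}

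The plan is to assemble the three already-stated ingredients — Theorem \ref{thm_exp_decay_even_odd} (exponential decay of the reduced ``even/odd'' blocks of the finite-volume Green kernel), Proposition \ref{prop_reduc_paire} (reduction of an arbitrary block of a finite-volume Green kernel to the four surrounding even/odd blocks), and Proposition \ref{casfini} (reduction of the infinite-volume Green kernel to the finite-volume one on $[k,l]$) — into a single chain of estimates. First I would fix the constants: let $r_0$, $\epsilon_0$, $s_0$, $p_0$, $\gamma$ be those furnished by Theorem \ref{thm_exp_decay_even_odd}, and take $\alpha\in\mathrm{GL}_L(\C)$ with $\|\alpha\|\le r_0$ and $\epsilon\in(0,\epsilon_0]$. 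The exponent $s$ in the final statement will be chosen small, namely $s\le s_0/4$ and $s<\tfrac18$, so that $4s<\tfrac12$ keeps us inside the range where Propositions \ref{prop_reduc_paire} and \ref{casfini} apply, and $4s\le s_0$ keeps us inside the range of Theorem \ref{thm_exp_decay_even_odd}; I will also need $|k-l|$ large, which is harmless since for bounded $|k-l|$ the bound \eqref{eq_thm_greengeneral} follows from Theorem \ref{thm_unif_bounded} after adjusting $C_{s,r_0}$.

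The core computation runs as follows. Starting from Proposition \ref{casfini} applied with exponent $s$ (legitimate since $s<\tfrac12$),
\begin{equation}\label{eq_plan_step1}
\mathbb{E}\left(\|G_\omega(z,k,l)\|^s\right)^2 \leq C(s)\,\mathbb{E}\left(\|G_\omega^{[k,l]}(z,k,l)\|^{2s}\right).
\end{equation}
Now apply Proposition \ref{prop_reduc_paire} to the right-hand side, with the interval $[a,b]=[k,l]$ and with the fractional exponent there equal to $2s$ (allowed since $2s<\tfrac14$); writing $k\in\{2n,2n+1\}$, $l\in\{2m,2m+1\}$, we get, for some constant $C(s,\epsilon)$,
\begin{equation}\label{eq_plan_step2}
\mathbb{E}\left(\|G_\omega^{[k,l]}(z,k,l)\|^{2s}\right)^2 \leq C(s,\epsilon)\sum_{i,j=0}^1 \mathbb{E}\left(\|G_\omega^{[k,l]}(z,2n+2i,2m+1+2j)\|^{8s}\right)^{1/2}.
\end{equation}
Each of the four blocks on the right is a finite-volume even-to-odd block $G_\omega^{[2n',2m'+1]}$-type quantity (up to the obvious relabelling of the interval $[k,l]$ as $[2n,2m+1]$ or a length-one enlargement, which Theorem \ref{thm_exp_decay_even_odd} tolerates by the hypothesis $|m-n|>p_0$), so Theorem \ref{thm_exp_decay_even_odd} applied with exponent $8s\le s_0$ and with $\epsilon\in(0,\epsilon_0]$ bounds each summand by $C\,e^{-\gamma|m-n|}$, hence by $C\,e^{-\gamma'|k-l|}$ with $\gamma'=\gamma/3$ say (absorbing the factor $2$ between $|m-n|$ and $|k-l|$ and the $\pm1$ shifts). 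Chaining \eqref{eq_plan_step1}, \eqref{eq_plan_step2}, and this last estimate, and taking fourth roots, yields
\begin{equation}\label{eq_plan_step3}
\mathbb{E}\left(\|G_\omega(z,k,l)\|^s\right) \leq C_{s,r_0}\,e^{-\tilde\gamma |k-l|}
\end{equation}
with $\tilde\gamma=\gamma/12$, which is \eqref{eq_thm_greengeneral} after renaming $\tilde\gamma$ to $\gamma$.

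The main thing to be careful about — and the only real ``obstacle'' — is bookkeeping of the admissible ranges of the fractional exponent and of the index constraints through the three reductions: the exponent is multiplied by $2$ at \eqref{eq_plan_step1}, then by $2$ again at \eqref{eq_plan_step2} (before the square root), then by $2$ once more inside the sum, so one must start from $s$ small enough ($8s\le s_0$, and a fortiori $8s<\tfrac12$, say $s\le s_0/8$) for Theorem \ref{thm_exp_decay_even_odd} to apply to the innermost blocks; and one must check that the condition $|k-l|>4$ needed by Propositions \ref{prop_reduc_paire} and \ref{casfini} together with $|m-n|>p_0$ needed by Theorem \ref{thm_exp_decay_even_odd} are all implied once $|k-l|$ exceeds a fixed threshold, the finitely many remaining pairs being covered by Theorem \ref{thm_unif_bounded}. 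One also has to note that the constant $C(s,\epsilon)$ produced by Proposition \ref{prop_reduc_paire} depends on $\epsilon$, but since $\epsilon$ ranges over the fixed compact-ish interval $(0,\epsilon_0]$ and the dependence comes only through Lemma \ref{lem1alphainversible}, whose constant is uniform there, one can take it independent of $\epsilon\in(0,\epsilon_0]$; this is the point to verify so that the final $C_{s,r_0}$ genuinely does not depend on $z$ or $\epsilon$. Apart from these uniformity checks the proof is a direct concatenation with no new analytic input.
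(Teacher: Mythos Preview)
Your overall strategy---combine Theorem \ref{thm_exp_decay_even_odd}, Proposition \ref{prop_reduc_paire}, and Proposition \ref{casfini}, then handle small $|k-l|$ via Theorem \ref{thm_unif_bounded}---is exactly the paper's. The difference is the \emph{order} in which you apply the two reductions, and your order introduces a real (though easily repaired) gap.

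You first apply Proposition \ref{casfini} to pass from $G_\omega(z,k,l)$ to $G_\omega^{[k,l]}(z,k,l)$, and then invoke Proposition \ref{prop_reduc_paire} on the interval $[a,b]=[k,l]$. Two problems arise. First, the proof of Proposition \ref{prop_reduc_paire} rests on the three-term recurrences of Lemma \ref{lmm3.2.1}, which require the row/column index being manipulated to lie strictly inside $[a,b]$; at the endpoints $k$ and $l$ of $[k,l]$ this is not available, and indeed the shifted indices $2n+2i$, $2m+1+2j$ produced by the proposition can fall \emph{outside} $[k,l]$ (for instance $2n<k$ when $k=2n+1$, or $2m+1>l$ when $l=2m$), so the blocks you write down in \eqref{eq_plan_step2} need not even be defined. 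Second, even when the indices stay inside, the output blocks have the form $G_\omega^{[k,l]}(z,2n',2m'+1)$, whereas Theorem \ref{thm_exp_decay_even_odd} bounds $G_\omega^{[2n',2m'+1]}(z,2n',2m'+1)$: the restriction interval must \emph{coincide} with the pair of indices. Your parenthetical ``obvious relabelling of the interval $[k,l]$ as $[2n,2m+1]$'' is precisely the step that does not go through---the interval of restriction genuinely matters and cannot be changed for free.

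The paper avoids both issues by reversing the order: it applies Proposition \ref{prop_reduc_paire} \emph{first}, on the full line ($a=-\infty$, $b=+\infty$), which turns $G_\omega(z,k,l)$ into the four infinite-volume even/odd blocks $G_\omega(z,2n+2i,2m-2j+1)$ with no boundary obstruction; it then applies Proposition \ref{casfini} (equivalently Proposition \ref{prop_reduc_suitable_volume}) separately to each of these, producing exactly $G_\omega^{[2n+2i,\,2m-2j+1]}(z,2n+2i,2m-2j+1)$, to which Theorem \ref{thm_exp_decay_even_odd} applies verbatim. If you swap your steps \eqref{eq_plan_step1} and \eqref{eq_plan_step2} in this way, your exponent bookkeeping, your treatment of small $|k-l|$ via Theorem \ref{thm_unif_bounded}, and your remarks on uniformity in $\epsilon\in(0,\epsilon_0]$ are all fine and the argument closes.
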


The proof of Theorem \ref{thm_DL} follows from Theorem \ref{thm_greengeneral} and from the following estimate on the moments of order two of the coefficients of the resolvent.

\begin{theorem}\label{thm_second_order}
There exists $\epsilon_0>0$, $r_0>0$, \(C_{r_0}> 0\) and \(\gamma > 0\) such that for every $\epsilon\in(0,\epsilon_0]$ and every $z\in\mathbb{S}_\epsilon\setminus\mathbb{S}^1,$
for every $\alpha\in GL_L(\mathbb{C})$ with $\|\alpha\|\leq r_0,$ and every $\{k, p\}$ and $\{l, q\}$ in $\mathbb{Z} \times \llbracket 1, L \rrbracket$:  
  $$\mathbb{E}\left( (1 - |z|^2) \left|  \langle e_{\{k, p\}}| \left(\mathbb{U}_\omega - z\right)^{-1} e_{\{l, q\}} \rangle\right|^2 \right) \leq C_{r_0} e^{-\gamma(k-l)}$$
\end{theorem}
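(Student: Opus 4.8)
The plan is to upgrade the fractional-moment bound of Theorem \ref{thm_greengeneral} to a second-moment bound by exploiting the extra decay factor $1-|z|^2$ coming from unitarity, together with an a priori $L^2$ control of the resolvent. First I would recall the standard unitary Combes--Thomas / Plancherel-type identity: for a unitary operator $\mathbb{U}$ and $z$ with $|z|\neq 1$, the resolvent satisfies
\begin{equation}\label{eq_unitary_identity}
(1-|z|^2)\,\big\|(\mathbb{U}-z)^{-1}\psi\big\|^2 \;=\; \big\langle \psi, \big((\mathbb{U}-z)^{-1}+(\mathbb{U}-z)^{-*} - 2\,\mathrm{Re}(\bar z)(\mathbb{U}-z)^{-*}(\mathbb{U}-z)^{-1}\big)\psi\big\rangle,
\end{equation}
or more usefully the spectral-measure form: $(1-|z|^2)\|(\mathbb{U}_\omega-z)^{-1}e_{\{l,q\}}\|^2 = \int_{\mathbb{S}^1}\frac{1-|z|^2}{|e^{i\theta}-z|^2}\,d\mu^\omega_{\{l,q\}}(\theta)$ where $\mu^\omega_{\{l,q\}}$ is the spectral measure of $\mathbb{U}_\omega$ associated with $e_{\{l,q\}}$, a probability measure. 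In particular $\mathbb{E}\big((1-|z|^2)\sum_{k,p}|\langle e_{\{k,p\}},(\mathbb{U}_\omega-z)^{-1}e_{\{l,q\}}\rangle|^2\big)\leq 1$ uniformly in $z$, which is the a priori bound that plays the role of the "uniform boundedness" Theorem \ref{thm_unif_bounded} at the level of second moments.

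Next I would interpolate. Write $a_{k,p} := (1-|z|^2)^{1/2}|\langle e_{\{k,p\}},(\mathbb{U}_\omega-z)^{-1}e_{\{l,q\}}\rangle|$. The goal is to bound $\mathbb{E}(a_{k,p}^2)$. For $s\in(0,1)$ as in Theorem \ref{thm_greengeneral}, Hölder with exponents giving weights $\theta$ and $1-\theta$ yields $\mathbb{E}(a_{k,p}^2) = \mathbb{E}(a_{k,p}^{s}\cdot a_{k,p}^{2-s}) \leq \mathbb{E}(a_{k,p}^{s}\cdot M_{k,p})$ where I would instead proceed as follows: since $a_{k,p}\le (1-|z|^2)^{1/2}\|G_\omega(z,k,l)\|$ (as an operator-norm bound on the block), and since also $a_{k,p}\le\big((1-|z|^2)\sum_{k',p'}a_{k',p'}^2\big)^{1/2}\le 1$ componentwise after the Plancherel bound, we get $a_{k,p}^2 = a_{k,p}^{s}\,a_{k,p}^{2-s}\le a_{k,p}^{s}$ when $a_{k,p}\le 1$; but $a_{k,p}\le 1$ need not hold pointwise. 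So more carefully: $a_{k,p}^2\le a_{k,p}^{s}\cdot(1-|z|^2)^{(2-s)/2}\|G_\omega(z,k,l)\|^{2-s}$, and one controls the first factor's expectation by Theorem \ref{thm_greengeneral} after absorbing $(1-|z|^2)^{s/2}\le 1$, while the second factor is controlled by the Plancherel bound once $2-s\le 2$, i.e.\ by Hölder: $\mathbb{E}(a_{k,p}^2)\le \mathbb{E}(\|G_\omega(z,k,l)\|^s)^{\theta}\cdot\mathbb{E}\big((1-|z|^2)^{1/(1-\theta)}\|G_\omega(z,k,l)\|^{(2-s\theta)/(1-\theta)}\big)^{1-\theta}$, and choosing $\theta$ so that $(2-s\theta)/(1-\theta)=2$ makes the second factor equal to an $\mathbb{E}((1-|z|^2)\|G_\omega(z,k,l)\|^2)^{1-\theta}\le 1$ term via Plancherel. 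Passing from the scalar coefficient $a_{k,p}$ to the $L\times L$ block Green kernel $G_\omega(z,k,l)$ costs only a fixed constant depending on $L$ (equivalence of the operator norm with the max of the $L^2$ norms of its columns). This gives $\mathbb{E}(a_{k,p}^2)\le C\,e^{-\gamma\theta|k-l|}$, and renaming $\gamma\theta$ as $\gamma$ and $|k-l|$ as a bound for $(k-l)$ (or restricting to $k\ge l$, the relevant case, using symmetry of the roles of $k,l$ after adjoint) yields the claimed estimate. The case $|k-l|\le 4$ is handled trivially by the Plancherel a priori bound, adjusting $C_{r_0}$.

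The main obstacle will be making the interpolation step clean at the level of the matrix-valued (block) Green kernel rather than scalar matrix elements, and in particular checking that the "Plancherel" a priori $L^2$ bound survives when we pass to operator norms of $L\times L$ blocks and when we restrict $\mathbb{U}_\omega$ is replaced by nothing (here there is no restriction, so the spectral-measure argument applies directly and this is actually the easy direction). A secondary technical point is ensuring the constant in \eqref{eq_unitary_identity}/the spectral representation is genuinely uniform in $z\in\mathbb{S}_\epsilon\setminus\mathbb{S}^1$ and independent of $\alpha$ with $\|\alpha\|\le r_0$ — this is automatic since the probability measure normalization $\mu^\omega_{\{l,q\}}(\mathbb{S}^1)=1$ holds for every unitary operator regardless of parameters. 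Hence the only genuine input beyond soft arguments is Theorem \ref{thm_greengeneral}, and the proof is essentially a Hölder interpolation between that exponential fractional-moment bound and the trivial unitary second-moment bound.
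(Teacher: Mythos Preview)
Your proposal has a genuine gap at its very first step. The claimed ``Plancherel'' a priori bound
\[
\mathbb{E}\Big((1-|z|^2)\sum_{k,p}\big|\langle e_{\{k,p\}},(\mathbb{U}_\omega-z)^{-1}e_{\{l,q\}}\rangle\big|^2\Big)\le 1
\]
is false. By the spectral theorem this quantity equals $\mathbb{E}\!\int_{\mathbb{S}^1}\frac{1-|z|^2}{|e^{i\theta}-z|^2}\,d\mu^\omega_{\{l,q\}}(\theta)$, and the Poisson kernel $\frac{1-|z|^2}{|e^{i\theta}-z|^2}$ integrates to $1$ only against \emph{normalized Lebesgue measure} in $\theta$. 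Against an arbitrary probability measure it is not bounded: its pointwise supremum is $\frac{1+|z|}{1-|z|}$, and if $\mu^\omega_{\{l,q\}}$ has an atom near $z$ (which is exactly what happens under localization) the integral blows up as $|z|\to 1$. The best deterministic estimate is $(1-|z|^2)\|(\mathbb{U}_\omega-z)^{-1}\psi\|^2\le \frac{1+|z|}{1-|z|}\|\psi\|^2$, which is useless for the uniform-in-$z$ statement you need. Taking the expectation over $\omega$ does not rescue this without further input: a uniform bound on $\mathbb{E}\big((1-|z|^2)|G_\omega(z,k,l)|^2\big)$ is a Wegner-type statement, not a soft consequence of unitarity. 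Since your interpolation hinges entirely on this bound, the argument collapses.

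The paper obtains the missing ingredient by a rank-one phase-averaging (spectral averaging) argument, not by the spectral theorem alone. One isolates a single uniformly distributed phase $\theta^{(2n+1)}_{\omega,k}$, replaces it by $\theta^{(2n+1)}_{\omega,k}-\delta$, and uses the resolvent identity for this rank-one perturbation to write the modified Green entry as a M\"obius-type function of $e^{-i\delta}$. Averaging over $\delta$ then produces the Poisson kernel integrated against \emph{Lebesgue} measure in $\delta$ (this is where $\int_0^{2\pi}\frac{1-|y|^2}{|e^{-i\delta}-y|^2}\frac{d\delta}{2\pi}=1$ is legitimately used), and since $\theta^{(2n+1)}_{\omega,k}$ is itself uniform the $\delta$-average coincides with the original expectation. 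This yields directly $\mathbb{E}\big((1-|z|^2)|F(z,l,k)|^2\big)\le 2^s\,\mathbb{E}\big(|F(z,l,k)|^s\big)$, after which Theorem~\ref{thm_greengeneral} applies. In other words, the step you thought was ``trivial unitarity'' is precisely the nontrivial decoupling lemma that the proof supplies.
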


Theorem \ref{thm_second_order} is a consequence of Theorem \ref{thm_greengeneral} using second order perturbation theory.

\begin{remark}
The parameter \(\epsilon_0\) is chosen to guarantee strictly positive Lyapunov exponents over the annulus \(\mathbb{S}_{\epsilon_0}\) and  is introduced in Corollary \ref{corollaire_pos_Lyap_couronne}.
\begin{comment}
In Theorem \ref{thm_exp_decay_even_odd}, Theorem \ref{thm_greengeneral} and Theorem \ref{thm_second_order}, the positive real number $\epsilon_0$ is defined by \(\epsilon_0 := \min(\epsilon_1, \epsilon_2)\in (0,1)\) where \(\epsilon_1\in (0,1)\) is introduced in Corollary \ref{corollaire_pos_Lyap_couronne} and \(\epsilon_2\in (0,1)\) is introduced in Lemma \ref{lmm4.5.1}. The parameter \(\epsilon_1\) is chosen to guarantee strictly positive Lyapunov exponents over the annulus \(\mathbb{S}_{\epsilon_1}\) and the parameter \(\epsilon_2\) allows to have explicit expressions for the Green's functions.   
\end{comment}
\end{remark}

Results of strict positivity of the positive Lyapunov exponent are already known for models of unit-band operators (see \cite{bourget2003spectral,hamza2006localization}). Bourget, Howland and Joye's article \cite{bourget2003spectral} was one of the first to present the study of unitary models with one diffusion channel, and was followed by other articles by Alain Joye on the subject, such as \cite{J04} and \cite{J05}. For a synthetic presentation of these results and the associated physical models, we refer to Alain Joye's text, \cite{J11}. Note that the unitary models considered in these first articles are all with scalar coefficients, and not with matrix coefficients as in the case of the scattering zipper.

One can see the random  scattering zipper model as a unitary version of the quasi-one-dimensional Anderson model. There is already a unitary version of the scalar-valued Anderson model for which Hamza, Joye and Stolz have proven dynamical localization in \cite{hamza2009dynamical}. The random scattering zipper can therefore also be seen as a quasi-one-dimensional version of the unitary Anderson model. This is why, as we will see in the rest of the paper, it is possible to follow the strategy of the proof of the dynamical localization result of \cite{hamza2009dynamical}. Based upon the Fractional Moments Method  (see \cite{aizenman2015random} for an overview of this method), this strategy is also the one used in the proof of localization in \cite{asch2010localization,asch2012dynamical} for the Chalker-Coddington model.

However, it would be possible to consider a multiscale analysis approach to study this unitary model, as Cedzich and Werner did in \cite{CW21} in the context of unitary quantum walks. This would then make it possible to address the question of localization for a random scattering zipper model in which the unitary phases would reveal a much more singular randomness than a uniform Haar law. A last possible approach would be to adapt to the quasi-one-dimensional case the proofs of localization results for CMV matrices in \cite{zhu2021localization, bucaj2019localization} which are based on large deviations inequalities.

\bigskip

The rest of the paper is devoted to the proof of the results we have just staten in this introductory section. In Section \ref{sec_transmat_lyap} we introduce the transfer matrices associated to the quasi-one-dimensional model $\{ \mathbb{U}_{\omega}\}_{\omega\in \Omega}$ of unitary type (see \cite{boumaza2023localization}). We also define the Lyapunov exponents associated to the sequence of transfer matrices and we prove their continuity, hence their positivity on some annulus $\mathbb{S}_{\epsilon}$. Section \ref{sec_exp_decay_general} is the core of the article. In this section we prove the exponential decay of the fractional moments of the Green kernel. First, we show that it suffices to prove this exponential decay for some particular blocks of the resolvent by proving Proposition \ref{prop_reduc_paire} and Proposition \ref{prop_reduc_suitable_volume}. To prove the exponential decay of the  fractional moments of these particular blocks we look precisely at the behavior of the four $L\times L$ blocks of the products of transfer matrices and prove boundedness of some "Wronskian" involving these blocks. Finally in Section \ref{sec_DL}, we prove the exponential decay of moments of order $2$ and we deduce from the spectral theorem for unitary operators the proof of the dynamical localization for $\{ \mathbb{U}_{\omega} \}_{\omega \in \Omega}$.

%%%%%%%%%%%%%%%%%%%%%%%%%%%%%%%%%%%%%%%%%%%%%%%%%%%%%%%%%%%%%%%%%%%%%%%%%%%%%%%%%

\section{Transfer matrices and Lyapunov exponents}\label{sec_transmat_lyap}

\subsection{Transfer matrices.}

Using the transfer matrix formalism, we reduce the study of the asymptotic behavior of a solution $\phi$  of 
\begin{equation}\label{eq_vp_scatzip}
\mathbb{U}_{\omega}\phi=z\phi,\quad \mbox{for}\ z\in \C,  
\end{equation}
to the asymptotic behavior of a product of random matrices. 

In the unitary setting, the role played by the symplectic group for the quasi-one-dimensional models of Schr\"odinger type will be played by the group $\mbox{U}(L,L)$ of the matrices of size $2L \times 2L$ which preserve the form $\mathcal{L} = \left(\begin{smallmatrix}
 \one & 0 \\
0 & -\one
\end{smallmatrix} \right)$ in the sense that $T$ is in $\mbox{U}(L,L)$ if and only if $T^* \mathcal{L} T = \mathcal{L}$. 
\begin{comment}
  
To pass from the results on Lyapunov exponents in the symplectic framework to the Lorentz group framework, one uses the Cayley transform. By the Cayley transform, the group $\mathrm{U}(D,D)$ is unitarily equivalent to the complex symplectic group. More precisely, if $C=\frac{1}{\sqrt{2}} \left( \begin{smallmatrix}
\one & -i\one \\
\one & i\one
\end{smallmatrix}  \right) \in \mathcal{M}_{2D}(\C)$ and if $J= \left( \begin{smallmatrix}
0 & -\one \\
\one & 0
\end{smallmatrix}  \right),$
then
$\mathrm{U}(D,D) = C \mathrm{Sp}_{\mathrm{D}}(\C) C^*,$
where 
$$\mathrm{Sp}_{\mathrm{D}}(\C) = \{ M\in \mathcal{M}_{2D}(\C) | M^* J M = J \}.$$

In order to apply directly the results of\cite{BL85}, we have to pass from the complex symplectic group to the real symplectic group. For that we follow \cite{ABJ10} and introduce the application which separates the real and imaginary parts of a matrix with complex coefficients and place them in blocks: 
$$\pi\ :\ \begin{array}{cll}
           \mathcal{M}_{2D}(\C) & \to & \mathcal{M}_{4D}(\R) \\[2mm]
	    A+iB & \mapsto & \left( \begin{smallmatrix}
				      A & -B \\
				      B & A
				      \end{smallmatrix} \right) .
				\end{array}$$
Finally, $\pi(C^*\cdot \mathrm{U}(D,D)\cdot C) \subset \mathrm{Sp}_{\mathrm{2D}}(\R)$ which allows to use the results on Lyapunov exponents in the symplectic group to study the Lyapunov exponents in the unitary setting.

\end{comment}

To compute the transfer matrices we proceed as follows. Instead of looking at the input-output relations of the scattering matrix $S_\omega^{(n)}$, we look for a new matrix which allows to express $\left( \begin{smallmatrix}           \phi_{n+1} \\ \psi_{n+1}  \end{smallmatrix}\right)$ in terms of $\left( \begin{smallmatrix}     \phi_n \\
\psi_n   \end{smallmatrix} \right)$ for $\phi$ a solution of \eqref{eq_vp_scatzip} and $\psi= \mathbb{W}_{\omega} \phi$. This is done by transforming the scattering matrices $S_\omega^{(n)}$ belonging to $\mbox{U}(2L)_{\mathrm{inv}}$ into elements of $U(L,L)$ via the bijection:
$$\varphi\ :\ \begin{array}{ccl}
            \mbox{U}(2L)_{\mathrm{inv}} & \to & \mathrm{U}(L,L) \\[2mm]
	    \left( \begin{smallmatrix}
	    \alpha & \beta \\
	    \gamma & \delta
	    \end{smallmatrix} \right) & \mapsto & \left(\begin{smallmatrix}
	    \gamma-\delta \beta^{-1} \alpha & \delta \beta^{-1} \\
	    -\beta^{-1} \alpha & \beta^{-1}
	    \end{smallmatrix} \right)
            \end{array}$$
Let $z\in \C$. We have the following relations, proven in \cite{marin2013scattering}:
\begin{equation}\label{eq_trans_mat_odd_even}
\forall n\in \Z,\  \left( \begin{smallmatrix}     \phi_{2n} \\ \psi_{2n} \end{smallmatrix} \right) = \varphi(z^{-1} S^{(2n)}_\omega)  \left( \begin{smallmatrix} \psi_{2n-1} \\
\phi_{2n-1} \end{smallmatrix} \right) \quad \mbox{and} \quad \left( \begin{smallmatrix} \psi_{2n+1} \\ \phi_{2n+1}    \end{smallmatrix} \right) = \varphi( S^{(2n+1)}_\omega)  \left( \begin{smallmatrix}  \phi_{2n} \\ \psi_{2n}  \end{smallmatrix} \right).
\end{equation}

These relations lead to introduce the application $T(z,\cdot ): \Omega \to \mathrm{GL}_{2L}(\C)$, 
\begin{equation}\label{eq_def_trans_mat}
 \forall \omega \in \Omega,\ T(z,\omega) =  \left(\begin{smallmatrix} V_\omega^{(0)}  & 0 \\ 0 & (U_\omega^{(0)})^* \end{smallmatrix} \right)
T_0(z) \left( \begin{smallmatrix} V^{(1)}_\omega  & 0 \\ 0 & (U^{(1)}_\omega)^* \end{smallmatrix} \right) T_1
\end{equation}
with
$$
T_0(z) = \left( \begin{smallmatrix}
z^{-1} (\widetilde{\rho}(\alpha))^{-1} & (\widetilde{\rho}(\alpha))^{-1}\alpha^* \\
\alpha (\widetilde{\rho}(\alpha))^{-1} & z (\rho(\alpha))^{-1}
\end{smallmatrix} \right) \quad \mbox{ and }\quad T_1 = \left( \begin{smallmatrix}
 (\widetilde{\rho}(\alpha))^{-1} & (\widetilde{\rho}(\alpha))^{-1}\alpha^* \\
\alpha (\widetilde{\rho}(\alpha))^{-1} &  (\rho(\alpha))^{-1}
\end{smallmatrix} \right).
$$
We define the 2-shift transformation $\tau_2:\Omega\mapsto\Omega$ by:
$$ \forall\omega\in \Omega, \forall n\in \mathbb{Z},\ (\tau_2(\omega))_n=\omega_{n+2}.$$
Then, $\tau_2$ is ergodic on $(\Omega,\mathcal{B},\mathbb{P})$ and
\begin{equation}\label{eq_link_trans_scat}
 \forall \omega \in \Omega,\ \forall z\in \C,\ \forall n\in \Z,\ T(z,\tau_2^{n}(\omega))= \varphi(z^{-1}S^{(2n)}_\omega)\cdot \varphi(S^{(2n-1)}_\omega).
\end{equation}
The matrix $T(z,\tau^{n}(\omega))$ is the n-th transfer matrix associated to $\{\mathbb{U}_{\omega}\}_{\omega \in \Omega}$. Then $(T(z,\tau^{n}(\omega)))_{n\in \Z}$ is a sequence  of \emph{i.i.d.} random matrices in $\mathrm{GL}_{2L}(\C)$ because of the definition of the probability $\mathbb{P}$ as a tensor product of probability measures. 
\bigskip

Note that for $z\in \mathbb{S}^1$, $T(z,\tau^{n}(\omega))\in U(L,L)$.

\subsection{Lyapunov exponents.}

The transfer matrices $T(z,\cdot)$ generate a cocycle $\Phi(z,\cdot,\cdot): \Omega \times \Z \to \mathrm{GL}_{2L}(\C) $ on the ergodic dynamical system $(\Omega, \mathcal{B}, \mathbb{P}, (\tau^n)_{n \in \Z} )$ defined by   
$$\forall \omega \in\Omega,\ \forall n\in \Z,\ \Phi(z,\omega,n) \;=\; \left\lbrace 
\begin{array}{lcl}
T(z,\tau_2^{n-1}(\omega)) \dots T(z,\omega) & \mbox{ if } & n >0 \\
\mathrm{I}_{2\mathrm{L}} & \mbox{ if } & n=0 \\
(T(z,\tau_2^{n}(\omega)))^{-1} \dots (T(z,\tau_2^{-1}(\omega)))^{-1} & \mbox{ if } & n <0.
\end{array} \right.
$$

From this cocycle we define the Lyapunov exponents associated to the ergodic family $\{ \mathbb{U}_{\omega} \}_{\omega \in \Omega}$. Let $z\in \C$. For $\mathbb{P}$-almost every $\omega \in \Omega$, the following limits exists and are equal:
\begin{equation}\label{eq_lim_Lyap}
 \Psi(z,\omega)  :=  \lim_{n\to +\infty} ((\Phi(z,\omega,n))^*\Phi(z,\omega,n))^{1/2n}   =  \lim_{n\to -\infty} ((\Phi(z,\omega,n))^*\Phi(z,\omega,n))^{1/2|n|}.
\end{equation}
For every $k\in \llbracket 1, 2L\rrbracket$, let $\lambda_k(z,\omega)$ the $k$-th  eigenvalue of $\Psi(z,\omega)$, the eigenvalues being ordered in increasing order. There are then real numbers $\lambda_k(z) \geq 0$ such that, for $\mathbb{P}$-almost every $\omega \in \Omega$, $\lambda_k(z,\omega)=\lambda_k(z)$. We then define the Lyapunov exponents associated to the ergodic family $\{ \mathbb{U}_{\omega} \}_{\omega \in \Omega}$ as being the real numbers $\gamma_k(z)$ defined by : 
$$\forall z\in \C,\ \forall \ k\in  \llbracket 1, 2L\rrbracket,\ \gamma_k(z) := \log( \lambda_k(z) ).$$

\begin{comment}
    
We define \(\Omega_{\mathrm{Lyap}}\) a subset of \(\Omega\) such that:
\begin{itemize}
    \item[(i)] $\mathbb{P}(\Omega_{Lyap})=1$.
    \item[(ii)] $\forall \omega\in \Omega_{Lyap}, \Psi(z,\omega) $ exists.
    \item[(iii)] $\forall \omega\in \Omega_{Lyap}, \forall k\in \{1,\ldots,2L \};  \lambda_k(z,\omega)=\lambda_k(z)$.
\end{itemize}
\end{comment}

Note that when $z\in \mathbb{S}^1$, the fact that the transfer matrices belong to $U(L,L)$ implies a symmetry relation on the Lyapunov exponents which is the same as in the case where the transfer matrices are in the symplectic group:
\begin{equation}\label{lyap_sym_relation}
\forall k\ \in  \llbracket 1, L\rrbracket,\ \gamma_{2L-k+1}(z)=-\gamma_{k}(z).    
\end{equation}

In \cite{boumaza2015absence}, we proved, for a random scattering zipper which can be seen as a particular case of \eqref{def_scatzipp1} that, 
\begin{equation}\label{eq_pos_Lyap_BM15}
\forall z\in \mathbb{S}^1,\     \gamma_1(z)>  \gamma_2(z)> \cdots >  \gamma_L(z)> 0.
\end{equation}
Using Kotani's theory (see \cite{boumaza2015absence} which adapts results of  \cite{kotani1988stochastic}), this implies the absence of almost-sure  absolutely continuous spectrum of $\{\mathbb{U}_{\omega}\}_{\omega\in \Omega}$.

The proof of \eqref{eq_pos_Lyap_BM15} relies on the study of the so-called Furstenberg group associated to $\{\mathbb{U}_{\omega} \}_{\omega\in \Omega}$. This is the group generated by the common law of the transfer matrices. Let $z\in \C$. If $\mu_z$ is the common law of all the transfer matrices $T(z,\tau^n(\omega))$, we set
\begin{equation}
 G_{\mu_z} \; = \; \overline {\langle  \text{supp} \mu_z  \rangle} \subset  \mathrm{GL}_{2L}(\C). 
\end{equation}
The closure is taken for the topology on $ \mathrm{GL}_{2L}(\C) $ which is induced by the usual topology on $\mathcal{M}_{2L}(\C)$.

In \cite{boumaza2015absence} we proved that for every $z\in \mathbb{S}^1$, $G_{\mu_z}=U(L,L)$. Actually, the algebraic construction made in the proof of \cite[Proposition 2]{boumaza2015absence} is valid for any $z\in \C$ and we get
\begin{equation}\label{eq_ULL_subset_G}
   \forall z \in \mathbb{C}, \ U(L,L) \subseteq G_{\mu_z}.
\end{equation}
Hence one gets, using a similar proof as the one of \cite[Theorem 1]{boumaza2015absence}, that
\begin{equation}\label{eq_separate_Lyap}
\forall z\in \C,\     \gamma_1(z)>   \cdots >  \gamma_L(z)> \gamma_{L+1}(z)> \cdots > \gamma_{2L}(z).
\end{equation}
But for $z\in \C \setminus \mathbb{S}^1$ the transfer matrices are no longer in $U(L,L)$ and we no longer have the symmetry relation \eqref{lyap_sym_relation}. Hence there is no reason for $\gamma_L(z)$ to be postive if $z$ is not in $\mathbb{S}^1$. Still, what remains true is following equality:
\begin{equation}\label{mat_trans_relation_inverse}
     \forall n\in \Z,\ \forall \omega\in \Omega,\ \forall z\in \mathbb{C}^*: \ (T_{\omega}^{(n)}(z))^{-1}=\mathcal{L} \left(T_{\omega}^{(n)}(\Bar{z}^{-1})\right)^*\mathcal{L}.
\end{equation}
The proof of \eqref{eq_separate_Lyap} is based upon the Cayley transform, whose matrix in the canonical basis of $\C^{2L}$ is
\begin{eqnarray}\label{def_Cayley}
C=\frac{1}{\sqrt{2}} \left( \begin{matrix}
                              \one & -i\one \\
			      \one & i\one
                              \end{matrix}  \right)\in \mathcal{M}_{2L}(\C)
\end{eqnarray}
and which maps the group $U(L,L)$ onto the complex symplectic group $\SpC$. It also use the following transformation which separates the real and imaginary parts of a matrix with complex coefficients and place them in blocks: 
 \begin{equation}\label{def_pi_C_R}
    \pi\ :\ \begin{array}{cll}
           \mathcal{M}_{2L}(\C) & \to & \mathcal{M}_{4L}(\R) \\[2mm]
	    A+iB & \mapsto & \left( \begin{smallmatrix}
				      A & -B \\
				      B & A
				      \end{smallmatrix} \right) .
				\end{array}
\end{equation}

It allows to apply directly the results of \cite{bougerol2012products} to get separability of the Lyapunov exponents and their integral representation from the properties of $p$-contractivity and $L_p$-strong irreducibility of the Furstenberg group as defined in \cite{bougerol2012products}.

\subsection{Continuity and strict positivity of the Lyapunov exponents.}

Among the key elements for the development of the fractional moment method, the continuity and positivity of Lyapunov exponents occupy a prominent place. As we have just explained, Theorem 1 of \cite{boumaza2015absence} gives us the positivity of these exponents, but this assertion remains limited to the circle \(\mathbb{S}^1\).

In order to manipulate the resolvent efficiently, it is necessary to choose a spectral parameter \(z\) located outside of \(\mathbb{S}^1\). The objective of this subsection is therefore twofold. Firstly, we will demonstrate the continuity of Lyapunov exponents on $\C \setminus \{0\}$. Secondly, using \cite[Theorem 1]{boumaza2015absence}, we will highlight the positivity of these exponents in some  annulus $\mathbb{S}_{\epsilon}$.

Let $\omega\in \Omega$, $n\in \Z$ and $z\in \C$. In order to simplify the further computations, from now, we will denote by $T_{\omega}^{(n)}(z):=T(z,\tau_2^{n}(\omega))$ the $n$-th transfer matrix. 

We start by giving upper and lower bounds on the norm of $\rho(\alpha)$ which will be used throughout the rest of the article.

\begin{lemma}\label{lem_estime_rho}
Let \(L \geq 1\). Let \(\|.\|\) be a subordinate norm on \(\mathcal{M}_L(\mathbb{C})\). For \(\alpha \in \mathcal{M}_L(\mathbb{C})\) such that \(\|\alpha\| < 1\) and \(\rho(\alpha) =(\one-\alpha\alpha^*)^{\frac{1}{2}}\), the following inequalities hold:
\begin{itemize}
  \item[1.] \(\|{\rho(\alpha)}\| \geq \sqrt{1-\|\alpha\|^2} \).
  \item[2.] \(\|{\rho(\alpha)}\|\leq 2-\sqrt{1-\|\alpha\|^2} \).
  \item[3.] \(\|{\rho(\alpha)}^{-1}\| \leq \frac{1}{{ \sqrt{1-\|\alpha\|^2}}}\).
  \item[4.]  \(\|{\rho(\alpha)}^{-1}\| \geq \frac{1}{{ 2-\sqrt{1-\|\alpha\|^2}}}\).
    \end{itemize}
and the same for \(\tilde{\rho}(\alpha) =(\one-\alpha^*\alpha)^{\frac{1}{2}}\).
    \end{lemma}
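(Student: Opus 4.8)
The plan is to reduce everything to a statement about the eigenvalues of the positive semidefinite matrix $\alpha\alpha^*$. Since $\|\alpha\|<1$, we have $\|\alpha\alpha^*\| = \|\alpha\|^2 < 1$ (for a subordinate norm, $\|\alpha^*\| = \|\alpha\|$), so $\one - \alpha\alpha^*$ is positive definite and $\rho(\alpha) = (\one-\alpha\alpha^*)^{1/2}$ is a well-defined positive definite Hermitian matrix. The spectral mapping theorem gives $\operatorname{sp}(\rho(\alpha)) = \{\sqrt{1-\mu} : \mu \in \operatorname{sp}(\alpha\alpha^*)\}$, and since $\rho(\alpha)$ is Hermitian its spectral radius equals its operator norm; the same applies to $\rho(\alpha)^{-1}$, whose norm is $1/\lambda_{\min}(\rho(\alpha))$. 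Writing $\mu_{\min}, \mu_{\max}$ for the extreme eigenvalues of $\alpha\alpha^*$, one has $0 \le \mu_{\min} \le \mu_{\max} = \|\alpha\alpha^*\|_{2} \le \|\alpha\|^2 < 1$ (here I first prove the four bounds for the spectral norm, then transfer to a general subordinate norm, see below).

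For the spectral norm the four inequalities are then immediate:
\begin{itemize}
  \item[1.] $\|\rho(\alpha)\|_2 = \sqrt{1-\mu_{\min}} \ge \sqrt{1-\mu_{\max}} \ge \sqrt{1-\|\alpha\|_2^2}$.
  \item[3.] $\|\rho(\alpha)^{-1}\|_2 = 1/\sqrt{1-\mu_{\max}} \le 1/\sqrt{1-\|\alpha\|_2^2}$.
\end{itemize}
Upper bound 2 and lower bound 4 are more subtle: the naive estimate would give $\|\rho(\alpha)\|_2 \le \sqrt{1-\mu_{\min}} \le 1$, which is a \emph{stronger} statement than $2 - \sqrt{1-\|\alpha\|^2}$. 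So in the spectral norm, items 2 and 4 follow trivially from $\|\rho(\alpha)\|_2 \le 1$ and $\|\rho(\alpha)^{-1}\|_2 \ge \|\rho(\alpha)\|_2^{-1}\cdot\ldots$; actually one just checks $1 \le 2 - \sqrt{1-\|\alpha\|^2}$ (true since $\sqrt{1-\|\alpha\|^2} \le 1$) for item 2, and for item 4 that $\|\rho(\alpha)^{-1}\|_2 \ge 1 \ge 1/(2-\sqrt{1-\|\alpha\|^2})$. The point of the weaker form is precisely that it survives the passage to an arbitrary subordinate norm, which is where the real work lies.

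The main obstacle is therefore the transfer from the spectral norm to an arbitrary subordinate norm $\|\cdot\|$, for which no clean spectral formula is available. The tool is the power-series expansion $\rho(\alpha) = (\one-\alpha\alpha^*)^{1/2} = \sum_{k\ge 0} \binom{1/2}{k} (-\alpha\alpha^*)^k = \one - \tfrac12\alpha\alpha^* - \tfrac18(\alpha\alpha^*)^2 - \cdots$, which converges in any submultiplicative norm since $\|\alpha\alpha^*\| \le \|\alpha\|^2 < 1$. Using $\|\one\| = 1$, submultiplicativity, and the scalar identity $\sum_{k\ge 1}|\binom{1/2}{k}| t^k = 1 - \sqrt{1-t}$ for $t\in[0,1)$, one gets
\[
\|\rho(\alpha) - \one\| \;\le\; \sum_{k\ge 1}\Bigl|\tbinom{1/2}{k}\Bigr|\,\|\alpha\|^{2k} \;=\; 1 - \sqrt{1-\|\alpha\|^2},
\]
whence the triangle inequality yields $\|\rho(\alpha)\| \le 1 + (1-\sqrt{1-\|\alpha\|^2}) = 2 - \sqrt{1-\|\alpha\|^2}$, which is item 2, and $\|\rho(\alpha)\| \ge 1 - (1-\sqrt{1-\|\alpha\|^2}) = \sqrt{1-\|\alpha\|^2}$, which re-proves item 1 for a general subordinate norm. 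Finally, for item 3 and 4 one expands $\rho(\alpha)^{-1} = (\one-\alpha\alpha^*)^{-1/2} = \sum_{k\ge0}\binom{-1/2}{k}(-\alpha\alpha^*)^k = \one + \tfrac12\alpha\alpha^* + \tfrac38(\alpha\alpha^*)^2+\cdots$, with all coefficients $\binom{-1/2}{k}(-1)^k \ge 0$ and $\sum_{k\ge0}\binom{-1/2}{k}(-1)^k t^k = 1/\sqrt{1-t}$ for $t\in[0,1)$, giving
\[
\|\rho(\alpha)^{-1} - \one\| \;\le\; \sum_{k\ge1}\Bigl|\tbinom{-1/2}{k}\Bigr|\,\|\alpha\|^{2k} \;=\; \frac{1}{\sqrt{1-\|\alpha\|^2}} - 1,
\]
so $\|\rho(\alpha)^{-1}\| \le 1/\sqrt{1-\|\alpha\|^2}$ (item 3) and $\|\rho(\alpha)^{-1}\| \ge 2 - 1/\sqrt{1-\|\alpha\|^2}$; since $2 - 1/\sqrt{1-\|\alpha\|^2}$ may be negative while $1/(2-\sqrt{1-\|\alpha\|^2})$ is always positive, I would instead derive item 4 from submultiplicativity: $1 = \|\one\| = \|\rho(\alpha)\rho(\alpha)^{-1}\| \le \|\rho(\alpha)\|\,\|\rho(\alpha)^{-1}\| \le (2-\sqrt{1-\|\alpha\|^2})\,\|\rho(\alpha)^{-1}\|$ by item 2, hence $\|\rho(\alpha)^{-1}\| \ge 1/(2-\sqrt{1-\|\alpha\|^2})$. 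The argument for $\tilde\rho(\alpha) = (\one-\alpha^*\alpha)^{1/2}$ is verbatim the same, using $\|\alpha^*\alpha\| = \|\alpha^*\|\cdot\|\alpha\|$... more precisely $\|\alpha^*\alpha\| \le \|\alpha\|^2$, so I would just remark that replacing $\alpha$ by $\alpha^*$ and noting $\|\alpha^*\| = \|\alpha\|$ for subordinate norms gives all four bounds for $\tilde\rho(\alpha)$ as well.
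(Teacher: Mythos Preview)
Your proof is correct and follows essentially the same route as the paper: power-series expansions of $\sqrt{1-z}$ and $1/\sqrt{1-z}$ together with submultiplicativity for items~2 and~3, and the inequality $1=\|\one\|\le\|\rho(\alpha)\|\,\|\rho(\alpha)^{-1}\|$ combined with item~2 for item~4; the paper obtains item~1 directly from the triangle inequality applied to $\one=\rho(\alpha)^2+\alpha\alpha^*$ and omits your preliminary spectral-norm discussion. One minor caveat (shared with the paper's own argument): the step $\|\alpha\alpha^*\|\le\|\alpha\|^2$, which you justify via $\|\alpha^*\|=\|\alpha\|$, is not valid for \emph{every} subordinate norm---it holds for self-adjoint norms such as the spectral or Frobenius norm, which is what is actually used later in the article.
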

    
   \begin{proof}
Point $(1)$ follows from the triangle inequality, points $(2)$ and $(3)$ are direct consequences of the power series expansion of $z\mapsto \sqrt{1-z}$ and $z\mapsto \frac{1}{\sqrt{1-z}}$ and the sub-multiplicativity of the matrix norm. Finally, to prove $(4)$ it suffices to note that for a subordinate norm, we have  
\(\frac{1}{\|A^{-1}\|} \leq \|A\|\). The proof is identical for $\tilde{\rho}(\alpha)$.
\end{proof}

We now prove several estimates on the transfer matrices which are ingredients of the proof of the continuity of the Lyapunov exponents.

\begin{lemma}\label{lemma1_cont_Lyap}
For any $\epsilon\in (0,1)$, there exists $C_1=C_1(\alpha, \epsilon)>0$ such that:
\begin{equation}\label{eq_lemma1_cont_Lyap}
\forall z\in \mathbb{S}_\epsilon,\ \forall n\in \mathbb{Z},\ \forall \omega\in \Omega, \  \| T_{\omega}^{(n)}(z) \|\leq C_1.
\end{equation}    
\end{lemma}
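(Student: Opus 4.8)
The statement is that the transfer matrices $T_\omega^{(n)}(z)$ are uniformly bounded in $n\in\Z$, $\omega\in\Omega$ and $z\in\mathbb{S}_\epsilon$. The plan is to exploit the explicit factorization \eqref{eq_def_trans_mat}, namely
$T_\omega^{(n)}(z) = \left(\begin{smallmatrix} V^{(2n)}_\omega & 0\\ 0 & (U^{(2n)}_\omega)^*\end{smallmatrix}\right) T_0(z)\left(\begin{smallmatrix} V^{(2n+1)}_\omega & 0\\ 0 & (U^{(2n+1)}_\omega)^*\end{smallmatrix}\right) T_1$,
together with the submultiplicativity of the subordinate norm $\|\cdot\|$. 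The two outer block-diagonal factors are unitary (since $U^{(k)}_\omega, V^{(k)}_\omega\in U(L)$), hence have operator norm $1$ for any subordinate norm on $\mathcal{M}_{2L}(\C)$ that is compatible with the chosen norm on $\mathcal{M}_L(\C)$ — so their contribution is bounded by a constant depending only on $L$ and the choice of norm, uniformly in $n$ and $\omega$. The matrix $T_1$ is a fixed matrix depending only on $\alpha$, so $\|T_1\|$ is a finite constant $C(\alpha)$. It thus remains to bound $\|T_0(z)\|$ uniformly for $z\in\mathbb{S}_\epsilon$.

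First I would write $T_0(z) = \left(\begin{smallmatrix} z^{-1}\tilde\rho(\alpha)^{-1} & \tilde\rho(\alpha)^{-1}\alpha^* \\ \alpha\tilde\rho(\alpha)^{-1} & z\rho(\alpha)^{-1}\end{smallmatrix}\right)$ and estimate each of the four blocks separately, using that the norm of a $2\times 2$ block matrix is controlled (up to a factor depending only on the block structure, i.e. on $L$) by the sum or maximum of the norms of its blocks. For $z\in\mathbb{S}_\epsilon$ we have $1-\epsilon<|z|<1+\epsilon$, so $|z|$ and $|z^{-1}|$ are bounded by $1+\epsilon$ and $1/(1-\epsilon)$ respectively — constants depending only on $\epsilon$. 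By Lemma~\ref{lem_estime_rho}, $\|\tilde\rho(\alpha)^{-1}\|\leq (1-\|\alpha\|^2)^{-1/2}$ and $\|\rho(\alpha)^{-1}\|\leq(1-\|\alpha\|^2)^{-1/2}$, both finite since $\|\alpha\|<1$; and $\|\alpha^*\|$, $\|\alpha\|$ are bounded by $1$ (or by $\|\alpha\|$ itself if one wants to be sharp). Submultiplicativity then gives each block bounded by a constant depending only on $\alpha$ and $\epsilon$, hence $\|T_0(z)\|\leq C_0(\alpha,\epsilon)$ uniformly for $z\in\mathbb{S}_\epsilon$.

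Combining, $\|T_\omega^{(n)}(z)\| \leq \|(\text{unitary})\|\cdot\|T_0(z)\|\cdot\|(\text{unitary})\|\cdot\|T_1\| \leq C(L)^2\, C_0(\alpha,\epsilon)\, C(\alpha) =: C_1(\alpha,\epsilon)$, independent of $n$, $\omega$ and the particular $z\in\mathbb{S}_\epsilon$, which is exactly \eqref{eq_lemma1_cont_Lyap}. There is no real obstacle here: the only minor technical point to be careful about is the passage between the subordinate norm on $\mathcal{M}_L(\C)$ and the induced norm on $\mathcal{M}_{2L}(\C)$ acting on block matrices — one should either fix once and for all that all block-matrix norms are estimated blockwise up to a universal constant, or simply work with a specific norm (e.g. the operator norm) for which block-diagonal unitary matrices have norm exactly $1$; either way the constants absorb the discrepancy and the argument goes through verbatim.
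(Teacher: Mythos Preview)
Your proof is correct and follows essentially the same approach as the paper: both exploit the factorization \eqref{eq_def_trans_mat}, drop the unitary block-diagonal factors, and bound $T_0(z)$ and $T_1$ blockwise using $|z|\le 1+\epsilon$, $|z^{-1}|\le 1/(1-\epsilon)$ together with Lemma~\ref{lem_estime_rho}. The only minor discrepancy is the indexing of the phases (the paper uses $V^{(2n)}_\omega, U^{(2n)}_\omega, V^{(2n-1)}_\omega, U^{(2n-1)}_\omega$ rather than $2n+1$), and you are a bit more explicit than the paper about the passage between norms on $\mathcal{M}_L(\C)$ and $\mathcal{M}_{2L}(\C)$, but neither point affects the argument.
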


\begin{proof}
For simplicity, we denote $\rho=\rho(\alpha)$ and $\widetilde{\rho}=\widetilde{\rho}(\alpha)$.
    $$T_{\omega}^{(n)}(z) =\begin{pmatrix}
   V^{(2n)}_\omega & 0\\
    0 & (U^{(2n)}_\omega)^*
    \end{pmatrix}\begin{pmatrix}
     z^{-1}\widetilde{\rho}^{-1} & -\widetilde{\rho}^{-1}\alpha^*\\
    -\alpha\widetilde{\rho}^{-1} & z\rho^{-1}\end{pmatrix}\begin{pmatrix}
    V^{(2n-1)}_\omega & 0\\
    0 & (U^{(2n-1)}_\omega)^*
    \end{pmatrix}\begin{pmatrix}
    \widetilde{\rho}^{-1} & -\widetilde{\rho}^{-1}\alpha^*\\
    -\alpha\widetilde{\rho}^{-1} & \rho^{-1}\end{pmatrix}.$$
    Thus, for $z\in \mathbb{S}_\epsilon$,
\begin{align*}
       \|T_{\omega}^{(n)}(z) \| & \leq \left\| \left(\begin{smallmatrix}
     z^{-1}\widetilde{\rho}^{-1} & -\widetilde{\rho}^{-1}\alpha^*\\
    -\alpha\widetilde{\rho}^{-1} & z\rho^{-1}\end{smallmatrix}\right) \right\|.\left\| \left(\begin{smallmatrix}
    \widetilde{\rho}^{-1} & -\widetilde{\rho}^{-1}\alpha^*\\
    -\alpha\widetilde{\rho}^{-1} & \rho^{-1}\end{smallmatrix}\right) \right\| \\
    & \leq (|z^{-1}\||\widetilde{\rho}^{-1}\| +\|\widetilde{\rho}^{-1}\alpha^*\|+
    \|\alpha\widetilde{\rho}^{-1}\|+|z\||\rho^{-1}\|).(\| 
    \widetilde{\rho}^{-1}\|+\|\widetilde{\rho}^{-1}\alpha^*\|
    +\|\alpha\widetilde{\rho}^{-1}\|+\|\rho^{-1}\|) \\
    &  \leq \left( \frac{1}{1-\epsilon}\|{\rho}^{-1}\|+ 2 \| {\rho}^{-1}\|. \|\alpha\| +(1+\epsilon)\|\rho^{-1}\|\right)\left(\|{\rho}^{-1}\|+ 2 \| {\rho}^{-1}\| \|\alpha\| +\|\rho^{-1}\|\right).
\end{align*}    
We define $c_\epsilon :=\max\{\frac{1}{1-\epsilon}; 1+\epsilon\}>1$. Then, since $\frac{1}{c_\epsilon} <1$,
\begin{align*}
 \|T_{\omega}^{(n)}(z) \| &      \leq \left( c_\epsilon\|{\rho}^{-1}\|+ 2\| {\rho}^{-1}\| .\|\alpha\|+ c_\epsilon\|\rho^{-1} \|\right)^2\\
 &  \leq c_\epsilon^2 \|\rho^{-1}\|^2\left( 2+  2\|\alpha\|\right)^2 \\
 &  \leq 4c_\epsilon^2 \frac{1}{{1-\|\alpha\|^2}}\left( 1+  \|\alpha\|\right)^2:=C_1(\alpha,\epsilon).
\end{align*}
using Lemma \ref{lem_estime_rho} at the last inequality. 
\end{proof}
\begin{lemma} \label{lemma2_cont_Lyap}
  Let $\epsilon\in (0,1).$ Let $p\in \llbracket 1,L \rrbracket$. There exists $C_2=C_2(p,\alpha,\epsilon)>0$ such that: 
\begin{equation}\label{eq_lemma2_cont_Lyap}
\forall z\in \mathbb{S}_\epsilon, \forall n\in \mathbb{Z}, \forall \omega\in \Omega,\  \ \left\| \wedge^p T_{\omega}^{(n)}(z) \right\|\leq C_2.
\end{equation}
\end{lemma}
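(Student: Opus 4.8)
The plan is to deduce the bound on $\wedge^p T_\omega^{(n)}(z)$ directly from the uniform operator-norm bound already established in Lemma~\ref{lemma1_cont_Lyap}, using the standard submultiplicativity of exterior powers. Concretely, for any $A \in \mathcal{M}_{2L}(\C)$ and any $p \in \llbracket 1, L\rrbracket$ (indeed any $p \le 2L$), the operator norm of $\wedge^p A$ on $\wedge^p \C^{2L}$ satisfies $\|\wedge^p A\| \le \|A\|^p$; this follows because $\wedge^p A$ acts on decomposable $p$-vectors $v_1 \wedge \cdots \wedge v_p$ by $Av_1 \wedge \cdots \wedge Av_p$, and one controls the induced norm on the exterior power by the product of the norms of the factors (equivalently, in terms of singular values, $\|\wedge^p A\|$ equals the product of the $p$ largest singular values of $A$, each of which is at most $\|A\|$). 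One mild technical point: Lemma~\ref{lemma1_cont_Lyap} is stated for an arbitrary subordinate norm $\|\cdot\|$ on $\mathcal{M}_L(\C)$, so to be careful I would fix a reference norm (say the one induced by the Euclidean structure) on $\mathcal{M}_{2L}(\C)$ and on each $\mathcal{M}(\wedge^p\C^{2L})$, and invoke equivalence of norms on these finite-dimensional spaces to transfer the estimate; alternatively one simply notes that the bound $C_1$ from Lemma~\ref{lemma1_cont_Lyap} already applies to whatever subordinate norm has been fixed throughout the article.

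The key steps, in order, are: (1) recall or quote the inequality $\|\wedge^p A\| \le \|A\|^p$ for the subordinate norms in play, with a one-line justification via the singular value characterization $\|\wedge^p A\| = \sigma_1(A)\cdots\sigma_p(A)$ and $\sigma_j(A) \le \|A\|$; (2) apply Lemma~\ref{lemma1_cont_Lyap} to get $\|T_\omega^{(n)}(z)\| \le C_1(\alpha,\epsilon)$ uniformly over $z \in \mathbb{S}_\epsilon$, $n \in \mathbb{Z}$, $\omega \in \Omega$; (3) combine to obtain
\begin{equation*}
\left\| \wedge^p T_\omega^{(n)}(z)\right\| \le \|T_\omega^{(n)}(z)\|^p \le C_1(\alpha,\epsilon)^p =: C_2(p,\alpha,\epsilon),
\end{equation*}
which is exactly \eqref{eq_lemma2_cont_Lyap}. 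Since $C_1$ is independent of $z \in \mathbb{S}_\epsilon$, $n$, and $\omega$, so is $C_2$, and the dependence on $p$ enters only through the exponent.

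I do not expect any serious obstacle here: this lemma is an elementary corollary of the preceding one, and the only thing requiring a modicum of care is making the statement $\|\wedge^p A\| \le \|A\|^p$ precise for the chosen matrix norm (it is cleanest for the spectral/operator norm, and holds more generally for norms induced on the exterior powers compatibly; norm equivalence handles the rest). If one wanted to avoid invoking the singular value picture altogether, an equally short argument is: $\wedge^p A$ is a product of $p$ operators each obtained by letting $A$ act in one tensor slot of $(\C^{2L})^{\otimes p}$ and the identity in the others, restricted to the antisymmetric subspace; each such operator has norm $\le \|A\|$, so the product has norm $\le \|A\|^p$. Either way the proof is two or three lines.
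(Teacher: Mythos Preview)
Your proposal is correct and follows essentially the same approach as the paper: invoke Lemma~\ref{lemma1_cont_Lyap} for the uniform bound on $\|T_\omega^{(n)}(z)\|$ and then apply the standard inequality $\|\wedge^p M\| \le \|M\|^p$ to conclude with $C_2 = C_1^p$. The paper's proof is in fact even terser (it simply cites the exterior-power inequality from \cite{bougerol2012products}), so your additional justification via singular values or the tensor-slot argument is more than sufficient.
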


\begin{proof}
The inequality \eqref{eq_lemma2_cont_Lyap} comes from \eqref{eq_lemma1_cont_Lyap} and the general fact: if  $M\in GL_{2L}(\mathbb{C})$ then  for any $p$,  $\| \wedge^p M\| \leq \| M\|^p$ (see \cite{bougerol2012products}). Note that here, $\wedge^p$ denotes the $p^{\mathrm{th}}$ exterior power.
\end{proof}

\begin{lemma}\label{lemma3_cont_Lyap}
Let $\epsilon \in (0,1)$. There exists $C_3=C_3(\alpha,\epsilon)>0$ such that:  
\begin{equation}\label{eq_lemma3_cont_Lyap}
  \forall n\in \mathbb{Z}, \forall \omega\in \Omega, \forall z_1, z_2 \in \mathbb{S}_\epsilon,\ \| T_{\omega}^{(n)}(z_1)-  T_{\omega}^{(n)}(z_2) \|\leq C_3 \mid z_1-z_2\mid.   
\end{equation}
\end{lemma}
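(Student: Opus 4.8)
The plan is to exploit the fact that, in the factorization \eqref{eq_def_trans_mat} of $T_\omega^{(n)}(z)$, the spectral parameter $z$ enters only through the single middle factor
\[
T_0(z) = \begin{pmatrix} z^{-1}\widetilde{\rho}^{-1} & -\widetilde{\rho}^{-1}\alpha^* \\ -\alpha\widetilde{\rho}^{-1} & z\rho^{-1} \end{pmatrix},
\]
and even there only affine-linearly in $z$ and in $z^{-1}$. I would write $T_\omega^{(n)}(z) = A_\omega\, T_0(z)\, B_\omega$, where $A_\omega$ is the left unitary block-diagonal factor and $B_\omega$ is the product of the right unitary block-diagonal factor with $T_1$; both $A_\omega$ and $B_\omega$ are independent of $z$, and, as in the proof of Lemma \ref{lemma1_cont_Lyap}, the unitary block-diagonal factors have norm $\le 1$, so $\|A_\omega\| \le 1$ and, by the same computation as in that proof together with Lemma \ref{lem_estime_rho}, $\|B_\omega\| \le \|T_1\| \le 2(1+\|\alpha\|)(1-\|\alpha\|^2)^{-1/2}$. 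Then
\[
T_\omega^{(n)}(z_1) - T_\omega^{(n)}(z_2) = A_\omega\bigl(T_0(z_1) - T_0(z_2)\bigr) B_\omega ,
\]
so it remains to estimate $\|T_0(z_1) - T_0(z_2)\|$.

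Since the off-diagonal blocks of $T_0(z)$ do not depend on $z$, one has
\[
T_0(z_1) - T_0(z_2) = \begin{pmatrix} (z_1^{-1} - z_2^{-1})\widetilde{\rho}^{-1} & 0 \\ 0 & (z_1 - z_2)\rho^{-1} \end{pmatrix},
\]
hence, estimating blockwise as in Lemma \ref{lemma1_cont_Lyap}, $\|T_0(z_1) - T_0(z_2)\| \le |z_1^{-1} - z_2^{-1}|\,\|\widetilde{\rho}^{-1}\| + |z_1 - z_2|\,\|\rho^{-1}\|$. Two elementary bounds finish the job: for $z_1, z_2 \in \mathbb{S}_\epsilon$ one has $|z_i| > 1-\epsilon > 0$, so
\[
|z_1^{-1} - z_2^{-1}| = \frac{|z_1 - z_2|}{|z_1|\,|z_2|} \le \frac{|z_1 - z_2|}{(1-\epsilon)^2},
\]
and by point $(3)$ of Lemma \ref{lem_estime_rho}, $\|\rho^{-1}\|, \|\widetilde{\rho}^{-1}\| \le (1-\|\alpha\|^2)^{-1/2}$. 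Combining everything yields
\[
\|T_\omega^{(n)}(z_1) - T_\omega^{(n)}(z_2)\| \le \left(\frac{1}{(1-\epsilon)^2} + 1\right)\frac{2(1+\|\alpha\|)}{1-\|\alpha\|^2}\, |z_1 - z_2| =: C_3(\alpha,\epsilon)\,|z_1 - z_2|,
\]
and every estimate above is uniform in $n \in \mathbb{Z}$ and $\omega \in \Omega$, which is exactly \eqref{eq_lemma3_cont_Lyap}.

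There is no genuine difficulty here: it is essentially a one-line computation once $T_0(z)$ is isolated. The only point deserving attention is the term coming from $z^{-1}$, since $z \mapsto z^{-1}$ is merely locally Lipschitz away from the origin; a uniform Lipschitz constant exists precisely because the annulus $\mathbb{S}_\epsilon$ (with $\epsilon < 1$) is bounded away from $0$, which is why the statement is phrased on $\mathbb{S}_\epsilon$.
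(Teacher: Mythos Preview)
Your proof is correct and follows essentially the same approach as the paper: both isolate the $z$-dependence in the middle factor $T_0(z)$, observe that the difference $T_0(z_1)-T_0(z_2)$ is block-diagonal with entries $(z_1^{-1}-z_2^{-1})\widetilde{\rho}^{-1}$ and $(z_1-z_2)\rho^{-1}$, bound the outer ($z$-independent) factors via unitarity and Lemma~\ref{lem_estime_rho}, and control $|z_1^{-1}-z_2^{-1}|$ by $|z_1-z_2|/(1-\epsilon)^2$ on $\mathbb{S}_\epsilon$. Your write-up is slightly more streamlined in that you name the factorization $A_\omega T_0(z) B_\omega$ explicitly, but the argument and the resulting constant are the same.
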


\begin{proof}
Let $n\in\Z$, $\omega\in \Omega$ and $z_1,z_2\in \mathbb{S}_\epsilon$.
{ \tiny \begin{align*}
\| T_{\omega}^{(n)}(z_1) - T_{\omega}^{(n)}(z_2) \| &   =\left\| \begin{pmatrix}
   V^{(2n)}_\omega & 0\\
    0 & (U^{(2n)}_\omega)^*
    \end{pmatrix} \left[\begin{pmatrix}
     z_1^{-1}\widetilde{\rho}^{-1} & -\widetilde{\rho}^{-1}\alpha^*\\
    -\alpha\widetilde{\rho}^{-1} & z_1\rho^{-1}\end{pmatrix}-\begin{pmatrix}
     z_2^{-1}\widetilde{\rho}^{-1} & -\widetilde{\rho}^{-1}\alpha^*\\
    -\alpha\widetilde{\rho}^{-1} & z_2\rho^{-1}\end{pmatrix} \right] \begin{pmatrix}
    V^{(2n-1)}_\omega & 0\\
    0 & (U_\omega^{(2n-1)})^*
    \end{pmatrix}\begin{pmatrix}
    \widetilde{\rho}^{-1} & -\widetilde{\rho}^{-1}\alpha^*\\
    -\alpha\widetilde{\rho}^{-1} & \rho^{-1}\end{pmatrix}\right\| \\
    & \leq \left\| \begin{pmatrix}
    \widetilde{\rho}^{-1} & -\widetilde{\rho}^{-1}\alpha^*\\
    -\alpha\widetilde{\rho}^{-1} & \rho^{-1}\end{pmatrix}\right\|.\left\|\begin{pmatrix}
     (z_1^{-1}-z_2^{-1})\widetilde{\rho}^{-1} & 0\\
    0 & (z_1-z_2)\rho^{-1}\end{pmatrix}  \right\|. 
\end{align*} }
    
\noindent Since $\|\alpha\|<1$ and using Lemma \ref{lem_estime_rho}, there exists $C_3'=C_3'(\alpha)>0$ such that:
\begin{align*} 
   \|T_{\omega}^{(n)}(z_1)- T_{\omega}^{(n)}(z_2)\| & \leq C_3' \left\| \left(\begin{smallmatrix}
     (z_1^{-1}-z_2^{-1})\widetilde{\rho}^{-1} & 0\\
    0 & (z_1-z_2)\rho^{-1}\end{smallmatrix}\right) \right\| \\
    & \leq C_3'\left(\| (z_1^{-1}-z_2^{-1})\widetilde{\rho}^{-1}\|+\| (z_1-z_2)\rho^{-1}\|\right) \\
    & \leq C_3'\|\rho^{-1}\|\left(\frac{|z_1-z_2|}{|z_1.z_2|}+ |z_1-z_2|\right).
   \end{align*}
Since $z_1$ and $z_2$ are in $\mathbb{S}_\epsilon$, $|z_i|>1-\epsilon$, $\frac{1}{|z_1.z_2|}<\frac{1}{(1-\epsilon)^2}$ and using lemma \ref{lem_estime_rho} we find,
$$ \|T_{\omega}^{(n)}(z_1)- T_{\omega}^{(n)}(z_2)\|\leq C_3'\frac{1}{{ \sqrt{1-\|\alpha\|^2}}} \left(1+\frac{1}{(1-\epsilon)^2}\right)|z_1-z_2| \leq C_3|z_1-z_2|.$$
\end{proof}

We now prove the last step before proving the continuity of the Lyapunov exponents.

\begin{lemma}\label{lemma4_cont_Lyap}
Let $\epsilon\in (0,1)$. Let $p\in \llbracket 1,L \rrbracket$. There exists $C_4=C_4(\alpha,\epsilon,p)>0$ such that:  
\begin{equation}\label{eq_lemma4_cont_Lyap}
 \forall n\in \mathbb{Z}, \forall \omega\in \Omega, \forall z_1, z_2 \in \mathbb{S}_\epsilon,\ \|\wedge^p T_{\omega}^{(n)}(z_1)- \wedge^p T_{\omega}^{(n)}(z_2)\|\leq C_4 \mid z_1-z_2\mid.     
\end{equation}
\end{lemma}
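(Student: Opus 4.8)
The plan is to reduce the estimate on exterior powers to the estimates already obtained on the transfer matrices themselves, namely the uniform bound of Lemma \ref{lemma1_cont_Lyap} and the Lipschitz bound of Lemma \ref{lemma3_cont_Lyap}. The key algebraic fact is the telescoping identity for exterior powers: for any two matrices $M, N \in \mathrm{GL}_{2L}(\C)$ one has
\begin{equation*}
\wedge^p M - \wedge^p N \;=\; \sum_{j=0}^{p-1} (\wedge^j M)\wedge (M-N)\wedge(\wedge^{p-1-j} N),
\end{equation*}
or more precisely, writing $\wedge^p M - \wedge^p N$ as a telescoping sum where in the $j$-th term $j$ copies of $M$ are replaced one at a time by copies of $N$, with the difference $M-N$ appearing in one slot. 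Applying a submultiplicativity estimate for $\wedge^p$ (as recalled in the proof of Lemma \ref{lemma2_cont_Lyap}, $\|\wedge^p A\| \le \|A\|^p$, together with the fact that such "mixed" wedge products are bounded by products of the relevant operator norms), each of the $p$ summands is bounded by $\|M\|^{j}\,\|M-N\|\,\|N\|^{p-1-j}$.

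The steps I would carry out, in order: first, fix $\epsilon \in (0,1)$ and $p \in \llbracket 1, L\rrbracket$; take $n \in \Z$, $\omega \in \Omega$ and $z_1, z_2 \in \mathbb{S}_\epsilon$, and abbreviate $M = T_\omega^{(n)}(z_1)$, $N = T_\omega^{(n)}(z_2)$. Second, invoke the telescoping identity for $\wedge^p M - \wedge^p N$. Third, bound the norm of each term of the telescoping sum by $\|M\|^{j}\,\|M - N\|\,\|N\|^{p-1-j}$. Fourth, apply Lemma \ref{lemma1_cont_Lyap} to get $\|M\|, \|N\| \le C_1 = C_1(\alpha,\epsilon)$, so that each such term is at most $C_1^{p-1}\,\|M-N\|$. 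Fifth, apply Lemma \ref{lemma3_cont_Lyap} to get $\|M - N\| \le C_3\,|z_1 - z_2|$. Putting this together, $\|\wedge^p M - \wedge^p N\| \le p\, C_1^{p-1} C_3\, |z_1 - z_2| =: C_4 |z_1-z_2|$, which is exactly \eqref{eq_lemma4_cont_Lyap}, with $C_4 = C_4(\alpha,\epsilon,p) = p\,C_1(\alpha,\epsilon)^{p-1}\,C_3(\alpha,\epsilon)$.

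The main obstacle — really the only point requiring care — is the telescoping identity for exterior powers and the associated norm bound on mixed wedge products. One has to make sure the combinatorics of replacing factors one at a time is written correctly, and that the elementary inequality $\|(\wedge^{j}A)\wedge B \wedge (\wedge^{k}C)\| \le \|A\|^{j}\|B\|\|C\|^{k}$ (with $B$ a single matrix, $j+k = p-1$) is available; this follows from the functoriality of $\wedge^p$ on the tensor algebra and submultiplicativity of the induced norm, exactly as the bound $\|\wedge^p M\| \le \|M\|^p$ used in Lemma \ref{lemma2_cont_Lyap} and attributed to \cite{bougerol2012products}. Everything else is a direct substitution of the two preceding lemmas, and no genuinely new estimate on the transfer matrices is needed.
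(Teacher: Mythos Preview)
Your proposal is correct and follows essentially the same approach as the paper: the paper also uses the telescoping inequality $\|\wedge^{p} M-\wedge^{p} N\| \leq \|M-N\|\sum_{j=0}^{p-1}\|M\|^{j}\|N\|^{p-1-j}$ (citing \cite{boumaza2007exposants}) and then applies the uniform bound on the transfer matrices together with Lemma~\ref{lemma3_cont_Lyap}. The only cosmetic difference is that the paper cites Lemma~\ref{lemma2_cont_Lyap} rather than Lemma~\ref{lemma1_cont_Lyap} for the uniform bound, but your choice is in fact the more direct one.
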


\begin{proof}
As the transfer matrices are invertible, we can reproduce the reasoning from the proof of \cite[Lemma 6.2.6]{boumaza2007exposants} and use the following inequality:
$$
\left\|\wedge^{p} M-\wedge^{p} N\right\| \leq\|N-M\|\left(\|N\|^{p-1}+\|M\| \cdot\|N\|^{p-2}+\ldots+\|M\|^{p-1}\right)
$$
Then it suffices to apply Lemma \ref{lemma2_cont_Lyap} and Lemma \ref{lemma3_cont_Lyap} to conclude.
\end{proof}

Having these four lemmas, one can now deduce the continuity of the Lyapunov exponents. 

\noindent For $ p \in \llbracket 1,L \rrbracket$, $\epsilon\in (0,1)$, $\bar{x}\in P(\mathbb{C}^{2L})$  and $z\in\mathbb{S}_\epsilon$, we define:
\begin{equation}
\Phi(z,\bar{x})=\mathbb{E}\left(\ln{\frac{\| \wedge^pT_{\omega}^{(n)}(z) \bar{x}\|}{\|\bar{x}\|}}\right)
\end{equation}
where $P(\mathbb{C}^{2L})$ is the projective space associated with $\mathbb{C}^{2L}$. Let $L_p$ be the $p$-Lagrangian of $\wedge^p \C^{2L}$.

\begin{lemma}\label{lemma_cont_Lyap}
 The map $(z,\bar{x})\mapsto  \Phi(z,\bar{x})$ has the following properties: 
    \begin{enumerate}
        \item  The map $\bar{x}\mapsto  \Phi(z,\bar{x})$ is continuous on  $P({L}_{p})$.
        \item There exists a constant $C>0$ such that for all $z_1$, $z_2 \ \in \mathbb{S}_\epsilon$: 
        $$\sup_{\bar{x}\in P({L}_{p})} |\Phi(z_1,\bar{x})-\Phi(z_2,\bar{x})| < C|z_1-z_2|. $$
        \item The function 
        $$\begin{aligned}
          &\mathbb{S}_\epsilon \times P({L}_{p})\to\mathbb{R} & \ \\
       \Phi:&(z,\bar{x})\mapsto\Phi(z,\bar{x})
        \end{aligned}$$
        is continuous.
        \end{enumerate}
\end{lemma}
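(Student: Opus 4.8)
The statement to prove is Lemma \ref{lemma_cont_Lyap}, asserting three regularity properties of the functional $\Phi(z,\bar x) = \mathbb{E}\bigl(\ln \tfrac{\|\wedge^p T_\omega^{(n)}(z)\bar x\|}{\|\bar x\|}\bigr)$ on $\mathbb{S}_\epsilon \times P(L_p)$: continuity in $\bar x$, Lipschitz continuity in $z$ uniformly in $\bar x$, and joint continuity. The plan is to obtain all three from the uniform estimates on $\wedge^p T_\omega^{(n)}(z)$ already established in Lemmas \ref{lemma1_cont_Lyap}--\ref{lemma4_cont_Lyap}, combined with a dominated-convergence argument to move limits inside the expectation. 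The one structural point that makes everything work cleanly is that the transfer matrices are uniformly invertible on $\mathbb{S}_\epsilon$: from Lemma \ref{lemma1_cont_Lyap} applied to the inverse (or directly from the explicit block form of $T_\omega^{(n)}(z)$ and the relation \eqref{mat_trans_relation_inverse}), one gets a constant $c_\epsilon > 0$ with $\|(T_\omega^{(n)}(z))^{-1}\| \le c_\epsilon$ for all $z \in \mathbb{S}_\epsilon$, $n$, $\omega$, hence $\|\wedge^p T_\omega^{(n)}(z)\bar x\| \ge c_\epsilon^{-p}\|\bar x\|$ for unit $\bar x$. This two-sided bound $c_\epsilon^{-p} \le \|\wedge^p T_\omega^{(n)}(z)\bar x\|/\|\bar x\| \le C_2^{p}$ (with $C_2$ from Lemma \ref{lemma2_cont_Lyap}) means the integrand $\ln(\cdots)$ is bounded, uniformly in everything, by a deterministic constant $M(\epsilon,p,\alpha)$; this is the dominating function for every dominated-convergence step below.

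For part (1), fix $z \in \mathbb{S}_\epsilon$ and let $\bar x_j \to \bar x$ in $P(L_p)$. Choosing unit-norm representatives $x_j \to x$ in $\wedge^p\mathbb{C}^{2L}$, for each fixed $\omega$ the map $x \mapsto \ln\|\wedge^p T_\omega^{(n)}(z) x\|$ is continuous (since $\wedge^p T_\omega^{(n)}(z) x \ne 0$ by invertibility), so $\ln\|\wedge^p T_\omega^{(n)}(z)x_j\| \to \ln\|\wedge^p T_\omega^{(n)}(z)x\|$ pointwise in $\omega$; the uniform bound $M$ lets us pass to the limit under $\mathbb{E}$, giving $\Phi(z,\bar x_j) \to \Phi(z,\bar x)$. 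For part (2), fix $\bar x$ with unit representative $x$, and estimate
\begin{align*}
\bigl|\Phi(z_1,\bar x) - \Phi(z_2,\bar x)\bigr|
&\le \mathbb{E}\Bigl(\bigl|\ln\|\wedge^p T_\omega^{(n)}(z_1)x\| - \ln\|\wedge^p T_\omega^{(n)}(z_2)x\|\bigr|\Bigr).
\end{align*}
Using $|\ln a - \ln b| \le \tfrac{|a-b|}{\min(a,b)}$ together with the lower bound $\|\wedge^p T_\omega^{(n)}(z_i)x\| \ge c_\epsilon^{-p}$ and the reverse triangle inequality $\bigl|\|\wedge^p T_\omega^{(n)}(z_1)x\| - \|\wedge^p T_\omega^{(n)}(z_2)x\|\bigr| \le \|\wedge^p T_\omega^{(n)}(z_1) - \wedge^p T_\omega^{(n)}(z_2)\|$, Lemma \ref{lemma4_cont_Lyap} gives the bound by $c_\epsilon^{p} C_4 |z_1 - z_2|$, with the resulting constant $C := c_\epsilon^p C_4$ independent of $\bar x$ — this is the claimed uniform Lipschitz estimate, which also passes the supremum over $\bar x$ trivially since the bound does not involve $\bar x$.

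Part (3) then follows by a standard argument: for $(z_1,\bar x_1)$ and $(z_2,\bar x_2)$,
\begin{align*}
|\Phi(z_1,\bar x_1) - \Phi(z_2,\bar x_2)|
\le |\Phi(z_1,\bar x_1) - \Phi(z_2,\bar x_1)| + |\Phi(z_2,\bar x_1) - \Phi(z_2,\bar x_2)|,
\end{align*}
where the first term is $\le C|z_1-z_2|$ by part (2) and the second tends to $0$ as $\bar x_1 \to \bar x_2$ by part (1); more care is needed if one wants genuine joint continuity rather than continuity in each variable plus uniform Lipschitz in one — but uniform Lipschitz in $z$ (uniformly in $\bar x$) together with continuity in $\bar x$ for each fixed $z$ does yield joint continuity, so this suffices. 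The main obstacle, such as it is, is simply making the interchange of limit and expectation rigorous, which is why establishing the deterministic two-sided bound on $\|\wedge^p T_\omega^{(n)}(z)\bar x\|/\|\bar x\|$ at the outset — in particular the uniform lower bound coming from uniform invertibility of the transfer matrices on $\mathbb{S}_\epsilon$ — is the load-bearing step; once that is in place, the three parts are routine. I would also note in passing that $\Phi(z,\bar x)$ does not depend on $n$ because the $T_\omega^{(n)}(z)$ are identically distributed, so the notation is consistent.
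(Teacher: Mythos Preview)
Your proof is correct and follows the same overall architecture as the paper's: dominated convergence for part (1), a Lipschitz estimate via Lemma~\ref{lemma4_cont_Lyap} for part (2), and the triangle-inequality splitting for part (3). The one noteworthy difference is in part (2): the paper bounds $\bigl|\ln\tfrac{\|\wedge^p T(z_1)\bar x\|}{\|\wedge^p T(z_2)\bar x\|}\bigr|$ via the multiplicative factorization $\|\wedge^p T(z_1)\bar x\| \le \|\wedge^p T(z_1)\cdot(\wedge^p T(z_2))^{-1}\|\,\|\wedge^p T(z_2)\bar x\|$ followed by $\ln x \le x-1$, whereas you use the mean-value-type bound $|\ln a - \ln b|\le |a-b|/\min(a,b)$ together with the explicit uniform lower bound $\|\wedge^p T(z)\bar x\|\ge c_\epsilon^{-p}\|\bar x\|$. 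Both arguments ultimately feed into Lemma~\ref{lemma4_cont_Lyap} and the uniform invertibility coming from \eqref{mat_trans_relation_inverse}; your route is marginally more direct, and your explicit two-sided bound on the integrand also makes the dominated-convergence step in part (1) cleaner than the paper's one-sided estimate.
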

\begin{proof}
We start by demonstrating the first point. According to Lemma \ref{lemma2_cont_Lyap}, $\wedge^pT_{\omega}^{(n)}(z)$ is uniformly bounded in $z$ and $\omega$ and  
   $$ \forall z\in\mathbb{S}_\epsilon, \ \forall \omega \in \Omega, \ \ln{\frac{\| \wedge^p T_{\omega}^{(n)}(z) \bar{x}\|}{\|\bar{x}\|}} \leq \ln{\| \wedge^p T_{\omega}^{(n)}(z) \|} \leq  \ln(C_2) $$
   hence if $ \bar{x}_m \to \bar{x}$ in $ \mathbb{P}({L}_{p}) $, it suffices to apply the dominated convergence theorem to show that: 
   $$\Phi (z,\bar{x}_m)= \mathbb{E}\left(\frac{\|(\wedge^p T_{\omega}^{(n)}(z)\bar{x}_m\|}{\|\bar{x}_m\|}\right)\underset{m\to +\infty}{\xrightarrow{\hspace{2.5em}}} \mathbb{E}\left(\frac{\|(\wedge^p T_{\omega}^{(n)}(z) \bar{x}\|}{\|\bar{x}\|}\right)=\Phi(z,\bar{x}).$$\\
     For the second assertion, we begin by taking $z_1$ and $z_2$ in $\mathbb{S}_\epsilon$. We have:
 $$\Phi(z_1,\bar{x})-\Phi(z_2,\bar{x})= \mathbb{E}\left(\ln{\frac{\|\wedge^pT_{\omega}^{(n)}(z_1) \bar{x}\|}{\|\wedge^pT_{\omega}^{(n)}(z_2) \bar{x}\|}}\right).$$
\begin{comment}
   But:
\begin{equation}
 \left| \ln \left(\frac{\| \wedge^pT_{\omega}^{(n)}(z_1)\bar{x} \|}{\left\|{\wedge^p T_{\omega}^{(n)}(z_2)\bar{x} 
} \right\|}\right) \right| =\left\{\begin{array}{l}
\ln \left(\frac{\left\|\wedge^pT_{\omega}^{(n)}(z_1)\bar{x} \right\|}{\left\|\wedge^pT_{\omega}^{(n)}(z_2)\bar{x} \right\|}\right). \\
\text { or } \\
\ln \left(\frac{\|\wedge^p T_{\omega}^{(n)}(z_2)\bar{x} \|}{\left\|\wedge^p T_{\omega}^{(n)}(z_1)\bar{x} \right\|}\right).
\end{array}\right. 
\end{equation}
We then write, 
\begin{align*}
    \| \wedge^pT_{\omega}^{(n)}(z_1))\bar{x} \| & =
\| \wedge^pT_{\omega}^{(n)}(z_1)\cdot \wedge^p(T_{\omega}^{(n)}(z_2))^{-1}\cdot \wedge^pT_{\omega}^{(n)}(z_2)\bar{x}\| \\
 & \leq \| \wedge^pT_{\omega}^{(n)}(z_1)\cdot \wedge^p (T_{\omega}^{(n)}(z_2)))^{-1}\|\cdot \|\wedge^pT_{\omega}^{(n)}(z_2)\bar{x}\|
\end{align*}
Moreover:
$$ \frac{\|\wedge^p T_{\omega}^{(n)}(z_1)\bar{x} \|}{\|\wedge^pT_{\omega}^{(n)}(z_2)\bar{x}\|} \leq \|\wedge^pT_{\omega}^{(n)}(z_1)  (\wedge^p T_{\omega}^{(n)}(z_2))^{-1} \|$$
And similarly, we find:
$$ \frac{\| \wedge^pT_{\omega}^{(n)}(z_2)\bar{x} \|}{\|\wedge^p T_{\omega}^{(n)}(z_1)\bar{x}\|} \leq \|\wedge^p T_{\omega}^{(n)}(z_2)  (\wedge^pT_{\omega}^{(n)}(z_1))^{-1} \|.$$ 
\end{comment}
Writing 
\begin{align*}
    \| \wedge^pT_{\omega}^{(n)}(z_1))\bar{x} \| & =
\| \wedge^pT_{\omega}^{(n)}(z_1)\cdot \wedge^p(T_{\omega}^{(n)}(z_2))^{-1}\cdot \wedge^pT_{\omega}^{(n)}(z_2)\bar{x}\| \\
 & \leq \| \wedge^pT_{\omega}^{(n)}(z_1)\cdot \wedge^p (T_{\omega}^{(n)}(z_2)))^{-1}\|\cdot \|\wedge^pT_{\omega}^{(n)}(z_2)\bar{x}\|
\end{align*}
and the same exchanging $z_1$ and $z_2$, one gets: 
$$ \left| \ln \left(\frac{\| \wedge^pT_{\omega}^{(n)}(z_1)\bar{x} \|}{\|{ \wedge^pT_{\omega}^{(n)}(z_2)\bar{x} 
} \|}\right) \right| \leq \ln{(\max\{\|\wedge^pT_{\omega}^{(n)}(z_1)  (\wedge^pT_{\omega}^{(n)}(z_2))^{-1} \|;
\|\wedge^p T_{\omega}^{(n)}(z_2)  (\wedge^p T_{\omega}^{(n)}(z_1))^{-1} \|\}
)}$$ 
Moreover, 
\begin{align*}
\|T_{\omega}^{(n)}(z_1)  (T_{\omega}^{(n)}(z_2))^{-1} \|
&\leq \| \left(\wedge^pT_{\omega}^{(n)}(z_1)- \wedge^pT_{\omega}^{(n)}(z_2)\right) \cdot (\wedge^pT_{\omega}^{(n)}(z_2))^{-1}+I_{2L}\|\\
&\leq \|(\wedge^p T_{\omega}^{(n)}(z_2))^{-1}\| \cdot \|\left(\wedge^pT_{\omega}^{(n)}(z_1)- \wedge^pT_{\omega}^{(n)}(z_2)\right)\|+1
\end{align*} 
and the same exchanging $z_1$ and $z_2$. This gives the following inequality: 
\begin{comment}
Similarly, we find: 
\[ \|\wedge^pT_{\omega}^{(n)}(z_2) \cdot \wedge^p(T_{\omega}^{(n)}(z_1)))^{-1} \| \leq \|(T_{\omega}^{(n)}(z_1))a)^{-1}\| \cdot \|\left(\wedge^pT_{\omega}^{(n)}(z_1)- \wedge^pT_{\omega}^{(n)}(z_2)\right)\|+1. \]    
\end{comment}
\small
\[ \left| \ln \left(\frac{\| \wedge^pT_{\omega}^{(n)}(z_1)\bar{x} \|}{\|{ \wedge^p T_{\omega}^{(n)}(z_2)\bar{x} } \|}\right) \right|  \leq \ln{\left[ \max_{i=1,2}\left\{\|(\wedge^p T_{\omega}^{(n)}(z_i))^{-1} \|\right\}\cdot\|\wedge^p T_{\omega}^{(n)}(z_1)- \wedge^pT_{\omega}^{(n)}(z_2)\|+1\right]} \\
. \]
\normalsize
Since for every $x>0$, $ \ln(x) \leq x-1$, using \eqref{mat_trans_relation_inverse} to be able to apply Lemma \ref{lemma2_cont_Lyap} to the inverse of the transfer matrices, we get: 
\[ \left|\mathbb{E}\left(\ln{\frac{\|\wedge^pT_{\omega}^{(n)}(z_1)\bar{x}\|}{\|\wedge^pT_{\omega}^{(n)}(z_2)\bar{x}\|}}\right)\right| \leq C_2 \|\wedge^pT_{\omega}^{(n)}(z_1)-\wedge^pT_{\omega}^{(n)}(z_2)\|. \]
Lemma \ref{lemma4_cont_Lyap} ensures the existence of a constant $C_4>0$ such that:  
\[ \|\wedge^pT_{\omega}^{(n)}(z_1)- \wedge^pT_{\omega}^{(n)}(z_2)\|\leq C_4 \mid z_1-z_2\mid . \]
Since $C_2$ and $C_4$ do not depend on $\bar{x}$, we get the second point.

For the third point, it suffices to combine the two previous points. 
Let $\epsilon\in (0,1)$ and take $(z_1,\bar{x}_1)$ and $(z_2,\bar{x}_2)$ close enough to have: 
\[ \begin{aligned}
    \mid \Phi(z_1,\bar{x}_1)-\Phi(z_2,\bar{x}_2)\mid &\leq \mid \Phi(z_1,\bar{x}_1)-\Phi(z_2,\bar{x}_1)\mid +  \mid \Phi(z_2,\bar{x}_1)-\Phi(z_2,\bar{x}_2)\mid\\
    & \leq C_2 C_4 \mid z_1-z_2\mid +\epsilon \\
    & \leq C \epsilon .
\end{aligned} \]
This completes the proof. 
\end{proof}

Using the inclusion \eqref{eq_ULL_subset_G} and the applications defined at \eqref{def_Cayley} and \eqref{def_pi_C_R} one gets, following the strategy of   \cite{asch2010localization}, that the group $\pi(C^*G_{\mu_z}C)$ is $p$-contracting and $L_p$-strongly irreducible  for every $p\in \llbracket 1,L\rrbracket$ and every $z\in \C \setminus \{ 0\}$ since $\pi(C^*U(L,L)C)$ is. This implies, for each  $p\in \llbracket 1,L\rrbracket$, the existence of a unique measure $\nu_{p,z}$ on $P(\pi^{-1}(C L_p C^*))\subset P(\mathbb{C}^{2L})$ which is $\mu_z$-invariant and such that :
\begin{equation}\label{formule_int_Lyap}
 \forall z\in \C \setminus \{ 0\}, \ \gamma_1(z)+\hdots+ \gamma_p(z)=\int_{G_{\mu_z}\times P(\C^{2L})} \ln\frac{\|\wedge^p M \bar{x}\|}{\|\bar{x}\|}\  {\textrm{d}\mu_z}(M) {\textrm{d}\nu_{p,z}}(\bar{x}).   
\end{equation}

Following the proof of \cite{hamza2007localization} or \cite{boumaza2007exposants}, one deduce from the integral representation \eqref{formule_int_Lyap}, from Lemma \ref{lemma_cont_Lyap} and the use of Banach-Alaoglu's theorem the continuity of the sums of the Lyapunov exponents hence of the Lyapunov exponents themselves, as functions of the parameter $z$.

\begin{proposition}\label{prop_cont_Lyap}
Let $\epsilon\in (0,1)$ and  $p\in \llbracket 1,L\rrbracket$. The function $z \mapsto \gamma_p(z)$ is continuous on $\C \setminus \{ 0\}$.
\end{proposition}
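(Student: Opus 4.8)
The plan is to deduce the continuity of each $\gamma_p$ from that of the partial sums $\gamma^{(p)}(z):=\gamma_1(z)+\cdots+\gamma_p(z)$, $p\in\llbracket 0,L\rrbracket$ (with $\gamma^{(0)}\equiv 0$), since $\gamma_p=\gamma^{(p)}-\gamma^{(p-1)}$. Fix $p\in\llbracket 1,L\rrbracket$. Because $\mu_z$ is the common law of the transfer matrices, carrying out the $M$-integration in \eqref{formule_int_Lyap} first turns the integral representation into
\[
\gamma^{(p)}(z)\;=\;\int_{P(L_p)}\Phi(z,\bar x)\,d\nu_{p,z}(\bar x),
\]
where $\nu_{p,z}$ is the unique $\mu_z$-invariant probability measure on $P(L_p)$. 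Since continuity is a local property and the estimates of Lemmas \ref{lemma1_cont_Lyap}--\ref{lemma4_cont_Lyap}, hence of Lemma \ref{lemma_cont_Lyap}, hold verbatim with $\mathbb{S}_\epsilon$ replaced by any compact annulus $\{r_1\le|z|\le r_2\}$, $0<r_1<r_2$, it is enough to prove that $\gamma^{(p)}(z_m)\to\gamma^{(p)}(z)$ whenever $z_m\to z$ inside such an annulus; this covers all of $\C\setminus\{0\}$.

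First I would split, by linearity of the integral,
\begin{align*}
\gamma^{(p)}(z_m)-\gamma^{(p)}(z)&=\int_{P(L_p)}\bigl(\Phi(z_m,\bar x)-\Phi(z,\bar x)\bigr)\,d\nu_{p,z_m}(\bar x)\\
&\quad+\int_{P(L_p)}\Phi(z,\bar x)\,d\bigl(\nu_{p,z_m}-\nu_{p,z}\bigr)(\bar x).
\end{align*}
By Lemma \ref{lemma_cont_Lyap}(2) the first integrand is bounded in absolute value by $C|z_m-z|$ uniformly in $\bar x$, so the first term is $O(|z_m-z|)\to 0$. For the second term the key claim is the weak-$*$ convergence $\nu_{p,z_m}\rightharpoonup\nu_{p,z}$ on the compact space $P(L_p)$; granting it, and since $\bar x\mapsto\Phi(z,\bar x)$ is continuous, hence bounded, on $P(L_p)$ by Lemma \ref{lemma_cont_Lyap}(1), the second term tends to $0$ as well, which gives $\gamma^{(p)}(z_m)\to\gamma^{(p)}(z)$ and finishes the argument.

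To prove $\nu_{p,z_m}\rightharpoonup\nu_{p,z}$ I would argue by compactness. By Banach--Alaoglu the set of probability measures on the compact metric space $P(L_p)$ is weak-$*$ sequentially compact, so it suffices to show that every weak-$*$ limit point $\nu$ of $(\nu_{p,z_m})_m$ equals $\nu_{p,z}$; by the uniqueness of the invariant measure this reduces to checking that $\nu$ is supported on $P(L_p)$ (clear, as a limit of measures carried by this closed set) and $\mu_z$-invariant. For the latter I would first record that $z\mapsto\mu_z$ is weak-$*$ continuous: for bounded continuous $g$ on $\mathrm{GL}_{2L}(\C)$, $\int g\,d\mu_{z_m}=\mathbb{E}\bigl(g(T_\omega^{(0)}(z_m))\bigr)\to\mathbb{E}\bigl(g(T_\omega^{(0)}(z))\bigr)=\int g\,d\mu_z$ by dominated convergence, using $T_\omega^{(0)}(z_m)\to T_\omega^{(0)}(z)$ for every $\omega$ (Lemma \ref{lemma3_cont_Lyap}) and the fact that, by Lemma \ref{lemma1_cont_Lyap} together with \eqref{mat_trans_relation_inverse}, all the matrices $T_\omega^{(0)}(z_m)$ and their inverses stay in a fixed compact set $K'\subset\mathrm{GL}_{2L}(\C)$. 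Then, passing to the limit along the relevant subsequence in the invariance identity $\int f\,d\nu_{p,z_m}=\int\!\!\int f(\wedge^p M\cdot\bar x)\,d\nu_{p,z_m}(\bar x)\,d\mu_{z_m}(M)$ (valid for every continuous $f$ on $P(\wedge^p\C^{2L})$), and using that $(M,\bar x)\mapsto f(\wedge^p M\cdot\bar x)$ is continuous and bounded on $K'\times P(L_p)$ while $\mu_{z_m}\otimes\nu_{p,z_m}\rightharpoonup\mu_z\otimes\nu$, one obtains $\int f\,d\nu=\int\!\!\int f(\wedge^p M\cdot\bar x)\,d\nu(\bar x)\,d\mu_z(M)$, i.e. $\nu$ is $\mu_z$-invariant. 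Hence $\nu=\nu_{p,z}$, and since every subsequence of $(\nu_{p,z_m})_m$ has a further subsequence converging to $\nu_{p,z}$, the whole sequence converges.

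The main obstacle is precisely this last step — the weak-$*$ convergence of the Furstenberg measures $\nu_{p,z_m}$ — which rests on the uniqueness of the $\mu_z$-invariant measure on $P(L_p)$ (a consequence of the $p$-contraction and $L_p$-strong irreducibility of $\pi(C^*G_{\mu_z}C)$ recorded above), on the continuity $z\mapsto\mu_z$, and on a routine but mildly delicate passage to the limit in a double integral against weakly convergent product measures; the remaining steps are direct applications of Lemma \ref{lemma_cont_Lyap}. Finally, $\gamma_p=\gamma^{(p)}-\gamma^{(p-1)}$ is continuous on $\C\setminus\{0\}$ as a difference of two continuous partial sums.
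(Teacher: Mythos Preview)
Your proof is correct and follows essentially the same approach as the paper's: both rely on the integral representation \eqref{formule_int_Lyap}, Lemma \ref{lemma_cont_Lyap}, and Banach--Alaoglu to establish continuity of the partial sums $\gamma^{(p)}$ and then pass to the individual $\gamma_p$. The paper merely cites \cite{hamza2007localization,boumaza2007exposants} for the standard argument, whereas you have spelled out the weak-$*$ convergence of the Furstenberg measures and the splitting of $\gamma^{(p)}(z_m)-\gamma^{(p)}(z)$ in full; these are precisely the details those references contain.
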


We deduce from this continuity results a result of positivity of the Lyapunov exponents on some annulus.

\begin{corollary}\label{corollaire_pos_Lyap_couronne}
There exists $\epsilon_0\in (0,1)$ such that for
every $\epsilon \in (0,\epsilon_0]$, the $L$ first Lyapunov exponents are positive on $\mathbb{S}_{\epsilon}$.
\end{corollary}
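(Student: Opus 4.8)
The plan is to combine the continuity of the Lyapunov exponents established in Proposition~\ref{prop_cont_Lyap} with the strict positivity result \eqref{eq_pos_Lyap_BM15} on the unit circle, using a compactness argument on $\mathbb{S}^1$. First I would recall that, by \eqref{eq_pos_Lyap_BM15} (which is \cite[Theorem~1]{boumaza2015absence}, extended to arbitrary $z$ via \eqref{eq_separate_Lyap}), for every $z\in\mathbb{S}^1$ one has $\gamma_L(z)>0$, and hence in particular $\gamma_p(z)>0$ for all $p\in\llbracket 1,L\rrbracket$ since the exponents are ordered in decreasing order. Fix such a $z_0\in\mathbb{S}^1$. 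By Proposition~\ref{prop_cont_Lyap}, the function $z\mapsto\gamma_L(z)$ is continuous on $\C\setminus\{0\}$, so there exists an open neighbourhood $\mathcal{O}_{z_0}$ of $z_0$ in $\C\setminus\{0\}$ on which $\gamma_L$ remains strictly positive (e.g.\ $>\tfrac12\gamma_L(z_0)$).

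Next I would invoke compactness: $\mathbb{S}^1$ is compact and is covered by the open sets $\{\mathcal{O}_{z_0}\}_{z_0\in\mathbb{S}^1}$, so there is a finite subcover $\mathcal{O}_{z_1},\dots,\mathcal{O}_{z_N}$. The union $\mathcal{O}:=\bigcup_{j=1}^N\mathcal{O}_{z_j}$ is then an open subset of $\C\setminus\{0\}$ containing $\mathbb{S}^1$ on which $\gamma_L$ — and therefore every $\gamma_p$ with $p\le L$ — is strictly positive. Since $\mathbb{S}^1$ is compact and contained in the open set $\mathcal{O}$, the distance from $\mathbb{S}^1$ to the closed set $\C\setminus\mathcal{O}$ is strictly positive; call it $\epsilon_0$. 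More concretely, one checks that there exists $\epsilon_0\in(0,1)$ such that the annulus $\mathbb{S}_{\epsilon_0}=\{z:1-\epsilon_0<|z|<1+\epsilon_0\}$ is contained in $\mathcal{O}$: every point of $\mathbb{S}_{\epsilon_0}$ lies within distance $\epsilon_0$ of a point of $\mathbb{S}^1$, so taking $\epsilon_0$ smaller than $\dist(\mathbb{S}^1,\C\setminus\mathcal{O})$ does the job. Then for every $\epsilon\in(0,\epsilon_0]$ we have $\mathbb{S}_\epsilon\subseteq\mathbb{S}_{\epsilon_0}\subseteq\mathcal{O}$, and hence $\gamma_1(z)\geq\cdots\geq\gamma_L(z)>0$ for all $z\in\mathbb{S}_\epsilon$, which is the claim.

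The argument is essentially routine once Proposition~\ref{prop_cont_Lyap} is in hand; the only point requiring a little care is the reduction from "finite open cover of $\mathbb{S}^1$" to "genuine annulus $\mathbb{S}_{\epsilon_0}$", which is where one uses the compactness of $\mathbb{S}^1$ and the positivity of the distance to the complement of the cover. I do not anticipate any serious obstacle here: all the substantive work — the uniform bounds on transfer matrices, the integral representation of the Lyapunov exponents, the continuity of $z\mapsto\Phi(z,\bar x)$ and the resulting continuity of $z\mapsto\gamma_p(z)$ — has already been carried out in the lemmas above, and the positivity on $\mathbb{S}^1$ is imported from \cite{boumaza2015absence}.
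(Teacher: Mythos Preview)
Your proposal is correct and follows the same approach as the paper: combine the positivity of $\gamma_p$ on $\mathbb{S}^1$ from \cite{boumaza2015absence} with the continuity given by Proposition~\ref{prop_cont_Lyap} to obtain positivity on a neighbourhood of $\mathbb{S}^1$, hence on some annulus $\mathbb{S}_{\epsilon_0}$. The paper's proof is in fact terser than yours, simply asserting that continuity and positivity on $\mathbb{S}^1$ yield positivity on a neighbourhood and hence on some $\mathbb{S}_{\epsilon_0}$, whereas you spell out the compactness and finite-subcover argument explicitly.
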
 
  
\begin{proof}
Let $\epsilon\in (0,1)$ and  $p\in \llbracket 1,L\rrbracket$. It was shown in \cite{boumaza2015absence} that for every $z\in \mathbb{S}^1$, $\gamma_p(z)>0$. Hence, by continuity of $z\mapsto \gamma_p(z)$ on $\C \setminus \{ 0\}$, this function remains positive in a neighborhood of $\mathbb{S}^1$ hence on some $\mathbb{S}_{\epsilon_0}$ for an $\epsilon_0\in (0,1)$.
\end{proof}

\subsection{Exponential decay of transfer matrices.}
The positivity of the Lyapunov exponents does not imply directly the exponential decaying of the eigenfunctions of the operators $\mathbb{U}_\omega$, as discussed for example in \cite{boumaza2023localization}, but it already implies the exponential decaying of some power of the inverse of products of transfer matrices. 

%\begin{lemma}\label{lemma_exp_decay_trans_mat}
%Let \(K \subset \mathbb{C}\) a compact set. There exist constants \(C>O\), \(c>0\) and \(s\in (0,1)\), such that:
%\begin{equation}\label{eq_lemma_exp_decay_trans_mat}
%\mathbb{E}\left(  \|T_{\omega}^{(n)}(z)\dots T_{\omega}^{(1)}(z)v\|^{-s } \right) %\leq Ce^{-c n}     
%\end{equation}
%for all \( z \in K\), \( v\) a unit vector, and sufficiently large \(n\).
%\end{lemma}

%\begin{proof}
 % One can directly follow the proof  given in \cite{hamza2009dynamical} or \cite{boumaza2009localization} of the same result.
%\end{proof}

%One can actually prove the similar exponential decay for the inverse of the transfer matrices.

\begin{lemma}\label{lemma_exp_decay_trans_mat_inv}
Let \(K \subset \mathbb{C}\) a compact set. There exist constants \(C>O\), \(c>0\) and \(s\in (0,1)\), such that: 
\begin{equation}\label{eq_lemma_exp_decay_trans_mat_inv}
 \mathbb{E}\left( \|\left(T_{\omega}^{(m)}(z)\dots T_{\omega}^{(n)}(z)\right)^{-1}v\|^s\right)\leq C e^{-c|n-m|}
 \end{equation}
for all \( z \in K\), for all  unit vector \( v\) , and for $n,m\in \Z$ such that $|n-m|$ is sufficiently large.
\end{lemma}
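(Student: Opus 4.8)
\medskip
\noindent\textbf{Proof plan.}
The idea is to identify the quantity on the left of \eqref{eq_lemma_exp_decay_trans_mat_inv} with the norm of a \emph{forward} product of i.i.d.\ random matrices acting on a fixed unit vector, and then to invoke a Le Page–type large deviation bound, made uniform in the spectral parameter by the continuity estimates already at hand.

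\emph{Step 1 (reduction to a forward cocycle).} We may assume $0\notin K$, the transfer matrices being undefined at $z=0$; if $K$ were to accumulate at $0$ a small punctured disc would be handled separately via $\|(T_\omega^{(k)}(z))^{-1}\|\le C|z|$, read off from the block $T_0(z)$, which only improves the decay. For $z\neq 0$, relation \eqref{mat_trans_relation_inverse} together with $\mathcal{L}^{2}=I_{2L}$ gives
\[
\bigl(T_\omega^{(m)}(z)\cdots T_\omega^{(n)}(z)\bigr)^{-1}
=\mathcal{L}\,\bigl(T_\omega^{(m)}(\bar z^{-1})\cdots T_\omega^{(n)}(\bar z^{-1})\bigr)^{*}\mathcal{L},
\]
so that, $\mathcal{L}$ being unitary, $\|(T_\omega^{(m)}(z)\cdots T_\omega^{(n)}(z))^{-1}v\|=\|(T_\omega^{(m)}(\bar z^{-1})\cdots T_\omega^{(n)}(\bar z^{-1}))^{*}w\|$ with $w=\mathcal{L}v$ again a unit vector; equivalently, since the $(T_\omega^{(k)}(z))_{k\in\Z}$ are i.i.d., the inverse product has the same law as the forward product of $|m-n|+1$ i.i.d.\ copies of the push-forward $\check\mu_z$ of $\mu_z$ under $g\mapsto g^{-1}$. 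In either presentation one is reduced to a genuine forward cocycle, whose parameter ranges over a fixed compact subset of $\C^{*}$.

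\emph{Step 2 (verifying the hypotheses).} By \eqref{eq_ULL_subset_G}, $U(L,L)\subseteq G_{\mu_z}$ for every $z\in\C^{*}$, hence also $U(L,L)\subseteq G_{\check\mu_z}$ and $U(L,L)\subseteq G_{\mu_z^{*}}$, since $U(L,L)$ is a group stable under inversion and adjunction. Conjugating by the Cayley matrix $C$ of \eqref{def_Cayley} and applying the map $\pi$ of \eqref{def_pi_C_R}, just as in the derivation of \eqref{formule_int_Lyap}, the resulting real matrix group is $p$-contracting and $L_p$-strongly irreducible for every $p\in\llbracket1,L\rrbracket$, in particular strongly irreducible and proximal for $p=1$. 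The integrability conditions $\E(\log^{+}\|g^{\pm1}\|)<\infty$ hold because Lemma \ref{lemma1_cont_Lyap}, applied on a fixed compact annulus containing $K$ and its image under $z\mapsto\bar z^{-1}$, together with \eqref{mat_trans_relation_inverse}, bounds $\|g\|$ and $\|g^{-1}\|$ uniformly in $\omega$ and in the parameter. Finally, the top Lyapunov exponent of the forward cocycle at hand equals $-\gamma_{2L}(z)$ (equivalently $\gamma_1(\bar z^{-1})$), and it is $>0$ on $\C^{*}$: a short computation gives $|\det T_\omega^{(k)}(z)|=|\det T_0(z)|\,|\det T_1|=1$ for every $z\neq 0$ (the $z^{\pm1}$ factors cancel and $\rho(\alpha)$, $\widetilde\rho(\alpha)$ have equal determinants), so $\sum_{k=1}^{2L}\gamma_k(z)\equiv 0$, which together with the strict separation \eqref{eq_separate_Lyap} forces $\gamma_1(z)>0>\gamma_{2L}(z)$.

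\emph{Step 3 (large deviation bound, uniformity, and the main obstacle).} Under the hypotheses of Step 2, Le Page's theorem (as presented in \cite{bougerol2012products}; see also \cite{BL85}) provides an $s_0>0$ such that, for each $s\in(0,s_0)$, there are $C,c>0$ with $\sup_{\|w\|=1}\E(\|\Psi_N w\|^{-s})\le C e^{-cN}$ for all $N\ge1$, $\Psi_N$ denoting the $N$-fold forward product of the cocycle, with $c/s\to\gamma_1>0$ as $s\to0^{+}$; applied to the cocycle of Step 1 this yields the exponential estimate \eqref{eq_lemma_exp_decay_trans_mat_inv} for $|m-n|$ large. The delicate point — and the main obstacle — is the uniformity in $z\in K$: the constants $(C,c,s_0)$ depend only on the uniform moment bounds above, on a positive lower bound for the top exponent, and on quantitative strong-irreducibility and contraction constants for $G_{\mu_z}$, and one must see that all three can be chosen independent of $z$. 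The first is uniform by Lemma \ref{lemma1_cont_Lyap}; the second holds because $z\mapsto\gamma_1(z)$ is continuous on $\C^{*}$ (Proposition \ref{prop_cont_Lyap}), hence bounded below by a positive constant on the compact $K$; the third follows because $z\mapsto\mu_z$ is weakly continuous — the transfer matrices being even Lipschitz in $z$ by Lemma \ref{lemma3_cont_Lyap} — so the relevant constants may be chosen locally uniformly in $z$ and then patched over $K$ by compactness. Alternatively one tracks, through the proof of Le Page's theorem itself, the explicit dependence of $(C,c,s_0)$ on the data of the cocycle.
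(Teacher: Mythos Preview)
Your Step~3 contains a sign error that cannot be repaired within this approach. Le~Page's bound, under the hypotheses you verify in Step~2, gives $\sup_{\|w\|=1}\E(\|\Psi_N w\|^{-s})\le Ce^{-cN}$ for the forward product $\Psi_N$, with $c/s$ close to its top exponent. But after the reduction of Step~1 the quantity in \eqref{eq_lemma_exp_decay_trans_mat_inv} becomes $\E(\|\Psi_N w\|^{+s})$: you have identified $(T_\omega^{(m)}(z)\cdots T_\omega^{(n)}(z))^{-1}v$ with $\Psi_N w$ in law, and the lemma asks for a \emph{positive} fractional moment of $\|\Psi_N w\|$, not a negative one. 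You never bridge this discrepancy, and in fact it cannot be bridged: your own Step~2 shows the top exponent of $\Psi_N$ equals $-\gamma_{2L}(z)>0$, so by Furstenberg $\tfrac{1}{N}\log\|\Psi_N w\|\to -\gamma_{2L}(z)>0$ a.s.\ for every fixed unit $w$, whence Jensen yields $\E(\|\Psi_N w\|^{s})\ge\exp\bigl(sN(-\gamma_{2L}(z)+o(1))\bigr)\to\infty$.

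The obstruction is therefore in the statement, not in your strategy: with the inverse on the product, a positive exponent $s$, and the quantifier ``for all unit $v$'', inequality \eqref{eq_lemma_exp_decay_trans_mat_inv} is incompatible with $\gamma_{2L}(z)<0$, and the latter holds on all of $\C^{*}$ since (as you compute) $|\det T_\omega^{(k)}(z)|=1$ forces $\sum_k\gamma_k(z)=0$ while the exponents are strictly separated by \eqref{eq_separate_Lyap}. The paper's own sketch appeals to \cite[Lemma~7.1]{hamza2009dynamical} and \cite[Lemma~3]{boumaza2009localization}; the Le~Page-type estimate proved there controls $\sup_{\|v\|=1}\E(\|\Phi_N v\|^{-s})$ --- the forward product with a negative exponent --- and the sketch's displayed limit $-\gamma_L(z)$ appears to conflate the $L$-th singular value of the $2L\times2L$ product with its smallest one $\sigma_{2L}$.
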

\begin{proof}
One can follow the proof of  \cite[Lemma 7.1]{hamza2009dynamical} or \cite[Lemma 3]{boumaza2009localization}. It suffices to use \eqref{mat_trans_relation_inverse} and the fact that 
$$\lim_{|n-m|\rightarrow \infty}\frac{1}{|n-m|} \mathbb{E}\left( \ln \|\left(T_{\omega}^{(m)}(z)\dots T_{\omega}^{(n)}(z) \right)^{-1}v\|\right)= -\gamma_L(z)<0$$
since for any $A\in \mathrm{GL}_{\mathrm{L}}(\C)$, the Hilbert-Schmidt norm of its inverse satisfies $\|A^{-1} \|_{HS}=(\sigma_L(A))^{-1}$ where $\sigma_L(A)$ is the smallest singular value of $A$. 
\end{proof}

\section{Exponential decay of the fractional moments of the Green's function}\label{sec_exp_decay_Green}

In this Section we provide a proof of the exponential decay of fractional moments of the Green's matrix, which is the result of Theorem \ref{thm_greengeneral}. We will begin our analysis by reducing the problem to the proof of the exponential decay of some particular coefficients of the Green kernel, following the approaches described in previous works such as \cite{hamza2009dynamical, asch2010localization}. Then, we will adopt a new strategy to establish exponential decay in the reduced case.

The usual scheme to establish exponential decay of fractional moments is to find an explicit expression of the Green's function in terms of the solutions of the equation \((\mathbb{U}_\omega-z)\phi=0\) as in \cite{hamza2009dynamical}. In our case, it is difficult to use this approach due to the matrix structure of the model which implies a formula (see \cite{damanik2008analytictheorymatrixorthogonal}) whose terms are difficult to control. Therefore, we proceed differently, in two steps: first, we bound the norm of the Green's function by multiplying it by a non-constant quantity and the norm of the inverses of the transfer matrices which, as we have already established, decay exponentially. Then, we demonstrate that the first quantity is bounded, under a condition on the norm of the matrix \(\alpha\).

\subsection{Finite scattering zippers.}\label{sec_finitezippers}
In this section, we start by explicitly compute the even-odd truncation for the scattering zipper operator. The importance of this step lies in the fact that truncating a unitary operator can lead to a non-unitary operator if the truncation is not done well. Let us first denote explicitly the coefficients of the scattering matrices where we omit the random paramter $\omega$ in order to symplify the latter computations.

Let $\omega\in \Omega$. Denote $S_\omega^{(n)}=\left( \begin{smallmatrix}
    \alpha_n & \beta_n\\
    \gamma_n & \delta_n \\
\end{smallmatrix} \right)  \in U(2L)_{\mathrm{inv}} $ and take $U,V \in U(L). $
 We define the restriction $\mathbb{U}^{[2n,2m+1]}_\omega$ of $\mathbb{U}_{\omega}$ to an interval $[2n,2m+1]$ with $n\leq m$ two elements of $\Z$.

We construct $ \mathbb{V}^{[2n,2m+1]}_\omega $ and $\mathbb{W}^{[2n,2m+1]}_\omega$ as follows: 
\begin{equation}\label{def_V_2n2m1}
    \mathbb{V}^{[2n,2m+1]}_\omega :=\bigoplus_{k\in \llbracket n,m\rrbracket} S_{\omega}^{(2k)} 
\end{equation}
\begin{equation}\label{def_W_2n2m1}
\mathbb{W}^{[2n,2m+1]}_\omega:= U \oplus\bigoplus_{k\in \llbracket n,m-1\rrbracket} S_{\omega}^{(2k+1)} \oplus V
\end{equation}
 
 The boundary conditions U and V are placed in $\mathbb{W}_{\omega}$ in such a way that we obtain exactly $2m+1-2n+1$ $L$-rows and $L$-columns starting from $2n$.
 Thus, $\mathbb{U}^{[2n,2m+1]}_\omega$ is given by:
{\footnotesize $$ \begin{matrix}
     2n\\
     2n+1\\
     \vdots\\
     \vdots\\
     \vdots\\
     \vdots\\
     2m\\
    2m+1\\
 \end{matrix} \ \begin{pmatrix}
 \text{\larger[2] $S$}_{\omega}^{(2n)}   & \begin{array}{cc}0 & 0\\ 0 & 0\end{array} 
& \cdots  & \cdots &\begin{array}{cc}0 & 0\\ 0 & 0\end{array}\\

\begin{array}{cc}0 & 0\\ 0 & 0\end{array}  & \text{\larger[2] $S$}_{\omega}^{(2n+2)}  & \cdots&  \cdots  & \begin{array}{cc}0 & 0\\ 0 & 0\end{array}  \\

\vdots & \vdots & \ddots & \ddots   &  \vdots  \hspace{0.5 cm}\vdots  \\

\vdots & \vdots  &  \vdots   &  \text{\larger[2] $S$}_{\omega}^{(2m-2)}  & \begin{array}{cc}0& 0\\0 &  0\end{array}  \\

\begin{array}{cc}0 & 0\\ 0 & 0\end{array}  & \cdots & \cdots & \begin{array}{cc} 0& 0\\ 0& 0\end{array} & \text{\larger[2] $S$}_{\omega}^{(2m)} \\

\end{pmatrix}\cdot
\begin{pmatrix}
 U   &  \begin{array}{cc}0&0\end{array} & \cdots & \cdots & 0 \\
 
\begin{array}{cc} 0 \\ 0 \end{array} & \text{\larger[2] $S$}_{\omega}^{(2n+1)}  & \cdots
&  \cdots & \begin{array}{cc}0 \\ 0 \end{array}  \\

\begin{array}{cc}0 \\  \vdots \end{array} & \cdots & \hspace{-1.2cm} \ddots & \cdots  & \vdots  \\

\vdots & \vdots  &  \hspace{1.2cm} \ddots   &  \cdots  & \vdots    \\

 \begin{array}{cc} 0 \\ 0 \end{array} & \cdots & \cdots &  \text{\larger[2] $S$}_{\omega}^{(2m-1)}  &\begin{array}{cc}0 \\ 0 \end{array}\\
 
  0  &  \cdots &  \cdots&  \begin{array}{cc}0 & 0 \end{array} & V
\end{pmatrix}$$}
Since we will mostly deal with the resolvent of  $\mathbb{U}^{[2n,2m+1]}_\omega$, we also give the expression of  $\mathbb{U}^{[2n,2m+1]}_\omega-z\mathrm{Id}$ for $z\in \C$.
$$\mathbb{U}^{[2n,2m+1]}_\omega-z\mathrm{Id}=\qquad \qquad \qquad \qquad \qquad \qquad \qquad \qquad \qquad \qquad \qquad \qquad \qquad \qquad \qquad\qquad \qquad \qquad \qquad \qquad$$
\scriptsize
$$\begin{pmatrix}
\alpha_{2n}  U-z & \beta_{2n} \alpha_{2n+1}  & \beta_{2n}\beta_{2n+1} & 0 & \cdots & \cdots & \cdots    &  0 \\
\gamma_{2n} U & \delta_{2n} \alpha_{2n+1}-z & \delta_{2n} \beta_{2n+1}  & 0 & \cdots & \cdots    & \cdots  & 0 \\
0 & \alpha_{2n+2}\gamma_{2n+1} &  \alpha_{2n+2}\delta_{2n+1}-z & \beta_{2n+2} \alpha_{2n+3} & \beta_{2n+2}\beta_{2n+3} & 0 & \cdots  & 0 \\
0 & \gamma_{2n+2}\gamma_{2n+1} &  \gamma_{2n+2} \delta_{2n+1} & \delta_{2n+2}\alpha_{2n+3} -z & \delta_{2n+2}\beta_{2n+3} & 0 & \cdots  &  0 \\
0 & 0 &   0 &  \alpha_{2n+4}\gamma_{2n+3} & \alpha_{2n+4}\delta_{2n+3}-z & * *  & \cdots  & 0\\
0 & 0 &   0 &  \gamma_{2n+4}\gamma_{2n+3} & \gamma_{2n+4}\delta_{2n+3} & * *  & \cdots  & 0\\
\vdots &   \ \ \ddots & \ \ \ddots  &     \ddots  &  \ddots  &    \ddots    & \ddots    & \vdots \\
\vdots &  \  \ddots & \ddots  &     \ddots    & \ddots  &    \ddots & \ddots  & \vdots \\
0 &  \cdots &  0 & \alpha_{2m-2}\gamma_{2m-3} &  \alpha_{2m-2}\delta_{2m-3}-z & \beta_{2m-2}\alpha_{2m-1} & \beta_{2m-2} \beta_{2m-1} & 0 \\
0 &  \cdots &  0 & \gamma_{2m-2}\gamma_{2m-3} &  \gamma_{2m-2}\delta_{2m-3} & \delta_{2m-2} \alpha_{2m-1} -z& \delta_{2}\beta_{2n+3} & 0 \\
0 &  \cdots &  0 & 0 & 0&   \alpha_{2m}\gamma_{2m-1}  &\alpha_{2m} \delta_{2m-1} -z& \beta_{2m}V  \\
0 &  \cdots &  0 & 0 & 0&   \gamma_{2m}\gamma_{2m-1}  &\gamma_{2m} \delta_{2m-1} & \delta_{2m}V -z\\
\end{pmatrix}$$
\normalsize
\bigskip

Let $z\in \mathbb{C}$ and $\Phi\in \ell^2([2n,2m+1],\mathbb{C}^L)$ satisfying $\mathbb{U}^{[2n,2m+1]}_\omega\Phi=z\Phi$.\\
If we define $\Psi=\mathbb{W}^{[2n,2m+1]}_\omega\Phi$, then $\mathbb{V}^{[2n,2m+1]}_\omega\Psi=z\Phi$, which leads to $\Phi=z^{-1}\mathbb{V}^{[2n,2m+1]}_\omega\Psi$.\\
In other words, in the formalism of scattering matrices (or S-matrices), we have:
{\small 
$$
\Psi_{2n}=U\Phi_{2n}, \
\begin{pmatrix}
\Phi_{2n}\\
\Phi_{2n+1}
\end{pmatrix}=z^{-1}S_{\omega}^{(2n)} 
\begin{pmatrix}
\Psi_{2n} \\
\Psi_{2n+1}
\end{pmatrix}, \ \begin{pmatrix}
\Psi_{2n+1}\\
\Psi_{2n+2}
\end{pmatrix}=S_{\omega}^{(2n+1)} 
\begin{pmatrix}
\Phi_{2n+1} \\
\Phi_{2n+2}
\end{pmatrix},  \ \ldots\  \ldots$$
$$\ldots \ \ldots \ ,
\begin{pmatrix}
\Phi_{2m-2}\\
\Phi_{2m-1}
\end{pmatrix}=z^{-1}S_{\omega}^{(2m-2)} 
\begin{pmatrix}
\Psi_{2m-2} \\
\Psi_{2m-1}
\end{pmatrix},
\begin{pmatrix}
\Psi_{2m-1}\\
\Psi_{2m}
\end{pmatrix}=S_{\omega}^{(2m-1)} 
\begin{pmatrix}
\Phi_{2m-1} \\
\Phi_{2m} 
\end{pmatrix}, \ V\Phi_{2m+1}= \Psi_{2m+1} . 
$$
}
This yields the following diagram:
\begin{figure}[H]
\centering
\includegraphics[width=0.76\textwidth]{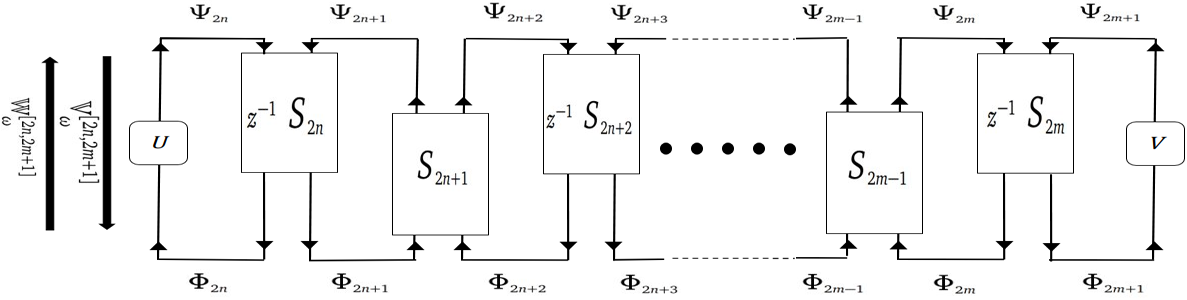}  
\end{figure}
Other cases of intervals ($[2n,2m]$, $[2n+1,2m]$ and $[2n+1,2m+1]$) are defined in the same manner. The cases where $a=-\infty$ or $b=+\infty$ are similar since we have only to put boundary conditions on one side of the interval.

\subsection{Proof of Theorem \ref{thm_unif_bounded}.}
The estimate of the fractional moments of the Green's function has played a prominent role in the proofs of localization for the self-adjoint Anderson model, notably illustrated in the works \cite{graf1994anderson,aizenman1993localization,aizenman1998localization}. Our approach to evaluate these moments begins by establishing an upper bound for their expectations. We will draw inspiration from the method presented by Hamza in \cite{hamza2007localization}, while adjusting the proofs to adapt the argument with the quasi-one-dimensional character of $\{ \mathbb{U}_{\omega} \}_{\omega \in \Omega}$. This upper bound will subsequently serve as a key tool to prove the exponential decay of the fractional moments of the resolvent.

\begin{proof} Of Theorem \ref{thm_unif_bounded}. Let \(s \in (0, \frac14)\), $\omega\in \Omega$ and $z\in \C \setminus \mathbb{S}^1$. Let $\epsilon\in (0,1)$. For $|z| \leq \epsilon$, the bound \eqref{eq_thm_unif_bounded} is clear. Hence one may assume that $|z|\geq \epsilon$, $|z|\neq 1$. Then, since $\left(\mathbb{U}_\omega - z\right)^{-1} = \frac{1}{2 z} \left[\left(\mathbb{U}_\omega + z\right) \left(\mathbb{U}_\omega - z\right)^{-1} - \mathrm{Id}\right]$, there exists $\widetilde{C}(s)>0$ such that
$$
\mathbb{E}\left[\left\|G_{\omega}(z, k, l)\right\|^{s}\right] \leq \widetilde{C}(s) \left(\mathbb{E}\left[\left\|\left(\left(\mathbb{U}_\omega + z\right) \left(\mathbb{U}_\omega - z\right)^{-1} \right)(k, l)\right\|^{s}\right] + \one \right)
$$
Therefore, it suffices to prove the existence of $\widehat{C}(s) >0$ such that
$$
\mathbb{E}\left[\left\|\left(\left(\mathbb{U}_\omega + z\right) \left(\mathbb{U}_\omega - z\right)^{-1} \right)(k, l)\right\|^{s}\right] \leq \widehat{C}(s)
$$
for all $z \in \mathbb{C} \setminus \mathbb{S}^1$ and for all $k, l \in \mathbb{Z}$ to prove \eqref{eq_thm_unif_bounded}. We use the finite-rank perturbation method to establish such a bound on the modified resolvent. 
Let $k\in \Z$. We use the notations introduced in \eqref{def_phasesUV} and \eqref{eq-Smatdef} which allows us to write:
$$
S^{(2k)}_\omega = \begin{pmatrix}
   \one & 0\\
   0 & \widehat{V}^{(2k)}_\omega e^{i\theta^{(2k)}_\omega} 
    \end{pmatrix} \begin{pmatrix}
    \alpha & \rho(\alpha) U^{(2k)}_\omega\\
    \widetilde{\rho}(\alpha) & -\alpha^* U^{(2k)}_\omega
    \end{pmatrix}.
$$
This implies a factorization of $\mathbb{U}_{\omega}$:
\begin{equation}\label{eq_factorization_proof_unif_bound}
    \mathbb{U}_\omega  = \mathbb{V}_\omega \mathbb{W}_\omega 
                   = \left(\bigoplus_{k \in \mathbb{Z}} \begin{pmatrix}
   \one & 0\\
   0 & \widehat{V}^{(2k)}_\omega e^{i\theta^{(2k)}_\omega}
    \end{pmatrix} \begin{pmatrix}
    \alpha & \rho(\alpha) U^{(2k)}_\omega\\
    \widetilde{\rho}(\alpha) & -\alpha^* U^{(2k)}_\omega
    \end{pmatrix}\right) \circ s^L \mathbb{W}_\omega := \mathbb{Y}_\omega \mathbb{D}_\omega \mathbb{V}_\omega' \mathbb{W}_\omega.
\end{equation}
with the unitaries $\mathbb{Y}_\omega = \bigoplus_{k \in \mathbb{Z}} \left( \begin{smallmatrix}
\one & 0\\
   0 & \widehat{V}^{(2k)}_\omega 
   \end{smallmatrix} \right) $, $\mathbb{D}_\omega = \bigoplus_{k \in \mathbb{Z}} \left( \begin{smallmatrix}
\one & 0\\
   0 & e^{i\theta^{(2k)}_\omega}
   \end{smallmatrix} \right) $ and $\mathbb{V}_\omega' = \bigoplus_{k \in \mathbb{Z}} \left( \begin{smallmatrix}
        \alpha & \rho(\alpha) U^{(2k)}_\omega\\
    \widetilde{\rho}(\alpha) & -\alpha^* U^{(2k)}_\omega
   \end{smallmatrix}  \right) \circ s^L$.
   
For $k \neq l$ even, $A = \{k+1, l+1\} \subset \mathbb{Z}$ and $a = \frac{1}{2} (\theta^{(k)}_{\omega} + \theta^{(l)}_{\omega})$, $b = \frac{1}{2} (\theta^{(k)}_{\omega} - \theta^{(l)}_{\omega})$, let
$$
\begin{gathered}
\eta^{(j)}_\omega = \begin{cases}a & j \in A \\
0 & j \notin A\end{cases} \quad 
\mbox{ and }\quad \xi^{(j)}_\omega = \begin{cases}b &
 j = k+1 \\
-b & j = l+1, \\
0 & j \notin A.\end{cases}
\end{gathered}
$$
If one of the $k,l$ is odd, we set $a =b = \frac{1}{2} \theta^{(k)}_{\omega}$. Let $E_j = (0, 0, \hdots,0,\one, 0, \hdots, 0)$ where $\one$ is in the $j^\text{th}$ position, and define the diagonal matrices by $L$-blocks, $D_{a}, D_{b}$, and $\widehat{D}$ by:
\begin{equation}
    D_{a} E_j = \widehat{V}^{(j)}_\omega e^{i\eta^{(j)}_\omega} E_j,\quad D_{b} E_j = \widehat{V}^{(j)}_\omega e^{i\xi^{(j)}_\omega} E_j, \quad    \mbox{and}\quad 
 \widehat{D} E_j = \begin{cases} E_j & j \text{ even} \\
\widehat{V}^{(j)}_\omega e^{i\theta^{(j)}_\omega} & j \text{ odd}\end{cases}.   
\end{equation}
Using these definitions, we factorize $\mathbb{D}_\omega$ as $\mathbb{D}_\omega=D_{a} D_{b} \widehat{D}$ and we have:
\begin{equation}
\mathbb{U}_\omega = \mathbb{Y}_\omega \mathbb{D}_\omega \mathbb{V}_\omega' \mathbb{W}_\omega = \mathbb{Y}_\omega D_{a} D_{b} \widehat{D} \mathbb{V}_\omega' \mathbb{W}_\omega := \mathbb{Y}_\omega D_{a} \mathbb{X}_\omega.    
\end{equation}
with the unitary operator $\mathbb{X}_\omega = D_{b} \widehat{D} \mathbb{V}_\omega' \mathbb{W}_\omega$ independent of $a$. Note that $D_a = \mathbb{Y}_\omega^{-1} \mathbb{U}_\omega \mathbb{X}_\omega^{-1}$ and that in view of the definitions of $ \mathbb{X}_\omega$ and $\mathbb{Y}_\omega$, $\mathbb{U}_\omega - \mathbb{Y}_\omega \mathbb{X}_\omega$ is a finite-rank operator.

Let $P_A$ be the orthogonal projection onto the subspace generated by $\{\mathbb{X}_\omega^{-1} e_j: j \in A\}$. Note that $\{\mathbb{X}_\omega^{-1} e_j: j \in \mathbb{Z}\}$ is an orthonormal basis of $l^2(\mathbb{Z})$.
%Given the definition of the operators, one sees that:
%\begin{equation}
%P_A \mathbb{Y}_\omega^{-1} \mathbb{U}_\omega \mathbb{X}_\omega^{-1} P_A = e^{-ia} I_2.    
%\end{equation}
%Moreover, 
One has
\begin{equation}
P_A \mathbb{X}_\omega^{-1} \mathbb{Y}_\omega^{-1} \mathbb{U}_\omega P_A = e^{-ia} \mathrm{I}_2.
\end{equation}
Indeed, for  $u\in \left\{\mathbb{X}_\omega^{-1} e_{j}: j \in A\right\}$, there exist real numbers $a$ and  $b$ such that $u = a \mathbb{X}_\omega^{-1} e_{k+1} + b \mathbb{X}_\omega^{-1} e_{l+1}.$
Thus, $\mathbb{X}_\omega^{-1} \mathbb{Y}_\omega^{-1} \mathbb{U}_\omega u = a \mathbb{X}_\omega^{-1} D_a e_{k+1} + b \mathbb{X}_\omega^{-1} D_a e_{l+1} = e^{-ia} (a \mathbb{X}_\omega^{-1} e_{k+1} + b \mathbb{X}_\omega^{-1} e_{l+1}) = e^{-ia} u.$
Moreover, $\mathbb{U}_\omega - \mathbb{Y}_\omega \mathbb{X}_\omega = \mathbb{Y}_\omega (D_a - \mathrm{Id}) \mathbb{X}_\omega = 0$ on $\mathrm{Ran}(\mathrm{Id} - P_A)$ (the vector subspace generated by $\left\{\mathbb{X}_\omega^{-1} e_{j}: j \in \mathbb{Z} \setminus A\right\}$), since $D_a - \mathrm{Id} = 0$ on $\mathrm{Ran}(\mathrm{Id} - P_A)$. As $\left(\mathbb{Y}_\omega \mathbb{X}_\omega\right)^{-1} \mathbb{U}_\omega = e^{-ia} \mathrm{I}_2$ on  $\mathrm{Ran}(P_A)$, 
{\small \begin{equation*}
      \mathbb{U}_\omega = \mathbb{Y}_\omega \mathbb{X}_\omega (\mathrm{Id} - P_A) + \mathbb{U}_\omega P_A = \mathbb{Y}_\omega \mathbb{X}_\omega (\mathrm{Id} - P_A) + \mathbb{Y}_\omega \mathbb{X}_\omega \left(\mathbb{Y}_\omega \mathbb{X}_\omega\right)^{-1} \mathbb{U}_\omega P_A = \mathbb{Y}_\omega \mathbb{X}_\omega (\mathrm{Id} - P_A) + e^{-ia} \mathbb{Y}_\omega \mathbb{X}_\omega P_A.
\end{equation*}}
For $z \in \mathbb{C} \setminus (\mathbb{S}^1 \cup \{0\})$, define
\begin{equation}
      F_z = P_{A} (\mathbb{U}_\omega + z) (\mathbb{U}_\omega - z)^{-1} P_{A}, \quad \mbox{ and } \quad    \widehat{F}_z = P_{A} (\mathbb{Y}_\omega \mathbb{X}_\omega + z) (\mathbb{Y}_\omega \mathbb{X}_\omega - z)^{-1} P_{A}.
\end{equation}    
%In what follows, we simplify calculations by omitting the $\omega$. 
One has: 
%\scriptsize
\begin{align*}
&\widehat{F}_{z}+\widehat{F}_{z}^{*}=P_{A}\left[(\mathbb{Y}_\omega\mathbb{X}_\omega+z)(\mathbb{Y}_\omega\mathbb{X}_\omega-z)^{-1}+\left((\mathbb{Y}_\omega\mathbb{X}_\omega+z)(\mathbb{Y}_\omega\mathbb{X}_\omega-z)^{-1}\right)^*\right]P_A.\\
&=P_A((\mathbb{Y}_\omega\mathbb{X}_\omega-z)^{-1})^*\left[(\mathbb{Y}_\omega\mathbb{X}_\omega-z)^* (\mathbb{Y}_\omega\mathbb{X}_\omega+z)+ (\mathbb{Y}_\omega\mathbb{X}_\omega+z)^*(\mathbb{Y}_\omega\mathbb{X}_\omega-z)\right](\mathbb{Y}_\omega\mathbb{X}_\omega-z)^{-1}P_A.\\
%&=P_A(\mathbb{Y}_\omega\mathbb{X}_\omega-z)^{-1})^* \left[(\mathbb{Y}_\omega\mathbb{X}_\omega)^* \mathbb{Y}_\omega\mathbb{X}_\omega-\Bar{z}\mathbb{Y}_\omega\mathbb{X}_\omega+z(\mathbb{Y}_\omega\mathbb{X}_\omega)^*-|z|^2
%+(\mathbb{Y}_\omega\mathbb{X}_\omega)^*(\mathbb{Y}_\omega\mathbb{X}_\omega)-z(\mathbb{Y}_\omega\mathbb{X}_\omega)^* \right. \\
%&\left. -z(\mathbb{Y}_\omega\mathbb{X}_\omega)-|z|^2\right](\mathbb{Y}_\omega\mathbb{X}_\omega-z)^{-1} P_A.\\
&=P_A(2\mathrm{Id}-2|z|^2)\left((\mathbb{Y}_\omega\mathbb{X}_\omega-z)^{-1}\right)^*(\mathbb{Y}_\omega\mathbb{X}_\omega-z)^{-1}P_A.
\end{align*}
%\normalsize
On the image of \(P_A\), this implies that \(\widehat{F}_{z} + \widehat{F}_{z}^{*} < 0\) for \(|z| > 1\). Consequently, \(-i \widehat{F}_{z}\) is a dissipative operator. Similarly, \(-i \widehat{F}_{z}^{-1}\) is also a dissipative operator. In the case where \(|z| < 1\), we have \(i \widehat{F}_{z}\) and \(i \widehat{F}_{z}^{-1}\) that are dissipative operators.

\noindent Next, we use the fact that \((x+z)(x-z)^{-1} = 1 + 2z(x-z)^{-1}\), to obtain: 
\begin{align*}
F_{z}-\widehat{F}_{z}  &= P_{A}\left[(\mathbb{U}_\omega+z)\left(\mathbb{U}_\omega-z\right)^{-1}-(\mathbb{Y}_\omega\mathbb{X}_\omega+z)(\mathbb{Y}_\omega\mathbb{X}_\omega-z)^{-1}\right]  P_{A}.\\
&= P_{A}\left[\mathrm{Id}+2z(\mathbb{U}_\omega-z)^{-1}-(\mathrm{Id}+2z(\mathbb{Y}_\omega\mathbb{X}_\omega-z)^{-1})\right]  P_{A}.\\
&=-2 z P_{A}\left[(\mathbb{Y}_\omega\mathbb{X}_\omega-z)^{-1}-(\mathbb{U}_\omega-z)^{-1}\right] P_{A}.  
\end{align*}
Applying the resolvent identity, we get:
\begin{align*}
F_{z}-\widehat{F}_{z}=-2 z P_{A}\left[(\mathbb{Y}_\omega\mathbb{X}_\omega-z)^{-1}(\mathbb{U}_\omega-\mathbb{Y}_\omega\mathbb{X}_\omega)(\mathbb{U}_\omega-z)^{-1}\right] P_{A}.
\end{align*}
As  $\mathbb{U}_\omega - \mathbb{Y}_\omega \mathbb{X}_\omega$ is zero on $\mathrm{Ran}(\mathrm{Id} - P_A)$  and equals $\mathbb{Y}_\omega\mathbb{X}_\omega(e^{-ia}-1)\mathrm{I}_2$ on $\mathrm{Ran}(P_A)$, we have:
\begin{align*}
\mathbb{U}_\omega-\mathbb{Y}_\omega\mathbb{X}_\omega&=\mathbb{Y}_\omega\mathbb{X}_\omega(\mathrm{Id}-P_A)+e^{-ia}\mathbb{Y}_\omega\mathbb{X}_\omega P_A-\mathbb{Y}_\omega\mathbb{X}_\omega =\mathbb{Y}_\omega\mathbb{X}_\omega-\mathbb{Y}_\omega\mathbb{X}_\omega P_A+e^{-ia}\mathbb{Y}_\omega\mathbb{X}_\omega P_A-\mathbb{Y}_\omega\mathbb{X}_\omega\\
&=\mathbb{Y}_\omega\mathbb{X}_\omega\left[e^{-ia}P_A-P_A\right] =\mathbb{Y}_\omega\mathbb{X}_\omega\left[e^{-ia}P_AP_A-P_AP_A\right] =\mathbb{Y}_\omega\mathbb{X}_\omega P_A(e^{-ia}-1)P_A,
\end{align*}
hence $F_{z}-\widehat{F}_{z}=-2 z P_{A}\left[(\mathbb{Y}_\omega\mathbb{X}_\omega-z)^{-1}(\mathbb{Y}_\omega\mathbb{X}_\omega)P_A(e^{-ia}-1)P_A(\mathbb{U}_\omega-z)^{-1}\right]P_A.$ 

\noindent Thus, $ F_{z}-\widehat{F}_{z} =-P_A(\mathbb{Y}_\omega\mathbb{X}_\omega-z)^{-1}\mathbb{Y}_\omega\mathbb{X}_\omega P_A(e^{-ia}-1)P_A(2z(\mathbb{U}_\omega-z)^{-1})P_A.$
Then, using the equality $2z(\mathbb{U}_\omega-z)^{-1}=(\mathbb{U}_\omega+z)(\mathbb{U}_\omega-z)^{-1}-\mathrm{Id},$ it follows that
\begin{align*}
     F_{z}-\widehat{F}_{z}&=P_A(\mathbb{Y}_\omega\mathbb{X}_\omega-z)^{-1}\mathbb{Y}_\omega\mathbb{X}_\omega P_A(e^{-ia}-1)[F_z-\mathrm{Id}]P_A\\
     &=-P_A\left[\mathrm{Id}+z(\mathbb{Y}_\omega\mathbb{X}_\omega)\right](\mathbb{Y}_\omega\mathbb{X}_\omega-z)^{-1}\mathbb{Y}_\omega\mathbb{X}_\omega P_A(e^{-ia}-1)[F_z-\mathrm{Id}]P_A\\
     &=P_A\left[\frac{1}{2}(\mathrm{Id}+(\mathbb{Y}_\omega\mathbb{X}_\omega+z)(\mathbb{Y}_\omega\mathbb{X}_\omega-z)^{-1})P_A(e^{-ia}-1)[\mathrm{Id}-F_z]\right]P_A\\
     &=P_A\left[\frac{1}{2}(\mathrm{Id}+\hat{F}_z)(e^{-ia}-1)[\mathrm{Id}-F_z]\right] P_A.
\end{align*}
Therefore, \(P_A(F_z-\hat{F})P_A=P_A\left[\frac{1}{2}(\mathrm{Id}+\hat{F}_z)(e^{-ia}-1)(\mathrm{Id}-F_z)\right] P_A,\) and we can write:
\begin{equation}
  F_{z}-\widehat{F}_{z}=\frac{1}{2}(\mathrm{Id}+\hat{F}_z)(e^{-ia}-1)(\mathrm{Id}-F_z),  
\end{equation}
which is equivalent to $F_{z}\left(I+\frac{1}{2}(\mathrm{Id}+\hat{F}_z)(e^{-ia}-1)\right)=F_z+\frac{1}{2}(\mathrm{Id}+\hat{F}_z)(e^{-ia}-1)$. 

Hence $F_{z}\frac{1}{2}(e^{-ia}-1)\left(\frac{e^{-ia}+1}{e^{-ia}-1}\mathrm{Id}+\hat{F}_z)\right)=\frac{1}{2}(e^{-ia}-1)\left(\mathrm{Id}+\hat{F}_z\frac{e^{-ia}+1}{e^{-ia}-1}\right)$.

Let us define \( m_a:=-i\left( \frac{e^{-ia}+1}{e^{-ia}-1}\right) \). Then we have:
\begin{equation}
  F_z(i.m_a+\hat{F}_z)=\mathrm{Id}+i.m_a \hat{F}_z.   
\end{equation}
Hence, $F_z$ and $\hat{F}_z$ satisfy the same relation as the one found in the proof of \cite[Theorem 3.1]{hamza2009dynamical} and they have the same dissipative properties. Then, one can follow the proof of \cite[Theorem 3.1]{hamza2009dynamical} for $\tau \equiv \frac{1}{2\pi}$ to finish the proof of \eqref{eq_thm_unif_bounded}. 

Note that if at the beginning of the proof we choose both $k$ and $l$ odd instead of even, we have to write the same proof as in the even case but by writing the factorization \eqref{eq_factorization_proof_unif_bound} on the factor $\mathbb{W}_\omega$ and replacing  $\widehat{V}^{(2k)}_\omega $ by  $\widehat{U}^{(2k)}_\omega$ and $\theta^{(2k)}_\omega$ by $\Theta^{(2k)}_\omega$.
\end{proof}

\subsection{Reduction results.}
In this section, we aim to focus our study on structures of odd lengths, and then on those of finite lengths. Thess reductions imply the need to ensure the invertibility of \(\alpha\). In what follows, we fix \(a, b \in \mathbb{Z}\), \(|a-b|>4\).

\subsubsection{Reduction to appropriate elements.}
The goal of this part is to estimate, in expectation, elements of the type \(\left\|G_\omega^{[a, b]}(z,2n,2m)\right\|^s\) and \(\left\|G_\omega^{[a, b]}(z,2n+1,2m)\right\|^s\) by elements of the type \(\left\|G_\omega^{[a, b]}(z,2n,2m+1)\right\|^s\) and \(\left\|G_\omega^{[a, b]}(z,2n+1,2m+1)\right\|^s\), for which we know explicit expressions.  Challenge in proving the reduction to even elements consists of being able to bound the even columns (and rows) of the Green's matrix by the odd columns (and rows), or more easily, by a sum. We use the notations introduced in Section \ref{sec_finitezippers}. We start with a lemma that will allow us to have a useful recurrence relation for the rest of the proof. 

\begin{lemma}\label{lmm3.2.1}
Let \(l \in \llbracket a,b \rrbracket\). For every \(k \in \llbracket a+1,b-1 \rrbracket ,\  k \neq l\) and every \(z \in \mathbb{C}\setminus\mathbb{S}^1 \),
\begin{equation}\label{eq_rel_resolvante}
     \left[{\gamma_k^{-1}}^*\gamma_{k-1} \right]G_{\omega}^{[a,b]}(z,k-1,l)
    +{\gamma_k^{-1}}^*[\delta_k+z\alpha_k^*]G_{\omega}^{[a,b]}(z,k,l) 
    +zG_{\omega}^{[a,b]}(z,k+1,l)=0.
\end{equation}    
\begin{equation}
    zG_{\omega}^{[a,b]}(z,k,l)
     + { \beta_k ^{-1}}^* \left[\alpha_{k+1} +z \delta_k^* \right]G_{\omega}^{[a,b]}(z,k+1,l)
    +{\beta_k^{-1}}^*\beta_{k+1}G_{\omega}^{[a,b]}(z,k+2,l)=0.
\label{3.3}\end{equation}
Similar inequalities are applicable for the columns.
\end{lemma}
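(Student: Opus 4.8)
The idea is to read off the two identities directly from the block structure of the matrix $\mathbb{U}^{[a,b]}_\omega - z\,\mathrm{Id}$ displayed in Section~\ref{sec_finitezippers}, combined with the defining relation $(\mathbb{U}^{[a,b]}_\omega - z)\,G^{[a,b]}_\omega(z) = \mathrm{Id}$. Concretely, fix $l \in \llbracket a,b\rrbracket$ and apply the $k$-th $L$-block row of $\mathbb{U}^{[a,b]}_\omega - z$ to the $l$-th $L$-block column of the resolvent. Since $k \neq l$, the right-hand side is the zero $L\times L$ block, so we get a linear relation among the entries $G^{[a,b]}_\omega(z,j,l)$ for $j$ ranging over the (at most four) indices appearing in row $k$. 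The crucial observation is that each $L$-block row of $\mathbb{U}^{[a,b]}_\omega$ comes in one of two shapes, according to the parity of $k$: a row coming from a $\gamma$–$\delta$ line of some $S^{(2j)}_\omega$ followed by the $\alpha$–$\beta$ line of $S^{(2j+1)}_\omega$, or vice versa. This dichotomy is exactly what produces the two separate identities \eqref{eq_rel_resolvante} and \eqref{3.3}.

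First I would treat \eqref{eq_rel_resolvante}. Looking at the appropriate row of the displayed matrix, the three nonzero blocks in that row are, up to the diagonal $-z$ shift, of the form $\gamma_{k-1}$ (from the preceding scattering matrix, acting on the $(k-1)$-st block column), $\delta_k$ together with a $-z$ on the diagonal (acting on the $k$-th block column), and a block $\beta_k$-type term — but to match the stated form one rewrites using the unitarity-type relations. The key algebraic input is that for $S(\alpha,U,V)\in \mathrm{U}(2L)_{\mathrm{inv}}$ one has, from \eqref{eq-Smatdef}, relations expressing $\delta_k = -V\alpha^* U$, $\gamma_k = V\widetilde\rho(\alpha)$, $\beta_k = \rho(\alpha)U$, which let one factor out $\gamma_k^{-1\,*}$ (legitimate since $\alpha\in\mathrm{GL}_L(\mathbb{C})$ makes $\widetilde\rho(\alpha)$, hence $\gamma_k$, invertible) and obtain the coefficient $\gamma_k^{-1\,*}[\delta_k + z\alpha_k^*]$ in front of $G^{[a,b]}_\omega(z,k,l)$ and the plain $z$ in front of $G^{[a,b]}_\omega(z,k+1,l)$. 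The relation \eqref{3.3} is obtained identically from the other parity of row, now factoring out $\beta_k^{-1\,*}$ (invertible because $\beta_k = \rho(\alpha)U$ and $\rho(\alpha)$ is invertible when $\|\alpha\|<1$). One then checks the index ranges: the condition $k\in\llbracket a+1,b-1\rrbracket$ guarantees that the row $k$ is an "interior" row of the truncated operator, so that all three neighbouring block columns $k-1,k,k+1$ (resp.\ $k,k+1,k+2$) genuinely occur and no boundary term $U$ or $V$ intervenes; and $z\notin\mathbb{S}^1$ is what ensures $G^{[a,b]}_\omega(z)$ exists.

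The statement for columns follows by the same computation applied to $(\mathbb{U}^{[a,b]}_\omega - z)^* = (\mathbb{U}^{[a,b]}_\omega)^* - \bar z$, using that $(\mathbb{U}^{[a,b]}_\omega)^*$ is again a block five-diagonal operator of the same type (it is the resolvent identity $G^{[a,b]}_\omega(z)^* = G^{[a,b]}_\omega(\bar z^{-1})$-type symmetry together with $\mathbb{U}^{[a,b]}_\omega$ unitary that one invokes), so the transposed relations hold with the roles of rows and columns exchanged. The main obstacle I anticipate is purely bookkeeping: one must be careful about which scattering block ($S^{(2j)}_\omega$ or $S^{(2j+1)}_\omega$, and whether it enters through $\mathbb{V}_\omega$ or $\mathbb{W}_\omega$) contributes the coefficients $\alpha_k,\beta_k,\gamma_k,\delta_k$ in a given row, since the left shift $s^L_g$ in the definition of $\mathbb{V}_\omega$ offsets the indexing; getting the subscripts $k-1,k,k+1$ versus $k,k+1,k+2$ and the placement of the $-z$ exactly right is the only delicate point, and it is resolved by carefully tracking the displayed form of $\mathbb{U}^{[2n,2m+1]}_\omega - z\,\mathrm{Id}$.
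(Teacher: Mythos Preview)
Your plan has a genuine gap: a single interior $L$-row of $\mathbb{U}^{[a,b]}_\omega - z$ does not have three nonzero blocks but four. From the displayed form of $\mathbb{U}^{[2n,2m+1]}_\omega - z\mathrm{Id}$ in Section~\ref{sec_finitezippers}, the row indexed by an even $k$ reads
\[
\alpha_k\gamma_{k-1},\quad \alpha_k\delta_{k-1}-z,\quad \beta_k\alpha_{k+1},\quad \beta_k\beta_{k+1},
\]
and the row $k+1$ reads
\[
\gamma_k\gamma_{k-1},\quad \gamma_k\delta_{k-1},\quad \delta_k\alpha_{k+1}-z,\quad \delta_k\beta_{k+1},
\]
both supported on the same four block-columns $k-1,k,k+1,k+2$. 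Thus applying a single row to the $l$-th block column of $G_\omega^{[a,b]}(z)$ gives a four-term relation in $G_\omega^{[a,b]}(z,j,l)$ for $j\in\{k-1,k,k+1,k+2\}$, not a three-term one, and no amount of factoring $\gamma_k^{-1\,*}$ or $\beta_k^{-1\,*}$ out of a single row will collapse it to the claimed identities.

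What the paper does, and what is missing from your proposal, is to take a suitable left linear combination of these two consecutive rows so as to eliminate one of the four entries. Multiplying the $k$-th row on the left by $\delta_k\beta_k^{-1}$ and subtracting the $(k+1)$-st row kills the $G_\omega^{[a,b]}(z,k+2,l)$ term; the remaining three coefficients are then simplified using the $\mathrm{U}(2L)$ relations $\delta_k\beta_k^{-1}\alpha_k-\gamma_k=(\gamma_k^{-1})^*$ and $\beta_k^*\alpha_k=-\delta_k^*\gamma_k$, yielding \eqref{eq_rel_resolvante}. Multiplying instead the $(k+1)$-st row by $\alpha_k\gamma_k^{-1}$ and subtracting the $k$-th row kills $G_\omega^{[a,b]}(z,k-1,l)$ and gives \eqref{3.3}. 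For the column identities one argues from $G_\omega^{[a,b]}(z)(\mathbb{U}^{[a,b]}_\omega-z)=\mathrm{Id}$ rather than by passing to the adjoint.
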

\begin{proof}
By the definition of the resolvent:
$$\left[(\mathbb{U}^{[a,b]}_\omega-z\mathrm{Id})G_{\omega}^{[a,b]}(z)\right](k,l)=\left\{\begin{array}{ll}
    \one & \text{if } k=l \\
     0 & \text{if } k\neq l
\end{array}
\right.$$
Thus, for $k$ even and $l \neq k, k+1$, we find: 
$$
\left[(\mathbb{U}^{[a,b]}_\omega-z\mathrm{Id}) G_{\omega}^{[a,b]}(z)\right](k,l)=(\mathbb{U}^{[a,b]}_\omega-z\mathrm{Id})(k,\cdot)\cdot G_{\omega}^{[a,b]}(z,\cdot,l)=0 $$
This, given the 5-diagonal by blocks structure of the operator, results in:
\footnotesize
\begin{equation}
\alpha_k \gamma_{k-1} G_{\omega}^{[a,b]}(z,k-1,l)+(\alpha_k \delta_{k-1}-z\one ) G_{\omega}^{[a,b]}(z,k,l)+\beta_k\alpha_{k+1} G_{\omega}^{[a,b]}(z,k+1,l)+\beta_k\beta_{k+1}G_{\omega}^{[a,b]}(z,k+2,l)=0, \label{1}
\end{equation}
\normalsize
and similarly for \(k+1\):
$$\left[(\mathbb{U}^{[a,b]}_\omega-z\mathrm{Id})\cdot G_{\omega}^{[a,b]}(z)\right](k+1,l)=(\mathbb{U}^{[a,b]}_\omega-z\mathrm{Id})(k+1,\cdot)\cdot G_{\omega}^{[a,b]}(z,\cdot,l)=0. $$
This results in:
\footnotesize
\begin{equation}
\gamma_k \gamma_{k-1} G_{\omega}^{[a,b]}(z,k-1,l)+\gamma_k \delta_{k-1} G_{\omega}^{[a,b]}(z,k,l)+(\delta_k\alpha_{k+1}-z\one) G_{\omega}^{[a,b]}(z,k+1,l)+\delta_k\beta_{k+1}G_{\omega}^{[a,b]}(z,k+2,l)=0\label{2}
\end{equation}
\normalsize
Multiplying \eqref{1} by $\delta_k \beta_{k}^{-1} $ on the left and taking the difference one gets
\begin{align*}
 &   \left[ \delta_k\beta_k^{-1}\alpha_k\gamma_{k-1} -\gamma_k \gamma_{k-1}\right]G_{\omega}^{[a,b]}(z,k-1,l)+[\delta_k\beta_k^{-1}(\alpha_k\delta-z\one )-\gamma_k\delta_{k-1}]G_{\omega}^{[a,b]}(z,k,l) \\
  &  +[\delta_k\beta_k^{-1}\beta_k\alpha_{k+1}-(\delta_k\alpha_k-z\one )]G_{\omega}^{[a,b]}(z,k+1,l)
    +[\delta_k\beta_k^{-1}\beta_k\beta_{k+1}-\delta_k\beta_{k+1})]G_{\omega}^{[a,b]}(z,k+2,l)=0.
\end{align*}
After simplification, this yields: 
{\small $$ \left[(\delta_k\beta_k^{-1}\alpha_k- \gamma_k)\gamma_{k-1} \right] G_{\omega}^{[a,b]}(z,k-1,l) +[\delta_k\beta_k^{-1}(\alpha_k\delta_{k-1}-z\one)-\gamma_k\delta_{k-1}] G_{\omega}^{[a,b]}(z,k,l) +zG_{\omega}^{[a,b]}(z,k+1,l)=0.$$}
\noindent We have:
$$ (\alpha_k\gamma_k^{-1}\delta_k-\beta_k)=(\gamma_k^{-1})^*  \text{ and }  \beta_k^*\alpha_k=-\delta_k^*\gamma_k \Leftrightarrow  \alpha_k\gamma_k^{-1}=-(\beta_k^{-1})^* \delta_k^* .$$
This allows us to write:
\begin{equation}
     \left[(\gamma_k^{-1})^*\gamma_{k-1} \right]G_{\omega}^{[a,b]}(z,k-1,l)
    +(\gamma_k^{-1})^*[\delta_k+z\alpha_k^*]G_{\omega}^{[a,b]}(z,k,l) 
    +zG_{\omega}^{[a,b]}(z,k+1,l)=0.
\label{3.4}\end{equation}
A similar equality can be found by multiplying \eqref{2} by $ \alpha_k\gamma_k^{-1}$ and taking the difference:
\begin{align*}
&\left[ \alpha_k\gamma_k^{-1}\gamma_k\gamma_{k-1} -\alpha_k \gamma_{k}\right]
G_{\omega}^{[a,b]}(z,k-1,l)+[(\alpha_k\gamma_k^{-1}\gamma_k\delta_{k-1})-(\alpha_k\delta_{k-1}-z\one )]G_{\omega}^{[a,b]}(z,k,l)\\
 &  +[\alpha_k\gamma_k^{-1}(\delta_{k}\alpha_{k+1}-z\one )-\beta_k\alpha_{k+1}] G_{\omega}^{[a,b]}(z,k+1,l)
    +[  \delta_k\gamma_k^{-1}\delta_k\beta_{k+1}-\beta_k\beta_{k+1}]G_{\omega}^{[a,b]}(z,k+2,l)=0.
\end{align*}
Hence,
{\small $$ zG_{\omega}^{[a,b]}(z,k,l) +\left[(\alpha_k\gamma_k^{-1}\delta_k-\beta_k) \alpha_{k+1}-z\alpha_k\gamma_k^{-1}\right]G_{\omega}^{[a,b]}(z,k+1,l)+[(\alpha_k \gamma_k^{-1} \delta_k-\beta_k)\beta_{k+1}]G_{\omega}^{[a,b]}(z,k+2,l)=0.$$}
For any \(k\), we have:
$$\alpha_k\gamma_k^{-1}\delta_k-\beta_k=(\beta_k^{-1})^* \quad \text{and} \quad  \beta_k^*\alpha_k=-\delta_k^*\gamma_k \Leftrightarrow  \alpha_k\gamma_k^{-1}=-(\beta_k^{-1})^*\delta_k^*.$$
This allows us to write:
\begin{equation}
    zG_{\omega}^{[a,b]}(z,k,l)
     + (\beta_k ^{-1})^* \left[\alpha_{k+1} +z \delta_k^* \right]G_{\omega}^{[a,b]}(z,k+1,l)+ (\beta_k^{-1})^*\beta_{k+1}G_{\omega}^{[a,b]}(z,k+2,l)=0.
\label{3.5}\end{equation}
To find the equalities for the columns $l$ (while fixing $k$), it suffices to repeat the same process but start with the multiplication in the reverse direction:
\[
\left[G_{\omega}^{[a,b]}(z)(\mathbb{U}^{[a,b]}_\omega-z\mathrm{Id})\right](k,l) = \left\{
\begin{array}{ll}
    \one & \text{if } k=l \\
    0 & \text{if } k\neq l
\end{array}
\right..
\]
\end{proof}
To exploit these relations, we need to be able to invert $\delta_k+z\alpha_k^*$. This is the subject of the following lemma.
\begin{lemma}\label{lem_invertibility}
If $\alpha\in \mathrm{GL}_{\mathrm{L}}(\C)$, then for any $\epsilon > 0$, $k \in \mathbb{Z}$, and $z \in \mathbb{S}_\epsilon$, $\delta_k+z\alpha_k^*$ is $\mathbb{P}$-almost surely invertible.
\end{lemma}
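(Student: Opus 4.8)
The statement is pointwise in $z$, so the plan is to fix $\epsilon>0$, $k\in\Z$ and $z\in\mathbb{S}_\epsilon$ and show that $\det(\delta_k+z\alpha_k^*)\neq 0$ for $\mathbb{P}$-almost every $\omega$. Recall from \eqref{def_scatmat} and \eqref{def_phasesUV} that $\alpha_k=\alpha$, that $\delta_k=-V_\omega^{(k)}\alpha^*U_\omega^{(k)}$, and that $U_\omega^{(k)}=e^{i\Theta_\omega^{(k)}}\widehat U_\omega^{(k)}$ with $\widehat U_\omega^{(k)}\in U(L)$. The place where the hypothesis $\alpha\in\mathrm{GL}_L(\mathbb{C})$ is used is the factorization $\delta_k+z\alpha_k^*=\alpha^*\bigl(zI_L-N_\omega\,e^{i\Theta_\omega^{(k)}}\widehat U_\omega^{(k)}\bigr)$ with $N_\omega:=(\alpha^*)^{-1}V_\omega^{(k)}\alpha^*$ invertible; hence $\delta_k+z\alpha_k^*$ is invertible if and only if $z$ is not an eigenvalue of $N_\omega\,e^{i\Theta_\omega^{(k)}}\widehat U_\omega^{(k)}$. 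The value $z=0$ (which lies in $\mathbb{S}_\epsilon$ only when $\epsilon>1$) is trivial, since then $\delta_k$ is a product of three invertible matrices, so I may assume $z\neq 0$.

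Because the laws of $U_\omega^{(k)}$ and $V_\omega^{(k)}$ are \emph{not} the Haar measure on $U(L)$, one cannot directly invoke that a proper analytic subvariety is null for them. Instead I would exploit the only piece of randomness that carries an absolutely continuous density, namely the phase vector $\Theta_\omega^{(k)}$, uniform on $[0,2\pi]^L$ and independent of everything else. So I would condition on the $\sigma$-algebra $\mathcal G$ generated by all coordinates of $\omega$ except $\Theta_\omega^{(k)}$: conditionally on $\mathcal G$, the invertible matrix $N:=N_\omega$ and the unitary matrix $W:=\widehat U_\omega^{(k)}$ are frozen while $\Theta_\omega^{(k)}$ stays uniform on $[0,2\pi]^L$, so it suffices to prove that, for every fixed invertible $N$ and unitary $W$, the set $\{\theta\in[0,2\pi]^L:\ \det(zI_L-N\,\diag(e^{i\theta_1},\ldots,e^{i\theta_L})\,W)=0\}$ has Lebesgue measure zero; integrating the conditional probability back over $\mathcal G$ then gives the claim.

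For that last step I would introduce the polynomial $P(\zeta_1,\ldots,\zeta_L):=\det(zI_L-N\,\diag(\zeta_1,\ldots,\zeta_L)\,W)$ in $L$ complex variables. Its value at the origin is $\det(zI_L)=z^L\neq 0$, so $P$ is not the zero polynomial; and a nonzero polynomial cannot vanish on the whole torus $\{|\zeta_1|=\cdots=|\zeta_L|=1\}$ (freeze all but one variable; a one-variable polynomial vanishing on the unit circle has infinitely many zeros, hence is identically zero, and one concludes by induction on $L$ that every coefficient of $P$ vanishes). Therefore $\theta\mapsto P(e^{i\theta_1},\ldots,e^{i\theta_L})$ is a trigonometric polynomial which is real-analytic on $\R^L$ and not identically zero, and the zero set of such a function has Lebesgue measure zero, which is exactly what is needed. \textbf{Main obstacle.} The argument is short, and the only genuinely delicate point is conceptual rather than computational: recognizing that Haar measure is not available and that one must instead extract the absolutely continuous phase $\Theta_\omega^{(k)}$, after which the nontriviality of the relevant trigonometric polynomial follows at once from evaluating $P$ at $\zeta=0$.
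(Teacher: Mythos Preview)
Your proof is correct and takes a genuinely different route from the paper's.

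The paper argues as follows: the set $\Omega_{\mathrm{inv}}$ is the non-vanishing locus of $\det(zV_\omega^{(k)}\alpha^*-\alpha^*U_\omega^{(k)})$, viewed as a function of the underlying parameters $(\widetilde V,d,\theta,\widetilde U,D,\Theta)$; it then exhibits an explicit point, namely $(U_0,V_0)=(\one,\one)$ (or $(i\one,\one)$ when $z=1$), at which the determinant equals $(1-z)^L(-1)^L\det\alpha^*\neq 0$, and invokes ``the Haar measure $\nu_L$ of the phases'' together with the invertibility of $\alpha$ to conclude that the zero set has probability zero. This is short but leaves implicit the fact that the determinant is real-analytic in $(\widetilde U,\widetilde V,\theta,\Theta)$ \emph{for each fixed} discrete choice $(d,D)\in\{-1,1\}^{2L}$, and that the exhibited point $(U_0,V_0)$ can actually be reached on every such component (which it can, by choosing $\widetilde U=\widetilde V=\one$ and $\Theta_j,\theta_j\in\{0,\pi\}$ appropriately).

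Your approach sidesteps this bookkeeping entirely. By conditioning on everything except the phase vector $\Theta_\omega^{(k)}$ --- the one ingredient with an absolutely continuous law on $[0,2\pi]^L$ --- you reduce to the nonvanishing of a single trigonometric polynomial, and the evaluation $P(0)=z^L$ gives nontriviality at once without having to split over the Bernoulli components or to analyze the (non-Haar) law of $(U_\omega^{(k)},V_\omega^{(k)})$. The price is a short induction to pass from ``$P$ nonzero as a polynomial'' to ``$P$ nonzero on the torus,'' which you carry out correctly. Your remark that the paper's $\mathbb{S}_\epsilon$ always excludes $z=0$ (since $\epsilon\le 1$ there) is also accurate, so the case $z=0$ is in fact vacuous for the statement as used.
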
  
\begin{proof}
    Let \(z \in \mathbb{S}_\epsilon\) and \(k \in \mathbb{Z}\). We define:
    \begin{align}
 \Omega_{\mathrm{inv}} &= \{\omega \in \Omega; \ \delta_k + z\alpha_k^* \in \mathrm{GL}_{\mathrm{L}}(\C)\} \nonumber \\
             &= \{\omega \in \Omega; \ \widetilde{\rho}^{-1}[zV^{(k)}_\omega\alpha^* - \alpha^*U^{(k)}_\omega] \in\mathrm{GL}_{\mathrm{L}}(\C)\} \nonumber \\
             &= \{\omega \in \Omega; \ \det(zV^{(k)}_\omega\alpha^* - \alpha^*U^{(k)}_\omega) \neq 0\} \label{def_omega_inv}
 \end{align}
To show that \(\mathbb{P}(\Omega_{\mathrm{inv}}) = 1\), it suffices to demonstrate that:
$$\exists (U_0;V_0) \in \Omega; \ \det (zV_0\alpha^* - \alpha^* U_0) \neq 0$$
since we are working with the Haar measure $\nu_L$ of the phases and since \(\alpha\) is invertible.\\
If \(z \neq 1\), we set \((U_0,V_0) = (\one, \one )\). In this case:
$$\det (zV_0\alpha^* - \alpha^* U_0) = \det (z\alpha^* - \alpha^*) = (1-z)^L(-1)^L\det\alpha^* \neq 0.$$
If \(z = 1\), it suffices to take \((U_0,V_0) = (i\one, \one)\). In this case:
$$\det (zV_0\alpha^* - \alpha^* U_0) = \det (\alpha^* - i\alpha^*) = \det ((1-i)\alpha^*) = (1-i)^L\det\alpha^* \neq 0.$$
    \end{proof}
 
With Lemma \ref{lem_invertibility} one can isolate the term $G_{\omega}^{[a,b]}(z,k,l)$ in \eqref{eq_rel_resolvante}.

We define \(\Omega_{\mathrm{inv}}\) as in \eqref{def_omega_inv}.For \(\omega \in \Omega_{\mathrm{inv}}\), the equality \eqref{eq_rel_resolvante} becomes: 
\begin{align}
     G_{\omega}^{[a,b]}(z,k,l) &=\left({\gamma_k^{-1}}^*[\delta_k+z\alpha_k^*]\right)^{-1} \left[ {\gamma_k^{-1}}^*\gamma_{k-1} G_{\omega}^{[a,b]}(z,k-1,l)
+zG_{\omega}^{[a,b]}(z,k+1,l)\right]. \nonumber \\
&=[\delta_k+z\alpha_k^*]^{-1} \left[ {\gamma_{k-1} G_{\omega}^{[a,b]}(z,k-1,l)
+z\gamma_k^{-1}}^*G_{\omega}^{[a,b]}(z,k+1,l)\right].
\label{3.2} \end{align}
Hence, to estimate the expectancy $\| G_{\omega}^{[a,b]}(z,k,l)  \|^s$ in terms of $G_{\omega}^{[a,b]}(z,k-1,l)$ and $G_{\omega}^{[a,b]}(z,k+1,l)$, one needs to estimate the expectancy of $\|(\delta_k + z \alpha_k^*)^{-1}\|^s$. This is exactly Lemma \ref{lem1alphainversible}.

\begin{proof} of Lemma \ref{lem1alphainversible}. One has:
$$\mathbb{E}\left[\|(\delta_k + z \alpha_k^*)^{-1}\|^s\right] = \int_{\Omega} \|(-z V^{(k)}_\omega \alpha U^{(k)}_\omega + \alpha)^{-1}\|^s d\omega 
= \int_{\Omega} \frac{1}{(\sigma_L(\alpha - z V^{(k)}_\omega \alpha U^{(k)}_\omega))^s} d\omega 
$$
Here, \(\sigma_L\) represents the \(L^\text{th}\) singular value, which is the smallest singular value. Then, 
$$ \forall k\in \Z,\ \sigma_L\left(\alpha-z V^{(k)}_\omega \alpha U^{(k)}_\omega \right) = \sigma_L\left(\alpha-z e^{i\theta^{(k)}_\omega} \widehat{V}^{(k)}_\omega  \alpha \widehat{U}^{(k)}_\omega e^{i \Theta^{(k)}_\omega }\right).
$$
We can simplify some expressions. Indeed, we know that \(\theta^{(k)}_\omega \), \(\Theta^{(k)}_\omega \), \(\widehat{U}^{(k)}_\omega \), and \(\widehat{V}^{(k)}_\omega \) are mutually independent. Furthermore, \(\theta^{(k)}_\omega =\text{diag}(\theta^{(k)}_{\omega,1}, \ldots, \theta^{(k)}_{\omega,L}) \), and the \((\theta^{(k)}_{\omega,j})_{j\in\llbracket1,L\rrbracket}\) are i.i.d.. The same goes for \((\Theta^{(k)}_{\omega,j})_{j\in\llbracket1,L\rrbracket}\). Consequently, we can  replace \(\theta^{(k)}_\omega \) with \(\theta^{(k)}_{\omega,1}\one\) and \(\Theta^{(k)}_\omega\) with \(\Theta^{(k)}_{\omega,1} \one\) in the expectancy (and not deterministically!), which gives: 
\begin{align*}
\mathbb{E}\left(\sigma_L\left(\alpha-z e^{i\Theta^{(k)}_\omega} \widehat{V}^{(k)}_\omega \alpha \widehat{U}^{(k)}_\omega e^{i \theta^{(k)}_\omega}\right)^{-s}\right)&=\mathbb{E}\left(\sigma_L\left(\alpha-z e^{i\Theta^{(k)}_{\omega,1}} \widehat{V}^{(k)}_\omega \alpha \widehat{U}^{(k)}_\omega e^{i \theta^{(k)}_{\omega,1}}\right)^{-s}\right)\\ &=\mathbb{E}\left(\sigma_L\left(\alpha-z e^{i(\Theta^{(k)}_{\omega,1}+\theta^{(k)}_{\omega,1})} \widehat{V}^{(k)}_\omega \alpha \widehat{U}^{(k)}_\omega\right)^{-s}\right).   
\end{align*}
Since $\widehat{U}^{(k)}_\omega $ and $\widehat{U}^{(k)}_\omega $ are in \(U(L)\cap H_L(\mathbb{C})\), they have a square root which is also an element of \(U(L)\cap H_L(\mathbb{C})\), hence one writes \(\widehat{U}^{(k)}_\omega =(\widehat{U}^{(k)}_\omega )^{\frac{1}{2}}(\widehat{U}^{(k)}_\omega )^{\frac{1}{2}}\) and \(\widehat{V}^{(k)}_\omega =(\widehat{V}^{(k)}_\omega )^{\frac{1}{2}}(\widehat{V}^{(k)}_\omega )^{\frac{1}{2}}\) to obtain:
\begin{align*}
   & \mathbb{E}\left(\sigma_L\left(\alpha-z e^{i(\Theta^{(k)}_{\omega,1}+\theta^{(k)}_{\omega,1})}\widehat{V}^{(k)}_\omega \alpha \widehat{U}^{(k)}_\omega \right)^{-s}\right)=\mathbb{E}\left(\sigma_L\left(\alpha-z e^{i(\Theta^{(k)}_{\omega,1}+\theta^{(k)}_{\omega,1})} (\widehat{V}^{(k)}_\omega )^{\frac{1}{2}}(\widehat{V}^{(k)}_\omega )^{\frac{1}{2}} \alpha (\widehat{U}^{(k)}_\omega )^{\frac{1}{2}}(\widehat{U}^{(k)}_\omega )^{\frac{1}{2}}\right)^{-s}\right)\\
    &=\mathbb{E}\left(\sigma_L\left(\left((\widehat{V}^{(k)}_\omega )^{\frac{1}{2}}\right)^*\alpha \left((\widehat{U}^{(k)}_\omega )^{\frac{1}{2}}\right)^*-z  e^{i(\Theta^{(k)}_{\omega,1}+\theta^{(k)}_{\omega,1})}(\widehat{V}^{(k)}_\omega)^{\frac{1}{2}} \alpha (\widehat{U}^{(k)}_\omega)^{\frac{1}{2}}\right)^{-s}\right).
\end{align*}
Since \((\widehat{U}^{(k)}_\omega)^{\frac{1}{2}}\) and \((\widehat{V}^{(k)}_\omega)^{\frac{1}{2}}\) belong in particular to \(H_L(\C)\), we find that
{\small \begin{align*}
    \mathbb{E}\left(\sigma_L\left(\alpha-ze^{i(\Theta^{(k)}_{\omega,1}+\theta^{(k)}_{\omega,1})} \widehat{V}^{(k)}_\omega  \alpha\widehat{U}^{(k)}_\omega \right)^{-s}\right)   &=\mathbb{E}\left(\sigma_L\left(\left((\widehat{V}^{(k)}_\omega )^{\frac{1}{2}} \alpha (\widehat{U}^{(k)}_\omega )^{\frac{1}{2}}-z e^{i(\Theta^{(k)}_{\omega,1}+\theta^{(k)}_{\omega,1})} (\widehat{V}^{(k)}_\omega )^{\frac{1}{2}} \alpha (\widehat{U}^{(k)}_\omega )^{\frac{1}{2}}\right)^{-s}\right)\right)\\
    &=\mathbb{E}\left(\sigma_L\left((\widehat{V}^{(k)}_\omega )^{\frac{1}{2}}\left( \alpha -z e^{i(\Theta^{(k)}_{\omega,1}+\theta^{(k)}_{\omega,1})}\alpha \right)(\widehat{U}^{(k)}_\omega )^{\frac{1}{2}}\right)^{-s} \right)\\
    & = \mathbb{E} \left(\sigma_L\left((1-z e^{i(\Theta^{(k)}_{\omega,1}+\theta^{(k)}_{\omega,1})} )\alpha\right)^{-s}\right).
\end{align*}}
\noindent Furthermore, for every $\omega\in \Omega$, every $z\in \C\setminus \mathbb{S}^1$ and every $k\in \Z$,
$$     \sigma_L\left((1-ze^{i(\Theta^{(k)}_{\omega,1}+\theta^{(k)}_{\omega,1})}) \alpha\right)= |1-ze^{i(\Theta^{(k)}_{\omega,1}+\theta^{(k)}_{\omega,1})} |\sigma_L(\alpha) .$$
Therefore, we obtain the inequality for the partial expectation:
\begin{align*}
\mathbb{E}_{\theta,\Theta}[\|(\delta_k+z\alpha_k^*)^{-1}\|^s]
&\leq \int_{[0,2\pi]\times [0,2\pi]}\frac{1}{|1-ze^{i(\Theta^{(k)}_{\omega,1}+\theta^{(k)}_{\omega,1})} |^s.(\sigma_L(\alpha))^s }d(\theta,\Theta)\\
&= \frac{1}{(\sigma_L(\alpha))^s }\int_{0}^{2\pi}\frac{1}{|1-ze^{i\tilde{\theta}}|^s}d\tilde{\theta},
 \end{align*}
 since $e^{i(\Theta^{(k)}_{\omega,1}+\theta^{(k)}_{\omega,1})}$ follows the uniform law on $\mathbb{S}^1$. According to \cite[Lemma A.2]{hamza2007localization}, for any $z\in \C \setminus \mathbb{S}^1$, \(\int_{0}^{2\pi}\frac{1}{|1-ze^{i\theta}|^s}d\theta=C(s)<\infty\). Hence, by taking the expectation over \(U(L)\times \left\{-1,1\right\}^L\times U(L)\times \left\{-1,1\right\}^L\) of this partial expectation, we obtain the desired result.
\end{proof}
By combining the results of Lemma \ref{lmm3.2.1} and Lemma \ref{lem1alphainversible}, we can establish the lemma of reduction to the even case.
\begin{lemma}\label{lem_reduc_paire}
For \(\alpha\in \mathrm{GL}_{\mathrm{L}}(\C) \) and \(s \in \left(0,\frac{1}{2}\right)\), there exist $C_1(s), C_2(s) >0$ such that,
\begin{equation}\label{eq_ineq_lignes_resolv}
\mathbb{E}\left(\|G_{\omega}^{[a,b]}(z,k+1,l)\|^s\right) ^2\leq C_1(s)\mathbb{E}\left(\| G_{\omega}^{[a,b]}(z,k,l)\|^{2s} \right)  +C_2(s)\mathbb{E}\left(\|G_{\omega}^{[a,b]}(z,k+2,l)\|^{2s} \right).
\end{equation}
   Similarly, for the columns, there exist $\tilde{C}_1(s), \tilde{C}_2(s)>0$ such that
\begin{equation}\label{eq_ineq_col_resolv} 
 \mathbb{E}\left( \|G_{\omega}^{[a,b]}(z,k,l+1)\|^{s}\right)^{2} \leq \widetilde{C}_1(s)\mathbb{E}\left(\| G_{\omega}^{[a,b]}(z,k,l)\|^{2s}\right) \\
    +\widetilde{C}_2(s)\mathbb{E}\left(\|G_{\omega}^{[a,b]}(z,k,l+2)\|^{2s}\right).
\end{equation}
\end{lemma}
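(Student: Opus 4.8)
The plan is to isolate the ``middle'' block $G_\omega^{[a,b]}(z,k+1,l)$ from the three-term recurrence of Lemma \ref{lmm3.2.1}, to write it as a single random matrix applied to its two neighbouring blocks, and then to take fractional moments with a Cauchy-Schwarz split that leaves only that random matrix to be estimated by Lemma \ref{lem1alphainversible}. Throughout, $z$ is understood to lie in $\mathbb{S}_\epsilon\setminus\mathbb{S}^1$ for some $\epsilon\in(0,1)$, so that Lemma \ref{lem_invertibility} applies and $|z|<2$.

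First I would use formula \eqref{3.2} with $k$ replaced by $k+1$: since $\delta_{k+1}+z\alpha_{k+1}^*$ is $\mathbb{P}$-almost surely invertible by Lemma \ref{lem_invertibility}, this reads, $\mathbb{P}$-a.s.,
\[
G_\omega^{[a,b]}(z,k+1,l)=\bigl[\delta_{k+1}+z\alpha_{k+1}^*\bigr]^{-1}\Bigl(\gamma_{k}\,G_\omega^{[a,b]}(z,k,l)+z\,(\gamma_{k+1}^{-1})^{*}\,G_\omega^{[a,b]}(z,k+2,l)\Bigr).
\]
Since $\gamma_j=V^{(j)}_\omega\widetilde{\rho}(\alpha)$ by \eqref{eq-Smatdef}, Lemma \ref{lem_estime_rho}, the bound $\sup_{U\in U(L)}\|U\|<\infty$ and $|z|<2$ furnish a deterministic constant $M=M(\alpha)>0$ with $\|\gamma_k\|\le M$ and $|z|\,\|(\gamma_{k+1}^{-1})^{*}\|\le M$, so that
\[
\bigl\|G_\omega^{[a,b]}(z,k+1,l)\bigr\|\;\le\;M\,\bigl\|[\delta_{k+1}+z\alpha_{k+1}^*]^{-1}\bigr\|\,\Bigl(\bigl\|G_\omega^{[a,b]}(z,k,l)\bigr\|+\bigl\|G_\omega^{[a,b]}(z,k+2,l)\bigr\|\Bigr).
\]
Raising this to the power $s\in(0,1)$ and using subadditivity of $t\mapsto t^{s}$ splits the right side into two products, each equal to $\|[\delta_{k+1}+z\alpha_{k+1}^*]^{-1}\|^{s}$ times a power of a single Green's block.

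I would then take expectations and apply the Cauchy-Schwarz inequality with exponents $(2,2)$ to each product --- this is exactly what reproduces the exponent $2s$ on the right-hand side of \eqref{eq_ineq_lignes_resolv} --- obtaining
\[
\mathbb{E}\bigl(\|[\delta_{k+1}+z\alpha_{k+1}^*]^{-1}\|^{s}\|G_\omega^{[a,b]}(z,k,l)\|^{s}\bigr)\le \mathbb{E}\bigl(\|[\delta_{k+1}+z\alpha_{k+1}^*]^{-1}\|^{2s}\bigr)^{1/2}\,\mathbb{E}\bigl(\|G_\omega^{[a,b]}(z,k,l)\|^{2s}\bigr)^{1/2}
\]
and likewise with $k+2$ in place of $k$. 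As $s\in(0,\tfrac12)$ we have $2s\in(0,1)$, so Lemma \ref{lem1alphainversible} bounds $\mathbb{E}(\|[\delta_{k+1}+z\alpha_{k+1}^*]^{-1}\|^{2s})$ by a constant $C(2s)$ independent of $z$ and $k$. Adding the two terms, squaring the resulting bound on $\mathbb{E}(\|G_\omega^{[a,b]}(z,k+1,l)\|^{s})$, and using $(x+y)^{2}\le2(x^{2}+y^{2})$ then gives \eqref{eq_ineq_lignes_resolv} with $C_1(s)=C_2(s)=2M^{2s}C(2s)$. The column inequality \eqref{eq_ineq_col_resolv} is obtained in the same way from the ``column'' version of \eqref{3.2} --- derived as indicated at the end of the proof of Lemma \ref{lmm3.2.1}, by multiplying $G_\omega^{[a,b]}(z)\,(\mathbb{U}^{[a,b]}_\omega-z\mathrm{Id})=\mathrm{Id}$ on the right --- in which the block to be inverted is again of the form $\delta_{l+1}+z\alpha_{l+1}^*$ up to an invertible $\rho(\alpha)$-type factor, so that Lemmas \ref{lem_invertibility} and \ref{lem1alphainversible} apply verbatim.

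I do not expect a genuine obstacle: all the real work is already done --- the recurrence (Lemma \ref{lmm3.2.1}), the almost sure invertibility and the fractional-moment bound (Lemmas \ref{lem_invertibility} and \ref{lem1alphainversible}, which is where $\alpha\in\mathrm{GL}_{\mathrm{L}}(\mathbb{C})$ is essential and where the hypothesis $s<\tfrac12$ enters via $2s<1$), and the deterministic norm estimates (Lemma \ref{lem_estime_rho}). What requires care is only the bookkeeping: re-indexing the recurrence so that the block to be bounded stands on the left, choosing the Hölder exponents so that the Green's block reappears at exponent $2s$, and writing out the column recurrence --- which Lemma \ref{lmm3.2.1} leaves implicit --- while checking that the coefficient of the isolated block is still invertible with moments controlled by Lemma \ref{lem1alphainversible}.
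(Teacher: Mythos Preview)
Your proposal is correct and follows essentially the same route as the paper's proof: isolate the middle block via the recurrence \eqref{3.2}, take norms, then pass to fractional moments with H\"older/Cauchy--Schwarz and invoke Lemma~\ref{lem1alphainversible}. The paper's write-up is terser (it appeals to ``H\"older's inequality, Lemma~\ref{lem1alphainversible} and Jensen's inequality as in \cite[Lemma~7.1]{hamza2007localization}'' without spelling out the splitting), but your more explicit bookkeeping---the re-indexing $k\mapsto k+1$, the deterministic bound on $\gamma_j$ via Lemma~\ref{lem_estime_rho}, and the $(x+y)^2\le 2(x^2+y^2)$ step---matches that argument exactly.
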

    \begin{proof}
For \(\omega \in \Omega_{\mathrm{inv}}\), equality \eqref{eq_rel_resolvante} becomes:
        \begin{align*}
     G_{\omega}^{[a,b]}(z,k,l) &=\left((\gamma_k^{-1})^* [\delta_k+z\alpha_k^*] \right)^{-1} \left[ (\gamma_k^{-1})^*\gamma_{k-1} G_{\omega}^{[a,b]}(z,k-1,l)
+zG_{\omega}^{[a,b]}(z,k+1,l)\right].\\
&=[\delta_k+z\alpha_k^*]^{-1} \left[ \gamma_{k-1} G_{\omega}^{[a,b]}(z,k-1,l)
+z(\gamma_k^{-1})^*G_{\omega}^{[a,b]}(z,k+1,l)\right].
\label{3.2} \end{align*}
This gives:
$$ \|G_{\omega}^{[a,b]}(z,k,l)\| \leq \|(\delta_k+z\alpha_k^*)^{-1}\| \left(\|\gamma_{k-1}.G_{\omega}^{[a,b]}(z,k-1,l)\| \\
    +\mid z\mid \|\gamma_{k}^*G_{\omega}^{[a,b]}(z,k+1,l)\|\right).$$
Now, we just need to use H\"older's inequality, Lemma \ref{lem1alphainversible} and Jensen's inequality as in \cite[Lemma 7.1]{hamza2007localization} to conclude.
    \end{proof}
Lemma \ref{lem_reduc_paire} implies the first reduction result.    
\begin{proof}(of Proposition \ref{prop_reduc_paire}).
The statement of Proposition \ref{prop_reduc_paire} is just a combination of \eqref{eq_ineq_lignes_resolv} and \eqref{eq_ineq_col_resolv}. 
\end{proof}

\subsubsection{Reduction to a suitable finite interval.}

We will now prove that it suffices to bound the norm of the finite Green's function moments for a restriction to a suitable finite interval to bound the Green's function on the infinite volume. This reduction is based on comparing the blocks of the infinite matrix with those of the finite matrix. Let $a,b\in \Z \cup \{\pm \infty\}$, $a<b$.

\begin{proposition}\label{prop_reduc_suitable_volume}
For $s\in(0,\frac12)$ and $\epsilon\in (0,1)$, there exists $C(s,\epsilon)>0$ such that
\begin{equation}\label{eq_prop_reduc_suitable_volume}
    \mathbb{E}\left( \|G_{\omega}^{[a,b]}(z,k,l) \|^s\right)^2\leq C(s,\epsilon) \mathbb{E} \left( \|G_{\omega}^{[k,l]}(z,k,l)^{2s}\|\right) 
\end{equation} 
for every $z\in \mathbb{S}_{\epsilon}$ and all $k,l\in[a+1,b-1]$, $|k-l|>4$.
\end{proposition}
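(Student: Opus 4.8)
\emph{Strategy.} One may assume $k<l$, the case $k>l$ following by passing to the adjoint $(\mathbb{U}^{[a,b]}_\omega)^*$ (which is a block band operator of the same type) and the case $k=l$ being excluded by $|k-l|>4$. The argument is the one already announced for Proposition \ref{casfini}, carried out with $\mathbb{Z}$ replaced by the ambient interval $[a,b]$: one uses the geometric resolvent identity to decouple the two tails $[a,k-1]$ and $[l+1,b]$ from $[k,l]$, and then dominates the resulting boundary terms by combining Theorem \ref{thm_unif_bounded}, a spectral averaging estimate, and H\"older's inequality. Set
\[
\mathbb{U}_0:=\mathbb{U}^{[a,k-1]}_\omega\oplus\mathbb{U}^{[k,l]}_\omega\oplus\mathbb{U}^{[l+1,b]}_\omega,\qquad \Gamma_\omega:=\mathbb{U}^{[a,b]}_\omega-\mathbb{U}_0 .
\]
From the explicit $5$-block-diagonal form of $\mathbb{U}^{[a,b]}_\omega-z\,\mathrm{Id}$ displayed in Section \ref{sec_finitezippers}, one reads off that $\Gamma_\omega$ is a finite-rank operator supported on the $L$-blocks adjacent to the two cuts (i.e.\ on $\{k-2,\dots,k+1\}\cup\{l-1,\dots,l+2\}$), whose nonzero entries are explicit products of $\alpha,\rho(\alpha),\widetilde{\rho}(\alpha)$ and of the unitaries $U^{(n)}_\omega,V^{(n)}_\omega$; in particular $\|\Gamma_\omega\|\le c$ for a universal constant $c$, $\mathbb{P}$-almost surely. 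The resolvent identity $(\mathbb{U}^{[a,b]}_\omega-z)^{-1}=(\mathbb{U}_0-z)^{-1}-(\mathbb{U}^{[a,b]}_\omega-z)^{-1}\Gamma_\omega(\mathbb{U}_0-z)^{-1}$, evaluated in the $(k,l)$ block and using that $e_k,e_l$ both lie in the $[k,l]$ summand of $\mathbb{U}_0$ (so that $\langle e_k,(\mathbb{U}_0-z)^{-1}e_l\rangle=G^{[k,l]}_\omega(z,k,l)$), yields an identity of the form
\[
G^{[a,b]}_\omega(z,k,l)=G^{[k,l]}_\omega(z,k,l)-\sum_{i,j}G^{[a,b]}_\omega(z,k,i)\,(\Gamma_\omega)_{ij}\,G^{[k,l]}_\omega(z,j,l),
\]
where $i$ runs over the $O(1)$ indices adjacent to the two cuts, $j$ over the $O(1)$ indices of $[k,l]$ adjacent to the two cuts, and the admissible pairs $(i,j)$ are restricted by the support of $\Gamma_\omega$ (with $i=j$ only at a ``corner'' block).

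\emph{Classification of the boundary terms.} Among the factors appearing in the feedback sum, every ambient-interval factor $G^{[a,b]}_\omega(z,k,i)$ with $i$ adjacent to the far cut (the one at $l$) satisfies $|k-i|>4$ once $|k-l|$ is large enough — the hypothesis $|k-l|>4$ being understood in this slightly strengthened sense, which is harmless for the application — and is therefore uniformly bounded in its $2s$-th moment by Theorem \ref{thm_unif_bounded}; likewise for the finite-volume factors $G^{[k,l]}_\omega(z,j,l)$ with $j$ adjacent to the near cut (which are moreover reducible to $G^{[k,l]}_\omega(z,k,l)$ by the recursion of Lemma \ref{lmm3.2.1} together with the invertibility furnished by Lemma \ref{lem1alphainversible}, exactly as in the proof of Proposition \ref{prop_reduc_paire}). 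The remaining short-range factors — $G^{[a,b]}_\omega(z,k,i)$ with $i$ adjacent to the cut at $k$, $G^{[k,l]}_\omega(z,j,l)$ with $j$ adjacent to the cut at $l$, and the self-referential term $i=j=l$, which contributes $G^{[a,b]}_\omega(z,k,l)\,(\Gamma_\omega)_{l,l}\,G^{[k,l]}_\omega(z,l,l)$ and which one moves to the left-hand side, dividing by the matrix $\mathrm{Id}+(\Gamma_\omega)_{l,l}G^{[k,l]}_\omega(z,l,l)$ — are controlled, uniformly in $z\in\mathbb{S}_\epsilon$, by a spectral averaging argument: averaging over the Haar-distributed unitary phase sitting at the relevant block, exactly as in the proof of Theorem \ref{thm_unif_bounded} (finite-rank perturbation in $D_a$) and of Lemma \ref{lem1alphainversible} (reduction to $\int_0^{2\pi}|1-ze^{i\theta}|^{-t}\,d\theta<\infty$ for $t<1$, cf.\ \cite[Lemma A.2]{hamza2007localization}), gives bounds $\mathbb{E}(\|G^{[a,b]}_\omega(z,k,i)\|^{2s})\le C$, $\mathbb{E}(\|G^{[k,l]}_\omega(z,l,l)\|^{2s})\le C$, and $\mathbb{E}(\|(\mathrm{Id}+(\Gamma_\omega)_{l,l}G^{[k,l]}_\omega(z,l,l))^{-1}\|^{2s})\le C$; the $\mathbb{P}$-a.s.\ invertibility needed for the last bound is obtained as in Lemma \ref{lem_invertibility}, the determinant in question being a nonzero analytic function of the random phases.

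\emph{Conclusion.} With all these inputs, one applies the subadditivity of $t\mapsto t^s$ and H\"older's inequality (legitimate since $2s<1$) to each term of the feedback sum, then Jensen's inequality and Cauchy--Schwarz over the finite index set, to arrive at $\mathbb{E}\big(\|G^{[a,b]}_\omega(z,k,l)\|^s\big)^2\le C(s,\epsilon)\,\mathbb{E}\big(\|G^{[k,l]}_\omega(z,k,l)\|^{2s}\big)$, which is \eqref{eq_prop_reduc_suitable_volume}. The cases $a=-\infty$ and/or $b=+\infty$ are identical, one simply omitting the corresponding cut. The main obstacle is the last bookkeeping step: one must organize the boundary and corner contributions so that, after H\"older, every surviving finite-volume Green's function factor is reducible to $G^{[k,l]}_\omega(z,k,l)$ while every surviving ambient-interval factor is long range and hence absorbed by the uniform bound of Theorem \ref{thm_unif_bounded}; in particular, treating the self-referential term forces one to produce the uniform bound on the inverse of $\mathrm{Id}+(\Gamma_\omega)_{l,l}G^{[k,l]}_\omega(z,l,l)$, which is the technical heart of the proof.
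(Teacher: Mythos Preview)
Your approach of decoupling both ends simultaneously with a single application of the resolvent identity has a genuine gap. After the identity
\[
G^{[a,b]}_\omega(z,k,l)=G^{[k,l]}_\omega(z,k,l)-\sum_{i,j}G^{[a,b]}_\omega(z,k,i)\,(\Gamma_\omega)_{ij}\,G^{[k,l]}_\omega(z,j,l),
\]
the pairs $(i,j)$ supported near the far cut at $l$ produce terms in which the ambient factor $G^{[a,b]}_\omega(z,k,i)$ has $i\in\{l-1,l,l+1,l+2\}$ while the finite-volume factor $G^{[k,l]}_\omega(z,j,l)$ has $j\in\{l-1,l\}$. You propose to bound the ambient factor by the uniform constant of Theorem~\ref{thm_unif_bounded} and the finite-volume factor by spectral averaging; but this yields an \emph{additive} constant on the right-hand side, so after H\"older you only get $\mathbb{E}\big(\|G^{[a,b]}_\omega(z,k,l)\|^s\big)^2\le C\,\mathbb{E}\big(\|G^{[k,l]}_\omega(z,k,l)\|^{2s}\big)+C$, not the multiplicative estimate \eqref{eq_prop_reduc_suitable_volume}. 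Moving the single term $i=j=l$ to the left does not cure this: the companion term $(i,j)=(l,l-1)$ and the terms with $i=l+1$ remain, the latter involving $G^{[a,b]}_\omega(z,k,l+1)$, a different unknown that cannot be absorbed into the left-hand side by the same device. Your stated requirement that ``every surviving finite-volume factor is reducible to $G^{[k,l]}_\omega(z,k,l)$'' fails exactly here, since $G^{[k,l]}_\omega(z,l-1,l)$ and $G^{[k,l]}_\omega(z,l,l)$ are short-range and are not reducible via Lemma~\ref{lmm3.2.1} to the long-range target.

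The paper avoids this by making the two cuts \emph{sequentially}, using the two opposite forms of the resolvent identity. First one writes $\mathbb{U}^{[a,b]}_\omega=\mathbb{U}^{[a,k-1]}_\omega\oplus\mathbb{U}^{[k,b]}_\omega+\Gamma^e$ and applies $G^{[a,b]}_\omega-G^n=-G^{[a,b]}_\omega\Gamma^e G^n$; since only the column $j=k$ of $\Gamma^e$ lands in $[k,b]$, the identity \emph{factors} as $G^{[a,b]}_\omega(z,k,l)=\big[1+\text{short-range }G^{[a,b]}_\omega(z,k,\cdot)\text{ with }\cdot\text{ near }k\big]\cdot G^{[k,b]}_\omega(z,k,l)$. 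Then one cuts $[k,b]$ at $l$ with the reversed identity $G^{[k,b]}_\omega-G^m=-G^m\Gamma^o G^{[k,b]}_\omega$, which makes $G^{[k,l]}_\omega(z,k,l)$ factor out on the left and leaves only short-range $G^{[k,b]}_\omega(z,\cdot,l)$ with $\cdot$ near $l$ in the second bracket. Both brackets are then bounded via Theorem~\ref{thm_unif_bounded} and H\"older. This two-step factorisation is the idea your argument is missing; a single simultaneous cut cannot produce it, because whichever form of the resolvent identity you choose, the boundary terms at one of the two cuts are of the wrong type.
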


\begin{proof}
We only deal with the case \(k=2n, l=2m\) where \(|k-l| \geq 4\) and \(m, n \in \mathbb{Z}\), the other cases being treated the same way.\\
Using the definition of \(\mathbb{U}_\omega^{[a, b]}\), we have
\begin{equation}\label{eq_decomp_U_suitable}
 \mathbb{U}_\omega^{[a, b]}=\mathbb{U}_\omega^{[a, 2n-1]} \oplus \mathbb{U}_\omega^{[2n, b]}+\Gamma_{n}^{e}   
\end{equation}
where \(\Gamma_{n}^{e}\) is given by
{\small \[
\Gamma_{n}^{e}(k, l)= \begin{cases}
   -\beta_{2n-2}\alpha+\beta_{2n-2}V, & k=2n-2, \  l=2n-1 \\
      -\beta_{2n-2}\beta_{2n-1}, & k=2n-2, \  l=2n \\
   -\delta_{2n-2}\alpha+\delta_{2n-2}V, & k=2n-1, \  l=2n-1 \\
    -\delta_{2n-2}\beta_{2n-1}, & k=2n-1, \  l=2n \\
        -\alpha_{2n+2}\gamma_{2n+1}, & k=2n, \  l=2n-1 \\
           -\alpha_{2n+2}\delta_{2n+1}+\alpha_{2n+2}U, & k=2n, \  l=2n \\
    -\gamma_{2n+2}\gamma_{2n+1}, & k=2n+1, \  l=2n-1 \\
   -\gamma_{2n+2}\delta_{2n+1}+\gamma_{2n+2}U, & k=2n+1, \  l=2n\\
    0, & \textrm{otherwise}.
\end{cases}
\]}
Let \(G_{\omega}^{n}(z)=G_{\omega}^{[a, 2n-1]}(z) \oplus G_{\omega}^{[2n, b]}(z)\). By the first resolvent identity, we have
\[
G_{\omega}^{[a, b]}(z)-G_{\omega}^{n}(z)=-G_{\omega}^{[a, b]}(z) \Gamma_{n}^{e} G_{\omega}^{n}(z).
\]
Consequently, for every \(m \geq n+2\), 
{\small \begin{align*}
 G_{\omega}^{[a, b]}(z,2n, 2m)  & =\left[1+(    \beta_{2n-2}\beta_{2n-1}) G_{\omega}^{[a, b]}(z,2n, 2n-2)+(\alpha_{2n+2}\gamma_{2n+1}) G_{\omega}^{[a, b]}(z,2n, 2n-1)  \right. \numberthis \label{eq1elyes} \\
    &+\left(\alpha_{2n+2}\delta_{2n+1}+\alpha_{2n+2}U\right) e^{-i \theta_{2n}} G_{\omega}^{[a, b]}(z,2n, 2n)\\
    &\left. +(\gamma_{2n+2}\delta_{2n+1}+\gamma_{2n+2}U) G_{\omega}^{[a, b]}(z,2n, 2n+1)\right]\cdot G_{\omega}^{[2n, b]}(z,2n, 2m).
\end{align*}}
Similarly, we have $\mathbb{U}_\omega^{[2n, b]}=\mathbb{U}_\omega^{[2n, 2m]} \oplus \mathbb{U}_\omega^{[2m+1, b]}+\Gamma_{m}^{o}$ where \(\Gamma_{m}^{o}\) is given by
{\small \[
\Gamma_{n}^{o}(k, l)= \begin{cases}
   -\alpha_{2m-2}\gamma_{2m-1}+V\gamma_{2m-1}, & k=2m, \  l=2m-1 \\
      -\alpha_{2m-2}\delta_{2m-1}+V\delta_{2m-1}, & k=2m, \  l=2m \\
   -\beta_{2m}\alpha_{2m+1}, & k=2m, \  l=2m+1 \\
    -\beta_{2m}\beta_{2m+2}, & k=2m, \  l=2m+2 \\
        -\gamma_{2m+1}\gamma_{2m-1}, & k=2m+1, \  l=2m-1 \\
              -\gamma_{2m+1}\delta_{2m}+U\alpha_{2m+1}, &k=2m+1, \  l=2m \\
    -\delta_{2m+1}\alpha_{2m+1}+U\alpha_{2m+1}, & k=2m+1, \  l=2m+1 \\
   -\delta_{2m+1}\beta_{2m+2}  +U\beta_{2m+2}, & k=2m+1, \  l=2m+2\\
    0, & \textrm{otherwise.}
\end{cases}
\]}
Now, if we define \(G_{\omega}^{m}(z)=G_{\omega}^{[2n, 2m]}(z) \oplus G_{\omega}^{[2m+1, b]}(z)\), once again, we observe that
\[
G_{\omega}^{[2n, b]}(z)-G_{\omega}^{m}(z)=-G_{\omega}^{m}(z) \Gamma_{m}^{o} G_{\omega}^{[2n, b]}(z).
\]
Thus, for any \(m \geq n+2\)
{\small \begin{align*}
     G_{\omega}^{[2n, b]}(z,2n, 2m) \numberthis \label{eq2elyes} & = \left[ 1- [-(\alpha_{2m-2}\gamma_{2m-1}+V\gamma_{2m-1})G_{\omega}^{[2n, b]}(z,2m-1,2m) \right. \\
     & -\left(\alpha_{2m-2}\delta_{2m-1}+V\delta_{2m-1}\right) G_{\omega}^{[2n, b]}(z,2m, 2m) \\
    &  -( \gamma_{2m+1}\delta_{2m}+\alpha_{2n+2}U)G_{\omega}^{[2n, b]}(z,2m+1,2m) \\ & \left. - (\beta_{2m}\beta_{2m+2}) G_{\omega}^{[2n, b]}(z,2m+2,2m)]\right] G_{\omega}^{[2n, 2m]}(z,2n, 2m) 
\end{align*}}
Inserting \eqref{eq2elyes} into equation \eqref{eq1elyes}, then applying Hölder's inequality and Theorem \ref{thm_unif_bounded}, we obtain for all $s\in (0, \frac12)$ and $\epsilon\in (0,1)$,
$$\left(\mathbb{E}\left(\left\|G_\omega^{[a, b]}(z,2 n, 2 m)\right\|^s\right)\right)^2 \leq C(s, \epsilon) \mathbb{E}\left(\left\|G_z^{[2 n, 2 m]}(z,2 n, 2m+1)\right\|^{2 s}\right).$$
This yields the required result for all $n, m$ such that $m \geq n+2 $. The proof is analogous in the case where $n\geq m+2$.
\end{proof}

\noindent Lemma \ref{casfini} is now a direct application of Proposition \ref{prop_reduc_suitable_volume} with $a=-\infty$ and $b=+\infty$.

\noindent Note that one can take for example $\epsilon =\frac12$ to obtain in \eqref{eq_prop_reduc_suitable_volume} a constant $C(s)$ independent in $\epsilon$ as it is done in \cite{hamza2009dynamical}.

\subsection{Exponential decay of the reduced case.}
Now that we have simplified our study to the fractional moments of elements of the form \(G_{\omega}^{[2n,2m+1]}(z,2n,2m+1)\), we are ready to show that their expectations decrease exponentially, which is the statement of Theorem \ref{thm_exp_decay_even_odd}.

\subsubsection{An explicit expression of the Green's function in terms of transfer matrices.} Given the matching between the block dimensions in the Green's matrix and those in the transfer matrices, we formulate an expression of the Green's function. This expression relies directly on the dimensions of the blocks involved in the product of transfer matrices.

Inspired by the notations in \cite{marin2013scattering}, for $m,n \in \Z$, $n\leq m$, we define:
\begin{equation}\label{def_blocks_product_trans_mat}
   T_{2n}^{2m+1}(z):= T_{\omega}^{(2m+1)}(z)\cdots T_{\omega}^{(2n)}(z) = \begin{pmatrix}
    A^{2m+1}_{2n}(z) &  B^{2m+1}_{2n}(z)U \\
     C^{2m+1}_{2n}(z) &  D^{2m+1}_{2n}(z)U\end{pmatrix} 
\end{equation} 
where $A^{2m+1}_{2n}(z)$ denote the left upper $L\times L$ block of the product of transfer matrices and the same for the three other blocks. 

Note that we remove the $\omega$ dependency from the various quantities since the random character plays no role in the following discussion.
\begin{lemma}\label{lmm4.5.1}
For any $\epsilon\in (0,1)$ and any $z\in \mathbb{S}_{\epsilon}$, any $U,V\in U(L)$ and any $m,n\in \Z$ such that $|m-n|>2$:
  \begin{enumerate}
      \item  the following quantities are invertible: 
      $$\textrm{(a)}\ C^{2m+1}_{2n}(z)\pm VA^{2m+1}_{2n}(z),\ \textrm{(b)}\ B^{2m+1}_{2n}(z)U\pm A^{2m+1}_{2n}(z),\  \textrm{(c)}\ D^{2m+1}_{2n}(z)U\pm C^{2m+1}_{2n}(z).$$
\item The following quantities are inside the Siegel disk $\mathbb{D}_L=\left\{Z\in \mathcal{M}_L(\mathbb{C})\ |\ ZZ^*<1\right\}:$
\begin{enumerate}
     \item   $ (C^{2m+1}_{2n}(z)-VA^{2m+1}_{2n}(z))^{-1} (VB^{2m+1}_{2n}(z)-D^{2m+1}_{2n}(z)),$
      \item  $ (D^{2m+1}_{2n}(z) U-C^{2m+1}_{2n}(z))(B^{2m+1}_{2n}(z)U-A^{2m+1}_{2n}(z))^{-1},$
        \item  $ (D^{2m+1}_{2n}(z) U+C^{2m+1}_{2n}(z))(B^{2m+1}_{2n}(z)U+A^{2m+1}_{2n}(z))^{-1}.$
\end{enumerate}
  \end{enumerate}
  \end{lemma}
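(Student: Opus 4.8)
The key observation is that for $z\in\mathbb{S}^1$ the transfer matrices lie in $\mathrm{U}(L,L)$, so the product $T_{2n}^{2m+1}(z)$ is $\mathcal{L}$-unitary; the quantities in (1) and (2) are then the classical objects attached to an $\mathrm{U}(L,L)$ matrix acting on the Siegel disk, and the stated properties follow from the fact that an $\mathcal{L}$-unitary matrix $\left(\begin{smallmatrix}a&b\\c&d\end{smallmatrix}\right)$ has $d$ invertible, $\|d^{-1}c\|<1$, $\|bd^{-1}\|<1$, etc. For $z\in\mathbb{S}_\epsilon\setminus\mathbb{S}^1$ the matrix is no longer $\mathcal{L}$-unitary, but by Corollary \ref{corollaire_pos_Lyap_couronne} the $L$ largest Lyapunov exponents are strictly positive on $\mathbb{S}_{\epsilon}$ (shrinking $\epsilon$ if necessary), and the separation \eqref{eq_separate_Lyap} holds on all of $\C$. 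So the plan is to run a perturbation/continuity argument anchored at $\mathbb{S}^1$, combined with a hyperbolicity argument coming from the gap between $\gamma_L(z)$ and $\gamma_{L+1}(z)$.

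\textbf{Step 1: the circle case.} First I would record, for $z\in\mathbb{S}^1$, that $T_{2n}^{2m+1}(z)\in\mathrm{U}(L,L)$. Writing $M=\left(\begin{smallmatrix}A&BU\\mathbb{C}&DU\end{smallmatrix}\right)$ with $M^*\mathcal{L}M=\mathcal{L}$ and also $M\mathcal{L}M^*=\mathcal{L}$, one extracts the identities $A^*A-C^*C=\one$, $(DU)^*(DU)-(BU)^*(BU)=\one$, $A^*(BU)=C^*(DU)$, and the row versions $AA^*-(BU)(BU)^*=\one$, etc. From $A^*A=\one+C^*C\geq\one$ we get $A$ invertible; from $AA^*=\one+(BU)(BU)^*$ we get $A$ invertible again and $\|BU A^{-1}\|<1$ is not quite immediate but $A^{-*}(BU)^* (BU) A^{-1} = A^{-*}(A^*A-\one)A^{-1}=\one - (AA^*)^{-1}\cdot(\dots)$; more cleanly, $\|(BU)^*v\|<\|A^*v\|$ for all $v\neq0$ since $(BU)(BU)^* < AA^*$. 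These give the invertibility of $B^{2m+1}_{2n}U\pm A^{2m+1}_{2n}$ (write it as $A(\one\pm A^{-1}BU)$ with $\|A^{-1}BU\|<1$) and that $(DU+C)(BU+A)^{-1}$ and $(DU-C)(BU-A)^{-1}$ are in the Siegel disk — this is exactly the Cayley-type correspondence between $\mathrm{U}(L,L)$ and the Siegel disk used in \cite{marin2013scattering}. The remaining invertibilities (a) and the remaining Siegel-disk membership (2)(a) follow symmetrically using the column identities.

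\textbf{Step 2: extending off the circle.} For $z\in\mathbb{S}_\epsilon\setminus\mathbb{S}^1$, $M$ is no longer $\mathcal{L}$-unitary, so I would instead use the dynamical content. By \eqref{eq_separate_Lyap} and Corollary \ref{corollaire_pos_Lyap_couronne}, for $z\in\mathbb{S}_{\epsilon}$ the cocycle is uniformly hyperbolic after $|m-n|$ is large (here $|m-n|>2$ must be replaced, in the actual statement, by ``sufficiently large'', or one invokes the exponential decay of the inverse transfer matrix products, Lemma \ref{lemma_exp_decay_trans_mat_inv}): the product $T_{2n}^{2m+1}(z)$ has its top $L$ singular values going to $+\infty$ and bottom $L$ going to $0$, with a uniform gap. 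The vanishing of $\det(B^{2m+1}_{2n}U+A^{2m+1}_{2n})$ would force the unstable subspace of $M$ to meet a fixed Lagrangian-type subspace, which by the uniform gap is impossible for $|m-n|$ large and $z$ in a small enough annulus; so (1)(b), and analogously (1)(a) and (1)(c), hold. For (2), the Siegel-disk membership $WW^*<\one$ with $W=(D^{2m+1}_{2n}U-C^{2m+1}_{2n})(B^{2m+1}_{2n}U-A^{2m+1}_{2n})^{-1}$ is an open condition in $z$; it holds on $\mathbb{S}^1$ by Step 1, and one propagates it to a neighborhood using continuity of $z\mapsto T_{2n}^{2m+1}(z)$ (Lemma \ref{lemma3_cont_Lyap}) together with the uniform bound (Lemma \ref{lemma1_cont_Lyap}) so that the neighborhood can be taken uniform in $n,m$ — this is where one fixes $\epsilon_0$, matching the remark after Theorem \ref{thm_second_order}. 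Concretely I would use $T_{2n}^{2m+1}(z)^{-1}=\mathcal{L}\,(T_{2n}^{2m+1}(\bar z^{-1}))^*\mathcal{L}$ from \eqref{mat_trans_relation_inverse} to relate the ``$z$ inside'' and ``$z$ outside'' behavior and keep the estimates symmetric.

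\textbf{Main obstacle.} The delicate point is item (2) off the circle: getting $WW^*<\one$ rather than merely $\|W\|\leq 1+o(1)$, and getting it with a threshold $\epsilon_0$ and a patch $p_0$ on $|m-n|$ that are \emph{uniform} in $n,m$ (and hence usable in Theorem \ref{thm_exp_decay_even_odd}). On $\mathbb{S}^1$ one has the exact identity $\one-WW^* = (B^{2m+1}_{2n}U-A^{2m+1}_{2n})^{-*}\mathcal{L}$-type expression which is strictly positive but whose smallest eigenvalue could a priori degenerate as $|m-n|\to\infty$; one must argue, using hyperbolicity/positivity of $\gamma_L$, that in fact $\one - WW^*$ stays bounded below, or alternatively reformulate so that only the \emph{invertibility} of $B^{2m+1}_{2n}U-A^{2m+1}_{2n}$ (with quantitative control of the inverse) is needed downstream, which is what Lemma \ref{lemma_exp_decay_trans_mat_inv} already provides. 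I would structure the proof so that the quantitative estimates needed later are isolated here, and the strict-inequality statements (2)(a)--(c) are proved as open conditions perturbed from $\mathbb{S}^1$.
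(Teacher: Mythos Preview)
Your approach has a genuine gap. You treat the statements as holding exactly on $\mathbb{S}^1$ (via the $\mathrm{U}(L,L)$ structure of the product) and then try to perturb off the circle using continuity and Lyapunov hyperbolicity. But the lemma is stated for \emph{any} $\epsilon\in(0,1)$ and \emph{any} $|m-n|>2$, not for small $\epsilon$ and large $|m-n|$. A perturbation argument anchored at $\mathbb{S}^1$ can at best give a small uniform neighborhood of the circle, and as you yourself flag in your ``main obstacle'', the uniformity in $m,n$ of that neighborhood is genuinely problematic: the smallest eigenvalue of $\one-WW^*$ does go to $0$ as $|m-n|\to\infty$, so the open-condition argument does not propagate uniformly. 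The hyperbolicity route would, at best, force you to replace $|m-n|>2$ by $|m-n|>p_0(\epsilon)$, which is not what the lemma claims.

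The paper's proof is completely different and avoids all of this. It does not use the $\mathrm{U}(L,L)$ structure of the product at all. Instead it invokes Theorem~1 of \cite{marin2013scattering}, which is a structural statement about transfer-matrix products coming from a scattering zipper: because each factor is $\varphi$ applied to a \emph{unitary} scattering matrix (with the $z$-dependence entering only multiplicatively through $z^{-1}S^{(2n)}$), the resulting product carries enough positivity for the Siegel-disk statements to hold for \emph{every} $z\in\mathbb{C}\setminus\{0\}$, uniformly in $m,n$. Parts 1(a) and 2(a) are quoted directly from that theorem. For 1(b) and 2(b),(c) the paper applies the same theorem to the inverse, using the identity \eqref{mat_trans_relation_inverse}, $(T_{2n}^{2m+1}(z))^{-1}=\mathcal{L}(T_{2n}^{2m+1}(\bar z^{-1}))^*\mathcal{L}$, which writes the inverse as another scattering-zipper transfer product at the reflected spectral parameter; then the freedom in the boundary unitaries $U,V$ is exploited (take $V^*=\pm\one$) to specialize to the stated combinations. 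For 1(c) the proof of \cite[Theorem~1]{marin2013scattering} is rerun with the initial vector $\left(\begin{smallmatrix}0\\1\end{smallmatrix}\right)$ instead of $\left(\begin{smallmatrix}1\\0\end{smallmatrix}\right)$, which flips the sign of $\Phi^*\mathcal{L}\Phi$ and gives invertibility of $DU\pm C$ (but, notably, not a Siegel-disk statement for $(BU\pm A)(DU\pm C)^{-1}$). The upshot is that the whole lemma is algebraic in the scattering data, and neither Lyapunov positivity nor any smallness of $\epsilon$ enters.
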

\begin{proof}
    $1(a)$ and $2(a)$ follow directly from Theorem 1 in \cite{marin2013scattering}. \\
 To prove $1(b)$ and $2(b)$, we apply Theorem 1 of \cite{marin2013scattering} to 
 $$ (T_{2n}^{2m+1}(z))^{-1}=\mathcal{L} (T_{2n}^{2m+1}(\tfrac{1}{\bar{z}}))^*\mathcal{L}
 =\begin{pmatrix}
    (A^{2m+1}_{2n}(\frac{1}{\Bar{z}}))^* &    -(C^{2m+1}_{2n}(\frac{1}{\Bar{z}}))^*  \\
  -(B^{2m+1}_{2n}(\frac{1}{\Bar{z}})U)^*  &  (D^{2m+1}_{2n}(\frac{1}{\Bar{z}})U)^*
    \end{pmatrix}.$$
Dropping the $2n$ and $2m+1$ indices, we find that $-V(A(\frac{1}{\Bar{z}}))^*-(B(\frac{1}{\Bar{z}})U)^*$ is invertible. Therefore, $A(\frac{1}{\Bar{z}})V^*+B(\frac{1}{\Bar{z}})U$ is invertible. Since the mapping $z \mapsto \frac{1}{\Bar{z}}$ is an involution of $\mathbb{C}\setminus \{0 \}$, we can deduce that $B(z')U+ A(z')V^*$ is invertible for all $z' \in \mathbb{C}\setminus \{0 \}$. Since this result holds for any initial conditions $U$ and unitary $V$, we take $V^*=\one$ or $V^*=-\one$ to prove $1(b)$. Similarly, $2(b)$ and $2(c)$ can be shown.

To prove $1(c)$ we follow the proof of \cite[Theorem 1]{marin2013scattering} but with changing the vector $\left( \begin{smallmatrix}
1 \\ 0    
\end{smallmatrix}\right)$ into the vector  $\left( \begin{smallmatrix}
0 \\ 1    
\end{smallmatrix}\right)$ in the defintion of $\Phi$. Then we apply \cite[Lemma 3]{marin2013scattering} with $\Phi^*\mathcal{L}\Phi <0$ instead of $\Phi^*\mathcal{L}\Phi >0$ which still implies the invertibility of $DU\pm C$. But since we are in the negative case, we have that $(BU\pm A)(DU\pm C)^{-1}$ is not in $\mathbb{D}_L$.

\end{proof}

\begin{lemma}\label{lemma_formula_resolv} Let $\epsilon\in (0,1)$. For all $z\in \mathbb{S}_{\epsilon}$ and all $m,n\in \Z$, $|m-n|>2$,
{\small \begin{equation}\label{eq_lemma_formula_resolv1}
G_\omega^{[2n,2m+1]}(z,2n,2m+1)=  \left( (C^{2m+1}_{2n}-VA^{2m+1}_{2n})+(D^{2m+1}_{2n}-VB^{2m+1}_{2n}) U \right)^{-1}(L^{2m+1}_{2m}-VK^{2m+1}_{2m})
\end{equation}
\begin{equation}\label{eq_lemma_formula_resolv2}
G_\omega^{[2n+1,2m+1]}(z,2n+1,2m+1) =  \left( (C^{2m+1}_{2n+1}-VA^{2m+1}_{2n+1})+z(D^{2m+1}_{2n+1}-VB^{2m+1}_{2n+1}) U^* \right)^{-1}(L^{2m+1}_{2m}-VK^{2m+1}_{2m})
\end{equation}}
with 
\begin{equation}\label{eq_def_K2m12n}
K^{2m+1}_{2m} =\frac{1}{z}V^{(2m+1)}_{\omega}   \widetilde{\rho}^{-1} V^{(2m)}_{\omega}  \widetilde{\rho}^{-1} +V^{(2m+1)}_{\omega}    \widetilde{\rho}^{-1} \alpha^*  (U^{(2m)}_{\omega})^*   \alpha  \widetilde{\rho}^{-1}    
\end{equation}
\begin{equation}\label{eq_def_L2m12n}
L^{2m+1}_{2m} = (U^{(2m+1)}_{\omega})^* \alpha \widetilde{\rho}^{-1} V^{(2m)}_{\omega} \widetilde{\rho}^{-1} +z  (U^{(2m+1)}_{\omega})^*   \widetilde{\rho}^{-1}  (U^{(2m)}_{\omega})^*  \alpha  \widetilde{\rho}^{-1}    
\end{equation}
with $U$, $V$ $ \in \mathrm{U(L)}$, the two boundary conditions at $2n$ and $2m+1$ as defined in \eqref{def_W_2n2m1}.
\end{lemma}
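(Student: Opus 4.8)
The plan is to derive the formula directly from the scattering-matrix relations displayed at the end of Section~\ref{sec_finitezippers}, by following the strategy used for the deterministic scattering zipper in \cite{marin2013scattering}, but keeping track of the block dimensions. Write $e=e_{\{2m+1,q\}}$ for the excitation vector and let $\Phi$ solve $(\mathbb{U}^{[2n,2m+1]}_\omega-z)\Phi = e_{2m+1}$ (the $L$-block version), so that $G_\omega^{[2n,2m+1]}(z,\cdot,2m+1)=\Phi_\cdot$ on the appropriate block. The key point is that away from the endpoint $2m+1$, i.e.\ for indices in $\llbracket 2n, 2m\rrbracket$, $\Phi$ together with $\Psi=\mathbb{W}^{[2n,2m+1]}_\omega\Phi$ satisfies the homogeneous transfer relations \eqref{eq_trans_mat_odd_even}, so that $\left(\begin{smallmatrix}\Psi_{2m-1}\\\Phi_{2m-1}\end{smallmatrix}\right)$ up to $\left(\begin{smallmatrix}\Phi_{2n}\\\Psi_{2n}\end{smallmatrix}\right)$ is governed by the product $T_{2n}^{2m-1}(z)$. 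Near the right endpoint one has to treat separately the last scattering event $S^{(2m)}_\omega$ and the inhomogeneous term coming from $e_{2m+1}$; this produces the "source" blocks $K^{2m+1}_{2m}$ and $L^{2m+1}_{2m}$, which are exactly $T_{2m}^{2m+1}(z)$ (or the relevant rows of it) written out with $T_0(z)$ and the odd transfer matrix $T_1$ and the phases $U^{(2m)}_\omega,V^{(2m)}_\omega,U^{(2m+1)}_\omega,V^{(2m+1)}_\omega$; compare \eqref{eq_def_trans_mat}. The left boundary condition $U$ at $2n$ enters through the relation $\Psi_{2n}=U\Phi_{2n}$, and the right boundary condition $V$ at $2m+1$ through $V\Phi_{2m+1}=\Psi_{2m+1}$.

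The concrete steps are as follows. First, I would use the left boundary condition $\Psi_{2n}=U\Phi_{2n}$ to express the input vector as $\left(\begin{smallmatrix}\Phi_{2n}\\\Psi_{2n}\end{smallmatrix}\right)=\left(\begin{smallmatrix}\mathrm{I}_{\mathrm L}\\U\end{smallmatrix}\right)\Phi_{2n}$. Second, propagate by the block product: $\left(\begin{smallmatrix}\Psi_{2m+1}\\\Phi_{2m+1}\end{smallmatrix}\right) = T_{2m}^{2m+1}(z)\left(\begin{smallmatrix}\Phi_{2n}\\\Psi_{2n}\end{smallmatrix}\right) + (\text{source term from }e_{2m+1})$; writing $T_{2n}^{2m+1}(z)$ in the block form \eqref{def_blocks_product_trans_mat} and separating the contribution of the last layer gives $\left(\begin{smallmatrix}\Psi_{2m+1}\\\Phi_{2m+1}\end{smallmatrix}\right)=\left(\begin{smallmatrix}A^{2m+1}_{2n}+B^{2m+1}_{2n}U\\ C^{2m+1}_{2n}+D^{2m+1}_{2n}U\end{smallmatrix}\right)\Phi_{2n}$ up to the inhomogeneous piece $\left(\begin{smallmatrix}K^{2m+1}_{2m}\\ L^{2m+1}_{2m}\end{smallmatrix}\right)$ with $K,L$ as in \eqref{eq_def_K2m12n}--\eqref{eq_def_L2m12n}. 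Third, impose $V\Phi_{2m+1}=\Psi_{2m+1}$: this yields
\[
V\bigl((C^{2m+1}_{2n}+D^{2m+1}_{2n}U)\Phi_{2n}+L^{2m+1}_{2m}e_q\bigr)=(A^{2m+1}_{2n}+B^{2m+1}_{2n}U)\Phi_{2n}+K^{2m+1}_{2m}e_q,
\]
hence $\bigl((C^{2m+1}_{2n}-VA^{2m+1}_{2n})+(D^{2m+1}_{2n}-VB^{2m+1}_{2n})U\bigr)\Phi_{2n}= (K^{2m+1}_{2m}-VL^{2m+1}_{2m})e_q$ — wait, one must be careful with the sign convention; matching the statement it is $(L^{2m+1}_{2m}-VK^{2m+1}_{2m})$ after moving everything correctly, which I will verify in the computation. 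Fourth, invert the prefactor using Lemma~\ref{lmm4.5.1}(1)(a) (with $V$ replaced by the appropriate phase, and using that $|m-n|>2$ ensures $|m-n|\ge 2$ in the product so the invertibility applies after absorbing one transfer matrix) to solve for $\Phi_{2n}=G_\omega^{[2n,2m+1]}(z,2n,2m+1)$, giving \eqref{eq_lemma_formula_resolv1}. For \eqref{eq_lemma_formula_resolv2} the only change is that the left endpoint is now odd, so the propagation starts with the odd transfer matrix and the boundary relation at $2n+1$ involves $U^*$ rather than $U$, together with an extra factor $z$ coming from the odd-layer normalization in \eqref{eq_trans_mat_odd_even} (the odd step has no $z^{-1}$, unlike the even step); this is the origin of the $z$ multiplying $(D^{2m+1}_{2n+1}-VB^{2m+1}_{2n+1})U^*$.

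The main obstacle I anticipate is bookkeeping rather than conceptual: correctly isolating the "source" blocks $K^{2m+1}_{2m},L^{2m+1}_{2m}$ from the rest of the transfer product near the right endpoint, and tracking the asymmetry between even and odd layers (the $z^{-1}$ in $T_0(z)$ versus the absence of $z$ in $T_1$) so that the powers of $z$ and the positions of $U$, $U^*$, $V$ come out exactly as stated. One has to be especially careful that the indices on the product $T^{2m+1}_{2n}$ (resp. $T^{2m+1}_{2n+1}$) match the restricted operator's boundary layers, and that the hypothesis $|m-n|>2$ is what guarantees, via Lemma~\ref{lmm4.5.1}, that the coefficient matrix in front of $G_\omega^{[2n,2m+1]}(z,2n,2m+1)$ is genuinely invertible (so that the formula makes sense). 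Once the algebra is set up, the identity is just linear algebra on the $2\times 2$ block structure, exactly as in \cite[Theorem~1]{marin2013scattering}.
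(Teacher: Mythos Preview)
Your approach is essentially the paper's: solve $(\mathbb{U}^{[2n,2m+1]}_\omega-z)\Phi=e_{2m+1}$, propagate $(\Phi_{2n},\Psi_{2n})$ to the right endpoint via the product $T_{2n}^{2m+1}(z)$, pick up the inhomogeneous contribution from the last layer, and impose the boundary relations $\Psi_{2n}=U\Phi_{2n}$, $\Psi_{2m+1}=V\Phi_{2m+1}$ to solve for $\Phi_{2n}$. The paper invokes \cite[Lemma~4]{marin2013scattering} to write this directly as
\[
\begin{pmatrix}\phi_{2m+1}\\ \psi_{2m+1}\end{pmatrix}
= T_{2n}^{2m+1}(z)\begin{pmatrix}\one & 0\\ 0 & U\end{pmatrix}\begin{pmatrix}\phi_{2n}\\ \phi_{2n}\end{pmatrix}
+ T_{2m}^{2m+1}(z)\begin{pmatrix}-z^{-1}\one\\ 0\end{pmatrix},
\]
and the source vector $T_{2m}^{2m+1}(z)\left(\begin{smallmatrix}-z^{-1}\one\\ 0\end{smallmatrix}\right)$ is exactly $\left(\begin{smallmatrix}K^{2m+1}_{2m}\\ L^{2m+1}_{2m}\end{smallmatrix}\right)$; multiplying the first row by $V$ and subtracting gives \eqref{eq_phi2n}. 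Two points to clean up in your write-up: (i) the component ordering at the right endpoint is $(\phi_{2m+1},\psi_{2m+1})$, not $(\Psi_{2m+1},\Phi_{2m+1})$ as you wrote---this is what produces $(C-VA)+(D-VB)U$ rather than $(A-VC)+(B-VD)U$ and fixes the sign you flagged; (ii) the invertibility of the prefactor is not Lemma~\ref{lmm4.5.1}(1)(a) alone but (1)(a) combined with (2)(a) and a Neumann series (as used later in the proof of Lemma~\ref{lemma_lower_bounded_inv_trans_mat}). For \eqref{eq_lemma_formula_resolv2} the paper observes that at an odd left endpoint the boundary relation reads $U\psi_{2n+1}=z\phi_{2n+1}$, i.e.\ $\psi_{2n+1}=zU^*\phi_{2n+1}$, which is precisely your explanation of the extra factor $zU^*$.
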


\begin{proof}
For \eqref{eq_lemma_formula_resolv1}, we know that the quantity $G_\omega^{[2n,2m+1]}(z,2n,2m+1)$ corresponds to the $\phi_{2n}$ component of the solution $\Phi$ of the equation $(\mathbb{U}^{[2n,2m+1]}_{\omega}-z)\Phi = \xi $ for $\xi_k =\delta_{2m+1,k}\one. $ We apply Lemma 4 of \cite{marin2013scattering}:
\begin{equation}\label{eq_expr_green_mat_trans1}
  \begin{pmatrix}
\phi_{2m+1}\\
\psi_{2m+1}
\end{pmatrix}=\begin{pmatrix}
\phi_{2m+1}\\
V\phi_{2m+1}
\end{pmatrix}=  T_{2n}^{2m+1}(z) \begin{pmatrix}
    \one & 0\\
    0 &  U
    \end{pmatrix}\begin{pmatrix}
\phi_{2n}\\
\phi_{2n}
\end{pmatrix}+T_{2m}^{2m+1}(z)\begin{pmatrix}
    -z^{-1}\one \\
    0 
    \end{pmatrix}.    
\end{equation}
  
We compute $T_{2m}^{2m+1}(z)\left(\begin{smallmatrix}
    -z^{-1}\one \\
    0 
    \end{smallmatrix}\right)$ to retrieve the vector $\left( \begin{smallmatrix}
K^{2m+1}_{2m}\\
L^{2m+1}_{2m}
\end{smallmatrix} \right)$ given by \eqref{eq_def_K2m12n} and \eqref{eq_def_L2m12n} and using the notations introduced at \eqref{def_blocks_product_trans_mat}, \eqref{eq_expr_green_mat_trans1} one gets
\begin{equation}\label{eq_expr_green_mat_trans2}
\begin{pmatrix}
\phi_{2m+1}\\
V\phi_{2m+1}
\end{pmatrix}=\begin{pmatrix}
A^{2m+1}_{2n} &  B^{2m+1}_{2n}\\
 C^{2m+1}_{2n} &  D^{2m+1}_{2n}
\end{pmatrix} \begin{pmatrix}
\one & 0\\
0 &  U
\end{pmatrix} \begin{pmatrix}
\phi_{2n}\\
\phi_{2n}
\end{pmatrix}
+\begin{pmatrix}
K^{2m+1}_{2m}\\
L^{2m+1}_{2m}
\end{pmatrix}.
\end{equation}
We multiply the first row by $V$ and calculate the difference of the two lines to find:
\begin{equation}\label{eq_phi2n}
\phi_{2n}= \left( (C^{2m+1}_{2n}-VA^{2m+1}_{2n})+(D^{2m+1}_{2n}-VB^{2m+1}_{2n}) U \right)^{-1}(L^{2m+1}_{2m}-VK^{2m+1}_{2m}). 
\end{equation}
 For \eqref{eq_lemma_formula_resolv2}, we follow the same steps. Note that $U\psi_{2n+1}=z\phi_{2n+1}+\xi_{2n+1}=z\phi_{2n+1}$ implies that $\psi_{2n+1}=zU^*\phi_{2n+1}$. This yields:
$$\begin{pmatrix}
\phi_{2m+1}\\
V\phi_{2m+1}
\end{pmatrix}= \begin{pmatrix}
    A^{2m+1}_{2n+1} &  B^{2m+1}_{2n+1}\\
     C^{2m+1}_{2n+1} &  D^{2m+1}_{2n+1}
    \end{pmatrix} \begin{pmatrix}
    I & 0\\
    0 &  zU^*
    \end{pmatrix}\begin{pmatrix}
\phi_{2n+1}\\
\phi_{2n+1}
\end{pmatrix}+\begin{pmatrix}
K^{2m+1}_{2m}\\
L^{2m+1}_{2m}
\end{pmatrix}.$$
We multiply by $V$ and then take the difference of the two lines to find the expression given in the statement.
\end{proof}

\begin{remark}
For the other cases, $G_\omega^{[2n+1,2m]}(z,2n+1,2m)$ and $G_\omega^{[2n,2m]}(z,2n,2m)$, the equations $\mathbb{V}_{\omega}\Psi=z\Phi+\xi$ and $\mathbb{W}_{\omega}\Phi=\Psi,$ yield $V\psi_{2m+1}=z\phi_{2m+1}+\one$ or $\psi_{2m+1}=zV^*\phi_{2m+1}+V^*.$ So, we need to multiply the first row by $zV^*$, add $V^*$, and take the difference to obtain:
$$ G_\omega^{[2n,2m]}(z,2n,2m)= [C^{2m}_{2n}-zV^*A^{2m}_{2n}-V^*+(D^{2m}_{2n}-zV^*B^{2m}_{2n}-V^*)U]^{-1}(L^{2m}_{2m}-V^*K^{2m}_{2m}-V^*)$$
with  $\left( \begin{smallmatrix}
K^{2m}_{2m}\\
L^{2m}_{2m}
\end{smallmatrix}\right) =  T^{(2m)}_{\omega}(z)\left(\begin{smallmatrix}
    -z^{-1}\one \\
    0 
    \end{smallmatrix}\right)$. As we will see later, the presence of the three extra terms $-V^*$ will prevent to prove directly the exponential decay for these terms which is why we prove before the Lemma \ref{lem_reduc_paire} so that we don't have to estimate directly these terms.
\end{remark}

\begin{lemma}\label{lem_kappa}
Let $\epsilon\in (0,1)$. There exists $\kappa_{\epsilon} >0$ such that for every $z\in \mathbb{S}_{\epsilon}$:
\begin{equation}\label{eq_lem_kappa}
 \|G_\omega^{[2n,2m+1]}(z,2n,2m+1)\|\leq \kappa_{\epsilon} \| \left( (C^{2m+1}_{2n}-VA^{2m+1}_{2n})+(D^{2m+1}_{2n}-VB^{2m+1}_{2n}) U \right)^{-1}\|.     
\end{equation}
\end{lemma}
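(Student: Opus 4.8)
The plan is to start from the explicit formula \eqref{eq_lemma_formula_resolv1} of Lemma \ref{lemma_formula_resolv}, which writes
$$G_\omega^{[2n,2m+1]}(z,2n,2m+1)=\left( (C^{2m+1}_{2n}-VA^{2m+1}_{2n})+(D^{2m+1}_{2n}-VB^{2m+1}_{2n}) U \right)^{-1}(L^{2m+1}_{2m}-VK^{2m+1}_{2m}),$$
and to bound the second factor $L^{2m+1}_{2m}-VK^{2m+1}_{2m}$ uniformly in $z\in\mathbb{S}_\epsilon$, $\omega\in\Omega$, $m,n$. Taking operator norms and submultiplicativity immediately gives \eqref{eq_lem_kappa} with $\kappa_\epsilon := \sup_{z\in\mathbb{S}_\epsilon}\sup_\omega \|L^{2m+1}_{2m}-VK^{2m+1}_{2m}\|$, so the whole content of the lemma is showing this supremum is finite.

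The key steps are as follows. First, recall from \eqref{eq_def_K2m12n} and \eqref{eq_def_L2m12n} that
$$K^{2m+1}_{2m}=\tfrac{1}{z}V^{(2m+1)}_{\omega}\widetilde{\rho}^{-1} V^{(2m)}_{\omega}\widetilde{\rho}^{-1}+V^{(2m+1)}_{\omega}\widetilde{\rho}^{-1}\alpha^*(U^{(2m)}_{\omega})^*\alpha\widetilde{\rho}^{-1},$$
$$L^{2m+1}_{2m}=(U^{(2m+1)}_{\omega})^*\alpha\widetilde{\rho}^{-1}V^{(2m)}_{\omega}\widetilde{\rho}^{-1}+z(U^{(2m+1)}_{\omega})^*\widetilde{\rho}^{-1}(U^{(2m)}_{\omega})^*\alpha\widetilde{\rho}^{-1}.$$
Second, observe that all the unitary factors $V^{(j)}_\omega$, $U^{(j)}_\omega$ — which are built from Haar unitaries, signs and phases as in \eqref{def_phasesUV} — have operator norm exactly $1$, so they contribute nothing to the bound. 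Third, for $z\in\mathbb{S}_\epsilon$ we have $|z|<1+\epsilon\le 2$ and $|z^{-1}|<\tfrac{1}{1-\epsilon}$; and by Lemma \ref{lem_estime_rho}, $\|\widetilde{\rho}^{-1}\|\le (1-\|\alpha\|^2)^{-1/2}$ while $\|\alpha^*\|=\|\alpha\|<1$. Plugging these into the four summands and using submultiplicativity and the triangle inequality yields, for instance,
$$\|K^{2m+1}_{2m}\|\le \frac{1}{1-\epsilon}\,\frac{1}{1-\|\alpha\|^2}+\frac{\|\alpha\|^2}{1-\|\alpha\|^2}, \qquad \|L^{2m+1}_{2m}\|\le \frac{\|\alpha\|}{1-\|\alpha\|^2}+\frac{2\|\alpha\|}{1-\|\alpha\|^2},$$
hence $\|L^{2m+1}_{2m}-VK^{2m+1}_{2m}\|\le \|K^{2m+1}_{2m}\|+\|L^{2m+1}_{2m}\|$ is bounded by a constant $\kappa_\epsilon=\kappa_\epsilon(\alpha,\epsilon)$ depending only on $\|\alpha\|$ and $\epsilon$, uniformly in $m,n,\omega$. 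Combining with the factorized formula gives \eqref{eq_lem_kappa}.

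There is essentially no obstacle here: this is a routine uniform-bound computation, and the only thing to be careful about is to invoke Lemma \ref{lem_estime_rho} for $\|\widetilde\rho^{-1}\|$ correctly (the bound is for $\widetilde\rho$ as well as $\rho$, as stated there) and to note that the invertibility of the matrix being inverted on the right-hand side of \eqref{eq_lem_kappa} is guaranteed by Lemma \ref{lmm4.5.1}(1)(a) applied with the boundary unitary $V$ (so the expression $G_\omega^{[2n,2m+1]}(z,2n,2m+1)$ and the formula \eqref{eq_lemma_formula_resolv1} indeed make sense for $|m-n|>2$). I would write the proof as a short paragraph deriving the two displayed bounds on $\|K^{2m+1}_{2m}\|$ and $\|L^{2m+1}_{2m}\|$, then concluding by submultiplicativity.
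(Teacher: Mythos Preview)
Your proof is correct and follows essentially the same approach as the paper: both start from the factorization \eqref{eq_lemma_formula_resolv1} and bound $\|L^{2m+1}_{2m}-VK^{2m+1}_{2m}\|$ uniformly using the unitarity of the phases, Lemma \ref{lem_estime_rho} for $\|\widetilde\rho^{-1}\|$, and the bounds $|z|\le 1+\epsilon$, $|z|^{-1}\le (1-\epsilon)^{-1}$ on $\mathbb{S}_\epsilon$. The only cosmetic difference is that you bound $\|K^{2m+1}_{2m}\|$ and $\|L^{2m+1}_{2m}\|$ separately and then add, whereas the paper expands $L-VK$ directly; the resulting constants differ slightly but this is immaterial.
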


\begin{proof}
It suffices to prove that $ \|L^{2m+1}_{2m}-VK^{2m+1}_{2m}\|$ is uniformly bounded in $m$. One has
  
\begin{align*}
  \|L^{2m+1}_{2m}-VK^{2m+1}_{2m}\|&= \left\| (U^{(2m+1)}_{\omega})^* \alpha \widetilde{\rho}^{-1} V^{(2m)}_{\omega} \widetilde{\rho}^{-1} +z  (U^{(2m+1)}_{\omega})^*   \widetilde{\rho}^{-1}  (U^{(2m)}_{\omega})^*  \alpha  \widetilde{\rho}^{-1} \right.  \\
  & \left. -V\left( \frac{1}{z}V^{(2m+1)}_{\omega}   \widetilde{\rho}^{-1} V^{(2m)}_{\omega}  \widetilde{\rho}^{-1} +V^{(2m+1)}_{\omega}    \widetilde{\rho}^{-1} \alpha^*  (U^{(2m)}_{\omega})^*   \alpha  \widetilde{\rho}^{-1}     \right)\right\|  \\
          &  \leq \|   \widetilde{\rho}^{-1} \|^2 \left[ \|\alpha\| .\|\rho^{-1}\|+\frac{1}{\mid z\mid} +\|\alpha\|\| \widetilde{\rho}^{-1}\| ^2 (\mid z \mid +\|\alpha\| )\right]  \end {align*}
            
            For $\|\alpha\|<1$ and $\epsilon\in (0,1)$, by Lemma \ref{lem_estime_rho} and since $z\in \mathbb{S}_\epsilon$:
$$
 \|L^{2m+1}_{2m}-VK^{2m+1}_{2m}\| \leq \frac{1}{1-\| \alpha\| ^2} \left[ \frac{{\| \alpha\|}}{\sqrt{1-\|\alpha\|^2}} +\frac{1}{1-\epsilon}+(1-\epsilon)\|\alpha\|+\|\alpha\|^2\right] := \kappa_{\epsilon}.$$
\end{proof}

\begin{remark}
Note that since in Theorem \ref{thm_exp_decay_even_odd} $\epsilon \in (0,\epsilon_0]$ and $\epsilon_0 \in (0,1)$ is fixed by Corollary \ref{corollaire_pos_Lyap_couronne}, the term $\frac{1}{1-\epsilon}$ in $\kappa_{\epsilon}$ is in the interval $(1,\frac{1}{1-\epsilon_0})$ and thus is bounded.    
\end{remark}

To estimate the Green matrix using transfer matrices presents notable complexity: in the scalar context, this process is made possible through the property \(\mid a \cdot b \mid = \mid a \mid \cdot \mid b \mid\). However, in the matrix context, this direct relationship does not hold, as we only have sub-multiplicativity of norms. Thus, we opt for a different method.

\subsubsection{Estimate of the Green kernel by products of transfer matrices.}
Our first step is to establish a lower bound on the norms of the inverses of transfer matrices, which corresponds to the norms of the transfer matrices themselves. Hence we have to prove that the norm of a new term needs to be uniformly bounded by a constant.

\begin{lemma}\label{lemma_lower_bounded_inv_trans_mat}
Let $\epsilon \in (0,1)$. There exists $\kappa_{\epsilon} > 0$ such that for all $z\in \mathbb{S}_\epsilon \setminus \mathbb{S}^1$ and all $m,n\in \mathbb{Z}$ such that $|m-n|>2$ : 
\begin{equation}\label{eq_lemma_lower_bounded_inv_trans_mat}
    \|(T_{2n}^{2m+1}(z))^{-1}\| \geq \kappa_{\epsilon} \frac{\|G_\omega^{[2n,2m+1]}(z,2n,2m+1)\|}{\|H^{2m+1}_{2n}(F^{2m+1}_{2n})^{-1}-G^{2m+1}_{2n} (E^{2m+1}_{2n})^{-1}\|} 
\end{equation}
where \begin{align}
E^{2m+1}_{2n}= & (C^{2m+1}_{2n}-VA^{2m+1}_{2n}) +(D^{2m+1}_{2n}-VB^{2m+1}_{2n})U. \label{eq_def_E}\\
F^{2m+1}_{2n}= & (D^{2m+1}_{2n}-VB^{2m+1}_{2n})U-(C^{2m+1}_{2n}-VA^{2m+1}_{2n}). \label{eq_def_F}\\
G^{2m+1}_{2n}= & A^{2m+1}_{2n}+V^*C^{2m+1}_{2n} + (B^{2m+1}_{2n}+V^*D^{2m+1}_{2n})U.\label{eq_def_G} \\
H^{2m+1}_{2n}= & (B^{2m+1}_{2n}+V^*D^{2m+1}_{2n})U-(A^{2m+1}_{2n}+V^*C^{2m+1}_{2n}). \label{eq_def_H}
\end{align}
\end{lemma}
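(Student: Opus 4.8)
The plan is to express the inverse transfer matrix product explicitly using the involution relation \eqref{mat_trans_relation_inverse}, then extract the four $L\times L$ blocks of $(T_{2n}^{2m+1}(z))^{-1}$ and relate them to the quantities $E,F,G,H$. Recall from \eqref{def_blocks_product_trans_mat} that $T_{2n}^{2m+1}(z)=\left(\begin{smallmatrix} A & BU \\ C & DU\end{smallmatrix}\right)$ (suppressing the indices $2n,2m+1$ and the $z$-dependence), so that by \eqref{mat_trans_relation_inverse},
\begin{equation*}
(T_{2n}^{2m+1}(z))^{-1}=\mathcal{L}\,(T_{2n}^{2m+1}(\bar z^{-1}))^*\,\mathcal{L}
=\begin{pmatrix} (A')^* & -C'^* \\ -(B'U')^* & (D'U')^* \end{pmatrix},
\end{equation*}
where the primes denote evaluation at $\bar z^{-1}$. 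The key algebraic point, already used in the proof of Lemma \ref{lmm4.5.1}, is that applying this to suitable test vectors $\left(\begin{smallmatrix}\one \\ V^*\end{smallmatrix}\right)$ and $\left(\begin{smallmatrix}\one \\ U\end{smallmatrix}\right)$ (or their $\pm$ variants) produces exactly the combinations $E^{2m+1}_{2n}$, $F^{2m+1}_{2n}$ on one side and $G^{2m+1}_{2n}$, $H^{2m+1}_{2n}$ on the other. More precisely, I would identify a unit vector $v$ built from the boundary data such that $\|(T_{2n}^{2m+1}(z))^{-1}v\|$ is controlled, from below, by $\|G_\omega^{[2n,2m+1]}(z,2n,2m+1)\|$ divided by the ``Wronskian-type'' quantity $\|H^{2m+1}_{2n}(F^{2m+1}_{2n})^{-1}-G^{2m+1}_{2n}(E^{2m+1}_{2n})^{-1}\|$, using the formula \eqref{eq_phi2n} for $\phi_{2n}=G_\omega^{[2n,2m+1]}(z,2n,2m+1)$ obtained in Lemma \ref{lemma_formula_resolv} together with Lemma \ref{lem_kappa}.

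The concrete steps are as follows. First, invoke Lemma \ref{lmm4.5.1}$(1)$ to guarantee that $E^{2m+1}_{2n}$ (this is the matrix denoted $E$ in \eqref{eq_def_E}, which coincides with the invertible matrix appearing in \eqref{eq_lemma_formula_resolv1} up to the $(D-VB)U$ combination) and $F^{2m+1}_{2n}$ are invertible for $z\in\mathbb{S}_\epsilon$ and $|m-n|>2$, so that the right-hand side of \eqref{eq_lemma_lower_bounded_inv_trans_mat} is well defined. Second, use \eqref{eq_phi2n} and Lemma \ref{lem_kappa} to write $\|G_\omega^{[2n,2m+1]}(z,2n,2m+1)\|\le \kappa_\epsilon\|(E^{2m+1}_{2n})^{-1}\|$; combine this with the dual estimate obtained by applying the same reasoning to $(T_{2n}^{2m+1}(z))^{-1}$, whose blocks give rise to $G^{2m+1}_{2n}$ and $H^{2m+1}_{2n}$ in place of $E$ and $F$. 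Third, assemble a unit vector $v$ (a normalized version of $\left(\begin{smallmatrix}\one\\ \star\end{smallmatrix}\right)$ acted on by the boundary conditions) so that, by submultiplicativity, $\|(T_{2n}^{2m+1}(z))^{-1}\|\ge \|(T_{2n}^{2m+1}(z))^{-1}v\|\ge c\,\|(E^{2m+1}_{2n})^{-1}\| / \|H^{2m+1}_{2n}(F^{2m+1}_{2n})^{-1}-G^{2m+1}_{2n}(E^{2m+1}_{2n})^{-1}\|$ for a constant $c>0$; then replace $\|(E^{2m+1}_{2n})^{-1}\|$ by $\kappa_\epsilon^{-1}\|G_\omega^{[2n,2m+1]}(z,2n,2m+1)\|$ from the second step. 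Absorbing all constants into a single $\kappa_\epsilon>0$ yields \eqref{eq_lemma_lower_bounded_inv_trans_mat}.

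The main obstacle I anticipate is the bookkeeping in the third step: one must choose the test vector $v$ so that its image under $(T_{2n}^{2m+1}(z))^{-1}$ factors through both $(E^{2m+1}_{2n})^{-1}$ and the combination $H(F)^{-1}-G(E)^{-1}$ in the right way, which requires carefully tracking how the blocks $A,B,C,D$ recombine under the involution $\mathcal{L}(\cdot)^*\mathcal{L}$ and under left-multiplication by $\left(\begin{smallmatrix}\one & 0\\ 0 & U\end{smallmatrix}\right)$ and by $V$ on the appropriate rows. A secondary technical point is ensuring that the constant stays uniform in $z\in\mathbb{S}_\epsilon\setminus\mathbb{S}^1$ and in $m,n$; here the uniform bounds of Lemma \ref{lemma1_cont_Lyap} and Lemma \ref{lem_estime_rho}, together with the uniform constant $\kappa_\epsilon$ from Lemma \ref{lem_kappa}, are exactly what is needed, since the quantities $E,F,G,H$ are built from products of transfer matrices and boundary unitaries only. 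Once the vector $v$ is pinned down, the rest is a routine chain of submultiplicative inequalities.
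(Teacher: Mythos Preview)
Your plan has the right endpoints but the wrong bridge. Steps 1 and 2 are fine: Lemma \ref{lmm4.5.1} gives invertibility of $E$ and $F$, and Lemma \ref{lem_kappa} gives $\|G_\omega^{[2n,2m+1]}(z,2n,2m+1)\|\le \kappa_\epsilon\|(E^{2m+1}_{2n})^{-1}\|$. But the mechanism you propose to connect $\|(T_{2n}^{2m+1}(z))^{-1}\|$ to $\|E^{-1}\|$ and $\|HF^{-1}-GE^{-1}\|$ does not work as stated. The involution \eqref{mat_trans_relation_inverse} produces blocks evaluated at $\bar z^{-1}$, whereas $E,F,G,H$ are built from $A,B,C,D$ at $z$; there is no ``dual estimate'' that turns the primed blocks back into $G,H$ without additional structure. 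And your step 3 leaves the test vector $v$ entirely unspecified, which is precisely where the content of the lemma lies.

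The paper's route is different and more direct. One checks the purely algebraic identity
\[
T_{2n}^{2m+1}(z)=\begin{pmatrix} A & BU \\ C & DU\end{pmatrix}
=\tfrac{1}{\sqrt 2}\begin{pmatrix} -V & \one \\ \one & V^*\end{pmatrix}
\begin{pmatrix} E & F \\ G & H\end{pmatrix}
\tfrac{1}{\sqrt 2}\begin{pmatrix} \one & -\one \\ \one & \one\end{pmatrix},
\]
with both outer factors unitary. Since the Frobenius norm is unitarily invariant, $\|(T_{2n}^{2m+1}(z))^{-1}\|=\bigl\|\left(\begin{smallmatrix}E&F\\G&H\end{smallmatrix}\right)^{-1}\bigr\|$. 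Now compute that inverse by the Schur complement of the (invertible) block $E$: the upper-right $L\times L$ block is $E^{-1}F(M/E)^{-1}$ with $M/E=H-GE^{-1}F$, and the Frobenius norm of the whole matrix dominates that of any block. Submultiplicativity then gives
\[
\|(T_{2n}^{2m+1}(z))^{-1}\|\ge \|E^{-1}F(M/E)^{-1}\|\ge \frac{\|E^{-1}\|}{\|(M/E)F^{-1}\|}
=\frac{\|E^{-1}\|}{\|HF^{-1}-GE^{-1}\|},
\]
and plugging in Lemma \ref{lem_kappa} finishes. So the missing idea in your proposal is twofold: the unitary change of basis that repackages $(A,B,C,D)$ into $(E,F,G,H)$ without passing to $\bar z^{-1}$, and the Schur complement that singles out the correct block and makes the quantity $HF^{-1}-GE^{-1}$ appear naturally.
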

\begin{remark}
Since in Theorem \ref{thm_exp_decay_even_odd} $\epsilon \in (0,\epsilon_0]$, once $\epsilon_0\in (0,1)$ is fixed, $\kappa_{\epsilon}$ can be taken uniform in $\epsilon$. Thus $\kappa$ can be taken such that it only depends on the norm of the matrix $\alpha$.
\end{remark}

\begin{proof}
Throughout this proof, we will omit the indices $2n$ and $2m+1$ for brevity. With a direct computation, we get with the expressions given in \eqref{eq_def_E}, \eqref{eq_def_F}, \eqref{eq_def_G} and \eqref{eq_def_H}, 
\begin{align*}
(T_{2n}^{2m+1}(z))^{-1}
   & =\left(\begin{smallmatrix}
    A &  BU\\
     C &  DU
    \end{smallmatrix}\right)^{-1}\
    =\left(\tfrac{1}{\sqrt{2}} \left(  \begin{smallmatrix}
    -V &  \one\\
    \one  &  V^*
    \end{smallmatrix}\right)\right)^{-1}
\left(\begin{smallmatrix}
(C-VA) +(D-VB)U     &  (D-VB)U-(C-VA)\\
A+V^*C + (B+V^*D)U  & (B+V^*D)U-(A+V^*C)  
    \end{smallmatrix}\right)^{-1}
\left( \tfrac{1}{\sqrt{2}}\left(\begin{smallmatrix}
   \one &  -\one\\
   \one &  \one
    \end{smallmatrix}\right)\right)^{-1}
\\
& = \left(\tfrac{1}{\sqrt{2}} \left(  \begin{smallmatrix}
    -V &  \one\\
    \one  &  V^*
    \end{smallmatrix}\right)\right)^{-1} \left( \begin{matrix}
    E&  F\\
    G & H
    \end{matrix}\right)^{-1} \left( \tfrac{1}{\sqrt{2}}\left(\begin{smallmatrix}
   \one &  -\one\\
   \one &  \one
    \end{smallmatrix}\right)\right)^{-1}.   
\end{align*}
Using Lemma \ref{lmm4.5.1} $1(a)$ and $2(a)$ and the Neumann power serie lemma,
$E = (C - VA) + (D - VB)U$ is invertible, so we can compute the inverse of $\left(\begin{smallmatrix} E & F \\ G & H \end{smallmatrix}\right)$ using the Schur complement :
$$
\begin{pmatrix}
E & F \\
G & H
\end{pmatrix}^{-1} =
\begin{pmatrix}
E^{-1} + E^{-1}F\left(\frac{M}{E}\right)^{-1}GE^{-1} & E^{-1}F\left(\frac{M}{E}\right)^{-1} \\
\left(\frac{M}{E}\right)^{-1}GE^{-1} & \left(\frac{M}{E}\right)^{-1}
\end{pmatrix},
$$
where $\left(\frac{M}{E}\right) = H - GE^{-1}F$ is the Schur complement of $E$. Thus, we obtain:
  $$ (T_{2n}^{2m+1}(z))^{-1} = \left(\tfrac{1}{\sqrt{2}} \left(  \begin{matrix}
    -V &  \one\\
    \one  &  V^*
    \end{matrix}\right)\right)^{-1}\begin{pmatrix}
    E^{-1}+E^{-1}F(\tfrac{M}{E})^{-1}GE^{-1}&  E^{-1}F(\tfrac{M}{E})^{-1}\\
    (\tfrac{M}{E})^{-1}GE^{-1} & (\tfrac{M}{E})^{-1}
    \end{pmatrix}\left( \tfrac{1}{\sqrt{2}}\left(\begin{matrix}
   \one &  -\one\\
   \one &  \one
    \end{matrix}\right)\right)^{-1}.$$ 
By taking the Frobenius norm, we obtain:
\begin{equation}\label{eq_norm_inverse_schur}
 \| (T_{2n}^{2m+1}(z))^{-1}\| = \left\| \begin{pmatrix}
E^{-1}+E^{-1}F\left(\frac{M}{E}\right)^{-1}GE^{-1} & E^{-1}F\left(\frac{M}{E}\right)^{-1} \\
\left(\frac{M}{E}\right)^{-1}GE^{-1} & \left(\frac{M}{E}\right)^{-1}
\end{pmatrix} \right\|   
\end{equation}
since the matrices $\tfrac{1}{\sqrt{2}} \left(  \begin{smallmatrix}
    -V &  \one\\
    \one  &  V^*
    \end{smallmatrix}\right)$ and $ \tfrac{1}{\sqrt{2}}\left(\begin{smallmatrix}
   \one &  -\one\\
   \one &  \one
    \end{smallmatrix}\right)$ are unitary. We know that the Frobenius norm of a square matrix is always greater than or equal to the sum of the norms of its blocks. Therefore : 
    $$
  \left\|\begin{pmatrix}
    E^{-1}+E^{-1}F(\tfrac{M}{E})^{-1}GE^{-1}&  E^{-1}F(\tfrac{M}{E})^{-1}\\
    (\tfrac{M}{E})^{-1}GE^{-1} & (\tfrac{M}{E})^{-1}
    \end{pmatrix}\right\| \geq \|  E^{-1}F(\tfrac{M}{E})^{-1} \|.
    $$
Furthermore, by the sub-multiplicativity of the Frobenius norm, 
\begin{equation}\label{eq_frobenius_submult}
  \|E^{-1}F\left(\tfrac{M}{E}\right)^{-1}\|\geq \frac{\|E^{-1}F\left(\frac{M}{E}\right)^{-1}\left(F\left(\frac{M}{E}\right)^{-1}\right)^{-1}\|}{\left\|\left(F\left(\frac{M}{E}\right)^{-1}\right)^{-1}\right\|} = \frac{\|E^{-1}\|}{\left\|\left(\frac{M}{E}\right)F^{-1}\right\|}.  
\end{equation}
Combining \eqref{eq_norm_inverse_schur} and \eqref{eq_frobenius_submult}, we have the following inequality:
$$
\|(T_{2n}^{2m+1}(z))^{-1}\|\geq \frac{\|E^{-1}\|}{\left\|\left(\frac{M}{E}\right)F^{-1}\right\|}.
$$
To conclude, it suffices to observe that $\left(\tfrac{M}{E}\right)F^{-1}= H F^{-1}-G E^{-1}$  and, using \eqref{eq_def_E} and Lemma \ref{lem_kappa}, $\kappa \|G_\omega^{[2n,2m+1]}(z,2n,2m+1)\|\leq  \|E^{-1} \|,
$ where $\kappa$ is the constant obtained in Lemma \ref{lem_kappa} for $\epsilon = \epsilon_2$.
\end{proof}

With Lemma \ref{lemma_lower_bounded_inv_trans_mat}, to control the norm of the Green kernel by the norm of $(T_{2n}^{2m+1}(z))^{-1}$, it suffices to prove that there exists a constant $C_{\alpha}>0$ such that: 
\begin{equation}\label{eq_bound_HF-GE}
\forall m,n\in \Z,\ |m-n|\geq 2,\ \|H^{2m+1}_{2n}(F^{2m+1}_{2n})^{-1}-G^{2m+1}_{2n} (E^{2m+1}_{2n})^{-1}\|\leq C_{\alpha}.
\end{equation}
To achieve this, we propose a strategy which consists in constructing a sequence of matrices which will follow a sub-arithmetico-geometric progression. This particular progression eventually reaches a maximum threshold, meaning it does not exceed a certain limit, provided that the value of  the norm of \(\alpha\) meet a specific condition. We set $\|.\|$ as the Frobenius norm on $\mathcal{M}_{\mathrm{L}}(\C)$.
\begin{proposition}\label{prop_Lambda_Mu}
There exists $r_0\in (0,1)$, $p_0>2$ and $C_{r_0,p_0}>0$ such that for every $\| \alpha \|\leq r_0$, every $\epsilon\in (0,1)$, every $z\in \mathbb{S}_\epsilon$, and every $m,n\in \Z$, $|m-n|>p_0$, 
\begin{equation}\label{eq_lemma_bound_HF-GE}
\|H^{2m+1}_{2n}(F^{2m+1}_{2n})^{-1}-G^{2m+1}_{2n} (E^{2m+1}_{2n})^{-1}\|\leq C_{r_0,p_0}.
\end{equation}
\end{proposition}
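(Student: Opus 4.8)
The plan is to reduce \eqref{eq_lemma_bound_HF-GE} to a statement about the Siegel-disk dynamics of the transfer cocycle, and then to run the announced sub-arithmetico-geometric iteration on a carefully chosen matrix sequence.

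First I would record an algebraic reduction. Following the shorthand of \eqref{def_blocks_product_trans_mat}, write $A:=A^{2m+1}_{2n}(z)$ and likewise $B,C,D$, so that $T^{2m+1}_{2n}(z)=\left(\begin{smallmatrix}A & BU\\ C & DU\end{smallmatrix}\right)$, and set
\[
\phi:=A+BU,\quad \psi:=C+DU,\quad \phi_-:=BU-A,\quad \psi_-:=DU-C,\quad W:=\psi\phi^{-1},\quad W_-:=\psi_-\phi_-^{-1}.
\]
By Lemma \ref{lmm4.5.1}, items $1(b),1(c)$ make $\phi,\phi_-$ invertible and items $2(b),2(c)$ give $W,W_-\in\mathbb D_L$, hence $\|W\|<1$ and $\|W_-\|<1$ in operator norm, so that $W-V$ and $W_--V$ are invertible. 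Using $VV^*=I$ together with the definitions \eqref{eq_def_E}--\eqref{eq_def_H}, a direct computation yields the identities $GE^{-1}=V^*+2(W-V)^{-1}$ and $HF^{-1}=V^*+2(W_--V)^{-1}$, whence, by the resolvent identity,
\[
HF^{-1}-GE^{-1}=2\big[(W_--V)^{-1}-(W-V)^{-1}\big]=2\,(W_--V)^{-1}\,(W-W_-)\,(W-V)^{-1}.
\]
Thus it suffices to control the pair $(\Lambda^{2m+1}_{2n},\mathrm M^{2m+1}_{2n}):=\big((W_--V)^{-1},(W-V)^{-1}\big)$ — these are the matrices of the statement — and in particular their difference.

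Next I would set up the recursion. Lengthening the product by its two left-most factors, i.e. passing from $T^{2m+1}_{2n}(z)$ to $T^{(2m+3)}_\omega(z)T^{(2m+2)}_\omega(z)\,T^{2m+1}_{2n}(z)$, transforms $W$ into $W'=(\mathcal C+\mathcal D\,W)(\mathcal A+\mathcal B\,W)^{-1}$, where $\left(\begin{smallmatrix}\mathcal A&\mathcal B\\ \mathcal C&\mathcal D\end{smallmatrix}\right)=T^{(2m+3)}_\omega(z)T^{(2m+2)}_\omega(z)$, and similarly for $W_-$; composing this Möbius map with the (fresh) boundary unitary $V$ gives the corresponding maps $\Lambda^{2m+1}_{2n}\mapsto\Lambda^{2m+3}_{2n}$ and $\mathrm M^{2m+1}_{2n}\mapsto\mathrm M^{2m+3}_{2n}$. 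When $\|\alpha\|\le r_0$ with $r_0$ small, Lemma \ref{lem_estime_rho} shows that $\mathcal A,\mathcal D$ are invertible with $\|\mathcal A^{-1}\|,\|\mathcal D^{-1}\|$ uniformly bounded on $\mathbb S_\epsilon$ while $\|\mathcal B\|,\|\mathcal C\|=O(r_0)$, so that $\mathcal A+\mathcal B W=(I+\mathcal A^{-1}\mathcal B W)\mathcal A$ is invertible through a Neumann series (since $\|\mathcal A^{-1}\mathcal B W\|=O(r_0)<1$) and every inverse appearing in the recursion exists and is controlled, using also $\|W\|,\|W_-\|<1$. Feeding these estimates into the formula for $\Lambda-\mathrm M$ above, one obtains an inequality of the form
\[
\big\|\Lambda^{2m+3}_{2n}-\mathrm M^{2m+3}_{2n}\big\|\;\le\; q(r_0)\,\big\|\Lambda^{2m+1}_{2n}-\mathrm M^{2m+1}_{2n}\big\|\;+\;c_0(r_0),
\]
a sub-arithmetico-geometric progression with ratio $q(r_0)\in(0,1)$ (after shrinking $r_0$) and increment $c_0(r_0)>0$. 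To conclude, fix $p_0>2$ large enough that Lemma \ref{lmm4.5.1} applies at each step of the iteration; the seed term $\big\|\Lambda^{(p_0)}-\mathrm M^{(p_0)}\big\|$, built from $p_0$ elementary transfer matrices, is a fixed finite quantity, bounded uniformly in $z\in\mathbb S_\epsilon$, in $\omega$ and in $U,V$ by continuity and compactness. Iterating the inequality from length $p_0$ gives, for $|m-n|>p_0$,
\[
\big\|\Lambda^{2m+1}_{2n}-\mathrm M^{2m+1}_{2n}\big\|\;\le\; q(r_0)^{\,|m-n|-p_0}\big\|\Lambda^{(p_0)}-\mathrm M^{(p_0)}\big\|+\frac{c_0(r_0)}{1-q(r_0)}\;=:\;C_{r_0,p_0},
\]
and since $\|HF^{-1}-GE^{-1}\|=2\|\Lambda-\mathrm M\|$ this is exactly \eqref{eq_lemma_bound_HF-GE}.

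The main obstacle is producing the contraction factor $q(r_0)<1$. A single elementary transfer matrix is, to leading order in $\alpha$, unitary (it lies in $\mathrm U(L,L)$ on $\mathbb S^1$), so its Möbius action is not a contraction merely because the matrix is ``small''; the gain must be extracted from the interplay between the off-diagonal blocks being $O(r_0)$ and the confinement of $W,W_-$ to $\mathbb D_L$, which is precisely what damps the directions along which $(W-V)^{-1}$ or $(W_--V)^{-1}$ could otherwise blow up. This is delicate because these two inverses are, individually, \emph{not} uniformly bounded; uniform control is recovered only for their difference, so the iteration has to be run on $\Lambda-\mathrm M$ (equivalently, one must exploit that $W$ and $W_-$ differ only by $-2W$ up to an $O(r_0)$-controlled term) and one must verify that the potential blow-ups cancel at every step. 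A subsidiary, essentially routine point is checking that all the Neumann-series inversions used in the recursion are legitimate uniformly in $z\in\mathbb S_\epsilon$ and $\omega$, which again reduces to $\|\alpha\|\le r_0$ via Lemma \ref{lem_estime_rho}.
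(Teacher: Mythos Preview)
Your algebraic reduction is correct and in fact tidier than the paper's: the identities $GE^{-1}=V^*+2(W-V)^{-1}$ and $HF^{-1}=V^*+2(W_--V)^{-1}$ give $HF^{-1}-GE^{-1}=2\big[(W_--V)^{-1}-(W-V)^{-1}\big]$ directly, whereas the paper reaches the equivalent expression \eqref{eq_simplify_HFGE2}--\eqref{eq_simplify_HFGE_norm1} (with $M=W_-V^*$, $N=WV^*$, so $(\one-M)^{-1}=-V(W_--V)^{-1}$) only after a longer computation.

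The gap is in the iteration. You assert that $(W-V)^{-1}$ and $(W_--V)^{-1}$ ``are, individually, \emph{not} uniformly bounded'' and therefore propose to iterate on the difference $\Lambda-\mathrm M$. But the recursion does not cooperate: with $W'=(\mathcal C+\mathcal DW)(\mathcal A+\mathcal BW)^{-1}$ one finds $W'-W_-'=(\mathcal D-W'\mathcal B)(W-W_-)(\mathcal A+\mathcal BW_-)^{-1}$, and hence
\[
\Lambda'-\mathrm M'=(W_-'-V)^{-1}(\mathcal D-W'\mathcal B)(W_--V)\,(\Lambda-\mathrm M)\,(W-V)(\mathcal A+\mathcal BW_-)^{-1}(W'-V)^{-1}.
\]
The factors $(W_-'-V)^{-1}$ and $(W'-V)^{-1}$ --- precisely the ones you declared unbounded --- reappear as multipliers, and you exhibit no cancellation mechanism; nor is there any additive increment to produce the claimed form $q\|\Lambda-\mathrm M\|+c_0$. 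Your seed bound ``by continuity and compactness'' fails for the same reason: $W$ ranges in the open (non-compact) Siegel disk and may a priori approach $V$. (Incidentally, $V$ is a fixed boundary parameter, not ``fresh'' at each step.)

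In fact the premise is false, and this is exactly what the paper exploits. Since $\|(W-V)^{-1}\|=\|(\one-WV^*)^{-1}\|\le(1-\|W\|)^{-1}$ in operator norm, it suffices to keep $\|W\|$ and $\|W_-\|$ uniformly below $1$. The paper runs the sub-arithmetico-geometric iteration \eqref{eq_recursive_relation_fm} on $f_m^{(n)}:=\|W_-\|$ itself (using only the a priori information $f_m^{(n)}<1$ from Lemma~\ref{lmm4.5.1} as seed), obtaining $f_m^{(n)}\le\lambda_{z,\alpha}+(1-\lambda_{z,\alpha})\mu_{z,\alpha}^{|m-n|}$ with $\lambda_{z,\alpha},\mu_{z,\alpha}\to 0$ as $\|\alpha\|\to 0$; for $\|\alpha\|\le r_0$ this forces $f_m^{(n)}\le\tilde C_{r_0,p_0}<1$ once $|m-n|>p_0$, so each inverse \emph{is} individually bounded and \eqref{eq_lemma_bound_HF-GE} follows. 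Your plan is repaired simply by redirecting the recursion from $\Lambda-\mathrm M$ to the scalar quantities $\|W\|,\|W_-\|$.
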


%\begin{remark}
%Note that since in the proof of Theorem \ref{thm_DL} we will restrict ourselves to $\epsilon \in (0,\epsilon_0]$, $\epsilon_0>0$ being given by Corollary \ref{corollaire_pos_Lyap_couronne}, one might choose $r_0$ uniform in $\epsilon$.    
%\end{remark}

\begin{proof}\textbf{Step 1} During the next computations, we drop the indices $2n$, $2m+1$. Using \eqref{eq_def_E}, \eqref{eq_def_F}, \eqref{eq_def_G} and \eqref{eq_def_H}, 
\begin{align*}
HF^{-1}-GE^{-1} & =\left((B+V^*D)U-(A+V^*C)\right) \left((D-VB)U-(C-VA)\right)^{-1} \\
& \qquad -\left( A+V^*C + (B+V^*D)U \right)\left((D-VB)U+(C-VA) \right)^{-1} \\
& =\left(BU-A + V^*(DU-C) \right)\left(DU-C-V(BU-A)\right)^{-1} \\
& \qquad -\left(BU+A + V^*(DU+C) \right)\left(DU+C-V(A+BU) \right)^{-1}.
\end{align*}
According to Lemma \ref{lmm4.5.1}, $BU-A$, $BU+A$, $DU-C$ and $DU+C$ are invertible and we find:
\begin{align}
HF^{-1}-GE^{-1} & = - V^* \left(\one - (DU-C)(BU-A)^{-1}V^*\right)^{-1} + V^*\left( \one - V(BU-A)(DU-C)^{-1} \right)^{-1} \nonumber \\
& + V^* \left(\one - (DU+C)(BU+A)^{-1}V^* \right)^{-1} - V^*\left(\one - V(BU+A)(DU+C)^{-1} \right)^{-1}. \label{eq_simplify_HFGE1}
\end{align}
If one sets $M=(DU-C)(BU-A)^{-1}V^*$ and $N=(DU+C)(BU+A)^{-1}V^*$, by Lemma \ref{lmm4.5.1} they are both invertible and they both lie in $\mathbb{D}_L$. Therefore $\one-M$ and $\one-N$ are invertible and the same for $\one-M^{-1}$ and $\one-N^{-1}$. Moreover, one has $(\one-M^{-1})^{-1}=-M(\one-M)^{-1}$ and  $(\one-N^{-1})^{-1}=-N(\one-N)^{-1}$.
Rewriting \eqref{eq_simplify_HFGE1}, one gets
\begin{equation}\label{eq_simplify_HFGE2}
HF^{-1}-GE^{-1} = V^* \left( -(\one +M) (\one - M)^{-1}  + (\one + N) (\one - N)^{-1}  \right).
\end{equation}
Applying the Frobenius norm and since $V^*$ is unitary and $M,N\in \mathbb{D}_L$,
\begin{equation}\label{eq_simplify_HFGE_norm1}
\| HF^{-1}-GE^{-1}\| \leq  2\left( \| (\one - M)^{-1} \| + \| (\one - N)^{-1} \| \right) .   
\end{equation}
\textbf{Step 2} We will begin by bounding $\| (\one - M_{2n}^{2m+1})^{-1} \|$. Since by Lemma \ref{lmm4.5.1}, $M_{2n}^{2m+1}\in \mathbb{D}_L$, 
$$\| (\one - M_{2n}^{2m+1})^{-1} \|  \leq \frac{1}{1-\|M_{2n}^{2m+1}\|}.$$
To show that the upper bound $\frac{1}{1-\|M_{2n}^{2m+1}\|}$ is bounded, it suffices to prove the existence of a constant \( C \in (0,1) \) independent of \( n \) and \( m \), such that for every  $m,n\in \Z$, $|m-n|$ large enough,
\begin{equation}\label{eq_borne_unif_M}
 \|M_{2n}^{2m+1}\|=\|(D^{2m+1}_{2n}U-C^{2m+1}_{2n})(B^{2m+1}_{2n}U-A^{2m+1}_{2n})^{-1}V^*\| \leq C.    
\end{equation}
We start by finding a reccurence relationship between the ranks $2m$ and $2m+1$ for $n$ fixed. Writing 
\begin{align*}
\begin{pmatrix}
 A^{2m+1}_{2n}(z) &  B^{2m+1}_{2n}(z)U\\
 C^{2m+1}_{2n}(z) &  D^{2m+1}_{2n}(z)U
\end{pmatrix}& =T_{2n}^{2m+1}(z)=T_{2m}^{2m+1}(z)T_{2n}^{2m-1}(z) \\
&=\begin{pmatrix}
  W_{2m}^{2m+1}(z) &  X_{2m}^{2m+1}(z)\\
     Y_{2m}^{2m+1}(z) &  Z_{2m}^{2m+1}(z)
    \end{pmatrix}.\begin{pmatrix}
    A^{2m-1}_{2n}(z) &  B^{2m-1}_{2n}(z)U\\
     C^{2m-1}_{2n}(z) &  D^{2m-1}_{2n}(z)U
    \end{pmatrix} 
\end{align*}     
    with
\begin{align*}
W_{2m}^{2m+1}(z)= & \frac{1}{z} V_{\omega}^{(2m+1)}  (\widetilde{\rho}(\alpha))^{-1} V_{\omega}^{(2m)}  (\widetilde{\rho}(\alpha))^{-1}  +V_{\omega}^{(2m+1)}     (\widetilde{\rho}(\alpha))^{-1}  \alpha^* (U_{\omega}^{(2m)})^*  (\rho(\alpha))^{-1}, \\
X_{2m}^{2m+1}(z)= & -\frac{1}{z} V_{\omega}^{(2m+1)}    (\widetilde{\rho}(\alpha))^{-1}  V_{\omega}^{(2m)}   (\widetilde{\rho}(\alpha))^{-1} \alpha^* -V_{\omega}^{(2m+1)}   (\widetilde{\rho}(\alpha))^{-1} \alpha^*  (U_{\omega}^{(2m)})^*  (\rho(\alpha))^{-1},\\
Y_{2m}^{2m+1}(z)= &- (U_{\omega}^{(2m+1)})^*\alpha   (\widetilde{\rho}(\alpha))^{-1}  V_{\omega}^{(2m)}  (\widetilde{\rho}(\alpha))^{-1} -z (U_{\omega}^{(2m+1)})^*(\widetilde{\rho}(\alpha))^{-1} (U_{\omega}^{(2m)})^*  \alpha (\widetilde{\rho}(\alpha))^{-1},\\
Z_{2m}^{2m+1}(z)=&  (U_{\omega}^{(2m+1)})^* \alpha   (\widetilde{\rho}(\alpha))^{-1} V_{\omega}^{(2m)} (\widetilde{\rho}(\alpha))^{-1}  \alpha^*+z  (U_{\omega}^{(2m+1)})^* (\widetilde{\rho}(\alpha))^{-1}    (U_{\omega}^{(2m)})^*\alpha (\rho(\alpha))^{-1},
\end{align*}
yields the following relationships (dropping the $z$ dependence):
 \begin{align*}
A^{2m+1}_{2n} &= W_{2m}^{2m+1}A^{2m-1}_{2n} + X_{2m}^{2m+1} C^{2m-1}_{2n},\ B^{2m+1}_{2n}U =W_{2m}^{2m+1}B^{2m-1}_{2n}U+X_{2m}^{2m+1}D^{2m-1}_{2n}U \\
C^{2m+1}_{2n}& = Y_{2m}^{2m+1}A^{2m-1}_{2n}+Z_{2m}^{2m+1}C^{2m-1}_{2n},\ D^{2m+1}_{2n}U =Y_{2m}^{2m+1}B^{2m-1}_{2n}U+Z_{2m}^{2m+1} D^{2m-1}_{2n}U.
\end{align*}
Summing and subtracting these, we find:  
\begin{align*}
 A^{2m+1}_{2n} \pm B^{2m+1}_{2n}U =& W_{2m}^{2m+1}(A^{2m-1}_{2n} \pm B^{2m-1  }_{2n}U) + X_{2m}^{2m+1} (C^{2m-1}_{2n} \pm D^{2m-1}_{2n}U). \\
 C^{2m+1}_{2n} \pm D^{2m+1}_{2n}U =& Y_{2m}^{2m+1}(A^{2m-1  }_{2n} \pm B^{2m-1}_{2n}U) + Z_{2m}^{2m+1} (C^{2m-1}_{2n} \pm D^{2m-1}_{2n}U).
\end{align*}
This establishes the relationship:
\begin{align*}
(D^{2m+1}_{2n} & U-C^{2m+1}_{2n}) (B^{2m+1}_{2n}U-A^{2m+1}_{2n})^{-1}  =
 \left[Y_{2m}^{2m+1}(A^{2m-1}_{2n}-B^{2m-1}_{2n}U)+ Z_{2m}^{2m+1} (C^{2m-1}_{2n}- D^{2m-1}_{2n}U)\right]    \\
 & \qquad \times  \left[W_{2m}^{2m+1}(A^{2m-1}_{2n}- B^{2m-1}_{2n}U)+ X_{2m}^{2m+1}(C^{2m-1}_{2n}- D^{2m-1}_{2n}U)\right]^{-1} 
%& =\left[Y_{2m}^{2m+1}(A^{2m-1}_{2n}-B^{2m-1}_{2n}U)\right] \left[W_{2m}^{2m+1}(A^{2m-1}_{2n}- B^{2m-1}_{2n}U)+ X_{2m}^{2m+1} (C^{2m-1}_{2n}- D^{2m-1}_{2n}U)\right]^{-1} \\
%& \qquad +\left[Z_{2m}^{2m+1}(C^{2m-1}_{2n}-D^{2m-1}_{2n}U)\right]\left[W_{2m}^{2m+1}(A^{2m-1}_{2n}-B^{2m-1}_{2n}U)+ X_{2m}^{2m+1}(C^{2m-1}_{2n}-D^{2m-1}_{2n}U)\right]^{-1}.\\
%& =\left[Y_{2n}^{2m+1}(A^{2m}_{2n}-B^{2m}_{2n}U)\right] \left[W_{2n}^{2m+1}(A^{2m}_{2n}- B^{2m}_{2n}U)+ X_{2n}^{2m+1} (C^{2m}_{2n}- D^{2m}_{2n}U)\right]^{-1}
%& \qquad + \left[Z_{2n}^{2m+1}(C^{2m}_{2n}-D^{2m}_{2n}U)\right]\left[W_{2n}^{2m+1}(A^{2m}_{2n}-B^{2m}_{2n}U)+ X_{2n}^{2m+1}(C^{2m}_{2n}-D^{2m}_{2n}U)\right]^{-1}..
\end{align*}
Under the condition of the invertibility of $W_{2m}^{2m+1}$, we factorize \( W_{2m}^{2m+1}(A^{2m-1}_{2n}- B^{2m-1}_{2n}U) \) on the right in the inverse to get:
{\small \begin{align*}
 (D^{2m+1}_{2n}  U-C^{2m+1}_{2n}) (B^{2m+1}_{2n}U-A^{2m+1}_{2n})^{-1} & = \left[Y_{2m}^{2m+1}+Z_{2m}^{2m+1}(C^{2m-1}_{2n}-D^{2m-1}_{2n}U)(A^{2m-1}_{2n}-B^{2m-1}_{2n}U)^{-1}\right] (W_{2m}^{2m+1})^{-1} \\
  & \quad \times\left[\one + X_{2m}^{2m+1}(C^{2m-1}_{2n}-D^{2m-1}_{2n}U)(A^{2m-1}_{2n}-B^{2m-1}_{2n}U)^{-1} (W_{2m}^{2m+1})^{-1}\right]^{-1}
\end{align*}}
Under the condition of the invertibility of $W_{2m}^{2m+1}$ and assuming that 
\begin{equation}\label{eq_cond_XW}
  \|X_{2m}^{2m+1}\|\cdot \|(W_{2m}^{2m+1})^{-1}\| <1,   
\end{equation}
using Neumann power serie, we can write: 
{\footnotesize \begin{equation}
\|(D^{2m+1}_{2n}U-C^{2m+1}_{2n})  (B^{2m+1}_{2n}U-A^{2m+1}_{2n})^{-1}\|\leq \frac{(\|Y_{2m}^{2m+1}\|+\|Z_{2m}^{2m+1}\|\cdot \|(C^{2m-1}_{2n}- D^{2m-1}_{2n}U)(A^{2m-1}_{2n}- B^{2m-1  }_{2n}U)^{-1}\|)\cdot \|(W_{2m}^{2m+1})^{-1}\|}{1-\| X_{2m}^{2m+1}\|.\|(C^{2m-1}_{2n}- D^{2m-1}_{2n}U)(A^{2m-1}_{2n}- B^{2m-1  }_{2n}U)^{-1}\| \cdot\|(W_{2m}^{2m+1})^{-1}\|}. \end{equation}}
Therefore, we find:
  \begin{align*}
\|\left(\one-M_{2n}^{2m+1}\right)^{-1}\| & \leq \frac{1}{1-\|(D^{2m+1}_{2n}U-C^{2m+1}_{2n})(B^{2m+1}_{2n}U-A^{2m+1}_{2n})^{-1}\|}\\
      &\leq \frac{1}{1- \frac{(\|Y_{2m}^{2m+1}\|+\|Z_{2m}^{2m+1}\|).\|(W_{2m}^{2m+1})^{-1}\|}{1-\| X_{2m}^{2m+1}\|.\|(W_{2m}^{2m+1})^{-1}\|.\|(C^{2m}_{2n}- D^{2m}_{2n}U)(A^{2m}_{2n}- B^{2m}_{2n}U)^{-1}\|} } 
\end{align*}
under the condition
\begin{equation}\label{eq_cond_croissance_11-x1}
     \frac{(\|Y_{2m}^{2m+1}\|+\|Z_{2m}^{2m+1}\|\cdot \|(C^{2m-1}_{2n}- D^{2m-1}_{2n}U)(A^{2m-1}_{2n}- B^{2m-1  }_{2n}U)^{-1}\|)\cdot \|(W_{2m}^{2m+1})^{-1}\|}{1-\| X_{2m}^{2m+1}\|\cdot \|(C^{2m-1}_{2n}- D^{2m-1}_{2n}U)(A^{2m-1}_{2n}- B^{2m-1  }_{2n}U)^{-1}\| \cdot\|(W_{2m}^{2m+1})^{-1}\|} < 1.
\end{equation}
 which is implied by the condition
 \begin{equation}\label{eq_cond_croissance_11-x2}
    (\|Y_{2m}^{2m+1}\|+\|Z_{2m}^{2m+1}\|+\| X_{2m}^{2m+1}\|)\cdot \|(W_{2m}^{2m+1})^{-1}\| \leq 1.
\end{equation} 
since $\|(C^{2m-1}_{2n}- D^{2m-1}_{2n}U)(A^{2m-1}_{2n}- B^{2m-1  }_{2n}U)^{-1}\| <1$.
  We set, with $n$ still fixed, 
\begin{equation}\label{def_xm_ym_zm_wm}
    x_{m}:=\|X_{2m}^{2m+1}\|\ ,\ y_{m}:=\|Y_{2m}^{2m+1}\|\ ,\ z_{m}:=\|Z_{2m}^{2m+1}\|  \ , \  w_{m}:=\|(W_{2m}^{2m+1})^{-1}\|
\end{equation}  
and for all \(m \geq n+1\),
\begin{equation}\label{def_fm}
    f_{m}^{(n)}=\|(D^{2m+1}_{2n}U-C^{2m+1}_{2n})(B^{2m+1}_{2n}U-A^{2m+1}_{2n})^{-1}\|.
\end{equation}
With these notations, we write:
\begin{align*}
 \|\left(\one-M_{2n}^{2m+1}\right)^{-1}\| & \leq \frac{1}{1-f_{m}^{(n)}}\leq \frac{1}{1-\frac{y_m w_m + z_m w_m}{1-x_m w_m f_{m-1}^{(n)}}}\leq \frac{1-x_m w_m f_{m-1}^{(n)}}{1-y_m w_m-z_m w_m-x_m w_m f_{m-1}^{(n)}} \\ 
 & \leq  \frac{1}{1-y_m w_m-z_m w_m-x_m w_m f_{m-1}^{(n)}}   
\end{align*}
Thus, we obtain a recurrence relationship for all \(m \geq n+1\):
\begin{equation}
    f_m^{(n)} \leq y_m w_m + z_m w_m+ x_m w_m f_{m-1}^{(n)}. \label{eq_recursive_relation_fm}
\end{equation}

\textbf{Step 3} Our goal now is to show that the sequence \((f_m^{(n)})_{m\geq n+1}\) is uniformly bounded in both $m$ and $n$, by a constant strictly smaller than $1$, at least for $|m-n|$ large enough. Iterating the sub-arithmetico-geometric relationship \eqref{eq_recursive_relation_fm} and still under the assumption $x_m w_m <1$, one finds
\begin{align}
\forall m\geq n+1,\ f_m^{(n)} & \leq (y_m w_m + z_m w_m)(1+x_m w_m + \cdots + (x_m w_m )^{m-n-1}) + (x_m w_m)^{m-n} f_{m-n}^{(n)}\\
 & =  \frac{y_m w_m + z_m w_m}{1-x_m w_m}  + \left(f_{m-n}^{(n)} -  \frac{y_m w_m + z_m w_m}{1-x_m w_m}\right) (x_m w_m)^{m-n} \\
  & \leq \frac{y_m w_m + z_m w_m}{1-x_m w_m}  + \left(1 -  \frac{y_m w_m + z_m w_m}{1-x_m w_m}\right) (x_m w_m)^{m-n}  \label{eq_rec_somme_sub_arith}
\end{align}
since we know that for every $m,n$, $f_m^{(n)}<1$, using Lemma \ref{lmm4.5.1}. It remains to estimate the term $\frac{y_m w_m + z_m w_m}{1-x_m w_m}$. Using Lemma \ref{lem_estime_rho}, one gets
\begin{equation}\label{eq_estim_xm_ym_zm}
\forall m\in \Z,\ y_m \leq (1+|z|)\frac{\| \alpha\|}{1- \| \alpha\|^2}  , \quad x_m \leq \left(1 + \frac{1}{|z|} \right)  \frac{\| \alpha\|}{1- \| \alpha\|^2} \quad \mbox{and} \quad z_m \leq \| \alpha \| \frac{|z| + \| \alpha\|}{1- \| \alpha\|^2} .
\end{equation}
Then, to estimate $w_m$, we write :
{\scriptsize \begin{align*}
& (W_{2m}^{2m+1}(z))^{-1}=  \left(\frac{1}{z} V_{\omega}^{(2m+1)}  (\widetilde{\rho}(\alpha))^{-1} V_{\omega}^{(2m)}  (\widetilde{\rho}(\alpha))^{-1}  +V_{\omega}^{(2m+1)}     (\widetilde{\rho}(\alpha))^{-1}  \alpha^* (U_{\omega}^{(2m)})^*  (\rho(\alpha))^{-1}\right)^{-1}, \\
& = z\left( V_{\omega}^{(2m+1)}  (\widetilde{\rho}(\alpha))^{-1} V_{\omega}^{(2m)}  (\widetilde{\rho}(\alpha))^{-1} \right)^{-1} \left( \one + z V_{\omega}^{(2m+1)}     (\widetilde{\rho}(\alpha))^{-1}  \alpha^* (U_{\omega}^{(2m)})^*  (\rho(\alpha))^{-1} \left( V_{\omega}^{(2m+1)}  (\widetilde{\rho}(\alpha))^{-1} V_{\omega}^{(2m)}  (\widetilde{\rho}(\alpha))^{-1} \right)^{-1} \right)^{-1}.
\end{align*}}
Using Lemma \ref{lem_estime_rho},
{\footnotesize \begin{equation}\label{eq_cond_w_inversible}
\| V_{\omega}^{(2m+1)}     (\widetilde{\rho}(\alpha))^{-1}  \alpha^* (U_{\omega}^{(2m)})^*  (\rho(\alpha))^{-1} \left( V_{\omega}^{(2m+1)}  (\widetilde{\rho}(\alpha))^{-1} V_{\omega}^{(2m)}  (\widetilde{\rho}(\alpha))^{-1} \right)^{-1} \| \leq |z|\|\alpha\| \frac{(2-\sqrt{1-\| \alpha\|^2})^2}{1-\| \alpha\|^2}.  
\end{equation}}
For any $\epsilon>0$ and any $z\in \mathbb{S}_{\epsilon}$, the upper bound in \eqref{eq_cond_w_inversible} tends to $0$ when $\| \alpha \|$ tends to $0$. Hence 
there exists $r_1\in (0,1)$ independent of $\epsilon$ such that for any $\alpha$ such that $\|\alpha \| \leq r_1$ and any  $z\in \mathbb{S}_{\epsilon}$, $|z|\|\alpha\| \frac{(2-\sqrt{1-\| \alpha\|^2})^2}{1-\| \alpha\|^2} <1$. For such $\alpha$, $W_{2m}^{2m+1}(z)$ is invertible and using again Lemma \ref{lem_estime_rho},
\begin{equation}\label{eq_cond_w_inversible_estimate}
\forall m\in \Z,\  w_m = \| (W_{2m}^{2m+1}(z))^{-1} \| \leq \frac{|z| (2-\sqrt{1-\| \alpha\|^2})^2}{1- |z|\|\alpha\| \frac{(2-\sqrt{1-\| \alpha\|^2})^2}{1-\| \alpha\|^2} } = \frac{|z|(1-\| \alpha\|^2) (2-\sqrt{1-\| \alpha\|^2})^2}{1-\| \alpha\|^2- |z|\|\alpha\| (2-\sqrt{1-\| \alpha\|^2})^2}.
\end{equation}
Combining \eqref{eq_estim_xm_ym_zm} and \eqref{eq_cond_w_inversible_estimate}, for any $\alpha$ such that $\|\alpha \| \leq r_1$, any $\epsilon>0$ and any $z\in \mathbb{S}_{\epsilon}$, 
\begin{equation}\label{eq_estim_mu}
\forall m\in \Z,\ x_m w_m \leq   \frac{(1+|z|)\| \alpha\| (2-\sqrt{1-\| \alpha\|^2})^2}{1-\| \alpha\|^2- |z|\|\alpha\| (2-\sqrt{1-\| \alpha\|^2})^2} := \mu_{z,\alpha}.
\end{equation}
and
\begin{align}\label{eq_estim_lambda}
 \forall m\in \Z,\ \frac{y_m w_m + z_m w_m}{1-x_m w_m} & \leq \frac{\left((1+|z|)\frac{\| \alpha\|}{1- \| \alpha\|^2}  +  \| \alpha \| \frac{|z| + \| \alpha\|}{1- \| \alpha\|^2} \right) \frac{|z|(1-\| \alpha\|^2) (2-\sqrt{1-\| \alpha\|^2})^2}{1-\| \alpha\|^2- |z|\|\alpha\| (2-\sqrt{1-\| \alpha\|^2})^2} }{1- \left(1 + \frac{1}{|z|} \right)  \frac{\| \alpha\|}{1- \| \alpha\|^2} \frac{|z|(1-\| \alpha\|^2) (2-\sqrt{1-\| \alpha\|^2})^2}{1-\| \alpha\|^2- |z|\|\alpha\| (2-\sqrt{1-\| \alpha\|^2})^2} } \nonumber \\
 & = \frac{ \| \alpha \|\left(1+2|z| + \| \alpha\| \right)|z|(2-\sqrt{1-\| \alpha\|^2})^2 }{1-\| \alpha\|^2- (2|z|+1)  \| \alpha\| (2-\sqrt{1-\| \alpha\|^2})^2 } := \lambda_{z,\alpha} 
\end{align}
For any $\epsilon>0$ and any $z\in \mathbb{S}_{\epsilon}$, both $\lambda_{z,\alpha}$ and $\mu_{z,\alpha}$ tends to $0$ when $\| \alpha \|$ tends to $0$. Hence one can find  $r_2\in (0,1)$ independent of $\epsilon$ such that for any $\alpha$ such that $\|\alpha \| \leq r_2$ and any  $z\in \mathbb{S}_{\epsilon}$, $\mu_{z,\alpha} \in (0,1)$. Then, for such $\alpha$, $\mu_{z,\alpha}^{|m-n|}$ tends to $0$ when $|m-n|$ tends to infinity. Moreover, one can also find  $r_3\in (0,1)$ independent of $\epsilon$ such that for any $\alpha$ such that $\|\alpha \| \leq r_3$ and any  $z\in \mathbb{S}_{\epsilon}$, $\lambda_{z,\alpha} \in (0,\frac{1}{2}]$. 

From these two facts and inequality \eqref{eq_rec_somme_sub_arith}, one deduces that for every $\epsilon\in (0,1)$, for any $\alpha$ such that $\| \alpha \| \leq r_0:= \min(r_1,r_2,r_3)$, there exists  $p_0>2$, and $\tilde{C}_{r_0,p_0} \in (0,1)$ such that for any $m,n\in \Z$, $|m-n|\geq p_0$ and any $z\in \mathbb{S}_{\epsilon}$,
\begin{equation}\label{eq_estim_fm}
0< f_m^{(n)} \leq \lambda_{z,\alpha} + (1-\lambda_{z,\alpha})\mu_{z,\alpha}^{|m-n|} \leq \tilde{C}_{r_0,p_0}.
\end{equation}
Following the exact same procedure, we obtain the same bound $\tilde{C}_{r_0,p_0}$ for the second term $\| (\one-N)^{-1}\|$, since the change of sign in front of $A$ and $C$ does not change all the estimates in norm. We finally deduce \eqref{eq_lemma_bound_HF-GE} with $C_{r_0,p_0}=\frac{2}{1-\tilde{C}_{r_0,p_0}}$.

\textbf{Step 4} To finish the proof it remains to discuss the assumptions made during the previous steps. In particular we have to give conditions on $\alpha$ to satisfy the conditions of invertibility of $W_{2m}^{2m+1}$  and to satisfy conditions \eqref{eq_cond_XW} and \eqref{eq_cond_croissance_11-x2}. Actually, invertibility of $W_{2m}^{2m+1}$ was discussed at \eqref{eq_cond_w_inversible} and is implied by the condition $\| \alpha \| \leq r_1$. Condition \eqref{eq_cond_XW} is implied by $\lambda_{z,\alpha}<1$ and in particular is satisfied for  $\| \alpha \| \leq r_2$. Condition \eqref{eq_cond_croissance_11-x2} is also implied by $\lambda_{z,\alpha}<1$ and leads to the same condition on $\alpha$. 
\end{proof}

\begin{proof}(of Theorem \ref{thm_exp_decay_even_odd}). Using Proposition \ref{prop_Lambda_Mu}, Lemma \ref{lemma_lower_bounded_inv_trans_mat} implies that 
$$ \|G_\omega^{[2n,2m+1]}(z,2n,2m+1)\| \leq \frac{1}{\kappa} C_{r_0,p_0} \|(T_{2n}^{2m+1}(z))^{-1}\|$$
with $\kappa$ uniform in $\epsilon\in (0,\epsilon_0]$. Taking the power $s$ and the expectancy and applying Lemma \ref{lemma_exp_decay_trans_mat_inv}, one gets the statement of Theorem \ref{thm_exp_decay_even_odd}.    
\end{proof}

\subsection{Exponential decay in the general case.}\label{sec_exp_decay_general}
To prove the decay estimate for the general Green's function, it suffices to combine all the previous results.
\begin{proof}(of Theorem \ref{thm_greengeneral}).
     For $|k-l|>4$, it is clear that there exist $m, n \in \mathbb{Z}$ such that $k \in\{2 n, 2 n+1\}$, $l \in\{2 m, 2 m+1\}$, and $|m-n|>1$. Thus, using Proposition \ref{prop_reduc_paire} and Proposition \ref{prop_reduc_suitable_volume}, we obtain that there exists $0<\kappa(\alpha, s)<\infty$ such that
\begin{align*}
\left(\mathbb{E}\left[\left\|G_\omega(z,k, l)\right\|^s\right]\right)^2 & \leq \kappa(\alpha, s) \sum_{i, j=0}^1\left(\mathbb{E}\left[\left\|G_\omega(z,2 n+2 i, 2 m-2 j+1)\right\|^{4 s}\right]\right)^{1 / 2} \\
& \leq \kappa(\alpha, s) \sum_{i, j=0}^1\left(\mathbb{E}\left[\left\|G_z^{[2 n+2 i, 2 m-2 j+1] }(2 n+2 i, 2 m-2 j+1)\right\|^{4 s}\right]\right)^{1 / 2},
\end{align*}
Then, Theorem \ref{thm_exp_decay_even_odd} gives that there exists $s_0 \in (0,1)$, $C_1>0$ and $\gamma>0$ such that
$$
\left(\mathbb{E}\left[\left\|G_\omega(z,k, l)\right\|^{s_0}\right]\right)^2 \leq C_{1} \kappa(\alpha, s_0) \sum_{i, j=0}^1 e^{-\gamma|2 n-2 m+2 i+2 j|}.
$$
By the triangle inequality we have
$$|2 n-2 m-1+2 i+2 j|  \geq\| k-l|-|(2 n-k)-(2 m+1-l)+2 i+2 j\|  \geq|k-l|-6. $$
Thus, by taking $C_2=4{C}_{1} \kappa(\alpha, s) e^{5 \gamma }$, we conclude that $\mathbb{E}\left[\left\|G_\omega(z,k, l)\right\|^{s_0}\right] \leq {C}_2 e^{- \gamma|k-l|}$.
For $|k-l| \leq 4$, we use Theorem \ref{thm_unif_bounded} to show that there exists $C_3>0$ such that
$$
\mathbb{E}\left[\left\|G_\omega(z,k, l)\right\|^{s_0}\right] \leq C_3 e^{4  \gamma} e^{-\gamma |k-l|} .
$$
By choosing $C=\max \left\{{C}_2, C_3 e^{4  \gamma}\right\}$, we obtain the desired result.
\end{proof}

\section{Dynamical Localization}\label{sec_DL}
In this Section we prove dynamical localization for $\{ \mathbb{U}_{\omega} \}_{\omega \in \Omega}$ and thus demonstrate the main result of the article, Theorem \ref{thm_DL}. To do this, we will rely on two important aspects: firstly, the use of the exponential decay of the moments of the resolvent, already demonstrated in Section \ref{sec_exp_decay_general}, to obtain an estimate of the seond order moments of the resolvent; secondly, we connect the powers of the operator to its resolvent through a general lemma about unitary operators.

\subsection{Proof of Theorem \ref{thm_second_order}.}

\begin{proof}
Let $\omega \in \Omega$. We use here the factorization \eqref{eq_factorization_proof_unif_bound} of $\mathbb{U}_{\omega}= \mathbb{Y}_\omega \mathbb{D}_\omega \mathbb{V}_\omega' \mathbb{W}_\omega$, 
with the unitaries $\mathbb{Y}_\omega = \bigoplus_{k \in \mathbb{Z}} \left( \begin{smallmatrix}
\one & 0\\
   0 & \widehat{V}^{(2k)}_\omega 
   \end{smallmatrix} \right) $, $\mathbb{D}_\omega = \bigoplus_{k \in \mathbb{Z}} \left( \begin{smallmatrix}
\one & 0\\
   0 & e^{i\theta^{(2k)}_\omega}
   \end{smallmatrix} \right) $ and $\mathbb{V}_\omega' = \bigoplus_{k \in \mathbb{Z}} \left( \begin{smallmatrix}
        \alpha & \rho(\alpha) U^{(2k)}_\omega\\
    \widetilde{\rho}(\alpha) & -\alpha^* U^{(2k)}_\omega
   \end{smallmatrix}  \right) \circ s^L$. 

We set $\mathbb{X}_\omega=\mathbb{V}_\omega'\mathbb{W}_\omega$ and define, for $\delta=(\delta_1,\hdots,\delta_L)\in\mathbb{C}^L$,
\begin{equation}
\mathbb{Y}_\omega^\delta:=\bigoplus_{k\in \mathbb{Z}} \begin{pmatrix}
   \one & 0 \\
   0 & \widehat{V}^{(2k)}_\omega e^{i(\theta^{(2k)}_\omega-\delta)}
    \end{pmatrix} \ \mbox{ and } \  \mathbb{U}_\omega^\delta:= \mathbb{Y}_\omega^\delta \mathbb{X}_\omega.
\end{equation}
Recall that we denote by \((e_{\{n,l\}})_{n \in \mathbb{Z}, l \in \{1, \ldots, L\}}\) the canonical basis of $\ell^2(\Z)\otimes \C^L$ and let $P_{2n+1}$ be the projection onto $\mathrm{Span}(e_{\{2n+1,1\}},..,e_{\{2n+1,L\}})$. Identifying $P_{2n+1}$ with its matrix in the canonical basis, one has $P_{2n+1} = \mathrm{diag}(0,\ldots, 0, \one, 0, \ldots, 0)$ with  $\one$ in the $(2n+1)^\text{th}$ position. The projection $P_{2n+1}$ corresponds to the projection on the $(2n+1)^\text{th}$ block. We define \( \eta_{2n+1} = e^{i(\theta^{(2n+1)}_\omega-\delta)}- e^{i\theta^{(2n+1)}_\omega} \) to get:
\begin{equation}\label{eq_decomp_U_delta}
      \mathbb{U}_\omega^\delta = \mathbb{Y}_\omega(\mathbb{D}_\omega + \eta_{2n+1}P_{2n+1})\mathbb{X}_\omega = \mathbb{U}_\omega + \mathbb{Y}_\omega\eta_{2n+1}P_{2n+1}\mathbb{X}_\omega.
\end{equation}
By the geometric resolvent:
  $$\forall z \in \mathbb{C}\setminus\mathbb{S}^1, \ (\mathbb{U}_\omega^\delta-z)-(  \mathbb{U}_\omega-z)^{-1}=-(\mathbb{U}_\omega^\delta-z)^{-1}\left[\mathbb{Y}_\omega\eta_{2n+1}P_{2n+1}\mathbb{X}_\omega\right](\mathbb{U}_\omega-z)^{-1}.$$
We multiply on the left by \( \mathbb{X}_\omega \) and on the right by \( \mathbb{Y}_\omega \) to get:
$$\mathbb{X}_\omega (\mathbb{U}_\omega^\delta - z) \mathbb{Y}_\omega - \mathbb{X}_\omega (\mathbb{U}_\omega - z)^{-1} \mathbb{Y}_\omega = -\mathbb{X}_\omega (\mathbb{U}_\omega^\delta - z)^{-1} \mathbb{Y}_\omega [\eta_{2n+1} P_{2n+1}] \mathbb{X}_\omega (\mathbb{U}_\omega - z)^{-1} \mathbb{Y}_\omega.$$
We define \( F_\delta(z) := \mathbb{X}_\omega (\mathbb{U}_\omega^\delta - z)^{-1} \mathbb{Y}_\omega \) and \( F(z) = \mathbb{X}_\omega (\mathbb{U}_\omega - z)^{-1} \mathbb{Y}_\omega \).
Then,
  $$F_\delta(z)-F(z)=-F_\delta(z)(\eta_{2n+1} P_{2n+1})F(z).$$
For \(\delta = (0, \ldots, \delta_k, 0, \ldots, 0)\), $\eta_{2n+1} = \text{diag}(0, \ldots, e^{i(\theta^{(2n+1)}_{\omega,k} - \delta_k)} - e^{i\theta^{(2n+1)}_{\omega,k}}, \ldots, 0)$ and we set  $\eta_{2n+1,k}=e^{i(\theta^{(2n+1)}_{\omega,k} - \delta_k)} - e^{i\theta^{(2n+1)}_{\omega,k}}$.
Furthermore, 
$$\langle e_{\{2n+1,l\}}|(F_\delta(z)-F(z))e_{\{2n+1,k\}}\rangle = \langle e_{\{2n+1,l\}}|(-F_\delta(z)(\eta_{2n+1} P_{2n+1})F(z))e_{\{2n+1,k\}}\rangle.$$
Noting that, when focusing on block \(2n+1\), we define \(F_\delta(z,l,k) := F_\delta(z,\{2n+1,l\},\{2n+1,k\})\). Thus, for all $\delta$:
\begin{align*}
    |F_\delta(z,k,k)| &= \left|\langle e_{\{2n+1,k\}}|F_\delta(z)e_{\{2n+1,k\}}\rangle\right| = \left|\langle e_{\{2n+1,k\}}|\mathbb{X}_\omega(\mathbb{U}_\omega^\delta-z)^{-1}\mathbb{Y}_\omega e_{\{2n+1,k\}}\rangle\right|\\
      &\leq \| e_{\{2n+1,k\}}\|\|\mathbb{X}_\omega(\mathbb{U}_\omega^\delta-z)^{-1}\mathbb{Y}_\omega \| \|e_{\{2n+1,k\}}\|=\|(\mathbb{U}_\omega^\delta-z)^{-1}\|\leq \frac{1}{1-|z|}.
\end{align*}
Then : 
\begin{align}
    F_\delta(z,l,k)&=\frac{F(z,l,k)}{1+\eta_{2n+1,k}F(z,k,k)} =\frac{1}{\eta_{2n+1,k}+(F(z,k,k))^{-1}}.\frac{F(z,l,k)}{F(z,k,k)}.\label{3mercihakim}
\end{align}
But: $F(z,k,k)=\frac{1}{\eta_{2n+1,k}+(F(z,k,k))^{-1}}\times1$ and $1-|z|\leq |\eta_{2n+1,k}+(F(z,k,k))^{-1}|.$ Thus, we also have $1-|\eta_{2n+1,k}+(F(z,k,k))^{-1}|\leq |z|.$ 

Recall that $\eta_{2n+1,k}=e^{i(\theta^{(2n+1)}_{\omega,k}-\delta_k)}- e^{i\theta^{(2n+1)}_{\omega,k}}$
and introduce $\delta_k \in [0,2\pi]$ such that:
$$e^{-i\delta_k}=\frac{1-e^{-i\theta^{(2n+1)}_{\omega,k}}(F(z,k,k))^{-1}}{|1-e^{i\theta^{(2n+1)}_{\omega,k}}(F(z,k,k))^{-1}|}.$$
Thus,
\begin{align*}
    &1-\left|e^{-i\theta^{(2n+1)}_{\omega,k}}\frac{1-e^{-i\theta^{(2n+1)}_{\omega,k}}(F(z,k,k))^{-1}}{|1-e^{i\theta^{(2n+1)}_{\omega,k}}(F(z,k,k))^{-1}|}-e^{-i\theta^{(2n+1)}_{\omega,k}}+(F(z,k,k))^{-1}\right|\leq |z|\\
    &\Leftrightarrow 1-\left|\frac{1-e^{-i\theta^{(2n+1)}_{\omega,k}}(F(z,k,k))^{-1}}{|1-e^{i\theta^{(2n+1)}_{\omega,k}}(F(z,k,k))^{-1}|}-(1-e^{-i\theta^{(2n+1)}_{\omega,k}}(F(z,k,k))^{-1})\right|\leq |z|\\
&\Leftrightarrow
1-\left|1-|1-e^{i\theta^{(2n+1)}_{\omega,k}}(F(z,k,k))^{-1}|\right| \leq |z|.\end{align*}
Since $|1-e^{i\theta^{(2n+1)}_{\omega,k}}(F(z,k,k))^{-1}|\leq 1$, we get:
$$1-\left|1-|1-e^{-i\theta^{(2n+1)}_{\omega,k}}(F(z,k,k))^{-1}| \right|=1-(1-|e^{-i\theta^{(2n+1)}_{\omega,k}}(F(z,k,k))^{-1}|)=|e^{-i\theta^{(2n+1)}_{\omega,k}}(F(z,k,k))^{-1}|$$
and as $1-|z|^2\leq 1-|1-e^{-i\theta^{(2n+1)}_{\omega,k}}F(z,k,k)|^2$ and by $\eqref{3mercihakim}$ : 
\begin{equation}
    (1-|z|^2)|F_\delta(l,k,z)|^2\leq \frac{1-\left|1-e^{-i\theta^{(2n+1)}_{\omega,k}}(F(z,k,k))^{-1} \right|^2}{ |\eta_{2n+1,k}+(F(z,k,k))^{-1}|^2}\frac{|F(z,l,k)|^2}{|F(z,k,k)|^2} . \label{mercihakimviolet}
    \end{equation}
   By \eqref{mercihakimviolet} and since for every $x\in \R$ and every $s\in (0,1)$, $\min(1,|x|^s)\leq|x|^s,$
$$(1-|z|^2)|F_\delta(l,k,z)|^2 \leq \frac{1-\left|1-e^{-i\theta^{(2n+1)}_{\omega,k}}(F(z,k,k))^{-1} \right|^2}{|\eta_{2n+1,k} + (F(z,k,k))^{-1}|^2}\frac{|F(z,l,k)|^s}{|F(z,k,k)|^s}.$$
Given the definition of $\eta_{2n+1,k}$: 
   $$  (1-|z|^2)|F_\delta(l,k,z)|^2\leq \frac{1-\left|1-e^{-i\theta^{(2n+1)}_{\omega,k}}(F(z,k,k))^{-1} \right|^2}{ |e^{-i\delta} -(1-e^{-i\theta^{(2n+1)}_{\omega,k}}(F(z,k,k))^{-1})|^2}\frac{|F(z,l,k)|^s}{|F(z,k,k)|^s} $$
  Let \( y = 1 - e^{-i \theta^{(2n+1)}_{\omega,k}} (F(z,k,k))^{-1} \). Then \( |y| < 1 \), and we can write:
$$(1-|z|^2)|F_\delta(z,l,k)|^s = |e^{i\theta^{(2n+1)}_{\omega,k}}(1-y)|^s = |1-y|^s.$$
We obtain:
$$(1-|z|^2)|F_\delta(z,l,k)|^2 \leq \frac{(1-|y|^2) |1-y|^s}{|e^{-i\delta_k} - y|^2} |F(z,l,k)|^s.$$
By taking the expectation:
$$ \mathbb{E}\left((1-|z|^2)|F(z,l,k)|^2\right) = \frac{1}{2\pi} \int_0^{2\pi} \mathbb{E}\left((1-|z|^2)|F(z,l,k)|^2\right) d\delta.$$
Moreover, since the Haar measure on the unit circle is invariant by rotations, the random variables \( e^{i\theta^{(2n+1)}_{\omega,k}-\delta_k} \) and \( e^{i\theta^{(2n+1)}_{\omega,k}} \) have the same distribution. 
Hence:
   $$\frac{1}{2\pi}\int_0^{2\pi}\mathbb{E}\left( (1-|z|^2)|F(z,l,k)|^2\right)d\delta=\frac{1}{2\pi}\int_0^{2\pi}\mathbb{E}\left( (1-|z|^2)|F_\delta(z,l,k)|^2\right)d\delta.$$
 By Fubini's theorem, we have:
{\small \begin{align*}
 & \frac{1}{2\pi}\int_0^{2\pi}\mathbb{E}\left((1-|z|^2)|F_\delta(z,l,k)|^2\right) d \delta=\mathbb{E}\left((1-|z|^2)\frac{1}{2\pi}\int_0^{2\pi}|F_\delta(z,l,k)|^2d\delta\right) \\
 & \leq\mathbb{E}\left(\frac{1}{2\pi}\sup_{|y|<1}\int_0^{2\pi} \frac{(1-|y|^2)|1-y|^s}{|e^{-i\delta}-y|^2}|F(z,l,k)|^s d\delta\right) \leq  2^s\mathbb{E}\left( |F(z,l,k)|^s\times\sup_{|y|<1}\frac{1}{2\pi}\int_0^{2\pi} \frac{(1-|y|^2)}{|e^{-i\delta}-y|^2} d\delta\right)
\end{align*}}
since $|1-y|^s\leq 2^s$ for $y\in \mathbb{C}$ such that $|y|<1$. Using \cite[Eq. (5.24)]{hamza2009dynamical},
$$\mathbb{E}\left((1-|z|^2)|F(z,k,l)|^2\right)\leq 2^s\mathbb{E}\left(|F(z,k,l)|^s\right) \text{ for all } s \in (0,1).$$
%Given the band structure of $\mathbb{X}_\omega^*$ we have: 
%\begin{align*}
%       \mathbb{E}\left((1-|z|^2)\left|\langle e_{\{2n+1,k\}} | (\mathbb{U}_\omega-z)^{-1}  e_{\{2n+1,k\}} \rangle\right|^2\right) & =\mathbb{E}\left((1-|z|^2)\left|\langle e_{\{2n+1,k\}} | \mathbb{X}_\omega(\mathbb{U}_\omega-z)^{-1} \mathbb{Y}_\omega e_{2n+1,l} \rangle\right|^2\right) \\ 
 %   & =\mathbb{E}\left((1-|z|^2)\left|\langle  \mathbb{X}_\omega^* e_{\{2n+1,k\}} |(\mathbb{U}_\omega-z)^{-1} \mathbb{Y}_\omega e_{2n+1,l} \rangle\right|^2\right).
 %  \end{align*}
  Using the band structure of $\mathbb{X}_\omega^*$, we have:
\begin{align*}
    \mathbb{E}\left(|F(z,l,k)|^s\right)&= \mathbb{E}\left(\left|\langle e_{\{m,l\}}\mid \mathbb{X}_\omega(\mathbb{U}_\omega-z)^{-1}\mathbb{Y}_\omega e_{\{2n+1,k\}}  \rangle\right|^s\right)\\
    &= \mathbb{E}\left(\left|\langle \mathbb{X}_\omega^* e_{\{m,l\}}\mid (\mathbb{U}_\omega-z)^{-1}\mathbb{Y}_\omega e_{\{2n+1,k\}}  \rangle\right|^s\right).
\end{align*}
We then replace $e_{\{m,j\}}$ with its original expression in the canonical basis (\emph{i.e.} $e_{m+Lj}$), apply Minkowski's inequality and use the fact that the coefficients of $\mathbb{X}_\omega$ are bounded to write:
\begin{align*}
     \mathbb{E}\left(|F(z,l,k)|^s\right)\leq C_{1}(s)\sum_{|j-l|\leq 4L}\mathbb{E}\left(\left|\langle \mathbb{X}_\omega^* e_{m+Lj}| (\mathbb{U}_\omega-z)^{-1}\mathbb{Y}_\omega e_{\{2n+1,k\}}  \rangle\right|^s\right).
\end{align*}
Since $\mathbb{Y}_\omega$ is block diagonal with unitary blocks in $U(L)$,
$$\forall m,n \in \mathbb{Z}, \forall j,k\in \{1,\hdots,L\},\ \left|\langle e_{\{mL+j\}} \mid (\mathbb{U}_\omega-z)^{-1}\mathbb{Y}_\omega e_{\{2n+1,k\}}\rangle\right| \leq \|G_\omega(z,\Tilde{m},2m+1)\|^s$$
where $\Tilde{m}$ is the index of the block containing $mL+j$ and equals $\left[\frac{mL+j}{L}\right].$
Thus
      \begin{align*}
           \mathbb{E}\left(|F(z,l,k)|^s\right)\leq C_{1}(s)\sum_{|j-l|\leq 4L}\mathbb{E}\left(\|G_\omega(z,\left[\tfrac{mL+j}{L}\right],2m+1)\|^s\right).
 \end{align*}
Now, we need to connect $\ \mathbb{E}\left(|F(z,l,k)|^s\right)$ and $ \mathbb{E}\left( (1-|z|^2) \left|  \langle e_{\{m,l\}}| \left(\mathbb{U}_\omega-z\right)^{-1} e_{\{2n+1,k\}} \rangle\right|^2 \right).$\\
Since $\mathbb{Y}_\omega$ is a direct sum of matrices of the form $\left(\begin{smallmatrix}
 \one & 0\\
   0& \widehat{V}^{(2k)}_\omega e^{i\theta^{(2k)}_\omega}
    \end{smallmatrix}\right),$ we consider two cases. The first case is when $k$ is an even block. Then $ \mathbb{Y}_\omega e_k=e_k$ and since $ \mathbb{X}_\omega $ is unitary, we find:
\begin{align*}
   \mathbb{E} & \left( (1-|z|^2) \left|  \langle e_{l}| \left(\mathbb{U}_\omega-z\right)^{-1} e_{k} \rangle\right|^2 \right)=\mathbb{E}\left( (1-|z|^2) \left|  \langle e_{l}| \left(\mathbb{U}_\omega-z\right)^{-1}\mathbb{Y}_\omega e_{k} \rangle\right|^2 \right)\\
    =&\mathbb{E}\left( (1-|z|^2) \left|  \langle \mathbb{X}_\omega e_{l}|  \mathbb{X}_\omega \left(\mathbb{U}_\omega- z\right)^{-1}\mathbb{Y}_\omega e_{k} \rangle\right|^2 \right)        =\mathbb{E}\left( (1-|z|^2) \left|  \langle \mathbb{X}_\omega e_{l}| F(z) e_{k} \rangle\right|^2 \right) .
    \end{align*}
  In the case where $k$ is an odd block,  $\mathbb{Y}_\omega e_k = \widehat{V}^{(2k)}_\omega e^{i\theta^{(2k)}_\omega} e_k$, and we have:
      \begin{align*}
   &  \left|  \langle e_{l}|  \left(\mathbb{U}_\omega -z\right)^{-1} e_{k} \rangle\right|^2 =\left|  \langle \mathbb{X}_\omega e_{l}| \mathbb{X}_\omega \left(\mathbb{U}_\omega -z\right)^{-1}\mathbb{Y}_\omega\mathbb{Y}_\omega^* e_{\{m,k\}} \rangle\right|^2
      = \left|  \langle \mathbb{X}_\omega e_{l}| F(z)\mathbb{Y}_\omega^* e_{\{m,k\}} \rangle\right|^2 \\
      & = \left| \sum_{j=1}^{L} \langle \mathbb{X}_\omega e_{l}| F(z)(( \widehat{V}^{(m)}_\omega e^{i\theta^{(m)}_\omega})^*)_{j,k} e_{\{m,j\}} \rangle\right|^2
       = \left| \sum_{j=1}^{L} (( \widehat{V}^{(m)}_\omega e^{i\theta^{(m)}_\omega})^*)_{j,k}\langle \mathbb{X}_\omega e_{l}| F(z) e_{\{m,j\}} \rangle\right|^2
     \end{align*}
    By the Cauchy-Schwarz inequality, we obtain:
\begin{align*}
   \left|  \langle e_{l}| \left(\mathbb{U}_\omega -z\right)^{-1} e_{k} \rangle\right|^2 & \leq \left( \sum_{j=1}^{L} \left|(( \widehat{V}^{(m)}_\omega)^*)_{j,k}\right|^2 \right)\left( \sum_{j=1}^{L}\left| \langle \mathbb{X}_\omega e_{l}| F(z) e_{\{m,j\}} \rangle\right|^2 \right) 
\end{align*}
Since $\widehat{V}^{(m)}_\omega$ is a unitary matrix, its $k^{\mathrm{th}}$ column  is of norm $1$ and $\sum_{j=1}^{L} \left|(( \widehat{V}^{(m)}_\omega)^*)_{j,k}\right|^2= 1.$ Hence,
      $$  \left|  \langle e_{l}| \left(\mathbb{U}_\omega -z\right)^{-1} e_{k} \rangle\right|^2 \leq \sum_{j=1}^{L}\left| \langle \mathbb{X}_\omega e_{l}| F(z) e_{\{m,j\}} \rangle\right|^2.$$
Therefore, for \( k = mL + j \),
\begin{align*}
       \mathbb{E}\left( (1-|z|^2) \left|  \langle e_{l}| \left(\mathbb{U}_\omega-z\right)^{-1} e_{k} \rangle\right|^2 \right)&\leq \sum_{j=1}^{L}\mathbb{E}\left( (1-|z|^2) \left| \langle \mathbb{X}_\omega e_{l}| F(z) e_{\{m,j\}} \rangle\right|^2 \right) 
\end{align*}
Due to the band structure of the operator \( \mathbb{X}_\omega \), there exists a constant \( C_2(s) \) such that:
    \begin{align*}
 \mathbb{E}\left( (1-|z|^2) \left|  \langle e_{l}| \left(\mathbb{U}_\omega-z\right)^{-1} e_{k} \rangle\right|^2 \right)& \leq C_2(s) \sum_{|p-q|\leq4L}^{}\sum_{j=1}^{L} \mathbb{E}\left( (1-|z|^2) \left|F(z,p,j)\right|^2\right) \\
 & \leq C_2(s)\  2^s  \sum_{|p-q|\leq4L}^{}\sum_{j=1}^{L} \mathbb{E}\left(  \left|F(z,p,j)\right|^s\right) \\
  \leq C_1(s)C_2(s)&\ 2^s \sum_{|p-l|\leq4L}^{}\sum_{j=1}^{L} \sum_{|q-p|\leq4L}^{}\mathbb{E}\left(\left\| G_\omega\left(z,\left[\frac{q}{L}\right],\left[\frac{mL+j}{L}\right]\right) \right\|^s\right).
   \end{align*}
We conclude the proof with the exponential decay of \(\mathbb{E}( \|G_\omega(z,q,mL+j)\|^s) \) given by Theorem \ref{thm_greengeneral} and the fact that each of the three sums has a finite number of terms.
\end{proof}

\subsection{Proof of Theorem \ref{thm_DL}.}

First of all, recall that for \(U\) a unitary operator,
\begin{equation}\label{formula_unitary_spectral_thm}
\forall n \in \Z,\  U^{n} = \lim _{r \rightarrow 1^{-}} \frac{1-r^{2}}{2 \pi} \int_{0}^{2 \pi}\left(U-r e^{i \theta}\right)^{-1}\left(U^{-1}-r e^{-i \theta}\right)^{-1} e^{i n \theta} d \theta.   
\end{equation}
Using this formula, of which we find an elementary proof in \cite{hamza2007localization},  we can prove Theorem \ref{thm_DL} following exactly the proof of \cite[Theorem 8.2]{hamza2007localization}, since we already have obtained the main input of this proof, estimate \eqref{eq_thm_greengeneral}.

\bigskip

\noindent \emph{Data availability statement:} We do not analyse or generate any datasets, because our work proceeds within a theoretical and mathematical approach.

\noindent \emph{Conflict of interest statement:} all authors certify that they have no affiliations with or involvement in any organization or entity with any financial interest or non-financial interest in the subject matter or materials discussed in this manuscript.

\bigskip

%\backmatter
\addcontentsline{toc}{chapter}{Bibliography}
\bibliographystyle{alpha}
\bibliography{2024_DLSZ_Boumaza_Khouildi}

\end{document}